\documentclass[11pt]{article}
\usepackage[dvipsnames,svgnames,table]{xcolor}
\usepackage{enumitem}
\usepackage{graphicx}
\usepackage{fancybox}
\usepackage{comment}
\usepackage{xcolor}
\usepackage{nameref}
\usepackage{bm, bbm}
\usepackage[T1]{fontenc}

\definecolor{ForestGreen}{rgb}{0.1333,0.5451,0.1333}
\definecolor{DarkRed}{rgb}{0.8,0,0}
\definecolor{Red}{rgb}{1,0,0}
\usepackage[linktocpage=true,
pagebackref=true,colorlinks,
linkcolor=ForestGreen,citecolor=ForestGreen,
bookmarks,bookmarksopen,bookmarksnumbered]
{hyperref}
\usepackage[nottoc]{tocbibind}

\usepackage{amsmath}
\usepackage{amssymb}
\usepackage{amsthm}

\usepackage[ruled, noend, linesnumbered]{algorithm2e}
\usepackage[noend]{algpseudocode}

\usepackage{subfig}

\usepackage{thm-restate}

\usepackage{float}
\usepackage{cleveref}

\newtheorem{theorem}{Theorem}[section]
\newtheorem{informaltheorem}[theorem]{Informal Theorem}

\newtheorem{corollary}[theorem]{Corollary}
\newtheorem{lemma}[theorem]{Lemma}

\newtheorem{claim}[theorem]{Claim}

\newtheorem{fact}[theorem]{Fact}

\newtheorem{assumption}[theorem]{Assumption}
\newtheorem{problem}[theorem]{Problem}

\newtheorem{definition}[theorem]{Definition}

\newtheorem*{theorem*}{Theorem}
\newtheorem*{corollary*}{Corollary}
\newtheorem*{conjecture*}{Conjecture}
\newtheorem*{lemma*}{Lemma}
\newtheorem*{thm*}{Theorem}
\newtheorem*{prop*}{Proposition}
\newtheorem*{obs*}{Observation}
\newtheorem*{definition*}{Definition}

\newtheorem*{remark*}{Remark}
\newtheorem*{rec*}{Recommendation}

\newenvironment{fminipage}%
  {\begin{Sbox}\begin{minipage}}%
  {\end{minipage}\end{Sbox}\fbox{\TheSbox}}

\newcommand{\defeq}{:=}

\def\norm#1{\left\| #1 \right\|}

\def\calR{\mathcal{R}}

\newcommand\bbeta{\boldsymbol{\beta}}

\def\aa{\pmb{\mathit{a}}}
\newcommand\bb{\boldsymbol{\mathit{b}}}
\newcommand\cc{\boldsymbol{\mathit{c}}}
\newcommand\dd{\boldsymbol{\mathit{d}}}

\newcommand\ff{\boldsymbol{\mathit{f}}}
\renewcommand\gg{\boldsymbol{\mathit{g}}}
\newcommand\hh{\boldsymbol{\mathit{h}}}

\newcommand\uu{\boldsymbol{\mathit{u}}}
\newcommand\vv{\boldsymbol{\mathit{v}}}
\newcommand\ww{\boldsymbol{\mathit{w}}}
\newcommand\yy{\boldsymbol{\mathit{y}}}
\newcommand\zz{\boldsymbol{\mathit{z}}}
\newcommand\xx{\boldsymbol{\mathit{x}}}

\newcommand\veczero{\boldsymbol{0}}
\newcommand\vecone{\boldsymbol{1}}

\renewcommand\AA{\boldsymbol{\mathit{A}}}

\newcommand\DD{\boldsymbol{\mathit{D}}}

\newcommand\FF{\boldsymbol{\mathit{F}}}

\newcommand\II{\boldsymbol{\mathit{I}}}
\newcommand\JJ{\boldsymbol{\mathit{J}}}

\newcommand\MM{\boldsymbol{\mathit{M}}}

\newcommand\UU{\boldsymbol{\mathit{U}}}

\newcommand\VV{\boldsymbol{\mathit{V}}}
\newcommand\XX{\boldsymbol{\mathit{X}}}
\newcommand\YY{\boldsymbol{\mathit{Y}}}

\renewcommand\O{\widetilde{O}}

\newcommand\R{\mathbb{R}}

\DeclareMathOperator{\nnz}{nnz}

\DeclareMathOperator*{\argmin}{arg\,min}
\DeclareMathOperator*{\argmax}{arg\,max}

\DeclareMathOperator*{\ddiag}{\boldsymbol{\mathrm{Diag}}}

\DeclareMathOperator*{\im}{im}

\newcommand{\poly}{\mathrm{poly}}

\newcommand{\diam}{\mathrm{diam}}
\newcommand{\vvec}{\mathtt{vec}}

\newcommand\Z{\mathbb{Z}}

\newcommand\oracle{\mathcal{O}}

\newcommand{\eps}{\epsilon} 
\renewcommand{\O}{\widetilde{O}}

\renewcommand{\l}{\langle}
\renewcommand{\r}{\rangle}

\newcommand{\otilde}{\O}
\renewcommand{\forall}{\mathrm{\text{ for all }}}

\newcommand{\g}{\nabla}

\newcommand{\bDelta}{\boldsymbol{\Delta}}

\renewcommand{\hat}{\widehat}
\renewcommand{\tilde}{\widetilde}

\DeclareFontFamily{U}{mathb}{\hyphenchar\font45}
\DeclareFontShape{U}{mathb}{m}{n}{<5> <6> <7> <8> <9> <10> gen * mathb
<10.95> mathb10 <12> <14.4> <17.28> <20.74> <24.88> mathb12}{}
\DeclareSymbolFont{mathb}{U}{mathb}{m}{n}
\DeclareMathSymbol{\rcirclearrow}{\mathbin}{mathb}{'367}

\newif\ifrandom
\randomtrue

\renewcommand{\l}{\langle}
\renewcommand{\r}{\rangle}

\newcommand{\todolater}[1]{}

\newcommand{\cE}{\mathcal{E}}

\newcommand{\cT}{\mathcal{T}}

\newcommand{\cR}{\mathcal{R}}

\newcommand{\cF}{\mathcal{F}}

\DeclareUnicodeCharacter{2113}{$\ell$}

\newcommand{\dset}{\mathcal{D}}
\newcommand{\sset}{\mathcal{S}}
\newcommand{\xset}{\mathcal{X}}
\newcommand{\yset}{\mathcal{Y}}
\newcommand{\zset}{\mathcal{Z}}

\newcommand{\uset}{\mathcal{U}}
\newcommand{\vset}{\mathcal{V}}

\newcommand{\almostTime}{\hat{O}}

\newcommand{\imbal}{\mathrm{im}}
\newcommand{\congest}{\mathrm{cong}}

\newcommand{\linfbound}{\rho}

\newcommand{\scaledparam}{C}

\newcommand{\bestrsp}{\mathrm{BR}}

\newcommand{\anyX}{\Bar{\XX}}
\newcommand{\algoutput}[1]{\Hat{#1}}
\newcommand{\altdelta}{\upsilon}
\newcommand{\algo}{\mathcal{A}}
\newcommand{\OPT}{\mathrm{OPT}}

\newcommand{\Time}{\mathcal{T}}
\newcommand{\instance}{\mathcal{M}}
\newcommand{\barrier}{\phi}

\newcommand{\mcf}{MCF}
\newcommand{\Mcf}{MCF}

\newcommand{\edgenumber}{N}

\newcommand{\mwmcObj}{\cE_{\mathrm{mwmc}}}
\newcommand{\dcmcObj}{\cE_{\mathrm{dcmc}}}

\newcommand{\boxbound}{R}

\newcommand{\RdLower}{\boxbound_\ell}
\newcommand{\RdUpper}{\boxbound_\mu}
\newcommand{\usetSize}{\mathrm{Diam}(\uset)}
\newcommand{\usetBound}{\boxbound_{\uset}}

\newcommand{\fixedalpha}{\Bar{\alpha}}
\newcommand{\fixedbeta}{\Bar{\beta}}
\newcommand{\coefnorm}{\lambda}

\newcommand{\delSCO}{\delta_{\mathrm{SCO}}}
\newcommand{\delBR}{\delta_{\mathrm{BR}}}

\newcommand{\linfreg}{r_{1, \infty}}

\newcommand{\fullit}{\zz}
\newcommand{\halfit}{\zz_{+}}
\newcommand{\auxit}{\zz_{\mathrm{aux}}}

\newcommand{\BRconstant}{c}
\newcommand{\gradbound}{L}
\newcommand{\psibound}{\boxbound_{\psi}}

\newcommand{\scgeq}{>}
\newcommand{\offset}{\Delta}
\newcommand{\lowerend}{\theta_{\ell}}
\newcommand{\upperend}{\theta_{u}}
\newcommand{\apxfactor}{\omega}

\newcommand{\eff}{f}
\newcommand{\PhiBar}{\Bar{\Phi}}
\newcommand{\PhiHat}{\Hat{\Phi}}
\usepackage[margin=1in]{geometry}
\usepackage{orcidlink} 
\usepackage{lscape}

\usepackage{amssymb,amsthm,amsmath}
\usepackage[ruled,linesnumbered]{algorithm2e}

\SetCommentSty{mycommfont}
\SetKw{Continue}{continue}
\SetKw{Break}{break}
 \DontPrintSemicolon 
\usepackage{multirow}

\allowdisplaybreaks

\title{Accelerated Approximate Optimization of \\Multi-Commodity Flows on Directed Graphs}

\author{Li Chen\\Independent\\lichenntu@gmail.com
\and Andrei Graur\\Stanford University\\ agraur@stanford.edu
\and Aaron Sidford\\Stanford University\\ sidford@stanford.edu
}
\date{\today}

\begin{document}

\maketitle

\begin{abstract}
We provide $m^{1+o(1)}k\eps^{-1}$-time algorithms for computing multiplicative $(1 - \epsilon)$-approximate solutions to multi-commodity flow problems with $k$-commodities on $m$-edge directed graphs, including concurrent multi-commodity flow and maximum multi-commodity flow.

To obtain our results, we provide new optimization tools of potential independent interest.
First, we provide an improved optimization method for solving $\ell_{q, p}$-regression problems to high accuracy. This method makes $\O_{q, p}(k)$ queries to a high accuracy convex minimization oracle for an individual block, where $\O_{q, p}(\cdot)$ hides factors depending only on $q$, $p$, or $\poly(\log m)$, 
improving upon the $\O_{q, p}(k^2)$ bound of [Chen-Ye, ICALP 2024].
As a result, we obtain the first almost-linear time algorithm that solves $\ell_{q, p}$ flows on directed graphs to high accuracy.
Second, we present optimization tools to reduce approximately solving composite $\ell_{1, \infty}$-regression problems to solving $m^{o(1)}\eps^{-1}$ instances of composite $\ell_{q, p}$-regression problem. 
The method builds upon recent advances in solving box-simplex games [Jambulapati-Tian, NeurIPS 2023] and the area convex regularizer introduced in [Sherman, STOC 2017] to obtain faster rates for constrained versions of the problem. Carefully combining these techniques yields our directed multi-commodity flow algorithm.

\end{abstract}

\thispagestyle{empty}

\newpage
\tableofcontents
\thispagestyle{empty}
\newpage

\setcounter{page}{1}
\section{Introduction}
\label{sec:intro}

We consider multi-commodity flow (\mcf{}) problems on directed graphs. In these problems there is a capacitated, directed graph $G = (V, E, \uu)$ with $n$-nodes $V$, $m$-edges $E$, positive
edge capacities $\uu \in \Z^E_{> 0}$, and demands $\DD = (\dd_1,\ldots,\dd_k) \in \Z^{V \times k}$. Such \mcf{} problems optimize (directed) \mcf{} $\FF = (\ff_1, \ldots, \ff_k) \in \R^{E \times k}_{\geq 0}$ consisting of $k$ single-commodity flows $\ff_1,\ldots,\ff_k \in \R^E_{\geq 0}$, where the flow of commodity $i$ on edge $e$ is given by $[\ff_i]_e = [\ff_{ie}] = \FF_{ei}$. These problems seek to find a \mcf{} that optimize certain objectives subject to constraints induced by the node demands and edge capacities. 

More precisely, we say that \emph{$\FF$ routes $\DD$}, denoted $\imbal(\FF) = \DD$, if the node imbalance of each each flow $\ff_i$ is $\dd_i$, denoted $\imbal(\ff_i) = \dd_i$. (See \Cref{sec:prelim} for definitions and notation.) Additionally, we say that the \emph{congestion of $\FF$ is $C$}, if the maximum ratio of the total flow on an edge to its capacity is $C$, i.e., $\congest(\FF) = C$, where $\congest(\FF) \defeq \max_{e \in E} \sum_{i \in [k]} \ff_{ie} / \uu_e$. We consider the following canonical \mcf{} problems (broadly following the terminology of \cite{Madry10}).
\begin{itemize}
    \item \textbf{Concurrent \mcf{}}: compute \mcf{} $\FF \in \R^{E \times k}_{\geq 0}$ and $\alpha \in \R_{\geq 0}$ such that $\alpha$ is maximized, 
    $\FF$ routes $\alpha \DD$, and $\congest(\FF) \leq 1$.
    
    \item \textbf{Maximum \mcf{}}: compute \mcf{} $\FF \in \R^{E \times k}_{\geq 0}$ and $\bbeta \in \R_{\geq 0}^{k}$ such that $\sum_{i \in [k]} \bbeta_i$ is maximized, $\FF$ routes $(\bbeta_1 \dd_1,\ldots,\bbeta_k \dd_k)$, and $\congest(\FF) \leq 1$.
    
    \item \textbf{Maximum weighted \mcf{}}: compute \mcf{} $\FF \in \R^{E \times k}_{\geq 0}$ and $\bbeta \in \R_{\geq 0}^{k}$ such that $\sum_{i \in [k]} \ww_i\bbeta_i$ is maximized for input positive weights $\ww \in \Z^{k}_{\ge 0}$, 
    $\FF$ routes $(\bbeta_1 \dd_1,\ldots,\bbeta_k \dd_k)$, and $\congest(\FF) \leq 1$.
\end{itemize}
We assume throughout the paper that input parameters, e.g., $\DD, \uu, \ww$, are \textit{polynomially bounded}, i.e., the absolute value of each entry is in $\in \Z \cap [0, \poly(m)]$. 

\mcf{} problems are fundamental, ubiquitous problems in combinatorial optimization, theoretical computer science, and algorithmic graph theory. They generalize the well-studied maximum-flow problem (e.g., consider the above problems for $k = 1$ and $\dd_1 = \vecone_s - \vecone_t$ for $s,t\in V$), they have been studied for decades, and they have multiple applications. Correspondingly, there have been multiple algorithmic advances for optimizing MCFs (See \Cref{sec:related_work}). 

Unfortunately, despite the potential utility of efficient \mcf{} algorithms and advances in solving maximum flow and related single-commodity flow problem \cite{ds08, ckmst11, madry13, Mad16, ls20, kls20, AMV20, CKL+22, van2023deterministic, chen2023almost}, obtaining almost-linear time algorithms for solving \mcf{} to high-accuracy has been elusive.
\cite{dkz22} shed light on this challenge and shows that even when  $k = 2$ solving \mcf{} up to accuracy $\epsilon$ is as hard as solving an arbitrary linear program up to accuracy that is smaller than $\epsilon$ by at most a polynomial factor in the number of constraints and variables.
Though there have been MCF runtime improvements when $G$ is dense~\cite{bz23}, \cite{dkz22} implies that, barring a major breakthrough in linear programming and linear system solving, almost linear time, high-accuracy algorithms for \mcf{} are not to be expected. 

Consequently, we study the problem of computing approximately optimal \mcf{}s, i.e., feasible MCFs with value within a $(1\pm\epsilon)$ of the optimal value. 
We focus on developing algorithms that run in time almost linear in the naive representation of the output, i.e., $\almostTime(m k \cdot \poly(\epsilon^{-1}))$, where $\almostTime(\cdot)$ hides $m^{o(1)}$ factors. There is an interesting line of work on expander routing and the cut matching game~\cite{KRV09, KKOV07, ghaffari2017distributed, chuzhoy2021decremental, haeupler2023maximum,chen2023simple,haeupler2024low} which obtains better dependencies on $k$ (see \Cref{sec:related_work}). However, this large $k$ regime is not the focus of our paper; applying our techniques in this regime is an interesting direction for future research.

In certain cases, such almost linear time ($\almostTime(m k \cdot \poly(\epsilon^{-1}))$) algorithms can be obtained. Classic work of \cite{radzik1996fast,grigoriadis1996approximate} shows that certain \mcf{} problems, e.g., concurrent \mcf{} and cost-constrained \mcf{}, can be reduced to solving $\tilde{O}(k \epsilon^{-2})$ single-commodity minimum cost flow problems. Applying these results with recent advances in solving minimum cost flow in almost linear time \cite{CKL+22, van2023deterministic, chen2023almost}, yield state-of-the-art $\almostTime(m k \epsilon^{-2})$ time algorithms for these problems (see \Cref{sec:related_work} for a more comprehensive survey of related results). Additionally, for undirected graphs, \cite{sherman2017area} showed that concurrent \mcf{} can be solved in $\almostTime(m k \epsilon^{-1})$.

In light of these results, we ask whether the $\epsilon$ dependence obtained via \cite{radzik1996fast,grigoriadis1996approximate} can be improved to match that obtainable for undirected graphs via \cite{sherman2017area}. The $\epsilon^{-2}$ dependence in \cite{radzik1996fast,grigoriadis1996approximate} stems from the application of a multiplicative weight update (MWU) framework \cite{AHK12MWU} and out-performing MWU is notoriously challenging and setting specific \cite{GPS16}. On the other hand, \cite{sherman2017area} and follow up work \cite{JST19,boob2019faster,CST20,assadi2022semi,JT23} open the door to improving the $\epsilon^{-1}$ to $\epsilon^{-2}$. Unfortunately, directly applying the approach of \cite{sherman2017area} for directed graphs faces immediately obstacles (as we discuss at the end of \Cref{sec:related_work}). 

\subsection{Our Results}
\label{sec:results}

The main results in this paper are a new almost linear time algorithms for solving \mcf{} problems on directed graphs. For the concurrent, maximum (weighted) \mcf{} problems, and more, we provide algorithms that compute $(1\pm\epsilon)$ approximate solutions in $\almostTime(m k \epsilon^{-1})$ time.
This improves upon the prior state-of-the-art runtimes which either had a worse dependence on one of $m, n, \epsilon^{-1}$ \cite{CLS19STOC, CKL+22, bz23} or applied only to undirected graphs \cite{sherman2017area}. 
We obtain these results by reducing these problems to a more general problem, \emph{(directed) composite \mcf{}s} (\Cref{prob:convex_mcflow}), which we show how to solve approximately by solving $\almostTime(k \eps^{-1})$ single-commodity convex flow problems.

Broadly, our method uses, extends, and applies the optimization framework underlying \cite{sherman2017area} to improve the efficacy of methods like those in \cite{radzik1996fast,grigoriadis1996approximate}. Our method builds upon \cite{sherman2017area}, follow up work~\cite{boob2019faster,JST19,CST20,assadi2022semi,JT23}, and optimization techniques of \cite{akps19, AKPS22, chen2023high}, to reduce \mcf{} problems to approximately solving $\almostTime(\epsilon^{-1})$ sub-problems. By using methods similar to those of \cite{radzik1996fast,grigoriadis1996approximate}, these subproblems can be solved in $\tilde{O}(k)$ iterations of solving single commodity convex flow problems.
These single commodity flow problems are perhaps more complicated and general then simply solving minimum cost flow, however they are still solvable to high accuracy in almost linear time due to \cite{CKL+22,van2023deterministic}. Consequently, altogether, we essentially show how we can accelerate the approximate optimization of \mcf{} problems.

\paragraph{Composite \mcf{}.}
More precisely, to obtain our results, we develop almost linear time algorithms for solving a more general \mcf{} problem which we call \emph{(directed) composite \mcf{}}. Composite  \mcf{} problems are defined on what we call an \emph{\mcf{} instance}, $\instance$, which consists of a directed graph $G$, a number of commodities $k$, and a set of demands $\DD \in \Z^{V \times k}$. For a vector $\bbeta \in \R^k$, we say that a flow $\FF \in \R_{\geq 0}^{E \times k}$ routes demand $\DD \ddiag(\bbeta) = (\bbeta_1 \dd_1, \ldots \bbeta_k \dd_k)$, and write $\imbal(\FF) = \DD \ddiag(\bbeta)$, if 
\begin{equation}\label{eq:node_imbalance}
    \sum_{e = (v, w), w \in V} \FF_{ei} - \sum_{e = (w, v), w \in V} \FF_{ei} = \beta_i (\dd_i)_v, \forall i \in [k], v \in V \,,
\end{equation}
meaning that the node imbalance for commodity $i \in [k]$ and vertex $v \in V$ is equal to $\beta_i (\dd_i)_v$. The formal definition of a \mcf{} instance is provided below. 

\begin{definition}[\mcf{} Instance]
\label{def:instance}
A \emph{\mcf{} instance} $\instance = (G = (V, E, \uu), k, \DD)$ on $k$ commodities consists of a directed graph $G$ with vertices $V$, edges $E$, capacities $\uu \in \Z_{>0}^{E}$, and demands $\DD \in \Z^{V \times k}$ with $\dd_i \ne \veczero^m, \forall i \in [k]$.
We call $\instance$ \emph{poly-regular} if $\|\DD\|_{\infty}, \|\uu\|_{\infty} \le \poly(m)$. We define 
\[
S_{\instance}
\defeq
\left\{
(\FF, \bbeta) ~ | ~ \im(\FF) = \DD \ddiag(\bbeta) \text{, and } \sum_{i \in [k]} \FF_{ei} \le \uu_e, \forall e \in E
\right\}\,.
\]
\end{definition}

Having defined a MCF instance, we now describe an instance of the composite MCF problem (\Cref{prob:convex_mcflow} defined below) and the goal of the problem. 
An instance for composite MCF,
in addition to the input for a MCF instance, receives
range bounds $0 \leq \RdLower \leq \RdUpper$. 
The feasible region for this problem, which we denote by $S_{\instance}(\RdLower, \RdUpper)$, consists
of all MCFs $\FF \in \R_{\geq 0}^{E \times k}$ and $\bbeta \in \R^k$ such that $\FF$ routes $(\bbeta_1 \dd_1, \ldots, \bbeta_k \dd_k) = \DD \ddiag(\bbeta)$, and each $\bbeta_i \in [\RdLower, \RdUpper]$. The goal of composite MCF is then to minimize the sum of convex costs on the entries of $\FF$ and $\bbeta$ over points in $S_{\instance}(\RdLower, \RdUpper)$. 

\begin{restatable}[(Directed) Composite \mcf{}]{problem}{cvxKCommFlow}\label{prob:convex_mcflow}
Given \mcf{} instance  $\instance$ 
and convex costs $\{c_{ei}\}$ and $\{v_i\}$ for each edge $e \in E$ and commodity $i \in [k]$, the \emph{composite \mcf{}} problem asks to solve
\begin{align*}
	\min_{(\FF, \bbeta) \in S_{\instance}(\RdLower, \RdUpper)}
	\dcmcObj(\FF,\bbeta)
	\text{ where }
	 \dcmcObj(\FF,\bbeta) \defeq
	c(\FF, \bbeta) + \congest(\FF)
\end{align*}
and 
\[
 c(\FF, \bbeta) \defeq  c(\FF) + v(\bbeta)
 \text{ where  }
 c(\FF) \defeq \sum_{e \in E, i \in [k]} c_{ei}(\FF_{ei})
 \text{  and }
 v(\bbeta) \defeq \sum_{i \in [k]} v_i(\bbeta_i)\,.
\]
\end{restatable}

Under mild assumptions on the instance and the $c_{ei}$ and $v_i$ we show that composite \mcf{} can solved to accuracy $\epsilon$ in $\almostTime(m k \epsilon^{-1})$ time. In particular, we assume functions $\{c_{ei}\}_{i \in V, e \in E}$ and $\{v_i\}_{i \in V}$ are what we call \emph{$m$-decomposable}, which includes a broad range of convex functions (see \Cref{def:decomposable} for more details).
The result is formalized in \Cref{thm:mainKCommFlow} and proved in \Cref{sec:kcommflow}.

\begin{restatable}[Composite MCF Algorithm]{theorem}{mainKCommFlowThm}\label{thm:mainKCommFlow}
Consider the setting of \Cref{prob:convex_mcflow}. 
There is an algorithm that, given as input a poly-regular \mcf{} instance $\instance$, $m$-decomposable functions $\{c_{ei}\}_{i \in V, e \in E}$ and $\{v_i\}_{i \in V}$, where $\{v_i\}_{i \in V}$ are $\poly(m)$-Lipschitz, and $\eps, \delta = \Omega(1/\poly(m))$, in $\almostTime(mk \eps^{-1})$ time outputs $(\algoutput{\FF}, \algoutput{\bbeta}) \in S_{\instance}(\RdLower, \RdUpper)$ such that 
\begin{align*}
\dcmcObj(\algoutput{\FF}, \algoutput{\bbeta}) \le \dcmcObj(\FF^{\star}, \bbeta^{\star}) + \eps\cdot\congest(\FF^{\star}) + \delta,
\end{align*}
where $(\FF^{\star},\bbeta^{\star}) \in \argmin_{(\FF, \bbeta) \in S_{\instance}(\RdLower, \RdUpper)} \dcmcObj(\FF, \bbeta)$.
\end{restatable}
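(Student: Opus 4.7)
The plan is to recast \Cref{prob:convex_mcflow} as a composite $\ell_{1,\infty}$-regression problem and then feed it through the two optimization reductions highlighted in the abstract to reach single-commodity convex flow subproblems. The key observation is that $\congest(\FF) = \max_{e \in E} \sum_{i \in [k]} \FF_{ei}/\uu_e$ is the $\ell_\infty$-over-edges of $\ell_1$-over-commodities norm of the rescaled flow $\ddiag(\uu)^{-1}\FF$. Combined with the separable composite cost $c(\FF) + v(\bbeta)$ and the linear routing constraint $\im(\FF) = \DD\ddiag(\bbeta)$ defining $S_\instance(\RdLower, \RdUpper)$, this exhibits $\dcmcObj$ as a composite $\ell_{1,\infty}$ objective on a polyhedron with clean block structure: $k$ essentially independent copies of the graph (one per commodity), coupled only through the per-edge $\ell_1$-sum inside the $\ell_\infty$ term and through the global variable $\bbeta$.

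To handle the non-smooth $\ell_\infty$ term, I will apply the composite $\ell_{1,\infty}\to \ell_{q,p}$ reduction promised in the abstract, which builds on the area convex regularizer of \cite{sherman2017area} and the accelerated box-simplex solver of \cite{JT23}. Dualizing the $\ell_\infty$ against the simplex $\{\yy \in \R^E_{\ge 0} : \vecone^\top \yy = 1\}$ yields the saddle-point problem $\min_{(\FF,\bbeta)\in S_\instance(\RdLower, \RdUpper)} \max_{\yy} \l\yy, \ddiag(\uu)^{-1}\sum_{i\in[k]} \ff_i\r + c(\FF) + v(\bbeta)$, and area-convex regularization of $\yy$ then produces an extragradient-style scheme in which each outer iteration reduces to a composite $\ell_{q,p}$-regression subproblem with $p$ close to $\infty$ (and $q$ its dual exponent). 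The accelerated analysis will give $m^{o(1)}\eps^{-1}$ outer iterations to reach additive error $\eps\cdot\congest(\FF^\star)$ on the $\ell_\infty$ component, precisely because the optimum value of that term equals $\congest(\FF^\star)$.

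Each composite $\ell_{q,p}$-regression subproblem will then be solved by the improved method from the abstract that makes only $\O_{q,p}(k)$ queries to a single-commodity convex minimization oracle. The block-decomposable structure across commodities, combined with the strongly convex $\ell_{q,p}$ geometry (rather than $\ell_{1,\infty}$), lets one treat each commodity as an independent high-accuracy subproblem on a single copy of the graph, coupled only through $\tilde O(1)$ shared residual quantities per inner round. I will implement each oracle call using the almost-linear-time high-accuracy single-commodity $m$-decomposable convex flow solvers of \cite{CKL+22, van2023deterministic, chen2023almost}, each running in $m^{1+o(1)}$ time. Multiplying through --- $m^{o(1)}\eps^{-1}$ outer iterations, $\O_{q,p}(k)$ oracle calls per iteration, $m^{1+o(1)}$ time per oracle call --- yields the claimed $\almostTime(mk\eps^{-1})$ runtime.

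The main obstacle will be a careful accounting of error propagation so that the final guarantee takes the form $\eps\cdot\congest(\FF^\star) + \delta$ rather than an absolute error scaled by ambient ranges. The $\ell_{1,\infty}\to\ell_{q,p}$ reduction must deliver accuracy that is relative to the $\ell_\infty$ optimum, which will require leveraging the boundedness encoded in $S_\instance(\RdLower, \RdUpper)$ together with the $\poly(m)$-Lipschitz hypothesis on the $v_i$ to keep the composite $\bbeta$-terms bounded. The poly-regularity of $\instance$ and the assumption $\delta = \Omega(1/\poly(m))$ will let me run every single-commodity oracle to precision $1/\poly(m)$ without asymptotically increasing its $m^{1+o(1)}$ cost, so that all accumulated oracle-level errors fit inside the final additive $\delta$. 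Verifying that the area convex saddle-point framework of \cite{sherman2017area, JT23} accepts genuinely composite inner terms (not merely linear coupling) --- and that the accelerated iteration bound survives in this composite regime while still calling the new $\O_{q,p}(k)$ block solver as a black box --- is where I expect the bulk of the technical work to sit.
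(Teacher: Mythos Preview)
Your proposal is correct and follows essentially the same pipeline as the paper: reduce composite MCF to composite $\ell_{1,\infty}$-regression (\Cref{prob:comp_ell1inf_regr}), invoke \Cref{thm:approxL1InfMin} whose proof internally performs the $\ell_{1,\infty}\to\ell_{q,p}$ reduction via the area-convex box-simplex framework, and implement each commodity's SCO with the almost-linear convex flow solver of \Cref{coro:convexFlow}. One small correction: the $\bbeta_i$ are \emph{not} a source of cross-commodity coupling---each $\bbeta_i$ is absorbed into commodity $i$'s own block as an extra $(m{+}1)$-th coordinate, and the paper scales it by $\eta = \frac{\min_i \|\dd_i\|_1}{2m\|\uu\|_\infty}$ precisely so that the resulting row $\eta\sum_i\bbeta_i$ is always dominated by $\congest(\FF)$, ensuring $\|\XX\|_{1,\infty}=\congest(\FF)$ exactly.
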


Leveraging \Cref{thm:mainKCommFlow}, we obtain $\almostTime(m k \epsilon^{-1})$-time algorithms for the concurrent MCF (\Cref{coro:approxConcurrentFlow}), the maximum MCF, as well as the maximum weighted MCF (\Cref{coro:approxMaxKCommFlow}).
The reduction of these problems to the composite MCF problem is shown in \Cref{sec:kcommflow}. 

\paragraph{Composite $\ell_{1,\infty}$-regression.} 
Our $\almostTime(m k \epsilon^{-1})$ time algorithm for \Cref{prob:convex_mcflow} makes key use in recent advances in single commodity flow \cite{CKL+22, van2023deterministic, chen2023almost}. In particular, we leverage that given any fixed commodity $i$ it is possible to minimize to additive error $\epsilon$, in $\almostTime(m \log \epsilon^{-1})$ time, a convex function 
that is comprised of $m$-decomposable (\Cref{def:decomposable}) costs on the edges and a convex, $\poly(m)$-Lipschitz function of $\beta$, the amount of demand routed. 
The formal definition of $m$-decomposable functions involves the notion of \emph{computable functions} \cite{CKL+22, van2023deterministic, chen2023almost} which describes functions that admit a self-concordant barrier for their epigraph with natural bounds.
$m$-decomposable functions constitute a class of convex functions 
that can be written as a sum of constantly many functions that are represented as $\min_{a+b = x, a \in [\lowerend, \upperend]} c_{1}(a) + c_{2}(b)$, where $c_1, c_2$ are  $(m, \log^{O(1)} c)$-computable. 
Leveraging this single-commodity flow subroutine, we show that to solve \Cref{prob:convex_mcflow}, it suffices to solve a problem which we call \emph{composite $\ell_{1,\infty}$-regression} provided we have suitable access to what we call \emph{$(\Time, \delta)$-separable convex minimization oracles}. 

Composite $\ell_{1,\infty}$-regression problems are defined on a set $\xset \subseteq \R^{m \times k}_{\ge 0}$ which is decomposable as $\xset = \oplus_{j \in [k]} \xset_j$ where  each $\xset_j$ is a convex subset of $\R^{m}$ and comes with a convex cost $\psi_j : \xset_j \rightarrow \R$.
The goal of the composite $\ell_{1,\infty}$-regression problem is to find $\XX \in \xset$ that minimizes the sum of convex costs $\psi_j(\XX_{j})$ and $\|\XX\|_{1, \infty}$, the $\ell_{1, \infty}$-norm of $\XX$ which is defined as $\max_{i\in[m]} \sum_{j\in[k]}|\XX_{ij}|$ (see \Cref{prob:comp_ell1inf_regr}). 
We consider this problem when
we are given access to what we call \emph{separable convex minimization oracles} for each $\xset_j, j \in [k]$ (\Cref{def:sepCvxOracle}). Such an oracle for $\xset_j$, when given $m$-decomposable costs on each entry can find $\xx_j \in \xset_j$ that minimizes the sum of the given entry-wise costs and $\psi_j(\xx_j)$.
\begin{definition}[$(\Time, \delta)$-SCO]
\label{def:sepCvxOracle}
We call $\oracle$ a \emph{$(\Time, \delta)$-separable convex minimization oracle ($(\Time, \delta)$-SCO)} for closed, non-empty $\xset \subseteq \R^m$ and convex $\psi: \xset \to \R$ if when queried with a collection of $m$-decomposable functions $\{c_{i}(\cdot)\}_{i \in [m]}$,  
and error $\delta > 0$ if  
it outputs, in $\Time$ time, $\algoutput{\xx} \in \xset$ with 
\begin{align*}
 \psi(\algoutput{\xx}) + c(\algoutput{\xx}) \leq
    \min_{\xx \in \xset} \psi(\xx) + c(\xx) + \delta
\text{ where }
c(\xx) \defeq \sum_{i \in [m]} c_i(\xx_i)\,.
\end{align*}
\end{definition}

We now define the composite $\ell_{1,\infty}$-regression problem. 

\begin{problem}[Composite $\ell_{1,\infty}$-Regression]\label{prob:comp_ell1inf_regr}
Given $\xset \subseteq \R^{m \times k}_{\ge 0}$ decomposable as $\xset \defeq \otimes_{j \in [k]} \xset_j$, where each $\xset_j$ is a nonempty, convex, compact subset of $\R^m_{\ge 0}$ and convex costs $\{\psi_j\}$ on each $\xset_j, j \in [k]$, the \emph{composite $\ell_{1,\infty}$-regression problem} asks to solve
\begin{align}
\label{eq:L1infMin}
\min_{\XX \in \xset} \cE_{1, \infty}(\XX) 
\text{ where }
	\cE_{1, \infty}(\XX) \defeq \psi(\XX) + \norm{\XX}_{1, \infty}
\text{ and }
	\psi(\XX) \defeq \sum_{j \in [k]} \psi_j(\XX_j)
\,. 
\end{align}
\end{problem}

Given a $(\Time_j, \delSCO)$-SCOs, for sufficiently small $\delSCO = \frac{1}{\poly(m, k)}$, for each component $(\xset_j, \psi_j)$ for $j \in [k]$, we provide an algorithm that solves \Cref{prob:comp_ell1inf_regr} to $(1 + \eps)$-approximation in $\almostTime(\sum_{j \in [k]} (\Time_j + m) \epsilon^{-1})$ time (stated informally below). 

\begin{informaltheorem}
\label{informal:approxL1InfMin}
Consider the setting of \Cref{prob:comp_ell1inf_regr}. 
Given $\eps, \delta = \Omega(1 / \poly(m))$ and a $(\Time_j, \delSCO)$ SCO for each component $(\xset_j, \psi_j)$ for $j \in [k]$ with $\delSCO = \delta / \poly(m)$,
\Cref{alg:composite_ell1inf_regression}, in $\almostTime(\sum_{j \in [k]} (\Time_j + m) \epsilon^{-1})$ time, outputs $\algoutput{\XX} \in \xset$ such that for $\XX^{\star} = \argmin_{\XX \in \xset} \cE_{1, \infty}(\XX)$, 
\begin{align*}
    \cE_{1, \infty}(\algoutput{\XX}) \le \min_{\XX \in \xset} \cE_{1, \infty}(\XX) + \epsilon \|\XX^{\star}\|_{1,\infty} + \delta
    \,.
\end{align*}
\end{informaltheorem}

As mentioned earlier, we show how to use this result  to obtain our $\almostTime(mk\eps^{-1})$ time algorithm for solving the composite \mcf{} problem (\Cref{prob:convex_mcflow}).
Here, we decompose an MCF instance among the $k$ commodities and the feasible region for each commodity corresponds to a typical single commodity flow instance.
We
implement a $(\almostTime(m), \delSCO)$-SCO for each commodity $i \in [k]$ using the convex flow solver from \cite{CKL+22}. 
This yields our composite MCF algorithm that runs in $\almostTime(mk\eps^{-1})$ time (\Cref{thm:mainKCommFlow}). 

Our composite $\ell_{1, \infty}$ regression algorithm is formalized as \Cref{thm:approxL1InfMin} and proved in \Cref{sec:L1infMin}. We develop and apply two optimization tools to obtaining this result. The first tool is a faster (by a factor of roughly $k$ compared to \cite{chen2023high}) high-accuracy algorithm for composite $\ell_{q, p}$-regression. The second tool is an optimization framework which allows for fine-grained rates for constrained, composite box-simplex games. We apply the first tool to implement iterates in the second tool which, when carefully applied, yields \Cref{thm:approxL1InfMin}. 

\paragraph{High-accuracy composite $\ell_{q, p}$-regression.}
To obtain our $\almostTime(\sum_{j \in [k]} (\Time_j + m) \epsilon^{-1})$-time algorithm for solving \Cref{prob:comp_ell1inf_regr}, we introduce optimization tools of potential independent interest.
First, generalizing the  aforementioned $\ell_{1, \infty}$ norm, we define the $\ell_{q, p}$ norm for any $\XX \in \R^{m \times k}$ and $q, p \ge 1$ as $\|\XX\|_{q,p} \defeq (\sum_{i\in[m]} (\sum_{j\in[k]} |\XX_{ij}|^q)^{p})^{\frac{1}{pq}}$ and define the \emph{composite $\ell_{q,p}$-regression} problem as follows:

\begin{restatable}[Composite $\ell_{q, p}$-Regression]{problem}{compLqpRegrProb}\label{prob:comp_ellqp_regr}
Given $\xset \subseteq \R^{m \times k}$ decomposable as $\xset \defeq \otimes_{j \in [k]} \xset_j$ where each $\xset_j$ is a nonempty, convex, compact subset of $\R^{m}$, convex costs $\{\psi_{j}\}$ on each $\xset_j, j \in [k]$, the \emph{composite $\ell_{q,p}$-regression problem} asks to solve
\begin{align}
\label{eq:comp_ellqp_regr}
\min_{\XX \in \xset} \cE_{q, p}(\XX)
\text{ where }
\cE_{q, p}^{\coefnorm}(\XX) \defeq \psi(\XX) + \coefnorm \norm{\XX}_{q, p}^{pq}
\text{ and }
\psi(\XX) \defeq \sum_{j \in [k]} \psi_j(\XX_j)
\,.
\end{align}
\end{restatable}

We show how to combine the multiplicative weight update (MWU) procedure of \cite{radzik1996fast} and the iterative refinement framework from \cite{chen2023high} to solve the $\ell_{q, p}$-regression problem to additive error $\delta$ in 
$O_{q, p}(\sum_{j \in [k]} (\Time_j \allowbreak + m) \log \frac{1}{\delta})$ time.
The iterative refinement approach of \cite{chen2023high}
allows us to reduce this problem to solving $O_{q, p}(\log \frac{1}{\delta})$ \emph{residual problems} up to $O_{q, p}(1)$-approximation.
By leveraging this
much relaxed requirement on the approximation, we can afford using the $\O(k \eps^{-2})$-query algorithmic framework of \cite{radzik1996fast}.
In fact, we show that $k$ queries are enough to solve the residual problem to $O_{q, p}(1)$-approximation (\Cref{lem:approxResSolver}), which yields an improvement over \cite{chen2023high} by a factor of $k$.
The result is formalized as \Cref{thm:comp_ellqp_regr} and proved in \Cref{sec:comp_ellqp_regr}.

\begin{informaltheorem}
\label{informal:comp_ellqp_regr}
Consider the setting of \Cref{prob:comp_ellqp_regr} where $1 < q \le 2 \le p$ and $\coefnorm = \exp(\O(1))$. 
Given a $(\Time_j, \delSCO)$ SCO for each component $(\xset_j, \psi_j)$ for $j \in [k]$
there is an algorithm that, in $O_{q, p}(\sum_{j \in [k]} (\Time_j + m))$ time, outputs $\algoutput{\XX} \in \xset$ such that 
\begin{align*}
\cE_{q, p}^{\coefnorm}(\algoutput{\XX}) \le \min_{\XX \in \xset} \cE_{q, p}^{\coefnorm}(\XX) + 3 k \delSCO\,.
\end{align*}
\end{informaltheorem}

As a result, we obtain the first almost-linear time algorithm that solves $\ell_{q, p}$ flows on directed graphs to high accuracy in $\almostTime(m k \log \epsilon^{-1})$ time, improving upon 
\cite{chen2023high} which obtained an $\almostTime(m k^2 \log \epsilon^{-1})$ rate for an undirected variant of the problem that didn't have composite terms.\footnote{It seems that the algorithm of \cite{chen2023high} can be straightforwardly extended to the directed case even with the composite term present. Our focus is on improving the quadratic dependency on $k$.} 

\paragraph{Fine-grained rates for constrained, composite box-simplex games.}
Towards obtaining this reduction, we build upon recent advances~\cite{sherman2017area, JST19, CST20, JT23} in solving \emph{box-simplex games},
a class of problems that include
$\ell_\infty$-regression over the $[0, 1]^n$ box, to solve \Cref{prob:comp_ell1inf_regr} efficiently. More precisely, we consider the following problem, which we call \emph{constrained, composite box-simplex games} 
\begin{equation}\label{eqprob:gen_box_simplex}
    \min_{\uu \in \uset} \max_{\vv \in \Delta^m} \vv^\top \AA \uu + \psi(\uu)\,,
\end{equation}
with arbitrary $\uset \subseteq [0, 1]^n$ and composite convex function $\psi$. 
Naively extending \cite{JT23} to solve this problem would get additive error $\epsilon$ in $\otilde(\|\AA\|_{\infty} \epsilon^{-1})$ steps. 
Mapping this to
the context of composite $\ell_\infty$ regression, with $\uset = \xset$ and $\AA = \AA_{1, \infty}$, which is the matrix satisfying $\|\AA_{1, \infty} \XX\|_{\infty} = \|\XX\|_{1, \infty}$, this would result in a query complexity (in calls to SCOs) of $\frac{k}{\|\XX^{\star}\|_{1, \infty}} k \epsilon^{-1}$ where $\XX^{\star}$ is the optimal solution (since $\norm{\AA_{1,\infty}}_{\infty} = k$).
When applied to \mcf{} problems (through careful binary search) this would result in a runtime of $\almostTime(m k^2 \epsilon^{-1})$ rather than $\almostTime(m k \epsilon^{-1})$. 

To improve upon this query complexity, 
as we explain more in detail in \Cref{sec:short_overview}, we build upon \cite{sherman2017area, JST19, CST20, JT23} to establish a method which solves \eqref{eqprob:gen_box_simplex} to additive error in  $\otilde(\max_{\uu \in \uset} \||\AA| \uu\|_{\infty} \epsilon^{-1})$ steps.
Each step of the method involves a suitable subproblem approximately optimizing over $\uset$. To obtain this method, we provide a straightforward generalization of\cite{JST19} to handle the constraint set $\uu$, the composite term $\psi$, and approximate error in the subroblems. Perhaps more interestingly, we also introduce a regularizer, resembling that of \cite{sherman2017area}, which enables this fine-grained improvement from $\|\AA\|_\infty$ to $\max_{\uu \in \uset} \||\AA| \uu\|_{\infty}$ in the convergence. 

To apply this new algorithm for \eqref{eqprob:gen_box_simplex} to obtain our results for solving composite $\ell_{1,\infty}$ one more insight is required. Rather than working with $\xset$, for which even $\max_{\uu \in \uset} \||\AA| \uu\|_{\infty}$ is too large to obtain our result, we work with a carefully selected subset $\sset \subseteq \xset$, which is known to contain the optimal solution $\XX^{\star}$. We then solve $\min_{\XX \in \sset} \max_{\yy \in \Delta^m} \yy^\top \AA \XX + \psi(\XX)$. We then set $\uset = \sset$ in \eqref{eqprob:gen_box_simplex}. This choice of $\sset$ improves $\max_{\uu \in \uset} \||\AA| \uu\|_{\infty}$ but makes $\uset$ no longer decomposable. However, we show that our improved high-accuracy algorithm for composite $\ell_{q, p}$-regression can still solve the necessary optimization sub-problems. 

Altogether, this reduces $\ell_{1, \infty}$-regression problem to $\almostTime(\eps^{-1})$ instances of the $\ell_{q, p}$-regression problem where $p = 2 \lceil \sqrt{\log m} \rceil + 1$ and $q = 1 + \frac{1}{p}$ and solve each instance to additive error $\delta = O(1/\poly(m))$ in $\almostTime(\sum_{j \in [k]} (\Time_j + m))$ time. 
For more details, see \Cref{sec:short_overview}. 

\paragraph{Open problems.}
This work improves upon the best known running times for solving a range of \mcf{} problems. Further improving the runtime for any of the three \mcf{} problems considered here would be a exciting progress. Related to this, determining the optimal dependence on $\eps^{-1}$ when $k = O(1)$ is an interesting open question in light of the hardness result of \cite{dkz22}. 
Furthermore, a natural follow-up question is to obtain algorithms for \mcf{} problems that only use $\otilde(k \epsilon^{-1})$ single-commodity flow optimizations.
More generally, 
determining the optimal query complexity for \Cref{prob:convex_mcflow}, or  finding further applications of our framework for composite $\ell_{1, \infty}$-regression or simpler ways of obtaining our result in \Cref{thm:approxL1InfMin} would be a fruitful and timely direction, given the recent attention to extragradient methods.

\subsection{Overview of approach}
\label{sec:short_overview}

Here, we provide a high-level overview of our approach.
First, in \Cref{subsec:shorter_overview}, we discuss how our methods apply to the concurrent \mcf{} problem. Then, in \Cref{subsec:extragrad_overview} and 
\Cref{subsec:high_acc_ellpq_overview} we explain in greater detail the techniques used in obtaining our result regarding efficiently solving \Cref{prob:comp_ell1inf_regr} (\Cref{thm:approxL1InfMin}), which,  
as discussed in \Cref{sec:intro}, enables us to obtain all of our results. Finally, in \Cref{subsec:obstacle} we provide further context for this approach discussing challenges in applying \cite{sherman2017area} directly to achieve our results for MCF.
 
\subsubsection{Concurrent \mcf{}}
\label{subsec:shorter_overview}

Here we present an overview of how our algorithm computes an $(1+\eps)$-approximate solution to the concurrent \mcf{} problem (\Cref{coro:approxConcurrentFlow}) on unit-capacitated, directed graphs.
For simplicity, we assume that the optimal congestion is $1$, whereas to prove \Cref{coro:approxConcurrentFlow} in its full generality, we binary search to obtain a value that is multiplicatively close to the optimal congestion.
To obtain our result, we reduce concurrent \mcf{} to $\almostTime(k \eps^{-1})$ single commodity flow computations and use the results for optimizing single-commodity flows (\Cref{coro:convexFlow}) to achieve an $\almostTime(mk\eps^{-1})$ running time.

First, we view concurrent \mcf{} as a \emph{box-simplex} game (see \Cref{subsec:extragrad_overview} for description) and use an extragradient method to compute an approximate solution for:
\begin{align*}
    \min_{\imbal(\FF) = \DD, \FF \ge 0} \congest(\FF) = \min_{\imbal(\FF) = \DD, \FF \ge 0} \max_{e \in E} \sum_{i \in [k]} \FF_{ei}= \min_{\imbal(\FF) = \DD, \FF \ge 0} \max_{\yy \in \Delta_E} \sum_{e \in E, i \in [k]} \yy_e \FF_{ei}
\end{align*}
Our convergence lemma (\Cref{lem:convergence}) 
reduces this problem to sequentially implementing $\O(\linfbound \eps^{-1})$ iterations of our extragradient method, where $\linfbound$ is the maximum congestion of a feasible flow. Each iteration consists of solving an instance of the following sub-problem: 
\begin{align*}
\min_{\imbal(\FF) = \DD, \FF \ge 0} \sum_{e \in E, i \in [k]} c_{ei}(\FF_{ei})\,,
\end{align*}
for some $m$-decomposable costs $\{c_{ei}(\cdot)\}$. 
By our assumption that the optimal congestion is $1$, we may restrict the space of flows by $\FF_{ei} \le 1, \forall e, i$ and since each sub-problem is decomposable among commodities, each iteration can be solved using $k$ single commodity flow computations, taking $\almostTime(mk)$-time. However, unfortunately, this only allows us to bound $\linfbound$ by $k$.
This is due to the fact that, without additional structure, the optimal solution to any subproblem above, could have the property that every commodity puts $1$ unit of flow on some edge $e$. 
This approach then results in a $\almostTime(mk^2\eps^{-1})$-time algorithm for solving the concurrent \mcf{} problem.

To improve the runtime by a factor of $k$, we constrain the domain to control the congestion. 
Specifically, we refine the feasible set to contain flows such that $\|\FF\|_{q, p} = \almostTime(1)$
for $p = 2 \lceil \sqrt{\log m} \rceil + 1$ and $q = 1 + \frac{1}{p}$. 
Any flow in this new feasible set will have a congestion at most $\almostTime(1)$ due to our choice of $q$ and $p.$
Now, the extragradient method, working over this set of flows of smaller congestion, takes only $\almostTime(\eps^{-1})$ iterations, but each iteration involves a more complex family of sub-problems:
\begin{align}\label{prob:restricted_step_concurrent}
\min_{\imbal(\FF) = \DD, \FF \ge 0, \|\FF\|_{q, p} \le R} \sum_{e \in E, i \in [k]} c_{ei}(\FF_{ei})
\end{align}
for some $R = \almostTime(1).$
The main difficulty of solving \eqref{prob:restricted_step_concurrent} is the fact that we have to optimize over a constraint set that is not decomposable over the commodities, in contrast with the set $[0, 1]^{E \times k}$. However, 
using binary search (see \Cref{lem:unconstrained_red}), this is equivalent, for a suitable coefficient $\coefnorm$, to solving to high accuracy the following composite $\ell_{q,p}$-norm \mcf{} problem:
\begin{align*}
\min_{\imbal(\FF) = \DD, \FF \ge 0} \sum_{e \in E, i \in [k]} c_{ei}(\FF_{ei}) + \coefnorm \|\FF\|_{q, p}^{pq}
\end{align*}

We then show that the composite $\ell_{q,p}$-norm \mcf{} problem can be solved to high accuracy in $\almostTime(mk)$-time, thus improving upon the $\otilde(m k^2)$ runtime of \cite{chen2023high}. 
Note that the objective function $\cE^{\coefnorm}_{q, p}(\FF) \defeq \sum_{e, i}c_{ei}(\FF_{ei}) + \coefnorm \|\FF\|_{q, p}^{pq}$ in the problem above is no longer separable over the commodities. 
Consequently, we follow the iterative refinement framework for $\ell_{q, p}$-regression developed in \cite{chen2023high}. Specifically, 
our algorithm starts with an initial feasible solution $\FF^{(0)}$ and, 
at the $t$-th iteration, given the current feasible solution $\FF^{(t)}$, we define a residual proxy $\cR(\bDelta) \approx \cE^{\coefnorm}_{q, p}(\FF^{(t)} + \bDelta) - \cE^{\coefnorm}_{q, p}(\FF^{(t)})$. 
Our algorithm then finds an update $\algoutput{\bDelta}$ that approximately solves the residual problem:
\begin{align*}
    \min_{\bDelta: \FF^{(t)} + \bDelta \text{ is feasible}} \cR(\bDelta)
\end{align*}

In contrast to \cite{chen2023high}, which solves the subproblems to high-accuracy, we only 
obtain an approximate solution of $(\alpha, \delta/3)$ multiplicative-additive error for each residual problem, (i.e., $\cR\left(\algoutput{\bDelta}; \XX\right) \le  \frac{1}{\alpha}  \min_{\bDelta: \XX + \bDelta \in \xset}\cR(\bDelta; \XX) + \delta/3$). Fortunately, by \Cref{coro:convRate}, this leads 
to an $\almostTime(\alpha \log \frac{1}{\delta})$\footnote{The $m^{o(1)}$ factor is due in part to our choice of $p$ and $q$. The asymptotic dependency on $p$ and $q$ is $(\frac{p}{q-1})^{O(\frac{1}{q-1})}$ (\Cref{thm:comp_ellqp_regr}).} bound on the iteration count for computing a $(1+\delta)$-approximation to the composite $\ell_{q,p}$-norm \mcf{} problem.
To obtain a solution of $(m^{o(1)}, \delta/3)$ multiplicative-additive error to the residual problem in $\almostTime(mk)$ time, 
we update one commodity at a time, minimizing a dynamic objective, which is reminiscent of the MWU framework used in \cite{radzik1996fast,grigoriadis1996approximate}.
Each single-commodity update is done via the almost-linear time single-commodity flow solver. 

In the remainder of the section, we elaborate on each of the ingredients of this approach, in the greater generality to which it applies to proving \Cref{thm:approxL1InfMin}. At the end, we explain why extending the approach of \cite{sherman2017area} for undirected graphs to directed graphs encounters roadblocks.

\subsubsection{Extragradient Methods}
\label{subsec:extragrad_overview}

To solve $\ell_{1, \infty}$-regression problem using $\almostTime(k\eps^{-1})$ queries to the SCOs (\Cref{thm:approxL1InfMin}), it suffices to work with the alterative formulation 
\[\min_{\XX \in \xset} \max_{\yy \in \Delta^m} \yy^\top \AA_{1, \infty} \XX + \psi(\XX)\,,\]
where $\xset$ is a convex, compact set, $\psi:\xset \to \R$ is a convex function, and
$\AA_{1, \infty} \in \R^{m \times mk}$ 
is the matrix with the property that $\|\XX\|_{1, \infty} = \max_{\yy \in \Delta^m} \yy^\top \AA_{1, \infty} \XX$ for every $\XX \in \R^{mk}$. 
A more general problem, which we call \emph{constrained, composite box-simplex games}, is 
\begin{equation}\label{eqprob:constr_box_simplex}
    \min_{\uu \in \uset} \max_{\vv \in \Delta^m} \vv^\top \AA \uu + \psi(\uu)\,,
\end{equation}
where $\uset \subseteq [0, 1]^n, \psi:\uset \to \R$ is convex, $\Delta^m = \{\yy \in \R^m_{\ge 0}: \sum_i \yy_i = 1\}$ is the $m$-dimensional simplex and $\AA \in \R^{m \times n}$. Later in \Cref{sec:all_of_4}, we design algorithms for solving this problem. 

Constrained, composite box simplex games generalize a problem known in the literature as \emph{box-simplex games}, where $\psi(\uu) = 0 \forall \uu \in [0, 1]^n$.
Specifically, the formulation is 
\begin{equation}\label{eqprob:box_simplex}
    \min_{\uu \in [0, 1]^n} \max_{\vv \in \Delta^m} \vv^\top \AA \uu + \psi(\uu)\,,
\end{equation}
State-of-the-art works that solve 
the box-simplex formulation \eqref{eqprob:box_simplex} apply extragradient methods. 
In particular, extragradient methods work by solving, at each step, two proximal gradient steps with respect to a convex regularizer $r:[0, 1]^n \times \Delta^m
\to \R$, of the form $\min_{(\uu, \vv) \in [0, 1]^n \times \Delta^m} \langle \ww, (\uu, \vv) \rangle + r(\uu, \vv)$ for some vector $\ww \in \R^n \times \R^m$. 
Working with a regularizer $r$ that is area-convex (see \Cref{def:area_convexity}), prior works~\cite{sherman2017area, JST19, CST20, JT23} show how to solve box-simplex games up to an additive error $\epsilon$ in 
$\otilde(\mathrm{range}(r, [0, 1]^n) \epsilon^{-1})$ steps, where $\mathrm{range}(r, \uset)\defeq \max_{\uu \in \uset, \vv \in \Delta^m} |r(\uu, \vv)|$ for any $\uset 
\subseteq [0, 1]^n$. 
Additionally these papers provide regularizers where $\mathrm{range}(r, [0, 1]^n) = \O(\|\AA\|_{\infty})$ and each iteration can be implemented in $O(\nnz(\AA))$ (number of non-zero entries of $\AA$) time, leading to $\O(\nnz(\AA) \|\AA\|_{\infty} \epsilon^{-1})$ runtimes for the problem. 

To obtain our improved results for \Cref{prob:comp_ell1inf_regr}, we develop methods that obtain a more fine-grained convergence rate in terms of the constraint set $\uset$ and handle an arbitrary $\uset \subseteq [0, 1]^n$ and a composite term $\psi$. 
While extending the framework of \cite{JT23} to work over $\uset \subseteq [0, 1]^n$ and with general convex $\psi$ is somewhat straightforward, identifying a more fine-grained condition and showing that it is obtainable requires more insight. 
For our purposes, we wish to pick a regularizer $r$ for which $\mathrm{range}(r, \uset)$ which is $O(\max_{\uu \in \uset} \||\AA| \uu\|_{\infty})$ (up to logarithmic factors in $\|\AA\|_{\infty}$, $\max_{\uu \in \uset} \||\AA| \uu\|_{\infty}$, and  $\diam(\uset)$). 
Yet, unfortunately, the regularizers and analysis \cite{sherman2017area, JST19, CST20, JT23} (naively) all only give bounds of $\mathrm{range}(r, \uset) = \Omega(\|\AA\|_{\infty})$.

To motivate the need for a more fine-grained convergence rate for solving \eqref{eqprob:constr_box_simplex}, consider our accuracy requirement for \Cref{prob:comp_ell1inf_regr}. We require additive error $\epsilon \|\XX^{\star}\|_{1, \infty}$ and to obtain it the number of steps suggested by prior methods is $\frac{\mathrm{range}(r, \xset)}{\|\XX^{\star}\|_{1, \infty}} \epsilon^{-1}$. This can be prohibitively large if $\|\XX^{\star}\|_{1, \infty}$ could be much smaller than $\mathrm{range}(r, \xset)$. In particular, in our example in \Cref{subsec:shorter_overview}, we had $\|\XX^{\star}\|_{1, \infty} = 1$ while $\mathrm{range}(r, \xset) = \O(\|\AA\|_{\infty}) = \O(k)$. 

To get around this issue we apply two techniques. 
First, we run our extragradient method on a smaller restriction $\sset \subseteq \xset$ that is known to contain the optimal solution, in the hopes that $\mathrm{range}(r, \sset)$ will be on the order of $\|\XX^{\star}\|_{1, \infty}$ up to factors of $\almostTime(1)$. 
In particular, we instead solve
the problem 
\begin{equation}\label{eqprob:comp_ell1inf_reform}
    \min_{\XX \in \sset} \max_{\yy \in \Delta^m} \yy^\top \AA \XX + \psi(\XX)\,,
\end{equation}
where 
\begin{align*}
    \sset \defeq \{\XX \in \xset \cap [0, 1]^{m \times k}: \|\XX\|_{q, p} \le \linfbound^{\star} \cdot m^{\frac{1}{pq}}\}
\end{align*}
for $p = 2 \lceil \sqrt{\log m} \rceil + 1$, $q = 1 + \frac{1}{p}$, and $\linfbound^{\star} = \|\XX^{\star}\|_{1, \infty}$, where $\XX^{\star} = \argmin_{\XX \in \xset} \cE_{1, \infty}(\XX)$ (i.e., the optimal solution for \Cref{prob:comp_ell1inf_regr}). 
Working over $\sset$ is favorable as we can obtain better bounds on the ``width'' of $\sset$, $\linfbound \defeq \max_{\XX \in \sset} \allowbreak \|\XX\|_{1, \infty}$. 
In particular, \Cref{obs:qp1inf} gives us that $\linfbound := \max_{\XX \in \sset} \|\XX\|_{1, \infty}$ is small relative to $\linfbound^{\star}$ (i.e., $\linfbound = \linfbound^{\star} \cdot m^{o(1)}$). 
Second, we use our \emph{doubly-entropic} regularizer, which we discuss towards the end of this subsection, when 
working over $\sset$ and obtain
an additive error $O(\epsilon \linfbound)$ in $\almostTime(\epsilon^{-1})$ iterations. 

To implement the steps of our extragradient method, which reduces to 
implementing the proximal gradient steps in a way that handles a composite convex function $\psi:\uset \to \R$ in the objective and a constraint set $\uset$, 
in \Cref{subsec:implement_oracles}, we provide a condition for what optimization operations we can perform over $\uset$.
Specifically, 
\Cref{defn:apx_best_resp} defines an oracle (termed \textit{approximate best response}) for approximately solving subproblems on $\uset$ that involve minimizing a linear function in $\uu$ plus a term $r(\uu, \vv)$ for a fixed $\vv \in \Delta^m$ over a certain subset $\uset \subseteq [0, 1]^n$. 
This slight extends the implementability of the (extra)gradient steps comparing to \cite{JT23}, as they assume the ability to exactly minimize such subproblems over the $[0, 1]^{n}$ box, while computing high-accuracy solutions is enough for our purposes. 
In the context of \eqref{eqprob:comp_ell1inf_reform}, 
the calls to this approximate best response oracle have the form, for some fixed $\Bar{\yy}$,
\[\min_{\XX \in \sset} \langle \hh, \XX \rangle + r(\XX, \Bar{\yy}) + \psi(\XX)\,,\]
and can indeed be implemented, as we discuss more in \Cref{subsec:high_acc_ellpq_overview}. 

Technically, each modification primarily involves carefully tweaking \cite{JT23} and generalizing their analysis. However, we think the observation that we can 
choose a regularizer which allows $\mathrm{range}(r, \uset)$ to be bounded (up to log factors) by $\max_{\uu \in \uset} \||\AA| \uu\|_{\infty}$ rather than $\|\AA\|_{\infty}$ 
can be particularly powerful (as our MCF results illustrate). 

We conclude by discussing the regularizer we leverage in greater detail. 
To solve \eqref{eqprob:box_simplex}, 
\cite{JST19} presents a regularizer $r^{\text{AC}}$,\footnote{Here, AC stands for area-convexity.}
that is area-convex with respect to $\gg$ over the entire space of $[0, 1]^n \times \Delta^m:$
\begin{align}
\label{eq:sherman}
r^{\text{AC}}(\uu, \vv) = \sum_{i \in [m], j \in [n]} \vv_i |\AA_{ij}| \uu_j^2 + \O(\|\AA\|_{\infty}) \sum_{i \in [m]} \vv_i \log \vv_i = \l\vv, |\AA| \uu^2\r + \O(\|\AA\|_{\infty}) \sum_{i \in [m]} \vv_i \log \vv_i.
\end{align}
To obtain a regularizer $r$ with $\mathrm{range}(r, \uset) \le \linfbound = \max_{\uu \in \uset} \||\AA| \uu\|_{\infty}$, one might attempt to leverage that $\linfbound$ could be smaller than $\|\AA\|_{\infty}$ and scale down  $r^{\text{AC}}$ by a factor of $\|\AA\|_{\infty} / \linfbound$. 
However, scaling $r^{\text{AC}}$'s range down to $\O(\linfbound)$ does not preserve area-convexity with respect to the gradient operator $\gg$. 
This comes from the fact that $\l\vv, |\AA| \uu^2\r$ differs from $\l\vv, |\AA| \uu\r$ by a factor of $\|\AA\|_{\infty}$ no matter how small $\||\AA| \uu\|_{\infty}$ is.
Instead, in \Cref{sec:regularizer}, we present a regularizer that is area-convex and has range $\O(\linfbound)$.
This reduces our algorithm's runtime for solving \Cref{prob:comp_ell1inf_regr} by a factor of $\frac{\|\AA\|_\infty}{\linfbound}$. Our regularizer is defined as follows:

\begin{definition}[Doubly Entropic Regularizer, a complete definition is in \Cref{def:regularizer_complete}]
\label{def:regularizer}
Consider the setting of \Cref{prob:comp_ell1inf_regr} and 
let $\AA \in \R^{m \times n}$.
Define the family of regularizers $r^{\alpha, \xi}: \xset \times \Delta^m \to \R$ parametrized by $\alpha \ge 0$ and $\xi > 0$, as follows: 
\begin{align*}
    r^{\alpha, \xi}(\uu, \vv) 
    &\defeq \sum_{i \in [m], j \in [n]} \left(\vv_i + \xi\right) |\AA_{ij}| (\uu_j + \xi) \log (\uu_j + \xi) + \alpha \sum_{i \in [m]} \vv_i \log \vv_i \,.
\end{align*}
\end{definition}

Our regularizer is related to a regularizer proposed by \cite{sherman2017area}, which also had a term of form $\sum_{i \in [m], j \in [n]} \vv_i |\AA_{ij}| \uu_j \log \uu_j$, and
has a similar form to $r^{\text{AC}}$ \eqref{eq:sherman}.
Our regularizer introduces a $\xi$ term in the expression $\l\vv + \xi, |\AA| (\uu + \xi) \log (\uu + \xi)\r$ and is carefully designed to hadnle several constraints. First, we wish to ensure that $\l\vv + \xi, |\AA| (\uu + \xi) \log (\uu + \xi)\r$ is a good approximation (within a $\poly(\log m, \log n)$ factor) of $\l\vv, |\AA| \uu\r$. Second, we want to ensure that the entries of $\nabla^2 r^{\alpha, \xi}(\uu, \vv)$ are bounded well in terms of $\max_{\uu \in \uset} \||\AA|\uu\|_{\infty}$. Lastly, for our applications to \Cref{prob:comp_ell1inf_regr}, we want the terms $\log(\vv_i + \xi)$ to be bounded in absolute by $O(\log m)$ everywhere. 

\subsubsection{High-Accuracy Composite \texorpdfstring{$\ell_{q, p}$}{Lqp}-Regression in \texorpdfstring{$\almostTime(k)$}{OHat(k)}-queries}
\label{subsec:high_acc_ellpq_overview}

As discussed in \Cref{subsec:extragrad_overview}, we need to implement some steps of form \begin{equation}\label{eq:best_resp_over_sset}
    \min_{\XX \in \sset} \langle \hh, \XX \rangle + \phi(\XX)
\end{equation} 
for implementing our best response oracle (\Cref{defn:apx_best_resp}).
Here $\phi(\XX)$ encodes a linear term and a regularizer term from the extragradient method, as well as $\psi(\XX)$, the given composite convex cost from \Cref{prob:comp_ell1inf_regr}.

To solve \eqref{eq:best_resp_over_sset},
via binary search (\Cref{lem:unconstrained_red}), this task can be reduced to minimizing the following objective to a small, say $1/\poly(m)$, additive error $\delta > 0$:
\begin{align}
\label{eq:overviewPQFlow}
    \min_{\XX \in \xset} \cE_{q, p}^{\coefnorm}(\XX) \defeq \langle h, \XX \rangle + \phi(\XX) + \coefnorm \norm{\XX}_{q, p}^{pq}\,,
\end{align}
for some value of $\coefnorm > 0$. 
Solving \eqref{eq:overviewPQFlow} up to $1/\poly(m)$ additive error is done by our algorithm for solving high-accuracy composite $\ell_{q, p}$-regression (\Cref{prob:comp_ellqp_regr}) to high-accuracy in $\almostTime(k)$-queries (see \Cref{informal:comp_ellqp_regr}). 
Without the $\coefnorm \|\XX\|_{q, p}^{pq}$ term, \eqref{eq:overviewPQFlow} can be easily solved in $k$ oracle calls, as the objective would be separable, so we can optimize over every $\xset_j$. However, the $\coefnorm \|\XX\|_{q, p}^{pq}$ term is not separable and makes solving \eqref{eq:overviewPQFlow} one of our main challenges. 

Our algorithm for solving \eqref{eq:overviewPQFlow} follows the iterative refinement framework from the $\ell_p$ regression literatures~\cite{akps19, AKPS22}.
At each iteration, the algorithm maintains a feasible point $\XX$ and tries to find a feasible update direction $\bDelta$ (meaning $\XX + \bDelta \in \xset$) so that $\cE_{q, p}^{\coefnorm}(\XX + \bDelta)$ is smaller. 
However, minimizing $\cE_{q, p}^{\coefnorm}(\XX + \bDelta)$ over all feasible updates is as hard as solving the original problem.
The idea of iterative refinement is to find a proxy objective $\cR(\bDelta; \XX)$ that approximates the residual $\cE_{q, p}^{\coefnorm}(\XX + \bDelta) - \cE_{q, p}^{\coefnorm}(\XX)$ and each iteration computes the update $\bDelta$ that only minimizes $\cR(\bDelta; \XX)$ approximately.
For our choice of $p$ and $q$, we design the proxy $\cR(\bDelta; \XX)$ so that $\almostTime(\alpha \log \frac{1}{\delta})$ iterations suffices to compute a $(1+\delta)$-approximate solution to \eqref{eq:overviewPQFlow} if at every iteration we compute an approximate solution of $(\alpha, \delta')$ multiplicative-additive error to minimizing $\cR$, i.e., $\cR\left(\algoutput{\bDelta}; \XX\right) \le \delta' + \frac{1}{\alpha}  \min_{\bDelta: \XX + \bDelta \in \xset}\cR(\bDelta; \XX)$, for $\delta' \le \frac{\delta}{3}$ (see \Cref{coro:convRate}). 
Additionally, we show that for $\delSCO = \frac{\delta}{\poly(k)}$, a solution of $(m^{o(1)}, O(k \delSCO))$ multiplicative-additive error to the proxy objective can be computed using $\almostTime(k)$ queries (\Cref{lem:approxResSolver}). 

Our design for the proxy $\cR(\bDelta; \XX)$ comes from approximating $\|\XX + \bDelta\|_{q, p}^{pq}$ effectively, as done in \cite{chen2023high}.
For any $p > 1$, \cite{AKPS22} shows that $|1+x|^p$ can be approximated by a term linear in $x$ plus an error term $\gamma_p(x)$ (\Cref{def:gamma}).
$\gamma_p(x)$ behaves like $|x|^2$ when $|x|$ is small and $|x|^p$ otherwise.
Similar to \cite{chen2023high}, we show, in \Cref{lem:qpIRLB} and \Cref{lem:qpIRUB},
\begin{align*}
    \norm{\XX + \bDelta}_{q, p}^{pq} - \norm{\XX}_{q, p}^{pq} - \l\gg, \bDelta\r \approx \sum_{i \in [m], j \in [k]} \hh_{ij} \gamma_q(\bDelta_{ij}; \XX_{ij}) + \sum_{i \in [m]} \left(\sum_{j \in [k]} \gamma_q(\bDelta_{ij}; \XX_{ij})\right)^p
\end{align*}
for $\gg \in \R^{m \times k}$ and $\hh \in \R^{m \times k}_{\ge 0}$ corresponding to the gradient and the Hessian of the function $\|\XX + \bDelta\|_{q, p}^{pq}$ respectively.

Via iterative refinement, \eqref{eq:overviewPQFlow} can be reduced to solving $m^{o(1)}\log \frac{1}{\delta}$ instances of the following problem to $(m^{o(1)}, \delta / 3)$ multiplicative-additive approximation:
\begin{align}
\label{eq:overviewResidualProb}
\min_{\bDelta \in \xset - \FF} \langle \gg, \bDelta\rangle + \phi(\XX + \bDelta) - \phi(\XX) + \sum_{i \in [m], j \in [k]} \hh_{ij} \gamma_q(\bDelta_{ij}; \XX_{ij}) + \sum_{i \in [m]} \left(\sum_{j \in [k]} \gamma_q(\bDelta_{ij}; \XX_{ij})\right)^p
\end{align}

To solve \eqref{eq:overviewResidualProb} approximately in $k$ SCO queries, which is an improvement over \cite{chen2023high} by a factor of $k$, we utilize a sequential block minimization algorithm.
The algorithm makes 1 SCO call for each of the $k$ blocks in a sequential manner.
In its application for the $\ell_{q,p}$ flow problem, this implies an $\almostTime(mk)$-time algorithm which improves on the previous $\almostTime(mk^2)$ runtime~\cite{chen2023high} and is key to our $\almostTime(mk\eps^{-1})$-time algorithm for approximate MCFs.

The sequential block minimization algorithm resembles the multiplicative weight update algorithm for the approximate \mcf{}s presented in \cite{grigoriadis1996approximate, radzik1996fast}, as it cycles through all $k$ commodities, minimizing a dynamic objective.

More precisely, starting from $[\mathbf{0}_m; \ldots; \mathbf{0}_m] \in \R^{m \times k}$
we cycle through all indices $j \in [k]$ in order one by one, at each step computing only one $\algoutput{\bDelta}_j$ at a time in order to minimize the proxy discussed in \Cref{subsec:high_acc_ellpq_overview}. 
At step $j$, we compute $\algoutput{\bDelta}_j$ that minimizes the objective based on the $j-1$ updates we have so far. 
Suppose we write the objective of \eqref{eq:overviewResidualProb} as $\cR([\bDelta_1; \ldots; \bDelta_k])$, $\algoutput{\bDelta}_j$ minimizes the following problem:
\begin{align*}
    \algoutput{\bDelta}_j \in \argmin_{\bDelta_j: \XX_j + \bDelta_j \in \xset_j} \cR(\left[\bDelta_1; \ldots; \bDelta_{j-1}; \bDelta_j; \mathbf{0}_m; \ldots; \mathbf{0}_m\right])
\end{align*}
As we show in \Cref{sec:comp_ellqp_regr}, 
$\algoutput{\bDelta}_j$ can be computed by calling the $(\Time_j, \delSCO)$-SCO for set $\xset_j$ and function $\psi_j$. Hence, our sequential block minimization procedure runs in $k$ SCO calls and $\almostTime(mk)$ additional time. 
In \Cref{sec:resP}, we show that the final output $\algoutput{\bDelta}$ of this procedure is an approximate solution of $(m^{o(1)}, O(k \delSCO))$ multiplicative-additive error (see \Cref{lem:approxResSolver}).

When applying our generic $\ell_{1,\infty}$-regression algorithm to \mcf{} problems, we implement the SCOs using \Cref{coro:convexFlow}.
For each commodity $j \in [k]$, the objective function corresponding to
the residual problem \eqref{eq:overviewResidualProb} is $\psi_j(\XX_j) + \sum_{i \in [m]} c_{ij}(\XX_{ij})$, for some $m$-decomposable costs $c_{ij}$. Since the expression $\sum_{i \in [m]} c_{ij}(\XX_{ij})$ is
separable over the edges, we are able to implement a $(\almostTime(m), \delSCO)$-SCO for commodity $j$, for $\delSCO = 1/\poly(m)$, via the algorithm given by \Cref{coro:convexFlow}.

\subsubsection{Discussion: Obstacles for Applying \texorpdfstring{\cite{sherman2017area}}{Sherman17}}
\label{subsec:obstacle}

One may be tempted to recover our results by more directly extending the results of \cite{sherman2017area}. 
However, there are several obstacles which the preceding approach overcame.

A straightforward application of the approach in \cite{sherman2017area} requires what is known as an $m^{o(1)}$-competitive linear oblivious routing scheme. In in undirected graphs, $\O(1)$-competitive linear oblivious routing exists and can be computed in $\O(m)$ time~\cite{peng16,li2025congestion}. However, unfortunately, for directed graphs such schemes do not exist; there are directed graphs for which any oblivious routing scheme has a competitive ratio of at least $\Omega(\sqrt{n})$~\cite{ACF2003}.

Another approach combines Sherman's numerical outer loop (Theorem 1.3 in \cite{sherman2017area}) with its area-convex regularizer and the almost linear-time single-commodity flow algorithm to solve the concurrent \mcf{} problem $\min_{\im(\FF) = \DD, \FF \ge 0} \congest(\FF).$ The number of iterations for the outer loop would be $\O(k \eps^{-1})$, with each iteration solving $k$ single-commodity flow problems. 
As discussed in the beginning of \Cref{sec:short_overview}, this results in an $\almostTime(mk^2 \eps^{-1})$-time algorithm, leading to a quadratic dependence on $k$ in the runtime.
Unlike the undirected case, where the iteration count is $\O(\eps^{-1})$, an additional factor of $k$ is required for directed graphs because the $k$ flows computed in each iteration can collectively have congestion as large as $\Omega(k)$. Additionally, the $\ell_{\infty}$-operator norm of $\congest(\cdot)$ is $k$ when viewed as a linear operator. These two factors necessitate increasing the range of the regularizer to $\O(k)$ in order to preserve \emph{area-convexity}. In contrast, in the undirected case, the congestion of the flows computed in each iteration is $\O(1)$ due to the existence of $\O(1)$-competitive oblivious routing.

In our work, we resolve these issues with an efficient $\ell_{q,p}$-norm flow algorithm for directed graphs, which, in each iteration, finds a \mcf{} with congestion as small as $m^{o(1)}$. We also design a new area-convex regularizer, whose range depends only on the congestion of the flows computed in each iteration, rather than on the $\ell_{\infty}$-operator norm. 
Additionally, we develop a more general optimization outer loop for solving a broader class of \mcf{} problems while Sherman's approach handles only the concurrent \mcf{} problem.

\subsection{Related Work}
\label{sec:related_work}

\paragraph{\Mcf{}s.}

\Mcf{} problems have been studied extensively and have numerous applications~\cite{kennington1978survey,ahuja1988network,ouorou2000survey,barnhart2009multicommodity,wang2018multicommodity}.
It is known to be equivalent to linear programming in certain computational models and there have been advances using linear programming algorithms~\cite{itai1978two, dkz22,khachiyan1980polynomial,karmarkar1984new,renegar1988polynomial,cls21}.

Recently, \cite{bz23} gives evidence that graph structure could help with an $\O(k^{2.5} \sqrt{m} n^{\omega - 1/2})$-time algorithm.
This is faster than state-of-the-art general LP solvers when the graph is not sparse, i.e., $m = n^{1+\Omega(1)}.$

Many prior works focus on computing $(1+\eps)$-approximate solutions for MCF problems~\cite{shahrokhi1990maximum,lsmtpt91,goldberg1992natural,grigoriadis1994fast,klein1994faster,karger1995adding,plotkin1995fast,Young95,grigoriadis1996approximate,radzik1996fast,shmoys1997cut,Fleischer00,Kar02,GargK07,Madry10}.

In particular, \cite{radzik1996fast} shows that $(1+\eps)$-approximate maximum concurrent \mcf{}s can be computed using $\O(k\eps^{-2})$ min-cost flow computations.
\cite{grigoriadis1996approximate} reduces $(1+\eps)$-approximate \emph{minimum-cost \mcf{}s} to $\O(k\eps^{-2})$ min-cost flow computations.

Combining with the recent development of min-cost flow algorithms, this yields a deterministic $\almostTime(mk\eps^{-2})$-time algorithm for this problem~\cite{ls19, bllsssw21, CKL+22, van2023deterministic}.
Another line of work focuses on obtaining algorithms whose runtime dependency on $\eps$ is better than $\eps^{-2}.$
\cite{bienstock2004solving} and \cite{Nesterov09} give runtime depending on $O(\eps^{-1} \log \eps^{-1})$ and $O(\eps^{-1})$ respectively although the runtimes are at least quadratic in $m$.

When the input graph is undirected, there are faster approximate MCF algorithms that do not require min-cost flow computations~\cite{KMP12, klos14, sherman2017area}.
In particular, \cite{sherman2017area} introduced the idea of \emph{area convexity} and obtained the first $\O(mk\eps^{-1})$-time max concurrent flow algorithm.
If $\omega(1)$-approximation ratio is allowed, there are algorithms that run in $\almostTime(m+k)$-time via dynamic shortest path data structures~\cite{chuzhoy2021decremental, haeupler2024low}. 
Additionally, if the graph is an expander, \cite{li2025local} provided a $O(m + \epsilon^{-3} k^3 D) n^{o(1)}$ algorithm for computing a $(1+\epsilon)$-approximate solution for ``source-sink'' demands of form $\dd_j = \vecone_{s_j} - \vecone_{t_j}$ (for pairs of vertices $(s_j, t_j)$), where $D = \sum_{j \in [k]} \|\dd_j\|_{\infty}$ is the total demand. 

\paragraph{Accelerated cyclic block coordinate methods.}
To implement the steps of our extragradient method to high accuracy, we use a cyclic block coordinate algorithm, as described in \Cref{sec:short_overview}. In some sense, our approach is an instance of boosting cyclic block coordinate methods to obtain faster runtimes. 
In tandem, there 
has been recent work on designing cyclic coordinate methods for convex optimization \cite{song2023cyclic,lin2023accelerated}. These methods work by partitioning the set of coordinates into blocks and cyclically updating each block. 
Furthermore, \cite{lin2023accelerated} has provided an accelerated cyclic block coordinate algorithm for minimizing convex functions that are smooth with respect to the $\ell_2$ norm. 
Their algorithm works by using an extragradient update rule to set the new value of a given block of coordinates. 
By contrast, our method uses an extragradient method in the outer loop, while the cyclic block coordinate method is used as an inner loop to implement the extragradient update steps.
Unfortunately, directly applying
the result of \cite{lin2023accelerated} to \mcf{} problems is prohibitively expensive. In particular, any objective function of $h(\XX)$ that approximates $\|\XX\|_{1, \infty}$ (which corresponds to the objective for concurrent \mcf{}) well and is $\ell_2$-smooth will have a smoothness coefficient that is $\Omega(\sqrt{m})$. Since the algorithm in \cite{lin2023accelerated} has a runtime depends linearly on the smoothness coefficient, naively applying it to concurrent \mcf{} would result in a runtime of $\Omega(m^{1.5} / \epsilon)$. 

\paragraph{$\ell_p$-regression.}

The problem of $\ell_p$-regression asks for $\xx \in \R^d$ minimizing $\|\AA \xx - \bb\|_p$ for some $p > 0$, $\AA \in \R^{n \times d}$ and $\bb \in \R^n.$
This interpolates between linear regression ($p = 2$) and linear program ($p \in \{1, \infty\}$).
Low-accuracy algorithms have been studied for over-constrained problems where $n = \omega(d)$ and have achieved runtime $O_p(\nnz(\AA) + \poly(d, \eps^{-1}))$~\cite{dasgupta2009sampling,meng2013robust,woodruff2013subspace,clarkson2016fast,clarkson2017low,clarkson2019dimensionality}.

There is also a line of work on high-order methods for finding high-accuracy solutions.
\cite{ls19} developed interior point method for convex optimization which can be used to achieved $\O(\sqrt{\min\{n, d\}})$ iterations for $p \in \{1,\infty\}$ \cite{LeeS15}.
\cite{bubeck2018homotopy} proposed homotopy methods whose iteration count is $\O(n^{|\frac{1}{2}-\frac{1}{p}|}).$
The iterative refinement framework, which is used in our algorithm, can solve $\ell_p$ regression via solving $\O_p(d^{(p - 2) / (3p - 2)})$ linear systems for $p \ge 2$~\cite{akps19,APS19,AdilS20,GPV21,AKPS22,jambulapati2022improved,jambulapati2023sparsifying}.

A class of $\ell_p$-regression problems, named $\ell_p$ flows, has been studied extensively and has been used to design faster algorithms for problems such as maxflows and min-cost flows~\cite{ls20, AMV20}.
$\ell_p$ flows interpolate between transshipment ($p=1$), Laplacian systems ($p=2$), and maxflows ($p=\infty$).
\cite{SpielmanTengSolver:journal} gave the first near-linear time algorithm for the case $p=2$ on undirected graphs.
For $p \in [\omega(1), o(\log m)]$, \cite{kpsw19} gave the first almost-linear time algorithm on unweighted undirected graphs.
\cite{CKL+22} gave the first almost-linear time algorithm for any $p \ge 1$ on weighted directed graphs.
Our algorithm is based on solving its multi-commodity variation, the $\ell_{q, p}$ flow, which was proposed and was solved to high-accuracy in $\almostTime_{p, q}(mk^2)$ time in \cite{chen2023high}.

\subsection{Paper Organization.}
In \Cref{sec:all_of_4}, we present and prove our general framework for solving composite bilinear optimization problems.
We state and show the iterative refinement framework for solving composite $\ell_{q, p}$-regression problems in \Cref{sec:comp_ellqp_regr}.
\Cref{sec:L1infMin} concludes our algorithm framework for approximately solving composite $\ell_{1, \infty}$-regression problems and proves our $\almostTime(mk\eps^{-1})$-time \mcf{} algorithms. 
In \Cref{apx:implementing_steps}, we present our extension of the framework in \cite{JT23} for solving constrained box-simplex games. In \Cref{subsec:unconstrained_red}, we prove the results regarding the binary search procedure that is key to reducing the steps for solving $\ell_{1, \infty}$ regression problem to solving $\ell_{q, p}$ regressions. 
Finally, in \Cref{sec:proving_computability}, we prove computability and decomposability of the functions we work with.

\section{Preliminaries}
\label{sec:prelim}

\paragraph{General notation.}
We denote vectors as lower case bold letters $\xx$, and matrices as upper case bold letters $\XX.$
When a vector $\xx \in \R^d$ is clear from the context, we write $\XX$ to denote $\ddiag(\xx)$, the diagonal matrix whose $i$-th element on the diagonal is $\xx_i.$
We also write $\ddiag(\MM)$ for a matrix $\MM$ to denote the matrix obtained from taking the elements from $\MM$'s diagonal and putting $0$'s everywhere else.
We define $\vecone^n$ to be the $n$-dimensional vector with all entries equal to $1$ and $\vecone_i$ to be the vector where the $i$-th component is $1$ and all others are zeros.
Given two vectors $\xx, \yy$, we write $\l\xx, \yy\r$ to denote their inner dot product. 
Additionally, given two matrices $\XX$ and $\YY$, we write $\l\XX, \YY\r$ to denote $\l\vvec(\XX), \vvec(\YY)\r$ where $\vvec(\XX)$ flattens the matrix into a vector.

We define the $m$-dimensional simplex $\Delta^m = \{\yy \in \R^m_{\ge 0}: \sum_{i \in [m]} \yy_i = 1\}$.
For $\AA \in \R^{n \times d}$ and $p > 0$, we define $\|\AA\|_{p} = \max_{\|\xx\|_p \le 1} \|\AA \xx\|_p$. Consequently, $\|\AA\|_{\infty} = \max_i \sum_j |\AA_{ij}|.$
For $P \subseteq \R^n$, we denote $\mathrm{diam}(P) \defeq \argmax_{\xx \in P} \|\xx\|_{\infty}$. 
For $q, p > 0$ and $\XX \in \R^{n\times d}$, we define the $\ell_{q, p}$ norm of $\XX$ as $\|\XX\|_{q, p} = (\sum_i (\sum_j |\XX_{ij}|^q)^{p})^{\frac{1}{pq}}.$
Consequently, $\|\XX\|_{1, \infty}$ is $\max_i \sum_j |\XX_{ij}|.$

Given a differentiable function $f$ and any two points in its domain $\xx, \yy \in \mathrm{dom}(f)$, we denote $V^f_{\xx}(\yy)$ to be its Bregman divergence from $\xx$ to $\yy$, i.e., $V^f_{\xx}(\yy) = f(\yy) - \l\g f(\xx), \yy-\xx\r - f(\xx).$

\paragraph{Graphs and flows.}
In this paper, we work with directed graphs $G = (V, E)$ with $n$ vertices, $m$ edges, and capacity vectors $\uu \in \R^{E}$.
A flow $\ff \in \R^E$ is a vector indexed by edges of $G$ and 
a vertex demand vector $\dd \in \R^V$ is a vector indexed by the vertices.
We define the vector of node imbalances $\imbal(\ff)$ as:
\begin{align*}
    \imbal(\ff)_a \defeq \sum_{e = (a, b) \in E} \ff_e - \sum_{e = (b, a) \in E} \ff_e, \forall a \in V
\end{align*}
We say a flow $\ff$ routes a demand $\dd$ if $\imbal(\ff) = \dd.$
Given edge capacities $\uu \in \R^{E}_{\ge 0}$, a flow $\ff$ is feasible if $0 \le \ff_e \le \uu_e$ holds for any edge.
A \mcf{} $\FF = \{\ff_1, \ldots, \ff_k\} \in \R^{E \times k}$ is a collection of $k$ flows and its congestion is defined as
\begin{align*}
    \congest(\FF) \defeq \max_{e \in E} \frac{\sum_{i \in [k]} |\ff_{ie}|}{\uu_e} = \|\UU^{-1}\FF\|_{1, \infty}
\end{align*}

\paragraph{Computable and decomposable functions.} Throughout the paper, we assume that we can minimize to high-accuracy a broad class of convex costs on the flow values $\ff_e$ on the edges of a graph, which we call $m$-decomposable (and define formally in this section). A more restrictive class of functions that was used in \cite{CKL+22} is that of \textit{computable functions}, formally defined below. 

\begin{definition}[Computable Functions]\label{def:computable}
    Given a single variable convex function $c: \R \to \R \cup \{+\infty\}$, $c$ is $(m, K)$-\emph{computable} if there exists a barrier function $\barrier_c(x, t)$ defined on the domain $\dset_c \defeq \{(x, t) |~ c(x) \le t\}$ such that
\begin{enumerate}
\item\label{item:costQuasiPoly}
The function value is quasi-polynomially bounded, i.e., $|c(x)| = O(m^K + |x|^K)$ for all $x \in \R.$ 
\item\label{item:costSC}
$\barrier_c$ is a \emph{$\nu$-self-concordant barrier} for some $\nu \le K$, that is, the following holds
\begin{align*}
    \barrier_c(\uu) &\to \infty, \text{ as $\uu$ approaches the boundary of $\dset_c$} \\
    \left|\g^3 \barrier_c(\uu) [\vv, \vv, \vv]\right| &\le 2 \left(\g^2 \barrier_c(\uu) [\vv, \vv]\right)^{3/2}, \forall \uu \in \dset_c, \vv \in \R^2 \\
    \l\g \barrier_c(\uu), \vv\r^2 &\le \nu \cdot \g^2 \barrier_c(\uu) [\vv, \vv]
\end{align*}
\item\label{item:costHessian}
The Hessian is quasi-polynomially bounded as long as the function value is $\O(1)$ bounded, i.e., for all points $|x|, |t|\le m^K$ with $\barrier_c(x, t)\le \O(1)$, we have $\g^2 \barrier_c(x, t) \preceq \exp(\log^{O(1)} m) \II$. 
\item\label{item:costHessianCompute}
Both $\g \barrier_c$ and $\g^2 \barrier_c$ can be computed and accessed in $\O(1)$-time.
\end{enumerate}
\end{definition}

Note that \Cref{def:computable} includes functions such as $x^p$ for $p \ge 1$, $x \log x$, and indicator functions. 
The definition of $(m, K)$-computable functions is equivalent to the one in \cite{CKL+22} (see Assumption 10.2). 
To obtain our results for \mcf{} problems, we work with a broader class of functions, which we call \textit{self-concordant decomposable}, or, in short, \textit{decomposable}. The building blocks of decomposable functions are computable functions. The precise definition is provided below. 

\begin{definition}[Decomposable Functions]
\label{def:decomposable}
Let $m \in \Z_{>0}$. 
A single variable convex $c: \R \to \R \cup \{+\infty\}$ is $m$-self-concordant decomposable ($m$-decomposable) if it can be written as $c(x) = \sum_{\ell \in [O(1)]} c_{i}(x)$ such that each $c_{i}: \R \to \R \cup \{+\infty\}$ can be written, for $0\le\lowerend^{(i)}\le\upperend^{(i)}$, as $c_{i}(x) = \min_{a+b = x, a \in [\lowerend^{(i)}, \upperend^{(i)}]} c_{i, 1}(a) + c_{i, 2}(b)$ with both $c_{i, 1}, c_{i, 2}$ being $(m, O(\log^{O(1)} m))$-computable. 
\end{definition}

Note that every $(m, \log^{O(1)})$-computable function $c$ is also $m$-decomposable, as we can write, 
for $c_1(x) = 0 \forall x, \lowerend = \upperend = 0, c_2(x) := c(x), \forall x$, 
$c(x) = \min_{a+b = x, a \in [\lowerend, \upperend]} c_1(x) + c_2(x)$. 
Additionally, note that if $c(x)$ is $(m, \otilde(1))$-computable, $c(\vartheta x)$ is also $(m, \otilde(1))$-computable computable for $\vartheta \ge 0, \vartheta = \otilde(1)$. 
Consequently, if $c(x)$ is $m$-decomposable, $c(\vartheta x)$ is also $m$-decomposable for $\vartheta \ge 0, \vartheta = \otilde(1)$.

As mentioned in \Cref{sec:intro}, \cite{CKL+22, van2023deterministic} gave an almost-linear time algorithm for finding a flow that routes a given demand and minimizes a separable convex cost. The formal theorem is stated below. 
\begin{theorem}[Theorem 10.13, \cite{CKL+22}]
\label{thm:convexFlow}
Let $\eta = \exp(-\log^{O(1)} m)$ be a granularity parameter and 
$G$ a graph with $m$ edges, capacity vector $\uu \in \eta\Z^E_{>0}$ and demand vector $\dd \in \eta\Z^V$.
Given $(m, \log^{O(1)} m)$-computable costs $\{c_e(\cdot)\}_e$ on edges and $C > 0$, there is an algorithm that runs in $\almostTime(m)$ time and outputs a flow $\algoutput{\ff} \in \R^E$ with $\imbal(\algoutput{\ff}) = \dd$ and, 
\begin{align*}
    c(\algoutput{\ff}) \le \min_{\imbal(\ff^{\star}) = \dd, 0 \le \ff^{\star} \le \uu} c(\ff^{\star}) + \exp(-\log^C m)
        \text{ where }
    c(\ff) \defeq \sum_{e\in E} c_e(\ff_e)\,.
\end{align*}
\end{theorem}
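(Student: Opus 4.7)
The plan is to establish this theorem by applying the interior point method (IPM) framework of \cite{CKL+22} to the generalized setting where edge costs are $(m, \log^{O(1)} m)$-computable, rather than only the specific cost functions (capacity barriers and linear costs) treated in the min-cost flow setting. The strategy is to reformulate the problem as a linear minimization over a self-concordant feasible set, then run a robust IPM whose Newton steps are implemented via a dynamic min-ratio cycle data structure.

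First I would introduce auxiliary variables $\tt \in \R^E$ and rewrite the problem as minimizing $\sum_{e \in E} \tt_e$ subject to $\imbal(\ff) = \dd$, $0 \le \ff_e \le \uu_e$, and $(\ff_e, \tt_e) \in \dset_{c_e}$ for every edge. Since each $c_e$ is $(m, \log^{O(1)} m)$-computable, Property \ref{item:costSC} of \Cref{def:computable} provides a $\nu$-self-concordant barrier $\barrier_{c_e}(\ff_e, \tt_e)$ with $\nu = \log^{O(1)} m$, and combined with a logarithmic barrier for the box constraints on $\ff_e$, the whole feasible polytope admits a self-concordant barrier with parameter $\nu_{\text{tot}} = m \cdot \log^{O(1)} m$. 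Running a standard path-following IPM on the linear objective $\sum_e \tt_e$ would normally require $\tilde{O}(\sqrt{\nu_{\text{tot}}}) = \tilde{O}(\sqrt{m})$ iterations; however, using the robust IPM framework of \cite{CKL+22, van2023deterministic}, one can tolerate low-accuracy Newton steps at the cost of requiring only $m^{1+o(1)}$ total iterations while allowing each step to change only a few coordinates in amortized sense.

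Second I would implement each IPM step by reducing it to a min-ratio cycle problem on $G$ weighted by the current Hessian of the barriers, and then apply the dynamic data structure of \cite{CKL+22} that maintains a hierarchy of low-stretch spanning forests and spectral sparsifiers under the slowly-changing edge weights produced by the IPM. The boundedness of the Hessian entries guaranteed by Property \ref{item:costHessian} (quasi-polynomial bounds when the function value is bounded, which holds along the central path up to accuracy $\exp(-\log^C m)$) is exactly what allows the dynamic data structure to keep the weight changes within the stability regime required for $m^{o(1)}$-amortized update time. Property \ref{item:costHessianCompute} ensures that each local Hessian query is $\O(1)$-time so it does not contribute a bottleneck.

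Finally, I would bound the total number of iterations needed to reach the claimed additive accuracy $\exp(-\log^C m)$ by a standard duality-gap analysis for robust IPMs combined with the quasi-polynomial bound on $|c(\ff)|$ from Property \ref{item:costQuasiPoly}, which controls the initial duality gap and the required reduction. The main obstacle I expect is verifying that the dynamic data structure of \cite{CKL+22}, originally analyzed for the specific barrier $-\log(\ff_e) - \log(\uu_e - \ff_e)$ plus linear costs, extends to arbitrary $(m, \log^{O(1)} m)$-computable barriers; this boils down to checking that the stability estimates in their analysis go through using only the generic self-concordance bounds and the Hessian bound from \Cref{def:computable}, rather than any special algebraic structure of the log-barrier. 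Once this verification is done, the final rounding step to $\imbal(\algoutput{\ff}) = \dd$ exactly (rather than approximately) follows from flow-augmentation on a cycle-basis residual, which costs an additional $\O(m)$.
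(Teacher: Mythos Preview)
This theorem is not proved in the present paper at all: it is imported verbatim as Theorem~10.13 of \cite{CKL+22} and used as a black box. The paper only builds \emph{on top} of it (e.g., \Cref{coro:convexFlow} and \Cref{thm:convexFlow_gen} are derived from it by graph gadgets and a one-dimensional binary search), so there is no ``paper's own proof'' to compare your proposal against.

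Your sketch is a reasonable high-level summary of how \cite{CKL+22} actually establishes this result---epigraph lifting via the self-concordant barriers supplied by \Cref{def:computable}, a robust IPM whose steps are min-ratio cycle computations, and the dynamic low-stretch/sparsifier data structure for implementing those steps in amortized $m^{o(1)}$ time. The one place where your outline is a bit off is the iteration count: the robust IPM in \cite{CKL+22} does not use $m^{1+o(1)}$ iterations; it uses $\tilde{O}(\sqrt{m})$ (or $m^{1/2+o(1)}$) potential-reduction steps, and the almost-linear total runtime comes from each step costing amortized $m^{1/2+o(1)}$ via the data structure. But since the present paper treats the theorem as given, none of this needs to appear here.
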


We apply \Cref{thm:convexFlow} to obtain a class of single commodity flow optimization results more easily applicable for our purposes, \Cref{coro:convexFlow} below. 

\begin{corollary}
\label{coro:convexFlow}
Let $\eta = \exp(-\log^{O(1)} m)$ be a granularity parameter and 
$G$ a graph with $m$ edges, capacity vector $\uu \in \eta\Z^E_{>0}$ and demand vector $\dd \in \eta\Z^V$.
Given $0 \le \RdLower \le \RdUpper \le \poly(m)$, fixed constant $C>0$, $m$-decomposable costs $\{c_e(\cdot)\}_e$ on edges, 
and $\poly(m)$-Lipschitz $v: \R \to \R$, there is an algorithm that, in $\almostTime(m)$ time, outputs a flow $\algoutput{\ff} \in \R^E$ and $\algoutput{\beta} \in \eta \Z_{> 0}$ so that $\imbal(\algoutput{\ff}) = \algoutput{\beta} \dd$ and
\begin{align*}
    c(\algoutput{\ff}, \algoutput{\beta}) \le \min_{\beta^* \in [\RdLower, \RdUpper], \imbal(\ff^{\star}) = \beta^{\star} \dd, 0 \le \ff^{\star} \le \uu} c(\ff^{\star}, \beta^{\star}) + \exp(-\log^C m)
        \text{ where }
    c(\ff, \beta) \defeq \sum_{e\in E} c_e(\ff_e) + v(\beta)\,.
\end{align*}
\end{corollary}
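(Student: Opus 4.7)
The proof reduces \Cref{coro:convexFlow} to \Cref{thm:convexFlow} via two reductions: a local gadget that replaces every edge with an $m$-decomposable cost by a chain of $O(1)$ edges with $(m,\log^{O(1)}m)$-computable costs, and a one-dimensional discrete search over $\beta$ that absorbs the composite term $v(\beta)$ and the demand scaling.

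For the gadget, consider any edge $e = (u_e, v_e)$ whose cost decomposes as $c_e(x) = \sum_{\ell=1}^{L} c_{e,\ell}(x)$ with $L = O(1)$ and $c_{e,\ell}(x) = \min_{a+b=x,\ a \in [\lowerend_{e,\ell}, \upperend_{e,\ell}]} c_{e,\ell,1}(a) + c_{e,\ell,2}(b)$. I replace $e$ by a chain of internal vertices $u_e = w_{e,0}, w_{e,1}, \ldots, w_{e,L} = v_e$ and, between each consecutive pair, two parallel edges carrying the $a$- and $b$-components with their respective $(m, \log^{O(1)} m)$-computable costs. Flow conservation at each internal $w_{e,\ell}$ forces all parallel pairs to carry the same total flow, which equals the original $\ff_e$; minimizing over the $a$-edges (with the interval constraint $a \in [\lowerend_{e,\ell}, \upperend_{e,\ell}]$ enforced either by capacity bounds or by a log-barrier folded into the computable cost) reproduces $c_e(\ff_e)$. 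Shifting the $b$-edges by a constant and absorbing this into the demand at the endpoints enforces the $0 \le \ff \le \uu$ requirement of \Cref{thm:convexFlow}. The resulting graph $G'$ has $O(m)$ vertices and edges, with transformed demand $\tilde{\dd}$ that agrees with $\dd$ on original vertices and encodes the gadget offset elsewhere.

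For the search on $\beta$, define $g(\beta) \defeq \min\{c'(\ff):\ 0 \le \ff \le \uu',\ \imbal(\ff) = \beta \tilde{\dd} + \tilde{\dd}_{\mathrm{shift}}\}$ on $G'$, where $c'$ is the gadget-transformed edge cost and $\tilde{\dd}_{\mathrm{shift}}$ collects the fixed $b$-edge offsets. By joint convexity of the feasible set and $c'$ in $(\ff, \beta)$, the function $g$ is convex in $\beta$, and therefore $h(\beta) \defeq g(\beta) + v(\beta)$ is convex on $[\RdLower, \RdUpper]$. Since the output $\algoutput{\beta} \in \eta\Z$ and $\RdUpper - \RdLower \le \poly(m)$, the candidate grid $\eta\Z \cap [\RdLower, \RdUpper]$ has at most $\exp(\log^{O(1)} m)$ points; a discrete ternary (or golden-section) search on $h$ therefore terminates in $O(\log^{O(1)} m)$ iterations, each invoking \Cref{thm:convexFlow} on $G'$ with demand $\beta \tilde{\dd} + \tilde{\dd}_{\mathrm{shift}}$ in $\almostTime(m)$ time. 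Projecting the returned $G'$-flow back down to a flow on $G$ (by aggregating the gadget) and pairing it with the optimizing $\algoutput{\beta}$ produces the required output in total time $\almostTime(m)$.

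The main technical delicacy is propagating the $\exp(-\log^C m)$-type approximation errors through the 1D search. Each call to \Cref{thm:convexFlow} only returns a $\delta'$-additive approximation of $g(\beta)$, so a comparison $\hat h(\beta_1) \gtrless \hat h(\beta_2)$ inside ternary search can be wrong when $|h(\beta_1) - h(\beta_2)| \le 2\delta'$. By convexity of $h$ in such a near-tie a wrong branch still keeps the candidate interval within $O(\delta')$ in $h$-value of the true minimum, so errors from wrong comparisons do not accumulate beyond $O(\delta')$ per iteration. Choosing $\delta' = \exp(-\log^{C + \Theta(1)} m)$ therefore keeps the $h$-error of the returned grid point below $\exp(-\log^C m)$ after $O(\log^{O(1)} m)$ iterations; the $\poly(m)$-Lipschitz bound on $v$ (which controls how sensitively $v$ can swing over a single grid step $\eta$) and the quasi-polynomial boundedness of $(m, \log^{O(1)} m)$-computable costs (which bounds $g$ and prevents pathological slopes on the interior of $[\RdLower, \RdUpper]$) together ensure that the final pair $(\algoutput{\ff}, \algoutput{\beta})$ attains the claimed additive guarantee.
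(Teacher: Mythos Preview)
Your approach matches the paper's at the high level: a gadget that rewrites each $m$-decomposable edge cost as $(m,\log^{O(1)}m)$-computable costs on $O(1)$ auxiliary edges (the paper does this in \Cref{lem:min_split_Flow_gen} and \Cref{thm:convexFlow_gen}), followed by a one-dimensional search over $\beta$ that uses the single-commodity solver of \Cref{thm:convexFlow} as an approximate value oracle.

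The one substantive difference is how Lipschitzness of $h(\beta)=g(\beta)+v(\beta)$ is secured, and this is where your argument is thin. You appeal to ``boundedness of the computable costs prevents pathological slopes on the interior,'' but convex plus bounded only gives a Lipschitz constant that blows up near the endpoints of $[\RdLower,\RdUpper]$, so the discretization step (replacing $\beta^\star$ by the nearest point of $\eta\Z$) and the approximate-oracle search are not cleanly controlled by your reasoning alone. The paper sidesteps this by adding a tiny strongly convex regularizer $\exp(-\log^{O(1)}m)\sum_e \ff_e^2$ to the edge cost; the standard ``strongly convex inner problem $\Rightarrow$ smooth value function'' fact then makes $\Gamma(\beta)$ uniformly smooth, hence uniformly Lipschitz on $[\RdLower,\RdUpper]$, after which it invokes the black-box one-dimensional minimizer of \Cref{lem:1D_opt_apx_val}. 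Your ternary-search error analysis (at most $O(\delta')$ loss per wrong comparison, totaling $O(\delta'\log^{O(1)}m)$) is correct once a uniform Lipschitz bound is in hand, so inserting the regularizer trick would close the gap.
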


\Cref{coro:convexFlow} is used to solve \Cref{prob:convex_mcflow}, as for we are interested in finding a flow that routes $\beta \dd$ for some $\beta \in [\RdLower, \RdUpper]$ and minimizes the sum of the total cost on edges and the cost $v(\beta)$.  
We include a proof of \Cref{coro:convexFlow} in \Cref{subsec:single_comm_unconstrained}.

\section{Composite Constrained Box-Simplex Games}
\label{sec:all_of_4}

In this section, we present our framework for solving constrained box-simplex games with a composite function. Recall the setting of \Cref{prob:comp_ell1inf_regr}, where our goal is to solve \eqref{eqprob:comp_ell1inf_reform}. 
As mentioned in \Cref{subsec:extragrad_overview}, our approach is to instead solve the problem 
\begin{equation}\label{eq:prob_constr_minmax}
    \min_{\XX \in \sset} \max_{\yy \in \Delta^m} \yy^\top \AA_{1, \infty} \vvec(\XX) + \psi(\XX)\,,
\end{equation}
where $\sset$ is a smaller subset of $\xset$, the set of feasible $\XX$, and $\AA_{1, \infty}$ is the matrix with $(\AA_{1, \infty})_{i,\ell} = 1$ if and only if $\ell \mod k = i$,
which has the property that $\max_{\yy \in \Delta^m} \yy^\top \AA_{1, \infty} \vvec(\XX) = \|\XX\|_{1,\infty}, \forall \allowbreak \XX \in \xset$. 
Here, we provide a framework that solves the following more general problem:
\begin{equation}\label{eq:prob_u_v_formulation}
    \min_{\uu \in \uset} \max_{\vv \in \vset} \vv^\top \AA \uu + \psi(\uu)\,,
\end{equation}
where $\AA \in \R^{m \times n}$, $\uset \subseteq \R_{\ge 0}^n$ and $\vset = \Delta^m$. 
For $\uset \subseteq \R_{\ge 0}^n$ and a fixed matrix $\AA$, we call Problem \ref{eq:prob_u_v_formulation} a \textit{constrained, composite box-simplex game} $(\uset, \Delta^m, \AA)$. 
To approximately solve \eqref{eq:prob_u_v_formulation}, we apply an extragradient method framework that generalizes that of \cite{JT23}, which we develop throughout this section. 

To implement our framework, we require the ability to approximately minimize over the constraint set $\uset$ functions of $\uu \in \uset$ that involve the composite term $\psi$, a linear term and the regularizer used for implementing the composite proximal gradient steps. Specifically, we work with an \textit{approximate composite best response oracle} (which we abbreviate as ABRO), whose precise definition is provided below. 

\begin{definition}[Approximate Composite Best Response Oracle]\label{defn:apx_best_resp}
    Let $\uset \subseteq \R_{\ge 0}^n, \vset = \Delta^m$,
    $\psi: \uset \to \R, r:\uset \times \Delta^m \to \R$ convex functions, and $\delta > 0$. 
    A $\delta$-approximate (composite) best response oracle ($\delta$-ABRO) with respect to tuple $(\uset, \psi, r)$, $\oracle_{\bestrsp}^{\uset, \psi, r}$, takes as input a point $\vv \in \Delta^m$, vector $\hh^{\uset} \in \R^n$ and constant $\BRconstant \in [1/2, 1]$ and outputs $\algoutput{\uu} \in \uset, \algoutput{\uu} = \oracle_{\bestrsp}^{\uset, \psi, r}(\vv, \gg, \BRconstant)$ with the following properties: 
     \begin{enumerate}
        \item $\langle \hh^{\uset} + \nabla_{\uset} r(\algoutput{\uu}, \vv), \algoutput{\uu}-\uu \rangle \le \BRconstant (\psi(\uu) - \psi(\algoutput{\uu})) + \delta, \forall \uu \in \uset.$
        \item $\|\nabla_{\vset} r(\uu^{\star}, \vv) - \nabla_{\vset} r(\algoutput{\uu}, \vv)\|_{\infty} \le \frac{\delta}{2}$, where $\uu^{\star} = \argmin_{\uu \in \uset} \langle \hh^{\uset}, \uu \rangle + r(\uu, \vv) + \psi(\uu)$. 
    \end{enumerate}
\end{definition}
It is straightforward to show that the point outputted by 
such an oracle is a solution of additive error $\delta$ for the following optimization problem:
\begin{equation}\label{prob:prox_composite_step}
    \min_{\uu \in \uset}  \langle \hh^{\uset}, \uu \rangle + r(\uu, \vv) + \psi(\uu)\,. 
\end{equation} 
The regularizer $r$ we will use for solving \eqref{prob:prox_composite_step} is the doubly-entropic regularizer in \Cref{def:regularizer}. 
Given access to such an oracle, as well as exact minimization over $\Delta^m$ for tasks of form $\min_{\vv \in \Delta^m} \langle \hh^{\vset}, \allowbreak \vv \rangle + \sum_{i \in [m]} \vv_i \log \vv_i$, for any $\hh^{\vset} \in \R^m$, we can efficiently approximately solve \eqref{eq:prob_u_v_formulation}, as stated below. 

\begin{theorem}\label{thm:box_simplex_solver}
    Consider a constrained box-simplex game $(\uset, \Delta^m, \AA)$. 
    Let $\linfbound \ge \max_{\uu \in \uset} \||\AA| \uu\|_{\infty}$, $\usetBound \ge \usetSize$, and
    $r = r^{\alpha, \xi}$ be the doubly entropic regularizer in \Cref{def:regularizer} with parameters $\xi = \frac{\rho}{\|\AA\|_{\infty}}$ and $\alpha = 4 \linfbound B$, where $B = \log (\max(\frac{1}{\xi}, \usetBound + \xi))$. 
    Suppose we have access to a $\delBR$-ABRO $\oracle_{\bestrsp}^{\uset, \psi, r}$ respect to $(\uset, \psi, r)$. 
    Then, in $\almostTime(\max(m, n) B \epsilon^{-1})$ time and $\almostTime(B \epsilon^{-1})$ calls to $\oracle_{\bestrsp}^{\uset, \psi, r}$, each of which taking as input $(\vv, \hh^{\uset})$ with $\|\hh^{\uset}\|_{\infty} \le m n B \|\AA\|_{\infty}$, $\Cref{alg:box_simplex}$ outputs $(\algoutput{\uu}, \algoutput{\vv})$ with 
    \[\langle \vv, \AA \algoutput{\uu} \rangle + \psi(\algoutput{\uu}) \le \langle \algoutput{\vv}, \AA \uu \rangle + \psi(\uu) + \epsilon \linfbound + 6 \delBR, \forall \uu \in \uset, \vv \in \vset \,.\]
\end{theorem}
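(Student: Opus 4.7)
The plan is to instantiate a mirror-prox (extragradient) method on the saddle-point problem \eqref{eq:prob_u_v_formulation} using the doubly entropic regularizer $r = r^{\alpha,\xi}$, generalizing the framework of [JT23] along three axes: (i) replacing the box $[0,1]^n$ with an arbitrary $\uset \subseteq \R^n_{\ge 0}$; (ii) incorporating the composite convex term $\psi(\uu)$; and (iii) allowing each proximal step over $\uset$ to be solved to additive error $\delBR$ via the ABRO $\oracle_{\bestrsp}^{\uset,\psi,r}$. Throughout, the step over $\vset = \Delta^m$ is an exact softmax, which is explicit given any linear gradient input plus the entropy component $\alpha \sum_i \vv_i \log \vv_i$ of $r$.

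The first step is to establish that $r^{\alpha,\xi}$ is area-convex with respect to the bilinear gradient operator $\gg(\uu,\vv) = (\AA^\top \vv,\ -\AA \uu)$ on all of $\uset \times \Delta^m$ when $\alpha = 4 \linfbound B$ and $\xi = \rho/\|\AA\|_\infty$. Following the Schur-complement calculation of [JST19, JT23], area convexity reduces to checking a $2\times 2$-block matrix inequality involving $\nabla^2 r^{\alpha,\xi}$ and $\AA$; the $(\uu,\uu)$ block contains terms of the form $(\vv_i+\xi)|\AA_{ij}|/(\uu_j+\xi)$ arising from the extra entropic factor, and the $(\vv,\vv)$ block contains $\alpha \diag(1/\vv) + \diag(|\AA|(\uu+\xi)\log(\uu+\xi))/\vv_i$. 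The cross-term $\AA$ is dominated by Cauchy-Schwarz using these blocks, provided $\alpha \ge \Omega(\linfbound \log(\usetBound+\xi))$, which is exactly why we choose $\alpha = 4 \linfbound B$.

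The second step is to bound $\max_{(\uu,\vv)\in\uset\times\Delta^m} |r^{\alpha,\xi}(\uu,\vv)|$. Writing $r^{\alpha,\xi}(\uu,\vv) = \langle \vv+\xi,\ |\AA|(\uu+\xi)\log(\uu+\xi)\rangle + \alpha \sum_i \vv_i \log\vv_i$, the first summand is bounded by $(1+\xi m)\||\AA|(\uu+\xi)\|_\infty \cdot \log(\usetBound+\xi)$. The shift $\xi = \rho/\|\AA\|_\infty$ ensures $\||\AA|\xi\|_\infty \le \rho$, so this term is $O(\linfbound B)$ by the hypothesis $\linfbound \ge \max_{\uu\in\uset}\||\AA|\uu\|_\infty$. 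The entropy term is $O(\alpha \log m) = O(\linfbound B \log m)$. Altogether, $\mathrm{range}(r^{\alpha,\xi},\uset) = \almostTime(\linfbound B)$.

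The third step invokes the standard mirror-prox regret bound: for an area-convex $r$, running $T$ iterations of mirror-prox on $\uset \times \Delta^m$ with exact proximal steps gives an average-iterate primal-dual gap of $O(\mathrm{range}(r)/T)$. Propagating a per-step additive error $\delBR$ from the ABRO through the three-point inequality used in the analysis contributes an additional $O(\delBR)$ to the gap per step (using the second condition in Definition \ref{defn:apx_best_resp}, which limits how far off the induced $\vv$-gradient is); summed and averaged this yields the final bound $\epsilon \linfbound + 6\delBR$ once $T = \almostTime(B/\epsilon)$. The norm bound $\|\hh^{\uset}\|_\infty \le mnB\|\AA\|_\infty$ on the linear input passed to the oracle follows by bounding $\AA^\top \vv$ entrywise by $\|\AA\|_\infty$ and the $\nabla_{\uset} r$ correction by $O(B\|\AA\|_\infty m)$ using the explicit derivative of $r^{\alpha,\xi}$ and $\vv \in \Delta^m$. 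Per iteration the work is $O(\nnz(\AA) + m + n)$ for the gradient and softmax, plus two ABRO calls; over $T$ iterations this matches the claimed $\almostTime(\max(m,n) B \epsilon^{-1})$ total time.

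The main obstacle is the area-convexity verification in the first step: the extra $\xi$-shift, the term $(\uu+\xi)\log(\uu+\xi)$ in place of $\uu^2$ of [JST19], and the need to calibrate $\alpha = \Theta(\linfbound B)$ rather than $\Theta(\|\AA\|_\infty)$ require a careful choice of the Cauchy-Schwarz weights in the Schur-complement argument, along with the observation that $\log(\uu_j+\xi)$ is bounded in absolute value by $B$ everywhere on $\uset$. Once this is in place the remaining steps are routine extensions of the [JT23] analysis.
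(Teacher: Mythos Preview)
Your proposal is correct and follows essentially the same approach as the paper: (1) verify area-convexity of $r^{\alpha,\xi}$ on $\uset\times\Delta^m$ via a Hessian/Schur-type calculation (the paper's \Cref{lem:regularizer}, using \Cref{lem:acHelper} and \Cref{lem:regCvx}), (2) bound the range of $r^{\alpha,\xi}$ by $\almostTime(\linfbound B)$, (3) run the extragradient loop and invoke the convergence lemma with approximate proximal steps, and (4) bound $\|\hh^{\uset}\|_\infty$ from the explicit form of $\nabla_\uu r^{\alpha,\xi}$. The only notable packaging difference is that the paper formalizes step (3) through the abstractions of block-compatible regularizers (\Cref{def:param_reg}), $(\fixedalpha,\fixedbeta;\delta)$-solvability (\Cref{def:opt_oracles}), and the composite gradient/extragradient step definitions (\Cref{defn:gen_apx_grad,defn:gen_apx_exgrad}); this lets them convert the $\delBR$-ABRO into $3\delBR$-approximate (extra)gradient steps via \Cref{lem:xgrad_steps_implementability}, then plug into \Cref{lem:convergence} (together with \Cref{lem:area_to_relaxed} to pass from area convexity to relaxed relative Lipschitzness) to obtain the final $6\delBR$ additive term, whereas you describe this error propagation more informally via the three-point inequality.
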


We defer the proof of \Cref{thm:box_simplex_solver} to \Cref{subsec:3_lastsubsec}. 
The rest of the section is organized as follows. 
In \Cref{sec:extragrad_framework}, we present a framework that we use to solve \textit{variational inequalities} with composite terms, which generalizes Problem \ref{eq:prob_u_v_formulation}. 
The algorithm for approximately solving variational inequalities, 
\Cref{alg:main}, 
is an extragradient method and it implements its \textit{gradient} and \textit{extragradient} steps approximately. The formal notions of approximately implementing these steps are defined in \Cref{sec:extragrad_framework}, while their implementability, under a series of assumptions, is discussed in \Cref{subsec:implement_oracles}. 
In \Cref{sec:regularizer}, we present our doubly-entropic regularizer, introduced in \Cref{subsec:extragrad_overview}, along with the technical lemmas regarding its properties. Finally, in \Cref{subsec:3_lastsubsec}, we present the algorithm that obtains the guarantee in \Cref{thm:box_simplex_solver}. This algorithm is an instantiation of \Cref{alg:main}, 
using the regularizer in \Cref{sec:regularizer}, which implements the gradient and extragradient steps needed following the results of \Cref{subsec:implement_oracles}. 

\subsection{Composite Bilinear Optimization}
\label{sec:extragrad_framework}
In this subsection, we present our framework for solving the composite variational inequality problem (\Cref{prob:generic_regret}). Before presenting its formal statement, we start with some notation. For vectors $\zz, \ww \in \zset$, where $\zset$ is a convex set with $\zset \subseteq \R^N$ for some $N \in \Z_{>0}$ and $\hh \in \R^N$, and function $\psi:\zset \to \R$, we define \[
\mathrm{regret}^{\zset}_{\hh, \psi}(\zz; \ww) = \hh^{\top}(\zz - \ww) + \psi(\zz) - \psi(\ww),
\]
and  
\[\mathrm{regret}^{\zset}_{\hh, \psi}(\zz) = \sup_{\ww \in \zset} \mathrm{regret}^{\zset}_{\hh, \psi}(\zz; \ww)\,.\]

Using the notation above, we now provide the formal definition of \Cref{prob:generic_regret}. 
\begin{problem}[Composite Variational Inequality]\label{prob:generic_regret}
    Let $\zset \subseteq \R^{N}$ be a convex set,  
    $\gg:\zset \to \zset^*$ be a gradient operator and $\psi:\zset \to \R$ be a convex function. Given a target accuracy $\epsilon$, our task is to find a sequence of points $\zz^{(1)},\ldots,\zz^{(T)}$ such that 
\[\frac{1}{T} \sum_{t \in [T]} \mathrm{regret}^{\zset}_{\gg(\zz^{(t)}), \psi}(\zz^{(t)}; \ww) \le \epsilon, \forall \ww \in \zset \,.\]
\end{problem}

Hence, the goal of \Cref{prob:generic_regret} is to find a sequence of points $\zz^{(1)},\ldots,\zz^{(T)}$ such that 
\[
\frac{1}{T} \sum_{t \in [T]} \left[\langle \gg(\zz^{(t)}) , \zz^{(t)} - \ww \rangle + \psi(\zz^{(t)}) - \psi(\ww)\right] \le \epsilon, \forall \ww \in \zset\,. 
\] 
The motivation for solving \Cref{prob:generic_regret} is that if the gradient operator $\gg$ is bilinear (meaning $\gg(\sum_{t} \alpha_t \zz^{(t)}) = \sum_{t} \alpha_t \gg(\zz^{(t)})$ for any $\{\zz^{(t)}\}_t$ in $\zset$ and $\alpha_t$ scalars), then solving \Cref{prob:generic_regret} up to target accuracy $\epsilon$ yields the point $\Bar{\zz} \in \zset, \Bar{\zz} = \frac{1}{T} \sum_{t \in [T]} \zz^{(t)}$ so that $\mathrm{regret}^{\zset}_{\gg(\Bar{\zz}), \psi}(\Bar{\zz}) \le \epsilon$(see proof of \Cref{lem:convRate}).

In the context of \Cref{thm:box_simplex_solver}, $\zset = \uset \times \Delta^m$, where $\uset \subseteq \R^n$, and the composite function $\psi$ only depends on the $\uset$ variable (i.e., for every $\uu \in \uset, \vv \in \Delta^m$, the value $\psi(\uu, \vv)$ depends only on the value of $\uu$). 
For some matrix $\AA \in \R^{m \times n}$, 
the gradient operator $\gg:\uset \times \Delta^m \to \R^{n + m}$ is defined as $\gg(\uu, \vv) := [\AA^\top \vv; -\AA\uu], \forall \uu \in \uset, \vv \in \Delta^m$.

To obtain an $\epsilon$-approximate solution to \Cref{prob:generic_regret}, we design an extragradient method (\Cref{alg:main}) which generates iterates $\fullit^{(1)}, \ldots \fullit^{(T)}$ and $\halfit^{(1)}, \ldots \halfit^{(T)}$ (in the extragradient method literature $\fullit^{(t)}$'s are called \textit{full iterates} and $\halfit^{(t)}$'s are called \textit{half iterates}) in the order $\halfit^{(t)}, \fullit^{(t+1)}, \halfit^{(t+1)}$, etc, and use the collection $\{\halfit^{(1)}, \ldots \halfit^{(T)}\}$ for \Cref{prob:generic_regret}. 
Specifically, each step of the algorithm consists of approximately solving two composite proximal gradient steps with respect to a regularizer $r:\zset \to \R$ (which we refer to as the \textit{gradient step} and the \textit{extragradient step}). 

\Cref{alg:main} is converges under the assumption that the regularizer $r$ used is \textit{relaxed relative Lipschitz} with respect to the gradient operator $\gg$ (see \Cref{def:relaxed_RelLip}). It also assumes that we can approximately implement the composite gradient step and the extragradient step (see \Cref{defn:gen_apx_grad} and \Cref{defn:gen_apx_exgrad}). 
In the rest of the section, we present the definition of relaxed relative Lipschitzness, as well as the steps mentioned, and then provide the pseudocode of \Cref{alg:main}, along with its convergence guarantee (\Cref{lem:convergence}) and its proof. 

\begin{definition}[Relaxed Relative Lipschitzness, Definition 1 from \cite{JT23}]\label{def:relaxed_RelLip}
    Let $\zset \subseteq \R^{N}$ be a convex set.
    We say that $r:\zset \to \R$ is $\eta$-relaxed relative Lipschitz with respect to a gradient 
    operator $\gg:\zset \to \zset^*$ if for all $(\zz^{(1)}, \zz^{(2)}, \zz^{(3)}) \in \zset \times \zset \times \zset$, we have 
    \[
    \eta \langle \gg(\zz^{(2)}) - \gg(\zz^{(1)}), \zz^{(2)} - \zz^{(3)} \rangle \le V_{\zz^{(1)}}^{r}(\zz^{(2)}) + V_{\zz^{(2)}}^{r}(\zz^{(3)}) + V_{\zz^{(1)}}^{r}(\zz^{(3)})\,.
    \]
\end{definition}
Throughout the rest of the section,
we say $r:\zset \to \R$ is \textit{relaxed relative Lipschitz} with respect to a gradient operator $\gg$ if it is $1/3$-relaxed relative Lipschitz with respect to $\gg$. As we 
will work with Bregman divergence with respect to some convex function $r:\zset \to \R$, we define the following quantity which we use throughout the section:
\begin{equation}\label{def_3_breg_opt}
    \Delta^{r}_{\zz}(\zz',\ww) \defeq V_{\zz}^{r}(\ww) - V_{\zz'}^{r}(\ww) - V_{\zz}^{r}(\zz') = \langle \nabla V^{r}_{\zz}(\zz'), \ww - \zz' \rangle = \langle \nabla r(\zz') - \nabla r(\zz), \ww - \zz' \rangle \,.
\end{equation}
We next provide precise definitions for the notions of approximately solving the gradient and extragradient steps. 

\begin{definition}[Composite Gradient Step]\label{defn:gen_apx_grad}
    Let $\zset \subseteq \R^{N}$ be convex, compact 
    and $\psi, r, s: \zset \to \R$ be convex functions. 
    We say $\zz' = \oracle_{\mathrm{grad}}^{\psi, r, s}(\zz, \hh)$ is a $\delta$-approximate composite gradient step ($\delta$-CGS) on input $(\zz, \hh)$ with respect to function $\psi$ and regularizers $r, s$, if  
    \[\mathrm{regret}^{\zset}_{\hh, \psi}(\zz'; \ww) \le \Delta^{r}_{\zz}(\zz',\ww) + V_{\zz}^{s}(\ww) + \delta, \forall \ww \in \zset.\]
\end{definition}

\begin{definition}[Composite Extragradient Step]\label{defn:gen_apx_exgrad}
    Let $\zset \subseteq \R^{N}$ be convex, compact 
    and $\psi, r, s: \zset \to \R$ be convex functions. 
    We say an $(\zz', \vv') = \oracle_{\mathrm{xgrad}}^{\psi, r, s}(\zz, \vv, \hh)$ is a $\delta$-approximate composite extragradient step ($\delta$-CES) on input $(\zz, \vv, \hh)$ with respect to function $\psi$ and regularizers $r, s$ if
    \[\mathrm{regret}^{\zset}_{\hh, \psi}(\zz'; \ww) \le \Delta^{r}_{\zz}(\zz',\ww) + V_{\vv}^{s}(\ww) - V_{\vv'}^{s}(\ww) + \delta,
    \text{ for all } \ww \in \zset\,.
    \]
\end{definition}

Implementing a composite gradient step essentially comes down to computing approximate solutions to the problem $\min_{\ww \in \zset} \hh^\top \ww + \psi(\ww) + V_{\zz}^{r}(\ww)$. To see this, note that 
for $\delta = 0$ and $s(\zz) = 0, \forall \zz \in \zset$, the condition in \Cref{defn:gen_apx_grad} would imply that the point $\zz'$ satisfies
\[\langle \hh + \nabla V_{\zz}^{r}(\zz'), \zz - \ww \rangle + \psi(\zz') - \psi(\ww) \le 0, \forall \ww \in \zset,\]
which is equivalent to requiring $\zz' = \argmin_{\ww \in \zset} \hh^\top \ww + \psi(\ww) + V_{\zz}^{r}(\ww)$. Similarly, implementing composite extragradient steps is equivalent to computing approximate solutions to $\min_{\ww, \ww' \in \zset} \hh^\top \ww + \psi(\ww) + V_{\zz}^{r}(\ww) + V_{\vv}^{r}(\ww')$. 

We now present our algorithm for solving \Cref{prob:generic_regret}, which assumes implementability of (extra)gradient steps (\Cref{defn:gen_apx_grad,defn:gen_apx_exgrad}). This algorithm is almost identical to Algorithm 1 in \cite{JT23}. The only difference is that we replace the gradient and extragradient steps defined in \cite{JT23} with the more general notions (i.e., involving a constraint set and composite function) of gradient and extragradient steps in \Cref{defn:gen_apx_grad,defn:gen_apx_exgrad}.

\begin{algorithm}[H]
\caption{Composite Bilinear Optimization}\label{alg:main}
\KwData{Convex $\psi, r, s:\zset \to \R$, access to gradient operator $\gg:\zset \to \zset^*$, $T \in \Z$, starting points $\fullit^{(0)}, \auxit^{(0)} \in \zset$}
\KwResult{A point $\Bar{\zz} \in \zset$ with guarantees in \Cref{lem:convergence}}
\SetKwProg{Fn}{Function}{:}{}
\SetKwFunction{compblopt}{CompositeBilinearOptimization}
\Fn{\compblopt{$\psi, r, s, \gg, T, \fullit^{(0)}, \auxit^{(0)}$}}{
    \For{$t \in [T]$}{
        \tcp{Sufficient step: $\halfit^{(t)} = \argmin_{\zz \in \zset} \eta \gg(\fullit^{(t)})^\top \zz + \psi(\zz) + V_{\fullit^{(t)}}^{r}(\zz)$}
        $\halfit^{(t)} \gets \oracle_{\mathrm{grad}}^{\psi, r, s}(\fullit^{(t)}, \eta \gg(\fullit^{(t)}))$ \label{line:gradient_step}\;
        \BlankLine
        \tcp{Idealized step: $\fullit^{(t+1)} = \argmin_{\ww \in \zset} \eta \gg(\halfit^{(t)})^\top \ww + \psi(\ww) + V_{\zz}^{r+s}(\ww)$ and $\auxit^{(t+1)} = \argmin_{\yy \in \zset} V_{\auxit^{(t)}}^s(\yy)$}
        $(\fullit^{(t + 1)}, \auxit^{(t+1)}) \gets \oracle_{\mathrm{xgrad}}^{\psi / 2, r+s, s}(\fullit^{(t)}, \auxit^{(t)}, \frac{\eta}{2} \gg(\halfit^{(t)}))$ \label{line:xgradient_step}\;
    }
    \Return $\Bar{\zz} = \frac{1}{T} \sum_{t \in [T]} \halfit^{(t)}$
}
\end{algorithm}

\begin{lemma}\label{lem:convergence} 
    Let $\zset \subseteq \R^{N}$ be convex, compact, 
    $\psi:\zset \to \zset$ be a convex function, 
    $r, s:\zset \to \R$ be convex regularizers 
    so that $r$ is $\eta$-relaxed relative Lipschitz with respect to gradient operator $\gg:\zset \to \zset^*$. 
    Then, assuming implementability of $\delta$-CGSs $\oracle_{\mathrm{grad}}^{\psi, r, s}$ in \Cref{line:gradient_step} $\delta$-CESs $\oracle_{\mathrm{xgrad}}^{\psi / 2, r+s, s}$ in \Cref{line:xgradient_step},
    starting from $\fullit^{(0)}, \auxit^{(0)} \in \zset$, \Cref{alg:main} produces, after $T$ CGSs and CESs, 
    iterates that satisfy: \begin{equation}\label{eq:regret_guar}
        \frac{1}{T} \sum_{t \in [T]} \mathrm{regret}^{\zset}_{\gg(\halfit^{(t)}), \psi}(\halfit^{(t)}; \ww) \le \frac{2}{T \eta} (V_{\fullit^{(0)}}^r(\ww) + V_{\auxit^{(0)}}^{s}(\vv)) + 2 \delta,
    \text{ for all }
    \ww, \vv\in \zset
    \,.
    \end{equation}
\end{lemma}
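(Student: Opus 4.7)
The plan is to execute the standard extragradient/Mirror-Prox analysis template, carefully adapted to the composite setting of \Cref{alg:main} with two regularizers $r, s$ and the approximate oracle guarantees. For each iteration $t$ and every comparator $\ww \in \zset$, the proof combines three ingredients: (i) the CGS guarantee (\Cref{defn:gen_apx_grad}) for $\halfit^{(t)}$ tested at the look-ahead point $\fullit^{(t+1)}$, (ii) twice the CES guarantee (\Cref{defn:gen_apx_exgrad}) for $\fullit^{(t+1)}$ tested at the comparator $\ww$, and (iii) the $\eta$-relaxed relative Lipschitzness of $r$ applied to the triple $(\fullit^{(t)}, \halfit^{(t)}, \fullit^{(t+1)})$.

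First, I would write out the CGS inequality and twice the CES inequality and add them, yielding a bound whose LHS is
\[
\eta\langle\gg(\fullit^{(t)}),\halfit^{(t)}-\fullit^{(t+1)}\rangle + \eta\langle\gg(\halfit^{(t)}),\fullit^{(t+1)}-\ww\rangle + \psi(\halfit^{(t)}) - \psi(\ww).
\]
Adding and subtracting $\eta\langle\gg(\halfit^{(t)}),\halfit^{(t)}-\fullit^{(t+1)}\rangle$ reshapes this LHS into $\eta\,\mathrm{regret}^{\zset}_{\gg(\halfit^{(t)}),\psi}(\halfit^{(t)};\ww) - \eta\langle\gg(\halfit^{(t)})-\gg(\fullit^{(t)}),\halfit^{(t)}-\fullit^{(t+1)}\rangle$. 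The second (negative-sign) term is absorbed on the RHS via \Cref{def:relaxed_RelLip}, which replaces it by $V^{r}_{\fullit^{(t)}}(\halfit^{(t)}) + V^{r}_{\halfit^{(t)}}(\fullit^{(t+1)}) + V^{r}_{\fullit^{(t)}}(\fullit^{(t+1)})$. The net effect is a one-step inequality upper-bounding $\eta\,\mathrm{regret}^{\zset}_{\gg(\halfit^{(t)}),\psi}(\halfit^{(t)};\ww)$ by a sum of Bregman divergences plus the $O(\delta)$ oracle errors (one $\delta$ from the CGS, $2\delta$ from the scaled CES).

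The main calculation, and the step I expect to be the most finicky, is the exact cancellation of the ``internal'' Bregman divergences. Expanding $\Delta^{r}_{\fullit^{(t)}}(\halfit^{(t)},\fullit^{(t+1)})$ and $2\Delta^{r+s}_{\fullit^{(t)}}(\fullit^{(t+1)},\ww)$ via the identity \eqref{def_3_breg_opt} and collecting $r$-terms, every $V^{r}_{\cdot}(\cdot)$ whose second argument is $\halfit^{(t)}$ or $\fullit^{(t+1)}$ cancels against the three relaxed-Lipschitz contributions, leaving only the outer telescoping pair $2V^{r}_{\fullit^{(t)}}(\ww) - 2V^{r}_{\fullit^{(t+1)}}(\ww)$. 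On the $s$-side, the contribution $V^{s}_{\fullit^{(t)}}(\fullit^{(t+1)})$ from the CGS combines with the $-2V^{s}_{\fullit^{(t)}}(\fullit^{(t+1)})$ from the $s$-portion of $2\Delta^{r+s}$ into a non-positive term that can be dropped by convexity of $s$, while the $V^{s}$-contributions with comparator $\ww$ (from the $s$-part of $2\Delta^{r+s}$ and from the auxiliary telescope $2V^{s}_{\auxit^{(t)}}(\ww) - 2V^{s}_{\auxit^{(t+1)}}(\ww)$) form telescoping pairs in the running variables.

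Finally, I would sum the resulting one-step bound over $t = 0,\ldots,T-1$. The telescoping collapses each $\ww$-dependent pair to its value at $t=0$, the non-positive tail terms at $t=T$ are dropped, and dividing by $T\eta$ gives the averaged regret bound \eqref{eq:regret_guar}. The main obstacle is purely bookkeeping: keeping track of every sign in the Bregman expansions so that the internal cancellation actually goes through (this is the heart of why the pattern of regularizers $(r,s)$ in \Cref{line:gradient_step} versus $(r+s,s)$ in \Cref{line:xgradient_step} is the right one), and carrying the per-iteration oracle errors correctly through the final $1/(\eta T)$ rescaling.
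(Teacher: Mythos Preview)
Your proposal is correct and follows essentially the same approach as the paper: instantiate the CGS guarantee at $\fullit^{(t+1)}$, twice the CES guarantee at $\ww$, and the relaxed relative Lipschitz inequality for the triple $(\fullit^{(t)}, \halfit^{(t)}, \fullit^{(t+1)})$, then sum, cancel the internal Bregman terms, drop the nonpositive $-V^s_{\fullit^{(t)}}(\fullit^{(t+1)})$, and telescope. The paper's proof is exactly this calculation, organized slightly differently (it keeps $r$ and $s$ bundled as $r+s$ rather than separating them as you do), and it shares the same harmless looseness about the $\eta$ factor on the $\psi$-part of the regret that appears in your write-up.
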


\begin{proof}
    Fix $t \in [T]$. Since $\oracle_{\mathrm{grad}}^{\psi, r, s}$ is a $\delta$-approximate composite gradient step with respect to $\psi$ and regularizers $r, s$, we obtain (by setting $\fullit = \fullit^{(t+1)}$)  
    \begin{equation}\label{eq:grad_cond}
        \mathrm{regret}^{\zset}_{\eta \gg(\fullit^{(t)})}(\halfit^{(t)}; \fullit^{(t+1)}) \le V_{\fullit^{(t)}}^{r+s}(\fullit^{(t+1)}) - V_{\halfit^{(t)}}^{r}(\fullit^{(t+1)}) - V_{\fullit^{(t)}}^{r}(\halfit^{(t)}) + \delta.
    \end{equation}
    Since $\oracle_{\mathrm{xgrad}}^{\psi / 2, r+s, s}$ is a $\delta$-approximate composite extragradient step with respect to $\psi / 2$ and regularizers $r+s, s$, we obtain (after multiplying by $2$)
    \begin{equation}\label{eq:xgrad_cond}
        \mathrm{regret}^{\zset}_{\eta \gg(\halfit^{(t)}), \psi}(\fullit^{(t+1)}; \ww) \le 2 \Delta^{r+s}_{\fullit^{(t)}}(\fullit^{(t+1)}, \ww) + 2 V_{\auxit^{(t)}}^{s}(\ww) - 2 V_{\auxit^{(t+1)}}^{s}(\ww) + 2 \delta, \forall \ww \in \zset. 
    \end{equation}
    Since $g$ is $\eta$-relaxed relative Lipschitz with respect to $r$, by setting $(\zz^{(1)}, \zz^{(2)}, \zz^{(3)}) = (\fullit^{(t)}, \halfit^{(t)}, \fullit^{(t+1)})$ in \Cref{def:relaxed_RelLip}, we obtain: 
    \begin{equation}\label{ineq:area_conv_prop}
        \eta \langle \gg(\halfit^{(t)}) - \gg(\fullit^{(t)}), \halfit^{(t)} - \fullit^{(t+1)} \rangle \le V_{\fullit^{(t)}}^{r}(\halfit^{(t)}) + V_{\fullit^{(t)}}^{r}(\fullit^{(t+1)}) + V_{\halfit^{(t)}}^{r}(\fullit^{(t+1)}). 
    \end{equation}
    
    Summing up \eqref{eq:grad_cond}, \eqref{eq:xgrad_cond} and \eqref{ineq:area_conv_prop}, we obtain, after cancellation of terms $\eta \langle \gg(\halfit^{(t)}) - \gg(\fullit^{(t)}), \halfit^{(t)} - \fullit^{(t+1)} \rangle$ on the LHS and $V_{\fullit^{(t)}}^{r}(\halfit^{(t)})$ on the RHS, 
    \begin{align}
        \mathrm{regret}^{\zset}_{\eta \gg(\halfit^{(t)}), \psi}(\halfit^{(t)}; \ww) & \le 2 V_{\fullit^{(t)}}^{r+s}(\ww) - 2 V_{\fullit^{(t+1)}}^{r+s}(\ww) \\
        & + (V_{\fullit^{(t)}}^{r}(\fullit^{(t+1)}) - V_{\fullit^{(t)}}^{r+s}(\fullit^{(t+1)})) \\
        & + 2 V_{\auxit^{(t)}}^{s}(\ww) - 2 V_{\auxit^{(t+1)}}^{s}(\ww), \forall \ww \in \zset. 
    \end{align}
    Bounding $V_{\fullit^{(t)}}^{r+s}(\fullit^{(t+1)}) - V_{\fullit^{(t)}}^{r}(\fullit^{(t+1)}) \ge 0$, we thus obtain \[\mathrm{regret}^{\zset}_{\eta \gg(\halfit^{(t)}), \psi}(\halfit^{(t)}; \ww) \le 2 V_{\fullit^{(t)}}^{r+s}(\ww) - 2 V_{\fullit^{(t+1)}}^{r+s}(\ww) + 2 V_{\auxit^{(t)}}^{s}(\ww) - 2 V_{\auxit^{(t+1)}}^{s}(\ww).\]
    Summing up for all indices $t \in [T]$ and telescoping yields the desired conclusion. 
\end{proof}

\subsection{Implementing Composite (Extra)gradient Steps}
\label{subsec:implement_oracles}

In this section, we provide a result regarding sufficient conditions for implementing composite gradient and extragradient steps (\Cref{defn:gen_apx_grad,defn:gen_apx_exgrad}). We begin with  
several assumptions regarding our domain space $\zset$, as well as the composite function $\psi$ and the regularizers $r, s$ corresponding to the gradient and extragradient steps. This is captured by \Cref{assm:4.2space} below. This captures the setups considered by prior works \cite{sherman13,JST19,CST20,JT23}. 

\begin{assumption}\label{assm:4.2space}
    We assume that the set $\zset$ is decomposable as $\zset = \uset \times \vset$ with $\uset \subseteq \R^n, \vset \subseteq \R^m$ being convex, compact sets, and write use interchangeably $\zz = (\uu, \vv)$ with $\uu \in \uset, \vv \in \vset$. Secondly, we assume that the composite function $\psi$ only depends on the $\uset$ side variable, meaning $\psi(\uu, \vv) = \psi(\uu, \vv')$ for any $\uu \in \uset, \vv, \vv' \in \vset$. 
\end{assumption}

Throughout the section, we will work with a special class of regularizers, namely \textit{block compatible regularizers}. This is a concept that we introduce to generalize the types of regularizers used in \cite{JT23}. Specifically, the regularizers there satisfy these properties. 

\begin{definition}[Block Compatible Regularizers]\label{def:param_reg}
    Consider the setting of \Cref{assm:4.2space}.
    A family of regularizers $r^{\alpha}:\zset \to \R$, indexed by parameter $\alpha \ge 0$ and defined as \[r^{\alpha}(\uu, \vv) = r_1(\uu, \vv) + \alpha r_2(\vv),\] 
    where $r_1:\vset \to \vset$, and $r_2: \zset \to \R$, is block compatible if the following properties hold for blocks $r_1, r_2$:
    \begin{enumerate}
        \item $r_1(\uu, \cdot)$ is linear for every $\uu \in \uset$
        \item $r_1(\cdot, \vv)$ is convex for every $\vv \in \vset$
        \item $r_2$ is convex
        \item there exists $\alpha_0$ so that $r^{\alpha}$ is jointly convex over $\zset$ for every $\alpha \ge \alpha_0$. 
    \end{enumerate}
\end{definition}

Note that the regularizers defined in \Cref{def:regularizer} are a family of block-compatible regularizers. 
We now present the notion of $(\alpha, \beta; \delta)$-solvability, defined below, which involves assuming certain optimization procedures we can perform over sets $\uset$ and $\vset$ respectively. 

\begin{definition}\label{def:opt_oracles}
    Consider the setup of \Cref{assm:4.2space} and let $r^{\alpha}(\uu, \vv) := r_1(\uu, \vv) + \alpha r_2(\vv)$ be a family of regularizers with blocks $r_1:\zset \to \R, r_2:\vset \to \R$. We say $\{r^{\alpha}\}_{\alpha \ge 0}$ is $(\fixedalpha, \fixedbeta; \delta)$ solvable if we have
    access to a $\delta$-ABRO $\oracle_{\bestrsp}^{\uset, \psi, r^{\fixedalpha}}$ with respect to $(\uset, \psi, r^{\fixedalpha})$, 
    and, additionally, for any $\hh^{\vset} \in \R^m$, and $\vv\in \vset$, we can exactly solve \begin{equation}\label{prob:vset_opt}
    \min_{\yy \in \vset} \langle \hh^{\vset}, \yy \rangle + V^{\fixedbeta r_2}_{\vv}(\yy) \,.
\end{equation}
\end{definition}

$(\fixedalpha, \beta; \delta)$-solvability with respect to a family of block compatible regularizers enables the implementation of the composite gradient and extragradient steps, as shown in the lemma below. 

\begin{lemma}\label{lem:xgrad_steps_implementability}
    Consider the setting of \Cref{assm:4.2space} and 
    let 
    $\{r^{\alpha}\}_{\alpha \ge 0}$ be a family of block compatible regularizers with $\fixedalpha > 0$ so that $r^{\fixedalpha}$ is jointly convex and $r^{\fixedalpha}$ is $(\fixedalpha, \fixedbeta; \delta)$-solvable for some $\fixedbeta \ge \fixedalpha$. 
    We can implement a $3 \delta$-CGS $\oracle_{\mathrm{grad}}^{\psi, r^{\fixedalpha}, \fixedbeta r_2}((\uu, \vv), \hh)$ and a $3 \delta$-CES $\oracle_{\mathrm{xgrad}}^{\psi, r^{\fixedalpha+\fixedbeta}, \fixedbeta r_2}((\uu, \vv), \vv', \hh)$ by calling the optimization procedures on $\uset$ and $\vset$ specified in \Cref{def:opt_oracles} $O(1)$ times. 
\end{lemma}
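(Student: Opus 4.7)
The plan is constructive: for each of the composite gradient and extragradient steps, we exhibit an implementation using $O(1)$ invocations of the ABRO on $\uset$ and of the exact $\vset$-solver for \Cref{prob:vset_opt}, then verify that the outputs satisfy the corresponding approximate-step inequalities in \Cref{defn:gen_apx_grad} and \Cref{defn:gen_apx_exgrad}. The core structural input is block-compatibility (\Cref{def:param_reg}): since $r_1(\uu, \cdot)$ is linear in $\vv$ for each $\uu$, we may write $r_1(\uu, \vv) = \langle \bp(\uu), \vv \rangle + c(\uu)$ where $\bp(\uu) \defeq \nabla_\vv r_1(\uu, \cdot)$, so that $\nabla_\vset r^{\alpha}(\uu, \vv) = \bp(\uu) + \alpha \nabla r_2(\vv)$ depends on $\uu$ only through $\bp(\uu)$ -- precisely the quantity that condition~2 of the ABRO (\Cref{defn:apx_best_resp}) controls in the $\infty$-norm.

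For the $3\delta$-CGS, I would first invoke $\algoutput{\uu} := \oracle_{\bestrsp}^{\uset, \psi, r^{\fixedalpha}}(\vv, \hh^{\uset} - \nabla_\uu r^{\fixedalpha}(\uu, \vv), 1)$, and then use the $\vset$-solver to produce $\algoutput{\vv}$ as the exact minimizer of a task of the form $\min_{\vv' \in \vset} \langle \hh^{\vset} + \bp(\algoutput{\uu}) - \bp(\uu), \vv' \rangle + V^{\fixedbeta r_2}_\vv(\vv')$, where the linear term is shifted by $\bp(\algoutput{\uu}) - \bp(\uu)$ to account for the coupling between blocks induced by $r_1$. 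For the $3\delta$-CES the construction is analogous, but with main regularizer $r^{\fixedalpha + \fixedbeta}$ (so the ABRO gradient input uses $\nabla_\uu r^{\fixedalpha+\fixedbeta}(\uu, \vv)$), and with an additional $\vset$-solve to produce the updated auxiliary iterate $\vv'_{\text{new}}$ appearing in \Cref{line:xgradient_step} of \Cref{alg:main}.

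The verification splits each approximate-step inequality into its $\uu$- and $\vv$-components, exploiting the decomposition $\zset = \uset \times \vset$ and the fact that $\psi$ depends only on $\uu$. The $\vv$-component is handled by the exact optimality of $\algoutput{\vv}$: the $\vset$-solver uses the regularizer $V^{\fixedbeta r_2}_\vv$, whereas the Bregman term $V^{r^{\fixedalpha}}_\zz$ contributes only $\fixedalpha r_2$ in its $\vv$-part, and the resulting discrepancy $(\fixedbeta - \fixedalpha) V^{r_2}_\vv(\vv'')$ is dominated by the available slack $V^{\fixedbeta r_2}_\vv(\vv'')$ on the RHS using $\fixedbeta \geq \fixedalpha$ and nonnegativity of Bregman divergences. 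The $\uu$-component is bounded via condition~1 of the ABRO.

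The main obstacle is the cross-term that arises in the $\uu$-component: because the ABRO is anchored at $\vv$ it produces a guarantee involving $\nabla_\uu r^{\fixedalpha}(\algoutput{\uu}, \vv)$, whereas the CGS inequality requires $\nabla_\uu r^{\fixedalpha}(\algoutput{\uu}, \algoutput{\vv})$, leaving a residual of the form $\langle (D\bp(\algoutput{\uu}))^\top (\algoutput{\vv} - \vv), \algoutput{\uu} - \uu'' \rangle$. Closing this gap is the technically delicate step: it requires carefully absorbing the residual into the RHS slack using condition~2 of the ABRO (which equates $\bp(\algoutput{\uu})$ with $\bp$ at the exact $\vv$-anchored minimizer up to $\delta/2$), and, if a single-pass construction is insufficient, a second ABRO call anchored at $\algoutput{\vv}$ that eliminates the cross-term while still fitting within the $O(1)$-call budget. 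The assumption $\fixedbeta \geq \fixedalpha$ is invoked throughout to ensure the various RHS slack terms suffice to absorb the accumulated error of $O(\delta)$.
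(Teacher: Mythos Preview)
Your skeleton is right---the paper also calls the ABRO at $\vv$, then solves for $\vv_1$ using $\bp(\uu_1)$, then calls the ABRO again at $\vv_1$, returning $(\uu_2,\vv_1)$---but there is a genuine gap in how you propose to close the argument after the second ABRO call. The second call cleanly handles the $\uu$-side of the CGS inequality via condition~1, as you say. But now the $\vv$-side is dirty: $\vv_1$ was computed using $\bp(\uu_1)$, so its first-order optimality involves $\nabla_\vset r^{\fixedalpha}(\uu_1,\vv)$, whereas the CGS inequality at $(\uu_2,\vv_1)$ needs $\nabla_\vset r^{\fixedalpha}(\uu_2,\vv_1)$. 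The residual is $\langle \bp(\uu_2)-\bp(\uu_1),\,\vv_1-\ww^\vset\rangle$. Condition~2 of the ABRO lets you replace $\uu_1,\uu_2$ by the \emph{exact} best responses $\uu^\star(\vv),\uu^\star(\vv_1)$ at cost $\delta/2$ each, but $\bp(\uu^\star(\vv_1))-\bp(\uu^\star(\vv))$ is not small in general: it is the change in the gradient of the quotient function $\Phi(\yy):=\min_{\xx\in\uset}\langle\hh,(\xx,\yy)\rangle+r^{\fixedalpha}(\xx,\yy)+\psi(\xx)$ between $\vv$ and $\vv_1$. Each additional ABRO call just shuffles the cross-term back and forth between the two blocks rather than eliminating it.

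The missing ingredient is that $\Phi$ is $\fixedalpha$-relatively smooth with respect to $r_2$: this follows from joint convexity of $r^{\fixedalpha}$ together with the block-compatibility property that $r_1(\uu,\cdot)$ is linear (so $\fixedalpha r_2-\Phi$ is convex), and it yields
\[
\langle \nabla_\vset r^{\fixedalpha}(\uu^\star(\vv_1),\vv_1)-\nabla_\vset r^{\fixedalpha}(\uu^\star(\vv),\vv),\,\vv_1-\ww^\vset\rangle \le V_{\vv_1}^{\fixedalpha r_2}(\ww^\vset)+V_{\vv}^{\fixedalpha r_2}(\vv_1).
\]
This is precisely what the dropped Bregman terms $V_{\vv_1}^{\fixedbeta r_2}(\ww^\vset)+V_{\vv}^{\fixedbeta r_2}(\vv_1)$ from the three-point identity for the $\vset$-solve absorb, using $\fixedbeta\ge\fixedalpha$. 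Your outline invokes joint convexity nowhere, and without this relative-smoothness step the verification does not close. (The CES case is the same idea with one more $\vset$-solve at the end.)
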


The proof of \Cref{lem:xgrad_steps_implementability} is deferred to \Cref{apx:implementing_steps}. \Cref{lem:xgrad_steps_implementability} gives sufficient conditions to implement our composite gradient and extragradient steps using \Cref{alg:grad_step} and \Cref{alg:exgrad_step} respectively, which are needed to implement \Cref{alg:main}. 
To conclude the section, we note that with a proper choice of a family of block compatible regularizers $\{r^{\alpha}\}_{\alpha \ge 0}$, we can use the framework developed in \Cref{sec:extragrad_framework} to implement \Cref{alg:main} for approximately solving Problem \ref{eq:prob_u_v_formulation}, yielding the guarantee in \Cref{thm:box_simplex_solver}. 
Picking a suitable block compatible regularizer is the topic of our next subsection. 

\subsection{Area-Convex Regularizer for Constrained Box-Simplex Game}
\label{sec:regularizer}

In this section, we present a new relaxed relative Lipschitz regularizer for the constrained box-simplex game. To show that it is relaxed relative Lipschitz, we show it is \textit{area convex}, which is a stronger property (see \Cref{lem:area_to_relaxed}). Area convexity is defined below.

\begin{definition}[Area Convexity, \cite{sherman2017area}]\label{def:area_convexity}
    Let $\zset \subseteq \R^{N}$ be a convex set.
    We say that $r:\zset \to \R$ is $\eta$-area convex with respect to a gradient 
    operator $\gg:\zset \to \zset^*$ if for all $(\zz^{(1)}, \zz^{(2)}, \zz^{(3)}) \in \zset \times \zset \times \zset$ and  $\cc = \frac{1}{3} (\zz^{(1)} + \zz^{(2)} + \zz^{(3)})$,
    \[\eta \langle \gg(\zz^{(2)}) - \gg(\zz^{(1)}), \zz^{(2)} - \zz^{(3)} \rangle \le r(\zz^{(1)}) + r(\zz^{(2)}) + r(\zz^{(3)}) - 3 r(\cc).\]
\end{definition}

\paragraph{Notation.}
Following the notation of \Cref{subsec:implement_oracles}, in this subsection we let
$\uset \subseteq \R_{\ge 0}^n$ be a subset of the positive quadrant in $n$ dimensions, and $\vset = \Delta^m$ is the $m$-dimensional simplex. For a constrained box-simplex game $(\uset, \Delta^m, \AA)$, we 
define gradient operator as $\gg(\uu, \vv) \defeq [\AA^\top \vv; -\AA\uu]$. Additionally, given a bound $\linfbound \ge \max_{\uu \in \uset} \||\AA| \uu\|_{\infty}$ and a bound $\usetBound \ge \usetSize$, we define the parameters $\xi, B$ associated with $\linfbound, \usetBound$ by $\xi := \min(1, \frac{\linfbound}{nm\|\AA\|_{\infty}})$ and $B \defeq \log (\min(\frac{1}{\xi}, \usetSize + \xi))$.  

Note that for a constrained box-simplex game $(\uset, \Delta^m, \AA)$, approximately solving an instance of \Cref{prob:generic_regret} 
with $\zset = \uset \times \Delta^m$ and gradient operator $\gg$ yields an approximate solution for Problem \ref{eq:prob_u_v_formulation}. 
As explained in \Cref{subsec:extragrad_overview},
one of the main innovations of our work is designing a regularizer whose range 
has a favorable dependence on the structure of set $\uset$, 
while still retaining the area convexity property with respect to the gradient operator $\gg$. 
The formal definition of our regularizer, which was informally defined in \Cref{def:regularizer}, is provided below.

\begin{definition}[Doubly Entropic Regularizer]
\label{def:regularizer_complete}
Consider a constrained box-simplex game $(\uset \subseteq \R_{\ge 0}^n, \Delta^m, \AA \in \R^{m \times n})$ 
and a bound $\linfbound \ge \max_{\uu \in \uset} \||\AA| \uu\|_{\infty}$. 
Let $\xi = \min(1, \frac{\linfbound}{nm\|\AA\|_{\infty}})$ and 
define blocks $r_1:\zset \to \R, r_2:\vset \to \R$ by \[r_1(\uu, \vv) := \sum_{i \in [m], j \in [n]} (\vv_i + \xi) |\AA_{ij}| (\uu_j + \xi) \log (\uu_j + \xi), \quad r_2(\vv) := \sum_{i \in [m]} \vv_i \log \vv_i,\] 
and a family of regularizers parametrized by $\alpha \ge 0$,
$r^{\alpha}: \uset \times \Delta^m \to \R$, as follows: 
\begin{align*}
    r^{\alpha}(\uu, \vv) 
    &:= r_1 + \alpha r_2 = \sum_{i \in [m], j\in[n]} (\vv_i + \xi) |\AA_{ij}| (\uu_j + \xi) \log (\uu_j + \xi) + \alpha \sum_{i\in[m]} \vv_i \log \vv_i \\
\end{align*}
\end{definition}

For the rest of the subsection, whenever we write $r^{\alpha}$, we refer to the \emph{doubly-entropic regularizer} defined in \Cref{def:regularizer_complete}. 
Our main lemma of this section argues that $r^{\alpha}$ is $O(1)$-area convex, jointly-convex on $\uset \times \Delta^m$, and it has range $\O(\alpha)$ when setting $\alpha \ge 4 B \linfbound$, where $\linfbound$ is a parameter with $\linfbound \ge \max_{\uu \in \uset} \||\AA|\uu\|_{\infty}$. 
\begin{lemma}
\label{lem:regularizer}
Consider a constrained box-simplex game $(\uset, \Delta^m, \AA)$, $\linfbound \ge \max_{\uu \in \uset} \||\AA|\uu\|_{\infty}$ and $\usetBound \ge \usetSize$. For any $\alpha \ge 4 B \linfbound$, $r^{\alpha}(\cdot, \cdot)$ is $\frac{1}{3}$-area-convex w.r.t. the gradient operator $\gg$. 
Additionally, $r^{\alpha}(\cdot, \cdot)$ is convex and $\max_{\uu \in \uset, \vv \in \Delta^m} |r^{\alpha}(\uu, \vv)| \le 4 B \linfbound + \O(\alpha).$
\end{lemma}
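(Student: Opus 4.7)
The plan is to verify the three conclusions separately: area-convexity, joint convexity, and the range bound. Only area-convexity requires nontrivial work; the other two will follow from the same Hessian computation and direct estimates. For area-convexity I intend to use the Hessian-based sufficient condition of \cite{sherman2017area} (cf.\ its use in \cite{JST19, CST20, JT23}), which for the bilinear gradient operator $\gg(\uu,\vv)=[\AA^\top\vv;-\AA\uu]$ with constant antisymmetric Jacobian $\mathbf{J} = \begin{pmatrix} 0 & \AA^\top \\ -\AA & 0 \end{pmatrix}$ reduces to a pointwise semidefinite inequality relating $\nabla^2 r^{\alpha}(\uu,\vv)$ and $\mathbf{J}$. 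Direct differentiation yields the block Hessian
\[
\nabla^2 r^{\alpha}(\uu,\vv)=\begin{pmatrix} M_\uu & C \\ C^\top & \alpha M_\vv \end{pmatrix},
\]
where $M_\uu$ is diagonal with $(M_\uu)_{jj}=\sum_i (\vv_i+\xi)|\AA_{ij}|/(\uu_j+\xi)$, the off-diagonal $C$ has entries $C_{ji}=|\AA_{ij}|(1+\log(\uu_j+\xi))$, and $M_\vv$ is diagonal with $(M_\vv)_{ii}=1/\vv_i$. The Sherman condition then reduces to controlling a cross-term of the form $\sum_{i,j}|\AA_{ij}|(1+\log(\uu_j+\xi))\,p_j\,q_i$, for any $(\mathbf{p}, \mathbf{q}) \in \R^n \times \R^m$, by the diagonal quadratic forms $\mathbf{p}^\top M_\uu \mathbf{p}$ and $\alpha\,\mathbf{q}^\top M_\vv \mathbf{q}$. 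I plan to do this via a Cauchy--Schwarz split that pairs $\sqrt{(\vv_i+\xi)|\AA_{ij}|/(\uu_j+\xi)}\,p_j$ (absorbed by $M_\uu$) against the complementary factor; the latter is bounded using the uniform estimate $|1+\log(\uu_j+\xi)|\le B+1$ (from the choice of $\xi$ together with $\uu_j\le\usetBound$) and $\sum_j|\AA_{ij}|(\uu_j+\xi)\le\linfbound+\xi\|\AA\|_\infty n\le 2\linfbound$ (from $\linfbound\ge\max_{\uu}\||\AA|\uu\|_\infty$ and the chosen $\xi$). The choice $\alpha\ge 4 B\linfbound$ then suffices to absorb the residual into $\alpha M_\vv$.

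Joint convexity follows by Schur complement from the same Hessian: $\nabla^2 r^{\alpha}\succeq 0$ iff $M_\uu\succeq C(\alpha M_\vv)^{-1} C^\top$, a strictly weaker inequality than the one used for area-convexity and therefore automatic under $\alpha\ge 4 B\linfbound$. For the range bound, using $|(\uu_j+\xi)\log(\uu_j+\xi)|\le(\usetBound+\xi)B$, grouping the sum in $r_1$ by $i$, and applying $\sum_j|\AA_{ij}|(\uu_j+\xi)\le 2\linfbound$ together with $\sum_i\vv_i=1$ yields $|r_1(\uu,\vv)|\le(1+\xi m)\cdot 2\linfbound\cdot B\le 4 B\linfbound$ (since $1+\xi m\le 2$ by the chosen $\xi$), while $|\alpha r_2(\vv)|\le\alpha\log m=\tilde O(\alpha)$, completing the stated bound $4 B\linfbound+\tilde O(\alpha)$.

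The principal obstacle is calibrating the Cauchy--Schwarz split in the area-convexity step. The cross-term entries of $C$ are of size $\Theta(B|\AA_{ij}|)$, while the diagonal blocks weight $|\AA_{ij}|$ only by $1/(\uu_j+\xi)$ and $1/\vv_i$, so a naive split would require $\alpha=\Theta(B^2\linfbound)$ rather than $\Theta(B\linfbound)$. Achieving the linear dependence claimed by the lemma requires exploiting the $(\vv_i+\xi)$ weight built into $M_\uu$: pairing the $\sqrt{\vv_i+\xi}$ factor from the $\uu$-block against $\sqrt{\vv_i}$ coming from $\alpha M_\vv$ in the complementary factor removes one power of $B$. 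This is precisely why $r_1$ is defined with $(\vv_i+\xi)$ rather than plain $\vv_i$: the offset plays a double role, bounding the logarithms (for the range) and enabling the linear-in-$B$ calibration (for area-convexity).
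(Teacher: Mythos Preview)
Your high-level plan—second-order Hessian criterion, explicit block Hessian, weighted Cauchy--Schwarz using $\sum_j|\AA_{ij}|(\uu_j+\xi)\le 2\linfbound$—matches the paper's. But you have misidentified which cross-term the area-convexity condition introduces. The term $\sum_{i,j}|\AA_{ij}|(1+\log(\uu_j+\xi))\,p_j q_i$ that you call ``the Sherman condition'' is the off-diagonal block $C$ of $H=\nabla^2 r^\alpha$ itself; controlling it by $\mathbf{p}^\top M_\uu\mathbf{p}$ and $\alpha\,\mathbf{q}^\top M_\vv\mathbf{q}$ is exactly what is needed for \emph{joint convexity} $H\succeq 0$, not for area convexity. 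The second-order area-convexity criterion is $\bigl(\begin{smallmatrix}H&-\JJ\\\JJ&H\end{smallmatrix}\bigr)\succeq 0$ with $\JJ=\bigl(\begin{smallmatrix}0&\AA^\top\\-\AA&0\end{smallmatrix}\bigr)$, and expanding the quadratic form in $(\aa,\bb,\cc,\dd)\in\R^n\times\R^m\times\R^n\times\R^m$ produces, in addition to two internal $C$-cross-terms, a new cross-term $2\bb^\top\AA\cc-2\dd^\top\AA\aa$ coming from $\JJ$. This involves the signed matrix $\AA$ with \emph{no} logarithmic factor, and your proposal never treats it; as written, you prove convexity under the label ``area convexity'' and then derive convexity again via Schur complement, leaving the actual area-convexity inequality unaddressed.

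The paper's proof proceeds in two passes. First, a separate lemma shows the $C$-cross-term is at most half the diagonal, hence $H\succeq\tfrac12\ddiag(H)$; this is where your ``principal obstacle'' analysis correctly belongs, and where $\alpha\ge 4B\linfbound$ is genuinely needed. Second, in the area-convexity proof proper, the remaining half of $\ddiag(H)$ absorbs the $\JJ$-cross-terms via the same Cauchy--Schwarz estimate (the paper's Lemma bounding $|\bb^\top\AA\aa|$). Since the $\JJ$-term carries no factor of $B$, this second pass is strictly easier and by itself would only need $\alpha\ge 4\linfbound$. So your argument is salvageable once you separate the two steps and add the $\JJ$-cross-term bound; the Cauchy--Schwarz split you describe works for both. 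One minor point on the range bound: you assert $1+\xi m\le 2$, but with $\xi=\min(1,\linfbound/(nm\|\AA\|_\infty))$ this gives only $m\xi\le\linfbound/(n\|\AA\|_\infty)$, which is not automatically $\le 1$; you may need to justify this or settle for an $O(B\linfbound)$ bound, which is all the downstream argument uses.
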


We will prove the area-convexity using the 2nd-order condition presented in \cite{JST19} and \cite{sherman2017area}. The formal statement is provided below. 
\begin{lemma}[2nd-Order Area Convexity Condition, Theorem 1.6~\cite{sherman2017area}]
\label{lem:2ndOrderAreaCvx}
Consider a constrained box-simplex game $(\uset, \Delta^m, \AA)$, and some regularizer $r: \uset \times \Delta^m \to \R$ which is twice differentiable on $\mathrm{int}(\uset \times \Delta^m)$. 
Then, 
$r$ is $\frac{1}{3}$-area-convex w.r.t.\ $\gg$ if
\begin{align*}
\begin{pmatrix}
    \g^2 r(\zz) & -\JJ \\
    \JJ & \g^2 r(\zz)
\end{pmatrix} \succeq 0,
\text{ where $\JJ$ is the Jacobian of $\gg(\cdot, \cdot)$, i.e., }
    \JJ = \begin{pmatrix}
        0 & \AA^\top \\
        -\AA & 0
    \end{pmatrix}\,.
\end{align*}
\end{lemma}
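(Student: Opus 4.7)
The plan is to deduce the global three-point area-convexity inequality from the pointwise block PSD hypothesis by expressing both sides as integrals of quadratic/bilinear forms in a common Hessian and then applying the hypothesis integrand-by-integrand. Set $\vv_i := \zz^{(i)} - \cc$, so that $\vv_1 + \vv_2 + \vv_3 = \veczero$, and introduce the skew-symmetric bilinear form $\omega(\xx, \yy) := \xx^\top \JJ \yy$ (skew because $\JJ$ has the block form $\bigl(\begin{smallmatrix} 0 & \AA^\top \\ -\AA & 0 \end{smallmatrix}\bigr)$). Since $\gg(\uu, \vv) = [\AA^\top \vv; -\AA \uu]$ is affine with Jacobian $\JJ$, the left-hand side of the area-convexity inequality simplifies to $\tfrac{1}{3}\, \omega(\vv_2 - \vv_1,\, \vv_2 - \vv_3)$; expanding the bilinear form and using skew-symmetry together with the cyclic identity $\omega(\vv_1, \vv_2) = \omega(\vv_2, \vv_3) = \omega(\vv_3, \vv_1)$ (a direct consequence of $\sum_i \vv_i = \veczero$), this collapses to $-\omega(\vv_1, \vv_2)$. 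For the right-hand side, Taylor's theorem with integral remainder about $\cc$, combined with the fact that $\sum_i \vv_i = \veczero$ kills the linear term, yields
\[
\sum_i r(\zz^{(i)}) - 3 r(\cc) \;=\; \sum_i \int_0^1 (1-s)\, \vv_i^\top \nabla^2 r(\cc + s\vv_i)\, \vv_i\, ds.
\]

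Infinitesimally (rescale $\vv_i \mapsto \varepsilon \vv_i$ and let $\varepsilon \to 0$) the desired inequality is immediate from the PSD hypothesis: applying the hypothesis at $\cc$ to each of the pairs $(\vv_i, -\vv_j)$ as $(i,j)$ ranges over the three triangle edges and summing yields $\sum_i \vv_i^\top \nabla^2 r(\cc) \vv_i \ge 3\,|\omega(\vv_1, \vv_2)|$, which exceeds $-\omega(\vv_1, \vv_2)$ with slack. The essential difficulty is the global extension: in the Taylor remainder the three Hessians are evaluated at three different radial points $\cc + s \vv_i$, whereas the PSD hypothesis only couples two vectors when the Hessian is evaluated at a single common point. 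To bridge this gap, I plan to replace the three one-dimensional radial integrals by a single two-dimensional integral over the triangle $T := \conv(\zz^{(1)}, \zz^{(2)}, \zz^{(3)})$. The natural route is to subdivide $T$ into its three sub-triangles with common vertex $\cc$, parametrize each sub-triangle bilinearly over $[0,1]^2$ via barycentric coordinates, and expand $r$ to second order in the two parameters; this should produce an identity of the form
\[
\sum_i r(\zz^{(i)}) - 3 r(\cc) \;=\; \int_T \bigl[\uu(\zz)^\top \nabla^2 r(\zz)\, \uu(\zz) + \ww(\zz)^\top \nabla^2 r(\zz)\, \ww(\zz)\bigr]\, d\mu(\zz)
\]
for explicit piecewise-defined vector fields $\uu, \ww$ (the two incident edge vectors of the sub-triangle containing $\zz$) and a canonical measure $\mu$ on $T$, and a parallel representation $-\omega(\vv_1, \vv_2) = \int_T \uu(\zz)^\top \JJ\, \ww(\zz)\, d\mu(\zz)$ using the same pair $(\uu, \ww)$.

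Once both sides share this common integrand structure over $T$, the pointwise PSD hypothesis applied at each $\zz \in \mathrm{int}(T)$ gives $\uu^\top \nabla^2 r(\zz)\, \uu + \ww^\top \nabla^2 r(\zz)\, \ww \ge 2\, \uu^\top \JJ\, \ww$, and integration concludes the proof. The main obstacle I expect is designing the parametrization so that the integrand representations of the two sides use the same vector fields and so that the combinatorial constants line up to produce $\eta = \tfrac{1}{3}$. The barycentric subdivision should produce the factor of three naturally (one sub-triangle per edge of $T$), and the vanishing of all linear-in-$r$ terms is forced by $\sum_i \vv_i = \veczero$; the remaining checks (twice-differentiability on $\mathrm{int}(\zset)$, handling the boundary of $\Delta^m$ by continuity and density) are routine.
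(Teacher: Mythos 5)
The paper offers no proof of this lemma at all: it is imported verbatim as Theorem 1.6 of \cite{sherman2017area}, so there is nothing in-paper to compare your argument against, and I am evaluating your plan on its own merits. Your setup is sound and matches the standard treatment: the left-hand side of the area-convexity inequality does collapse to $-\omega(\vv_1,\vv_2)$ via skew-symmetry of $\JJ$ and $\vv_1+\vv_2+\vv_3=\veczero$, the Taylor-remainder representation of $\sum_i r(\zz^{(i)})-3r(\cc)$ is correct, and your infinitesimal argument (summing the block-PSD inequality over the three edge pairs to get $\sum_i \vv_i^\top\nabla^2 r(\cc)\vv_i\ge 3|\omega(\vv_1,\vv_2)|$, which beats the needed $-\omega(\vv_1,\vv_2)$ with a factor $3/2$ of slack) is valid. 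You have also correctly diagnosed that the entire difficulty is globalizing this when the three Hessians sit at different points.

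The gap is that the bridge you propose does not exist: there is no universal nonnegative measure $\mu$ and piecewise-constant edge fields $\uu,\ww$ for which $\sum_i r(\zz^{(i)})-3r(\cc)=\int_T[\uu^\top\nabla^2 r\,\uu+\ww^\top\nabla^2 r\,\ww]\,d\mu$ holds identically. Testing against quadratics forces each sub-triangle $\conv(\cc,\zz^{(i)},\zz^{(j)})$ to carry total mass $\tfrac14$; testing against the cubics $r(\zz)=(\ell^\top\zz)^3$ then forces the $\mu$-barycenter of that sub-triangle to be $\cc+\tfrac23(\vv_i+\vv_j)$, which lies \emph{outside} the sub-triangle, so no nonnegative measure can work, and a signed measure destroys your final step of integrating the pointwise PSD inequality. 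Concretely, with $r(\zz)=z_1^3$, $\cc=\veczero$, $\vv_1=(a,0)$, $\vv_2=(b,1)$, $\vv_3=(-a-b,-1)$, the left side is $-3ab(a+b)$ while your right side (uniform mass $\tfrac14$ per sub-triangle) evaluates to $-\tfrac32 ab(a+b)$. The structural reason is that any Green/Taylor-type reconstruction of vertex values from the Hessian over a two-dimensional region unavoidably involves the mixed form $\uu^\top\nabla^2 r\,\ww$, which the block-PSD hypothesis does not control with a usable sign. The known proof instead globalizes by a subdivision-and-limit argument: one shows area convexity is preserved under midpoint subdivision of the triangle (the symplectic areas of the four sub-triangles add up, and convexity of $r$, which follows from the hypothesis, controls the midpoint values), then recurses to infinitesimal scale where your Taylor computation applies; the constant $\tfrac13$ is what survives that recursion. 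I would either reproduce that argument or, as the paper does, cite Sherman's Theorem 1.6 directly.
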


The following lemma bounds $|\bb^\top \AA \aa|$ in terms of $\linfbound$ and is helpful in proving the area-convexity of $r^{\alpha}$. 
\begin{lemma}[Upper bound on $\bb^\top \AA \aa$]
\label{lem:acHelper}
Consider a constrained box-simplex game $(\uset, \Delta^m, \AA)$ and let $\linfbound \ge \max_{\uu \in \uset} \||\AA|\uu\|_{\infty}$. 
For any $\aa \in \R^n, \bb \in \R^m, \uu \in \uset, \vv \in \Delta^m, \tau > 0$, we have
\begin{align*}
    \left|\bb^\top \AA \aa\right| \le \sqrt{\linfbound} \tau \sum_{i\in[m]} \frac{\bb_i^2}{\vv_i} + \frac{\sqrt{\linfbound}}{\tau} \sum_{j\in[n]} \left(\sum_{i\in[m]} |\AA_{ij}|\vv_i\right) \frac{\aa_j^2}{\uu_j + \xi}\,.
\end{align*}
\end{lemma}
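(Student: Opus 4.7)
The plan is a direct weighted AM--GM decomposition of each term of $\bb^\top \AA \aa$, where the weights are chosen so that the $\vv_i/\uu_j$-dependence matches the two sides of the inequality and the $\sqrt{\linfbound}$ scaling emerges from the AM--GM parameter. No deeper structure of $\AA$ beyond the bound $\max_{\uu\in\uset} \||\AA|\uu\|_{\infty} \le \linfbound$ should be needed.

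First, I would write
\[
\left|\bb^\top \AA \aa\right|
\;\le\; \sum_{i\in[m],\,j\in[n]} |\bb_i|\,|\AA_{ij}|\,|\aa_j|,
\]
and split each summand as the product of
\[
x_{ij} \;=\; \frac{|\bb_i|\,|\AA_{ij}|^{1/2} (\uu_j+\xi)^{1/2}}{\vv_i^{1/2}},
\qquad
y_{ij} \;=\; \frac{|\AA_{ij}|^{1/2} \vv_i^{1/2} |\aa_j|}{(\uu_j+\xi)^{1/2}}.
\]
Applying the elementary inequality $x_{ij} y_{ij} \le \tfrac{\lambda}{2} x_{ij}^2 + \tfrac{1}{2\lambda} y_{ij}^2$ with the scaled parameter $\lambda = \tau/\sqrt{\linfbound}$, and summing over $i,j$, gives
\[
|\bb^\top \AA \aa|
\;\le\; \frac{\tau}{2\sqrt{\linfbound}} \sum_{i\in[m]} \frac{\bb_i^2}{\vv_i}\!\sum_{j\in[n]} |\AA_{ij}|(\uu_j+\xi)
\;+\; \frac{\sqrt{\linfbound}}{2\tau} \sum_{j\in[n]}\left(\sum_{i\in[m]} |\AA_{ij}|\vv_i\right)\frac{\aa_j^2}{\uu_j+\xi}.
\]

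The remaining step is to bound the inner sum $\sum_{j\in[n]} |\AA_{ij}|(\uu_j+\xi)$ in the first term. Splitting it into $(|\AA|\uu)_i + \xi\,(|\AA|\vecone)_i$, the first piece is at most $\||\AA|\uu\|_{\infty} \le \linfbound$ by assumption, and the second is at most $\xi\,\|\AA\|_{\infty}$. For the value of $\xi$ specified in \Cref{def:regularizer_complete} (or equivalently in \Cref{thm:box_simplex_solver}), one has $\xi\,\|\AA\|_{\infty} \le \linfbound$, so this inner sum is at most $2\linfbound$. Substituting yields
\[
|\bb^\top \AA \aa|
\;\le\; \tau\sqrt{\linfbound}\sum_{i\in[m]} \frac{\bb_i^2}{\vv_i}
\;+\; \frac{\sqrt{\linfbound}}{2\tau}\sum_{j\in[n]} \left(\sum_{i\in[m]}|\AA_{ij}|\vv_i\right)\frac{\aa_j^2}{\uu_j+\xi},
\]
which is (slightly stronger than) the claim.

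There is no real obstacle here; the only delicate point is the choice of how to distribute the three factors $|\bb_i|,|\AA_{ij}|,|\aa_j|$ and $\vv_i,(\uu_j+\xi)$ across the two terms of AM--GM, and the rescaling of the AM--GM parameter by $\sqrt{\linfbound}$ to convert the natural bound (with a prefactor of $\linfbound$) into the stated $\sqrt{\linfbound}$ form. The use of this lemma in the proof of \Cref{lem:regularizer} via the PSD block matrix criterion of \Cref{lem:2ndOrderAreaCvx} is what motivates the specific form; as long as one keeps the $\vv_i$-homogeneity in the right spots, the calculation is a one-liner.
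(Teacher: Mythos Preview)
Your proposal is correct and matches the paper's argument in all essential respects: both bound $|\bb^\top\AA\aa|$ termwise, introduce the weights $\vv_i$ and $(\uu_j+\xi)$ in exactly the places you chose, and use the bound $\sum_j |\AA_{ij}|(\uu_j+\xi)\le 2\linfbound$ coming from $\xi\|\AA\|_\infty\le\linfbound$. The only cosmetic difference is that the paper first applies Cauchy--Schwarz over $j$ for each fixed $i$ and then AM--GM, whereas you fold both into a single weighted AM--GM on each $(i,j)$ term; the two are equivalent here and your version even saves a factor of $2$ in the second term.
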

\begin{proof}[Proof of \Cref{lem:acHelper}]
Let $\uu$ be any element of $\uset.$
We have
\begin{align*}
\bb^\top \AA \aa 
= \sum_{i\in[m], j\in[n]} \bb_i \AA_{ij} \aa_j 
\le \sum_{i\in[m], j\in[n]} |\bb_i| |\AA_{ij}| |\aa_j|
\end{align*}

Let $\vv$ be any element of $\Delta^m$, we have, by Cauchy-Schwartz, that, for any $i$,
\begin{align*}
|\bb_i| \sum_{j\in[n]}  |\AA_{ij}| |\aa_j|
&\le |\bb_i| \left(\sum_{j\in[n]} |\AA_{ij}| (\uu_j + \xi)\right)^{\frac{1}{2}} \left(\sum_{j\in[n]} |\AA_{ij}| \frac{\aa_j^2}{\uu_j + \xi}\right)^{\frac{1}{2}} \\
&\le \sqrt{2\linfbound} |\bb_i| \left(\sum_{j\in[n]} |\AA_{ij}| \frac{\aa_j^2}{\uu_j + \xi}\right)^{\frac{1}{2}} \\
&\le \sqrt{\linfbound} \tau \frac{\bb_i^2}{\vv_i} + \frac{\sqrt{\linfbound}}{\tau}\vv_i \left(\sum_{j\in[n]} |\AA_{ij}| \frac{\aa_j^2}{\uu_j + \xi}\right)
\end{align*}
where the 2nd inequality comes from $\||\AA|(\uu + \xi \vecone^n)\|_{\infty} \le \||\AA|\uu\|_{\infty} + \xi \|\AA\|_{\infty} \le 2\linfbound$, and the 3rd inequality comes from $ab \le \frac{\tau}{2y}a^2 + \frac{y}{2\tau}b^2$ for any $a, b, y, \tau > 0.$
The lemma concludes by summing up all $i$'s.
\end{proof}

The next lemma shows that for $\alpha$ large enough, meaning $\alpha \ge 4 \linfbound \max \{\log \frac{1}{\xi}, \log (\usetSize + \xi)\}$, $r^{\alpha}$ is convex and $\g^2 r^{\alpha}$ can be approximated by its diagonal part. 
\begin{lemma}[$r^{\alpha}(\cdot, \cdot)$ is convex]
\label{lem:regCvx}
Consider a constrained box-simplex game $(\uset, \Delta^m, \AA)$. 
Let $\linfbound \ge \max_{\uu \in \uset} \||\AA|\uu\|_{\infty}$ and $\usetBound \ge \usetSize$. 
If $\alpha \ge 4 B \linfbound$, $r^{\alpha}$ is jointly-convex. 
Moreover, for any $\aa \in \R^n, \bb \in \R^m, \uu \in \uset, \vv \in \Delta^m$, we have:
\begin{align*}
\begin{pmatrix}
    \aa \\
    \bb
\end{pmatrix}^\top
\g^2 r^{\alpha}(\uu, \vv)
\begin{pmatrix}
    \aa \\
    \bb
\end{pmatrix}
\ge \frac{1}{2}
\begin{pmatrix}
    \aa \\
    \bb
\end{pmatrix}^\top
\ddiag(\g^2 r^{\alpha}(\uu, \vv))
\begin{pmatrix}
    \aa \\
    \bb
\end{pmatrix}\,.
\end{align*}
\end{lemma}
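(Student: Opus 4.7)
The plan is to compute $\nabla^2 r^{\alpha}$ in block form, isolate the off-diagonal cross-block, and then control it against the diagonal blocks using an AM--GM argument in the spirit of \Cref{lem:acHelper}. Joint convexity will follow as an immediate corollary of the Hessian--vs--diagonal inequality.

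\textbf{Step 1 (Hessian computation).} First I would differentiate the two summands of $r^\alpha$ separately. The bilinear/log summand $f(\uu,\vv) := \sum_{i,j}(\vv_i+\xi)|\AA_{ij}|(\uu_j+\xi)\log(\uu_j+\xi)$ is linear in $\vv$, so its $\vv\vv$-block is zero, its $\uu\uu$-block is diagonal with entries $\sum_i(\vv_i+\xi)|\AA_{ij}|/(\uu_j+\xi) = [D_u]_{jj}$, and its $\uu\vv$-block is $M$ with $M_{ji} = |\AA_{ij}|\,c_j$ where $c_j := \log(\uu_j+\xi)+1$. The entropy summand $\alpha\sum_i\vv_i\log\vv_i$ contributes $[D_v]_{ii} = \alpha/\vv_i$ on the $\vv\vv$-diagonal. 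Hence the diagonal quadratic form is $\aa^\top D_u\aa + \bb^\top D_v\bb$ and the claim reduces to
\begin{equation*}
2|\aa^\top M\bb| \;\le\; \tfrac{1}{2}\bigl(\aa^\top D_u \aa + \bb^\top D_v\bb\bigr).
\end{equation*}

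\textbf{Step 2 (Controlling the cross-block).} Uniformly, $|c_j| = |\log(\uu_j+\xi)+1| \le B+1$ since $\uu_j+\xi \in [\xi,\, \usetBound+\xi]$ and $B = \log(\max(1/\xi,\, \usetBound+\xi))$. For each $(i,j)$ I would split symmetrically via AM--GM with a scaling parameter $\tau>0$:
\begin{equation*}
|\AA_{ij}||c_j||\aa_j||\bb_i| \;\le\; \frac{|\AA_{ij}|}{2\tau}\cdot\frac{(\vv_i+\xi)\,\aa_j^2}{\uu_j+\xi} \;+\; \frac{\tau\,|\AA_{ij}|\,c_j^2\,(\uu_j+\xi)}{2(\vv_i+\xi)}\,\bb_i^2.
\end{equation*}
Summing over $(i,j)$: the first contribution is exactly $\tfrac{1}{2\tau}\aa^\top D_u\aa$. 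For the second, using $|c_j|\le B+1$ and the inner sum bound $\sum_j|\AA_{ij}|(\uu_j+\xi)\le \sum_j|\AA_{ij}|\uu_j + \xi\|\AA\|_\infty \le 2\linfbound$ (which comes from $\linfbound \ge \||\AA|\uu\|_\infty$ and the defining choice of $\xi$, exactly as in the proof of \Cref{lem:acHelper}), together with $\vv_i+\xi\ge\vv_i$, gives
\begin{equation*}
|\aa^\top M\bb| \;\le\; \frac{1}{2\tau}\,\aa^\top D_u\aa \;+\; \frac{\tau(B+1)^2\linfbound}{\alpha}\,\bb^\top D_v\bb.
\end{equation*}

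\textbf{Step 3 (Choosing $\tau$ and concluding).} Picking $\tau = O(1)$ (so the $D_u$ coefficient is $\le 1/8$) and invoking the hypothesis on $\alpha$ (so the $D_v$ coefficient is $\le 1/8$) yields $2|\aa^\top M\bb| \le \tfrac{1}{2}(\aa^\top D_u\aa + \bb^\top D_v\bb)$. This directly gives the diagonal lower bound for $\nabla^2 r^{\alpha}$, which in turn implies $\nabla^2 r^{\alpha}\succeq 0$ and hence joint convexity.

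\textbf{Main obstacle.} The principal subtlety is matching the stated constant $\alpha \ge 4B\linfbound$: the naive AM--GM above produces a quadratic dependence on $B$ (since $|c_j|\le B+1$ is used inside the square in the second term), while the lemma asserts a linear dependence. Closing this gap requires either absorbing one factor of $B$ into the logarithmic $\otilde$ slack built into the regime $\alpha \ge 4B\linfbound$ where constants and $\mathrm{polylog}$ factors are hidden in the statement, or---more cleanly---replacing the uniform bound on $c_j^2$ by a split argument that exploits the cancellation $c_j^2(\uu_j+\xi)$ (small when $\uu_j+\xi$ is small, controlled by $\linfbound$ when large) to save a factor of $B$. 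Apart from this constant tracking, the remainder of the argument is a direct AM--GM computation parallel to \Cref{lem:acHelper}.
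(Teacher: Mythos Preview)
Your approach is essentially the paper's: compute the block Hessian, bound the off-diagonal cross term $2\sum_{i,j}\bb_i|\AA_{ij}|(1+\log(\uu_j+\xi))\aa_j$ by an AM--GM/Cauchy--Schwarz split against the two diagonal blocks, and use the lower bound on $\alpha$ to absorb the $\bb$-side. The paper packages the split via \Cref{lem:acHelper} (first Cauchy--Schwarz over $j$ to produce a $\sqrt{\linfbound}$ factor, then AM--GM with a parameter $\tau$), while you do a direct per-entry AM--GM; the two are interchangeable up to the choice of $\tau$.

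Regarding your ``Main obstacle'': you have correctly identified a real constant slip, and it is present in the paper's own proof as well. The paper pulls out $|1+\log(\uu_j+\xi)|\le B$ uniformly, applies \Cref{lem:acHelper}, and sets $\tau=\sqrt{\linfbound}$. This makes the $\bb$-side coefficient $B\linfbound \le \alpha/2$, but leaves the $\aa$-side coefficient equal to $B$, which is \emph{not} $\le 1/2$ in general. Balancing the two constraints actually forces $\alpha \ge 4B^2\linfbound$, exactly the quadratic dependence you flagged. Neither of your proposed fixes (hiding a $B$ in $\otilde$ slack, or exploiting cancellation in $c_j^2(\uu_j+\xi)$) recovers the linear-in-$B$ constant cleanly; the lemma as stated appears to need $\alpha \ge cB^2\linfbound$. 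This does not affect the paper's downstream results, since $B$ is polylogarithmic everywhere it is used and the extra $B$ factor is absorbed into the $\almostTime(\cdot)$ running-time bounds. So: your plan is correct and matches the paper, and your caution about the constant is well-placed rather than a defect in your argument.
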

\begin{proof}[Proof of \Cref{lem:regCvx}]
Be definition of $r^{\alpha}$, we know
\begin{align*}
    \g^2 r^{\alpha}(\uu, \vv) = \begin{pmatrix}
        \ddiag(|\AA|^\top (\vv + \xi))\ddiag\left(\frac{1}{\uu + \xi \vecone^n }\right) & \ddiag(1 + \log(\uu + \xi \vecone^n )) |\AA|^\top \\
        |\AA| \ddiag(1 + \log(\uu + \xi \vecone^n )) & \alpha \ddiag\left(\frac{1}{\vv}\right)
    \end{pmatrix}
\end{align*}
For any $\aa \in \R^n$ and $\bb \in \R^m$, we have
\begin{align*}
\begin{pmatrix}
    \aa \\
    \bb
\end{pmatrix}^\top
\g^2 r^{\alpha}(\uu, \vv)
\begin{pmatrix}
    \aa \\
    \bb
\end{pmatrix}
= \alpha \sum_{i\in[m]} \frac{\bb_i^2}{\vv_i} + \sum_{i\in[m], j\in[n]} \bb_i |\AA_{ij}| (1 + \log(\uu_j + \xi)) \aa_j + \sum_{j\in[n]} \left(\sum_{i\in[m]} |\AA_{ij}| (\vv_i + \xi) \right) \frac{\aa_j^2}{\uu_j + \xi}
\end{align*}
By \Cref{lem:acHelper} and the fact that $|1 + \log(\uu_j + \xi)| \le 1 + \log (\max_{\uu \in \uset} \|\uu\|_{\infty} + \xi) \le B$ for any $j$, we can bound the scale of the cross term by
\begin{align*}
\left|\sum_{i\in[m], j\in[n]} \bb_i |\AA_{ij}| (1 + \log(\uu_j + \xi)) \aa_j\right|
&\le B
\left(\sqrt{\linfbound} \tau \sum_{i\in[m]} \frac{\bb_i^2}{\vv_i} + \frac{\sqrt{\linfbound}}{\tau} \sum_{j\in[n]} \left(\sum_{i\in[m]} |\AA_{ij}|\vv_i\right) \frac{\aa_j^2}{\uu_j + \xi}\right)
\end{align*}
for any $\tau > 0.$
By setting $\tau = \sqrt{\linfbound}$ and observing that $\alpha \ge 2 B \linfbound$, we conclude the proof by having
\begin{align*}
\begin{pmatrix}
    \aa \\
    \bb
\end{pmatrix}^\top
\g^2 r^{\alpha}(\uu, \vv)
\begin{pmatrix}
    \aa \\
    \bb
\end{pmatrix}
&= \alpha \sum_{i\in[m]} \frac{\bb_i^2}{\vv_i} + \sum_{i\in[m], j\in[n]} \bb_i |\AA_{ij}| (1 + \log(\uu_j + \xi)) \aa_j + \sum_{j\in[n]} \left(\sum_{i\in[m]} |\AA_{ij}| (\vv_i + \xi) \right) \frac{\aa_j^2}{\uu_j + \xi} \\
&\ge \frac{\alpha}{2}\sum_{i\in[m]} \frac{\bb_i^2}{\vv_i} + \frac{1}{2}\sum_{j\in[n]} \left(\sum_{i\in[m]} |\AA_{ij}| (\vv_i + \xi) \right) \frac{\aa_j^2}{\uu_j + \xi} \\
&\ge \frac{1}{2}
\begin{pmatrix}
    \aa \\
    \bb
\end{pmatrix}^\top
\ddiag(\g^2 r^{\alpha}(\uu, \vv))
\begin{pmatrix}
    \aa \\
    \bb
\end{pmatrix}\,.
\end{align*}
This proved the ``Moreover'' part of the lemma. To prove joint convexity of $r^{\alpha}$ over $\uset \times \Delta^m$, note that terms $\frac{\bb_i^2}{\vv_i} , \vv_i + \xi, \frac{\aa_j^2}{\uu_j + \xi}$ for all $i \in [m], j \in [n]$ are all positive, which implies that $\g^2 r^{\alpha}(\uu, \vv) \succeq 0$ holds for any $\uu \in \uset, \vv \in \Delta^m.$
\end{proof}

Now, we are ready to show that $r^{\alpha}$ is area-convex and prove \Cref{lem:regularizer}.
\begin{proof}[Proof of \Cref{lem:regularizer}]
First, note that the convexity of $r^{\alpha}$ follows from \Cref{lem:regCvx}. 
To prove that $r^{\alpha}$ is area-convex, it suffices, due to \Cref{lem:2ndOrderAreaCvx}, to prove that
\begin{align*}
0
\preceq
\begin{pmatrix}
\g^2 r^{\alpha}(\zz) & -\JJ \\
\JJ & \g^2 r^{\alpha}(\zz)
\end{pmatrix}
\end{align*}
for any $\zz = (\uu, \vv) \in \uset \times \Delta^m.$

For every $(\aa, \bb, \cc, \dd) \in \R^n \times \R^m \times \R^n \times \R^m$, we have
\begin{align*}
\begin{pmatrix}
    \aa \\
    \bb \\
    \cc \\
    \dd
\end{pmatrix}^\top
\begin{pmatrix}
\g^2 r^{\alpha}(\zz) & -\JJ \\
\JJ & \g^2 r^{\alpha}(\zz)
\end{pmatrix}
\begin{pmatrix}
    \aa \\
    \bb \\
    \cc \\
    \dd
\end{pmatrix}
&=
\begin{pmatrix}
    \aa \\
    \bb
\end{pmatrix}^\top
\g^2 r^{\alpha}(\zz)
\begin{pmatrix}
    \aa \\
    \bb
\end{pmatrix}
+
\begin{pmatrix}
    \cc \\
    \dd
\end{pmatrix}^\top
\g^2 r^{\alpha}(\zz)
\begin{pmatrix}
    \cc \\
    \dd
\end{pmatrix}
\\
&-
\begin{pmatrix}
    \aa \\
    \bb
\end{pmatrix}^\top
\JJ
\begin{pmatrix}
    \cc \\
    \dd
\end{pmatrix}
+
\begin{pmatrix}
    \cc \\
    \dd
\end{pmatrix}^\top
\JJ
\begin{pmatrix}
    \aa \\
    \bb
\end{pmatrix}
\end{align*}
Because $\JJ^\top = -\JJ$, we have
\begin{align}
\label{eq:JPart}
-
\begin{pmatrix}
    \aa \\
    \bb
\end{pmatrix}^\top
\JJ
\begin{pmatrix}
    \cc \\
    \dd
\end{pmatrix}
+
\begin{pmatrix}
    \cc \\
    \dd
\end{pmatrix}^\top
\JJ
\begin{pmatrix}
    \aa \\
    \bb
\end{pmatrix}
&= 2\bb^\top \AA \cc -2\dd^\top \AA \aa
\end{align}
By \Cref{lem:acHelper} and \Cref{lem:regCvx}, we can bound the absolute value of \eqref{eq:JPart} by
\begin{align*}
2|\bb^\top \AA \cc| + 2|\dd^\top \AA \aa|
&\le 2\sqrt{\linfbound}\tau  \sum_{i\in[m]} \frac{\bb_i^2 + \dd_i^2}{\vv_i} + \frac{2\sqrt{\linfbound}}{\tau}\sum_{j\in[n]} \left(\sum_{i\in[m]} |\AA_{ij}\vv_i|\right) \frac{\aa_j^2 + \cc_j^2}{\uu_j + \xi} \,.
\end{align*}
for any $\tau > 0$. Setting $\tau = \sqrt{\linfbound}$, we thus obtain 
\begin{align*}
2|\bb^\top \AA \cc| + 2|\dd^\top \AA \aa|
&\le \frac{\alpha}{2} \sum_{i\in[m]} \frac{\bb_i^2 + \dd_i^2}{\vv_i} + \frac{1}{2}\sum_{j\in[n]} \left(\sum_{i\in[m]} |\AA_{ij}\vv_i|\right) \frac{\aa_j^2 + \cc_j^2}{\uu_j + \xi} \\
&= \frac{1}{2}\begin{pmatrix}
    \aa \\
    \bb
\end{pmatrix}^\top
\ddiag(\g^2 r^{\alpha}(\zz))
\begin{pmatrix}
    \aa \\
    \bb
\end{pmatrix}
+
\frac{1}{2}\begin{pmatrix}
    \cc \\
    \dd
\end{pmatrix}^\top
\ddiag(\g^2 r^{\alpha}(\zz))
\begin{pmatrix}
    \cc \\
    \dd
\end{pmatrix}
\end{align*}
by using that $\alpha \ge 4 \linfbound.$
Hence, we obtain that 
\begin{equation}
    \begin{pmatrix}
    \aa \\
    \bb \\
    \cc \\
    \dd
\end{pmatrix}^\top
\begin{pmatrix}
\g^2 r^{\alpha}(\zz) & -\JJ \\
\JJ & \g^2 r^{\alpha}(\zz)
\end{pmatrix}
\begin{pmatrix}
    \aa \\
    \bb \\
    \cc \\
    \dd
\end{pmatrix}
\ge \frac{1}{2}\begin{pmatrix}
    \aa \\
    \bb
\end{pmatrix}^\top
\ddiag(\g^2 r^{\alpha}(\zz))
\begin{pmatrix}
    \aa \\
    \bb
\end{pmatrix}
+
\frac{1}{2}\begin{pmatrix}
    \cc \\
    \dd
\end{pmatrix}^\top
\ddiag(\g^2 r^{\alpha}(\zz))
\begin{pmatrix}
    \cc \\
    \dd
\end{pmatrix} \,.
\end{equation}
Since $\ddiag(\g^2 r^{\alpha}(\zz))$ is positive definite, this implies that
Consequently, we obtain that $r^{\alpha}$ is $\frac{1}{3}$-area-convex by \Cref{lem:regCvx}.

Next, we show that the range of $r^{\alpha}$ is $\O(\alpha + \linfbound).$
For the entropy part ($r_2$), we have $\max_{\vv \in \Delta^m} |\alpha r_2(\vv)| \le \alpha \log m$. 
For the other part, we have
\begin{align*}
\left|\sum_{i\in[m], j\in[n]} (\vv_i + \xi) |\AA_{ij}| (\uu_j + \xi) \log (\uu_j + \xi)\right|
&\le (1 + \xi) \max_i \sum_{j\in[n]} |\AA_{ij}| (\uu_j + \xi) |\log (\uu_j + \xi)| \\
&\le 4 B \linfbound \,,
\end{align*}
where we used that $\max_{\uu \in \sset} (1 + \xi) \max_{\uu \in \uset} \log (\uu_j + \xi) \le 2 \max_{\uu \in \uset} \log (\uu_j + \xi) \le  2 B$ and that $\max_{i \in [m]} \sum_{j\in[n]} |\AA_{ij}| (\uu_j + \xi) = \max_{i \in [m]} |\AA_i| (\uu + \xi \vecone^n) \le 2 \linfbound$. 
The lemma follows.
\end{proof}

\subsection{Putting It All Together}
\label{subsec:3_lastsubsec}

In this subsection, we prove \Cref{thm:box_simplex_solver}. First, we present the pseudocode of \Cref{alg:box_simplex}, which is used in \Cref{thm:box_simplex_solver}. At a high level, \Cref{alg:box_simplex} runs the extragradient method (\Cref{alg:main}) on the space $\{(\uu, \vv)\} \subseteq \uset \times \Delta^m$ for $\almostTime(\eps^{-1} \log (\max(\frac{1}{\xi}, \usetBound + \xi)))$ iterations. We then present the result stating that area convexity is a stronger property than relaxed relative Lipschitzness, followed by the proof of \Cref{thm:box_simplex_solver}. 

\begin{algorithm}[H]
\caption{Box-simplex Algorithm}\label{alg:box_simplex}
\KwData{Convex function $\psi$, constraint set $\uset \subseteq \R_{\ge 0}^n$, matrix $\AA \in \R^{m \times n}$, bounds $\linfbound \ge \max_{\uu \in \uset} \||\AA| \uu\|_{\infty}, \usetBound \ge \usetSize$, a $\delta$-ABRO $\oracle_{\bestrsp}^{\uset, \psi, r}$ with respect to $(\uset, \psi, r)$, error parameters $\delta, \epsilon$}
\KwResult{A point $\algoutput{\uu}, \algoutput{\vv}$ with guarantees in \Cref{thm:box_simplex_solver}}
\SetKwProg{Fn}{Function}{:}{}
\SetKwFunction{compblopt}{CompositeBilinearOptimization}
\SetKwFunction{boxsimplex}{BoxSimplexOptimizer}
\Fn{\boxsimplex{$\psi, \uset, \AA, \linfbound, \usetBound, \oracle_{\bestrsp}^{\uset, \psi, r}, \epsilon$}}{
    $r(\uu, \vv) 
    := \sum_{i \in [m], j \in [n]} (\vv_i + \xi) |\AA_{ij}| (\uu_j + \xi) \log (\uu_j + \xi) + \fixedalpha \sum_{i \in [m]} \vv_i \log \vv_i$, where $\fixedalpha \gets 4 \linfbound \log (\min(\frac{1}{\xi}, \usetBound + \xi))$ \label{line:choose_r_alpha} \;
    \tcp*[h]{The doubly entropic regularizer; $\xi$ picked as in \Cref{def:regularizer_complete}}
    $s(\vv) := \fixedalpha \sum_{i \in [m]} \vv_i \log \vv_i, \psi(\uu) := \sum_{i \in [m], j \in [k]} \psi_{ij}(\uu_{ij})$ \label{line:choose_rest_of_reg}\;
    $\gg(\uu, \vv) := [\AA^\top \vv; -\AA\uu]$ \tcp*{Gradient operator}
    Implement our (extra)gradient step oracles $\oracle_{\mathrm{grad}}^{\psi, r, s}, \oracle_{\mathrm{xgrad}}^{\psi / 2, r+s, s}$ via \Cref{lem:xgrad_steps_implementability}\;
    $\vv^{(1)} = \frac{1}{m} \Vec{1}, \enspace \uu^{(1)} \gets$ feasible point in $\uset$ obtained by calling $\oracle_{\bestrsp}^{\uset, \psi, r}$ \label{line:set_starting_point} \;
    $T \gets \almostTime(\frac{1}{\epsilon} \log (\max(\frac{1}{\xi}, \usetBound + \xi)))$\;
    $(\algoutput{\uu}, \algoutput{\vv}) \gets$ \compblopt{$\psi, r, s, \gg, T, \uu^{(0)}, \vv^{(0)}$} \label{line:calling_comp_blnopt}\;
    \Return $\algoutput{\uu}$
}    
\end{algorithm}

Note that the regularizers in \Cref{line:choose_r_alpha} and \Cref{line:choose_rest_of_reg} are picked in a way that is compatible with \Cref{assm:4.2space}. 
Along with the ABRO received as input, this choice allows us to follow the framework developed throughout this section.  
We next present the result that states area convexity is a stronger property than relaxed relative Lipschitzness.

\begin{lemma}[Lemma 17 from \cite{JT23}]\label{lem:area_to_relaxed}
    Let $\zset \subseteq \R^{N}$ be a convex, compact set and 
    $r:\zset \to \R$ an $\eta$-area convex regularizer with respect to gradient 
    operator $\gg:\zset \to \zset^*$. Then, $r$ is $\eta$-relaxed relative Lipschitz with respect to $\gg$. 
\end{lemma}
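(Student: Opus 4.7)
The plan is to reduce the inequality defining relaxed relative Lipschitzness (\Cref{def:relaxed_RelLip}) to that defining area convexity (\Cref{def:area_convexity}). Concretely, for any fixed triple $\zz^{(1)}, \zz^{(2)}, \zz^{(3)} \in \zset$ with centroid $\cc = \frac{1}{3}(\zz^{(1)} + \zz^{(2)} + \zz^{(3)})$, $\eta$-area convexity gives
\[
\eta \langle \gg(\zz^{(2)}) - \gg(\zz^{(1)}), \zz^{(2)} - \zz^{(3)} \rangle \;\le\; T \defeq r(\zz^{(1)}) + r(\zz^{(2)}) + r(\zz^{(3)}) - 3 r(\cc),
\]
so it will suffice to prove $T \le S$ where $S \defeq V_{\zz^{(1)}}^{r}(\zz^{(2)}) + V_{\zz^{(2)}}^{r}(\zz^{(3)}) + V_{\zz^{(1)}}^{r}(\zz^{(3)})$.

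To make this clean, I would introduce the Bregman-shifted regularizer $\bar{r}(\ww) \defeq V_{\zz^{(1)}}^{r}(\ww) = r(\ww) - r(\zz^{(1)}) - \langle \nabla r(\zz^{(1)}), \ww - \zz^{(1)} \rangle$. Since $\bar{r}$ differs from $r$ only by an affine function, $\bar{r}$ shares $r$'s Hessian, every Bregman divergence is preserved (in particular $V_{\zz^{(2)}}^{r}(\zz^{(3)}) = V_{\zz^{(2)}}^{\bar{r}}(\zz^{(3)})$), and one has $\bar{r}(\zz^{(1)}) = 0$ with $\nabla \bar{r}(\zz^{(1)}) = 0$, so $V_{\zz^{(1)}}^{r}(\zz^{(j)}) = \bar{r}(\zz^{(j)})$ for $j \in \{2,3\}$. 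The key observation is that $T$ is also invariant under this affine shift: because $3\cc = \sum_i \zz^{(i)}$, every affine $L$ satisfies $3 L(\cc) = \sum_i L(\zz^{(i)})$, hence
\[
T = \bar{r}(\zz^{(1)}) + \bar{r}(\zz^{(2)}) + \bar{r}(\zz^{(3)}) - 3 \bar{r}(\cc) = \bar{r}(\zz^{(2)}) + \bar{r}(\zz^{(3)}) - 3 \bar{r}(\cc),
\]
while simultaneously $S = \bar{r}(\zz^{(2)}) + V_{\zz^{(2)}}^{\bar{r}}(\zz^{(3)}) + \bar{r}(\zz^{(3)})$.

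Subtracting then yields $S - T = V_{\zz^{(2)}}^{\bar{r}}(\zz^{(3)}) + 3\,\bar{r}(\cc)$, and I would finish by noting that both terms are nonnegative: $\bar{r}(\cc) = V_{\zz^{(1)}}^{r}(\cc) \ge 0$ and $V_{\zz^{(2)}}^{\bar{r}}(\zz^{(3)}) \ge 0$ both by convexity of $r$ (inherited by $\bar{r}$, which has the same Hessian). Chaining with the area-convexity bound gives $\eta \langle \gg(\zz^{(2)}) - \gg(\zz^{(1)}), \zz^{(2)} - \zz^{(3)} \rangle \le T \le S$, which is exactly \Cref{def:relaxed_RelLip}. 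Convexity of $r$ is either a standing assumption on a regularizer or can be harvested from $\eta$-area convexity by setting $\zz^{(2)} = \zz^{(3)}$, yielding $r(\tfrac{1}{3}\zz^{(1)} + \tfrac{2}{3}\zz^{(2)}) \le \tfrac{1}{3}r(\zz^{(1)}) + \tfrac{2}{3}r(\zz^{(2)})$ for all $\zz^{(1)},\zz^{(2)}$, which with continuity is enough for convexity. I do not anticipate a real obstacle here; the entire argument is a two-line identity plus a nonnegativity check, and the ``trick'' of passing to $\bar{r}$ is what makes transparent why area convexity is the stronger of the two notions.
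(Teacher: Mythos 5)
Your argument is correct; the paper does not actually prove this lemma (it is imported from [JT23]), and your affine-shift derivation is equivalent to the standard one: the key step $3\bar{r}(\cc)\ge 0$ is exactly the first-order convexity bound $3r(\cc) \ge 3r(\zz^{(1)}) + \langle\nabla r(\zz^{(1)}),\, \zz^{(2)}+\zz^{(3)}-2\zz^{(1)}\rangle$ in unshifted coordinates, which together with $V^{r}_{\zz^{(2)}}(\zz^{(3)})\ge 0$ gives $T\le S$. The only hypothesis needed beyond the literal statement is convexity (and differentiability) of $r$, which you correctly flag and which holds wherever the paper invokes the lemma.
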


Now, we are ready to prove \Cref{thm:box_simplex_solver}. 
\begin{proof}[Proof of \Cref{thm:box_simplex_solver}]
    Working within the context of \Cref{subsec:implement_oracles}, the sets $\uset, \vset$ are $\uset = \uset, \vset = \Delta^m$. Moreover, $\psi$ is only defined on $\uset$, and hence the assumptions of \Cref{assm:4.2space} are satisfied. 
    Define 
    the family of regularizers $\{r^{\alpha}\}_{\alpha \ge 0}$ by $r^{\alpha}(\uu, \vv) = r_1(\uu, \vv) + \alpha r_2(\vv)$, where $r_1: \uset \times \Delta^m \to \R$ as $r_1(\uu, \vv) := \sum_{i \in [m], j \in [k]} (\vv_i + \xi) (\uu_{ij} + \xi) \log (\uu_{ij} + \xi)$ and $r_2:\Delta^m \to \R$ as $r_2(\vv) = \sum_{i \in [m]} \vv_i \log \vv_i$. We show that $\{r^{\alpha}\}_{\alpha \ge 0}$ is block-compatible. First, note that
    $r_1$ is convex in both variable $\uu$ and $\vv$ and $r_2$ is convex. Next, by
    \Cref{lem:regularizer}, for any $\alpha \ge 4 \linfbound \log (\min(\frac{1}{\xi}, \usetBound + \xi))$, $r^{\alpha}$ is both jointly convex on $\uset \times \Delta^m$ and $\frac{1}{3}$-area convex with respect to gradient operator $\gg(\uu, \vv) = [\AA^\top \vv; -\AA\uu]$. Thus, we obtain that the family of regularizers $\{r^{\alpha}\}_{\alpha \ge 0}$ is block compatible. 
    
    Note that in \Cref{line:choose_r_alpha}, we have selected the regularizers $r(\uu, \vv) = r^{\fixedalpha}(\uu, \vv), s(\uu, \vv) = \beta r_2(\vv)$, for $\fixedalpha = \beta = 4 \linfbound \log (\min(\frac{1}{\xi}, \usetBound + \xi))$. 
    Next, note that we have access to exact minimization for the $\Delta^m$ variables and the $\delBR$-ABRO $\oracle_{\bestrsp}^{\uset, \psi, r}$ with respect to $(\uset, \psi, r)$. Consequently, $r = r^{\fixedalpha}$ is $(\fixedalpha, \fixedalpha)$-solvable (\Cref{def:opt_oracles}) and thus,
    by \Cref{lem:xgrad_steps_implementability}, this allows us to implement $\oracle_{\mathrm{grad}}^{\psi, r, s}$ and $\oracle_{\mathrm{xgrad}}^{\psi / 2, r+s, s}$, the $3 \delBR$-approximate composite (extra)gradient steps respectively. 
    Moreover, the points $\vv^{(0)}, \uu^{(0)}$ in \Cref{line:set_starting_point} are valid feasible points (i.e., $\vv^{(0)} \in \Delta^m, \uu^{(0)} \in \uset$), since $\uu^{(0)}$ is the output resulting from a call to the algorithm in \Cref{thm:build_BR_oracle} with an arbitrary $\VV \in \R^n$, which is guaranteed to be in $\uset$. 

    Let $\{\fullit^{(t)}, \fullit^{(1)}_{\mathrm{aux}}, \halfit^{(t)}\}_{t \in [T]}$ be the iterates generated by the call of $\mathsf{CompositeBilinearOptimization}(\cdot)$ in \Cref{line:calling_comp_blnopt}, where $\fullit^{(t)} := (\uu^{(t, 1)}, \vv^{(t, 1)}), \fullit^{(1)}_{\mathrm{aux}} := (\uu^{(t, 1)}_{\mathrm{aux}}, \vv^{(t, 1)}_{\mathrm{aux}})$ are the full iterates (with $\fullit^{(1)}_{\mathrm{aux}}$ being the auxiliary iterate) and $\halfit^{(t)} := (\uu^{(t, 2)}, \vv^{(t, 2)})$ are the half iterates. Since $\fixedalpha \ge 4 \linfbound \log (\min(\frac{1}{\xi}, \usetSize + \xi))$, \Cref{lem:regularizer} implies that $r$ is $\frac{1}{3}$-area convex with respect to $\gg$. 
    Thus, by \Cref{lem:area_to_relaxed} applied on space $\zset = \uset \times \Delta^m$, $r$ is also $\frac{1}{3}$-relaxed relative Lipschitz with respect to $\gg$, and since we can implement $\oracle_{\mathrm{grad}}^{\psi, r, s}$ and $\oracle_{\mathrm{xgrad}}^{\psi / 2, r+s, s}$ and are starting from feasible points $\fullit^{(1)} = \fullit^{(1)}_{\mathrm{aux}} = (\vv^{(1)}, \uu^{(1)})$, by \Cref{lem:convergence}, we obtain 
    \begin{align*}
    \frac{1}{T} \sum_{t \in [T]} \mathrm{regret}^{\zset}_{\gg(\uu^{(t, 2)}, \vv^{(t, 2)}), \psi}((\uu^{(t, 2)}, \vv^{(t, 2)}); (\uu, \vv)) \le \frac{6}{T} (V_{(\uu^{(1)}, \vv^{(1)})}^r(\uu, \vv) + V_{\vv^{(1)}}^{s}(\vv)) + 6 \delBR, \\
    \text{ for all }
    (\uu, \vv) \in \uset \times \Delta^m, \vv \in \Delta^m
    \,,
    \end{align*}
    where we used that $s$ only depends on the variable living in $\Delta^m$. 
    Bounding $\max_{(\uu , \vv) \in \uset \times \Delta^m} |(r+s)(\uu, \vv)| \le \almostTime(\linfbound \log (\min(\frac{1}{\xi}, \usetBound + \xi)))$ by \Cref{lem:regularizer} (as $r+s = r^{\alpha + \beta}$), as well as $V_{\vv^{(1)}}^{s}(\vv)) \le O(\fixedalpha \log m)$ (since $s = \fixedalpha r_2$), we obtain 
    \[V_{(\uu^{(1)}, \vv^{(1)})}^r(\uu, \vv) + V_{\vv^{(1)}}^{s}(\vv) \le \almostTime(\linfbound)\,.\]
    Additionally, note that 
    \[\mathrm{regret}^{\zset}_{\gg(\uu^{(t, 2)}, \vv^{(t, 2)}), \psi}((\uu^{(t, 2)}, \vv^{(t, 2)}); (\uu, \vv)) = \vv \AA \uu^{(t, 2)} - \vv^{(t, 2)} \AA \uu + \psi(\uu^{(t, 2)}) - \psi(\uu)\,,
    \]
    where we used that $\langle \gg(\uu^{(t, 2)}, \vv^{(t, 2)}), (\uu^{(t, 2)}, \vv^{(t, 2)})\rangle = \langle \vv^{(t, 2)}, \AA \uu^{(t, 2)} \rangle - \langle \vv^{(t, 2)}, \AA \uu^{(t, 2)} \rangle = 0$. 
    Hence, we obtain 
    \begin{equation}\label{eq:regret_to_error}
        \frac{1}{T} \sum_{t \in [T]} (\langle \vv, \AA \uu^{(t, 2)} \rangle - \langle \vv^{(t, 2)}, \AA \uu \rangle + \psi(\uu^{(t, 2)}) - \psi(\uu)) \le \frac{6}{T} \cdot \almostTime(\linfbound) + 6 \delBR\,.
    \end{equation}
    Now, since $\algoutput{\uu} = \frac{1}{T} \sum_{t \in [T]} \uu^{(t, 2)}$, by convexity, $\psi(\algoutput{\uu}) \le \frac{1}{T} \sum_{t \in [T]} \psi(\uu^{(t, 2)})$. Using that $\frac{1}{T} \sum_{t \in [T]} \vv^{(t, 2)} = \algoutput{\vv}$, 
    we obtain 
    \[\langle \vv, \AA \algoutput{\uu} \rangle + \psi(\algoutput{\uu}) \le \langle \algoutput{\vv}, \AA \uu \rangle + \psi(\uu) + \epsilon \linfbound + 6 \delBR, \forall \uu \in \uset, \vv \in \Delta^m \,,\]
    where we used that for $T = \almostTime(\epsilon^{-1} \log (\min(\frac{1}{\xi}, \usetBound + \xi)))$ large enough, $\frac{6}{T} \cdot \almostTime(\linfbound) \le \epsilon \linfbound$. 

    Additionally, by \Cref{lem:convergence}, the call to $\mathsf{CompositeBilinearOptimization}(\cdot)$ requires $T$ steps of $\oracle_{\mathrm{grad}}^{\psi, r, s}$ and $T$ steps of $\oracle_{\mathrm{xgrad}}^{\psi / 2, r+s, s}$. Each step $\oracle_{\mathrm{grad}}^{\psi, r, s}$ takes $O(\max(m, n))$ time, from the update steps of variables $\uu, \vv$, and $O(1)$ calls to $\oracle_{\bestrsp}^{\uset, \psi, r}$ respectively, by \Cref{lem:xgrad_steps_implementability}. Hence, this yields the stated runtime of $\almostTime(\max(m, n) B \epsilon^{-1})$ and $\almostTime(B \epsilon^{-1})$ calls to $\oracle_{\bestrsp}^{\uset, \psi, r}$. 

    Lastly, we prove the bound regarding the inputs given to  when implementing either $\oracle_{\mathrm{grad}}^{\psi, r, s}$ or $\oracle_{\mathrm{xgrad}}^{\psi / 2, r+s, s}$ (inside the call to $\mathsf{CompositeBilinearOptimization}(\cdot)$). 
    Fix iteration step $t$. 
    To 
    implement either $\oracle_{\mathrm{grad}}^{\psi, r, s}(\fullit^{(t)}, \eta \gg(\fullit^{(t)}))$ (inside the call to $\mathsf{CompositeBilinearOptimization}(\cdot)$) or $\oracle_{\mathrm{xgrad}}^{\psi / 2, r+s, s}(\fullit^{(t)}, \allowbreak \auxit^{(t)}, \frac{\eta}{2} \gg(\halfit^{(t)}))$, the calls to $\oracle_{\bestrsp}^{\uset, \psi, r}$ are made on $(\hh^{\uset}, \vv)$ with $\vv \in \Delta^m$ and either $\hh^{\uset} = \eta \AA^\top \vv^{(t, 1)} - \nabla_{\uset} r^{\fixedalpha}(\fullit^{(t)})$ or $\hh^{\uset} = \frac{\eta}{2} \AA^\top \vv^{(t, 2)} - \nabla_{\uset} r^{\fixedalpha}(\fullit^{(t)})$. Since $\eta = 1/3$ and $\vv^{(t, 1)}, \vv^{(t, 2)} \in \Delta^m$, we have $\|\AA^\top \vv^{(t, 1)}\|_{\infty}, \allowbreak \|\AA^\top \vv^{(t, 2)}\|_{\infty} \le n \|\AA\|_{\infty}$. Next, note that $\nabla_{\uset} r^{\fixedalpha}(\fullit^{(t)}) = (\vv^{(t, 1)} + \xi \vecone^m) \AA (\vecone^n + \log (\xi \vecone^n + \uu^{(t, 1)})$ and hence $\|\nabla_{\uset} r^{\fixedalpha}(\fullit^{(t)})\|_{\infty} \le n \log (\min(\frac{1}{\xi}, \usetBound + \xi)) \|\AA\|_{\infty} \le n B \|\AA\|_{\infty}$. Hence, the vectors $\hh^{\uset}$ given as input to $\oracle_{\bestrsp}^{\uset, \psi, r}$ satisfy $\|\hh^{\uset}\|_{\infty} \le m n B \|\AA\|_{\infty}$, as needed. 
\end{proof}

\section{Composite \texorpdfstring{$\ell_{q, p}$}{Lqp}-Regression in \texorpdfstring{$\O_{q, p}(k)$}{OHat(k)} Query Complexity}
\label{sec:comp_ellqp_regr}

In this section, we present an $\O_{q, p}(k)$-query complexity high-accuracy algorithm for solving composite $\ell_{q, p}$ regression (\Cref{prob:comp_ellqp_regr}, restated below) and prove \Cref{thm:comp_ellqp_regr} given below:
\compLqpRegrProb*

\begin{restatable}{theorem}{compLqpRegrAlg}
\label{thm:comp_ellqp_regr}
Consider the setting of \Cref{prob:comp_ellqp_regr} where $1 < q \le 2 \le p$, $p \in \Z$ is odd with $p = O(\sqrt{\log m})$ and $\coefnorm = \exp(\O(1))$. 
Given a $(\Time_j, \delSCO)$ SCO $\oracle_j$, on $\psi_j$ over $\xset_j$ for every $j \in [k]$,
there is an algorithm that outputs $\algoutput{\XX} \in \xset$ such that
\begin{align*}
    \cE_{q, p}^{\coefnorm}(\algoutput{\XX}) \le \min_{\XX \in \xset} \cE_{q, p}^{\coefnorm}(\XX) + 3 k \delSCO \,.
\end{align*}
The algorithm runs in time
\begin{align*}
    O\left(\sum_{j \in [k]} \left(\Time_j + m\right) \cdot \left(\frac{p}{q - 1}\right)^{O\left(\frac{1}{q-1}\right)} \cdot \log\left(\frac{k|\min_{\XX \in \xset} \cE_{q, p}^{\coefnorm}(\XX)|}{\delSCO}\right)\right) \,.
\end{align*}
\end{restatable}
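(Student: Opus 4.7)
}
The plan is to follow the iterative refinement framework from \cite{akps19, AKPS22, chen2023high}. We maintain a feasible iterate $\XX^{(t)} \in \xset$, and at each iteration compute an approximate minimizer of a residual proxy
\[
\cR(\bDelta; \XX) \defeq \langle \gg, \bDelta \rangle + \psi(\XX + \bDelta) - \psi(\XX) + \sum_{i \in [m], j \in [k]} \hh_{ij}\, \gamma_q(\bDelta_{ij}; \XX_{ij}) + \coefnorm' \sum_{i \in [m]} \Bigl(\sum_{j \in [k]} \gamma_q(\bDelta_{ij}; \XX_{ij})\Bigr)^{p}\,,
\]
where $\gg$ and $\hh$ are the gradient and ``diagonal Hessian'' of $\coefnorm\|\cdot\|_{q,p}^{pq}$ at $\XX^{(t)}$ and $\gamma_q$ is the (one-sided) quadratic-like function of \cite{AKPS22}. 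Using the approximation lemmas \Cref{lem:qpIRLB,lem:qpIRUB} (analogues of those in \cite{chen2023high}), we sandwich $\cE_{q,p}^{\coefnorm}(\XX + \bDelta) - \cE_{q,p}^{\coefnorm}(\XX)$ between $\Theta_{q,p}(1)$-scaled copies of $\cR(\bDelta;\XX)$ plus lower-order error, which, as encapsulated by \Cref{coro:convRate}, implies that computing an $(\alpha, \delta')$-multiplicative-additive minimizer of $\cR$ at each step reduces the global suboptimality geometrically, yielding a $(1+\delta)$-approximation in $O_{q,p}\!\left(\alpha \log \tfrac{|{\min \cE_{q,p}^{\coefnorm}}|}{\delta}\right)$ outer iterations.

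The per-iteration work is handled by a \emph{sequential block minimization} procedure that uses exactly one SCO call per commodity. Initializing $\algoutput{\bDelta} \gets \veczero$, we cycle through $j = 1, \ldots, k$ in order; at step $j$ we fix the previously chosen blocks $\algoutput{\bDelta}_1, \ldots, \algoutput{\bDelta}_{j-1}$ and replace the unchosen blocks by $\veczero$, then set
\[
\algoutput{\bDelta}_j \in \argmin_{\bDelta_j: \XX_j + \bDelta_j \in \xset_j} \cR\!\bigl(\bigl[\algoutput{\bDelta}_1; \ldots; \algoutput{\bDelta}_{j-1}; \bDelta_j; \veczero; \ldots; \veczero\bigr]; \XX\bigr)\,.
\]
The key observation is that once the other blocks are frozen, the $j$-th partial objective separates into $\psi_j(\XX_j + \bDelta_j)$ plus an edge-separable sum $\sum_{i \in [m]} c_{ij}(\bDelta_{ij})$, because both the $\hh_{ij}\gamma_q(\cdot)$ term and the outer power-$p$ term become one-variable functions of $\bDelta_{ij}$ after substituting the fixed values of the other blocks. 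By \Cref{sec:proving_computability} these per-edge costs are $m$-decomposable, so each minimization is realized by one call to the $(\Time_j, \delSCO)$-SCO $\oracle_j$, incurring $\Time_j + \tilde O(m)$ work to assemble the coefficients. This gives $k$ SCO queries and $\tilde O(mk)$ additional time per outer iteration.

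The main obstacle will be establishing \Cref{lem:approxResSolver}: that this cyclic, one-sweep procedure produces a $(m^{o(1)}, O(k \delSCO))$-multiplicative-additive approximation of $\min_{\bDelta}\cR(\bDelta;\XX)$, which is the quantitative improvement by a factor of $k$ over the $(m^{o(1)}, O(k\delSCO))$-for-$k$-sweeps analysis of \cite{chen2023high}. The challenge is that the non-separable outer term $(\sum_j \gamma_q(\bDelta_{ij};\XX_{ij}))^{p}$ couples the blocks, so we cannot argue blockwise monotone descent directly against the full minimizer. The idea will be to compare the iterate after the $j$-th update against the restriction of the true minimizer $\bDelta^\star$ onto the first $j$ blocks and to control the cross-terms using convexity of the outer power and Hölder's inequality with exponents $(p,p/(p-1))$; the $p = O(\sqrt{\log m})$ choice is what makes the resulting blow-up $m^{o(1)}$ rather than $\poly(k)$. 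Combining this with the $O_{q,p}(\log(k|\min \cE_{q,p}^{\coefnorm}|/\delSCO))$ outer iteration bound from \Cref{coro:convRate} (with $\delta = \Theta(\delSCO)$ chosen so that $\alpha \cdot 3\delta \le 3k\delSCO$) and the per-iteration cost of $\sum_j \Time_j + \tilde O(m)$ yields the claimed total runtime and approximation guarantee.
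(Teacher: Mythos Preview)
Your proposal is correct and matches the paper's approach: iterative refinement via \Cref{coro:convRate} on the outside, one sequential sweep of block minimizations via \Cref{lem:approxResSolver} on the inside. Two small clarifications on the part you label ``the main obstacle'': in the paper's analysis of the sweep (\Cref{lem:boundPhik}), the comparison at step $j$ is against a \emph{scaled-down} block $\eta\bDelta_j^\star$ with $\eta = 1/(100p)$, tracked through a potential $\Phi^{(j)}$ and closed using $(1+x)^p \le 1 + 3px + 3(px)^p$ together with Young's inequality (the output is then halved to convert the potential bound into an $\cF$-bound); and the factor $k$ in the additive error $3k\delSCO$ arises simply by summing the per-call $\delSCO$ across the $k$ SCO calls in one sweep (so $\delta' = k\delSCO/2$ in \Cref{coro:convRate}), not from a free parameter choice as your parenthetical suggests.
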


We design the algorithm by combining the iterative refinement framework for the $\|\cdot\|_{q, p}$ objective~\cite{chen2023high} with a sequential block minimization procedure. 

\subsection{\texorpdfstring{$\ell_{q, p}$}{Lqp}-Iterative Refinement}

The algorithm for \Cref{thm:comp_ellqp_regr} computes a solution in an iterative fashion. It starts with an initial feasible solution $\XX^{(0)}$ and, at each iteration $t$, approximately solves $\min_{\bDelta: \XX^{(t)} + \bDelta \in \xset} \cE_{q, p}^{\coefnorm}(\XX^{(t)} + \bDelta)$. 
A natural approach in iterative refinement 
is to approximate $\cE_{q, p}^{\coefnorm}(\XX^{(t)} + \bDelta)$ with a simpler objective and approximately minimize it. 
Specifically, we will prove that
\begin{align}
\cE_{q, p}^{\coefnorm}(\XX + \bDelta)
\approx \cE_{q, p}^{\coefnorm}(\XX) + \psi(\XX + \bDelta) - \psi(\XX) + \sum_{i \in [m], j \in [k]} c_{ij}(\bDelta_{ij}) + \coefnorm \sum_{i \in [m]} \left(\sum_{j \in [k]} \phi_{j}(\bDelta_{j})\right)^p
\end{align}
for some convex functions $\{c_{ij}\}$ and $\{\phi_{ij}\}$ depending on $\XX.$
To approximately minimize the RHS (for $\XX = \XX^{(t)}$), we use a sequential block minimization procedure, which we discuss in \Cref{sec:resP}. The convergence rate of the method depends on the quality of the approximation achieved by the approximate minimization performed in each iteration. 

Our approximation of $\|\XX+\bDelta\|_{q, p}^{pq} - \|\XX\|_{q, p}^{pq}$ arises from the $\gamma_q(\cdot)$ function, an approximation of the Bregman divergence of $|x|^q$ for $1 < q \le 2.$
\begin{definition}[$\gamma_q(\cdot)$ functions, \cite{bubeck2018homotopy}]
\label{def:gamma}
\begin{align*}
    \gamma_q(x; f) \defeq \begin{cases}
        \frac{q}{2} f^{q-2}x^2 \text{, if $|x| \le f$} \\
        |x|^q - (1 - \frac{q}{2})f^q \text{, otherwise}
    \end{cases}
\end{align*}
\end{definition}

Using $\gamma_q(\cdot)$ functions, we prove the following lemmas regarding the Bregman divergence of $\|\XX+\bDelta\|_{q, p}^{pq}.$

\begin{restatable}[Bregman Divergence of $\|\xx\|_q^{pq}$: Lower Bound]{lemma}{qpIRLB}
\label{lem:qpIRLB}
For $1 < q \le 2 \le p$ and $\xx, \yy \in \R^d$,
\begin{align*}
V^{\|\cdot\|_q^{pq}}_{\xx}(\xx+\yy)
&= \norm{\xx + \yy}_q^{pq} - \norm{\xx}_q^{pq} - pq \norm{\xx}_q^{(p-1)q}\l\xx^{q-1}, \yy\r \\
&\ge \frac{p(q-1)}{20}\norm{\xx}_q^{(p-1)q}\sum_{j \in [d]} \gamma_q(\yy_j; \xx_j) + \left(\frac{q-1}{100}\right)^{p}\left(\sum_{j \in [d]} \gamma_q(\yy_j; \xx_j)\right)^p
\end{align*}
\end{restatable}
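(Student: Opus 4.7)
My plan is to combine a chain-rule identity for the Bregman divergence of $\|\cdot\|_q^{pq}$ with two one-dimensional bounds --- AKPS-style for the inner $\ell_q$ norm and a direct expansion for the outer $t \mapsto t^p$ --- and then close the argument with a case split that extracts the $s^p$ contribution. Throughout, set $a = \|\xx\|_q^q$, $b = \|\xx+\yy\|_q^q - a \ge -a$, $T = q\langle \xx^{q-1},\yy\rangle$, and $s = \sum_{j \in [d]} \gamma_q(\yy_j;\xx_j)$. Writing $F(\yy) = \|\xx+\yy\|_q^{pq} = H(G(\yy))$ with $H(t) = t^p$ and $G(\yy) = \|\xx+\yy\|_q^q$, a direct calculation gives the identity
\begin{align*}
V^{\|\cdot\|_q^{pq}}_{\xx}(\xx+\yy) \;=\; V^H_a(a+b) + pa^{p-1}\, V^G_0(\yy) \;=\; \bigl[(a+b)^p - a^p - p a^{p-1} b\bigr] + p a^{p-1}(b - T).
\end{align*}

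For the first ingredient, the scalar $\ell_q$ AKPS bound of \cite{AKPS22} gives, coordinate-wise, $|x+y|^q - |x|^q - q\,\mathrm{sgn}(x)|x|^{q-1}y \ge c_1(q-1)\gamma_q(y;x)$ for an absolute constant $c_1>0$ (together with the companion upper bound $\le C_1 \gamma_q(y;x)$, which I will also use). Summing over coordinates yields $b - T \ge c_1(q-1) s$, so $pa^{p-1}(b-T) \ge c_1 p(q-1) a^{p-1} s$, which, after fixing $c_1$, supplies the first term $\tfrac{p(q-1)}{20}\|\xx\|_q^{(p-1)q} s$. For the second ingredient, I would prove the elementary inequality $(a+b)^p - a^p - p a^{p-1} b \ge |b|^p$ for $a\ge 0,\ b\ge -a,\ p\ge 2$ by integrating $p\int_0^b [(a+u)^{p-1} - a^{p-1}]\,\mathrm{d}u$ and using the pointwise bound $(a+u)^{p-1} - a^{p-1} \ge u^{p-1}$ for $u\ge 0$ (and the symmetric estimate when $b<0$). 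Plugging both into the chain-rule identity then gives $V^{\|\cdot\|_q^{pq}}_{\xx}(\xx+\yy) \ge \tfrac{p(q-1)}{20}\|\xx\|_q^{(p-1)q}s + |b|^p$.

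The step I expect to be the main obstacle is upgrading $|b|^p$ to the claimed $\bigl(\tfrac{q-1}{100}\bigr)^p s^p$, because $b = (b-T)+T$ can vanish when $T\approx -(b-T)$ so $|b|$ need not be of order $(q-1)s$. My plan is a case split on $T$. If $T \ge -\tfrac{c_1(q-1)}{2}s$, then $b \ge \tfrac{c_1(q-1)}{2}s$ and $|b|^p$ itself supplies the $s^p$ term. Otherwise $T<-\tfrac{c_1(q-1)}{2}s$; combining with the AKPS upper bound gives $b \le C_1 s + T$, so if $|T|\ge 2 C_1 s$ then $|b|\ge |T|/2 \gtrsim (q-1) s$ and again $|b|^p$ suffices. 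In the narrow remaining regime $|T|\in\bigl(\tfrac{c_1(q-1)}{2}s,\, 2C_1 s\bigr)$ I would split further on the size of $a$: if $a \gtrsim (q-1) s$, then the linear-in-$s$ term $pa^{p-1}(q-1)s$ by itself absorbs $(q-1)^p s^p/100^p$; if $a$ is smaller, then the H\"older estimate $|T| \le q\|\xx\|_q^{q-1}\|\yy\|_q$ forces $\|\yy\|_q \ge 2\|\xx\|_q$, so $\|\xx+\yy\|_q \ge \|\yy\|_q/2$ and hence $(a+b)^p \ge \|\yy\|_q^{pq}/2^{pq} \ge s^p/2^{pq}$ (using $\gamma_q(y;x)\le |y|^q$, hence $s\le\|\yy\|_q^q$). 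Subtracting the now-small $a^p$ and using $-pa^{p-1}T \ge 0$ in this sub-case, we still have $V^{\|\cdot\|_q^{pq}}_{\xx}(\xx+\yy) \ge (q-1)^p s^p/100^p$. Tracking the constants across the four sub-cases then yields the stated coefficients $\tfrac{p(q-1)}{20}$ and $\bigl(\tfrac{q-1}{100}\bigr)^p$; this case analysis, rather than the scalar inequalities themselves, is the delicate step, and parallels the corresponding argument in \cite{chen2023high} specialized to a single block of the $\ell_{q,p}$-matrix.
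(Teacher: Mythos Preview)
Your approach is correct and uses the same core ingredients as the paper: the chain-rule identity $V^{\|\cdot\|_q^{pq}}_{\xx}(\xx+\yy) = V^{H}_{a}(a+b) + pa^{p-1}(b-T)$ is exactly the paper's decomposition $\|\xx+\yy\|_q^{pq} = (N^q + L + B)^p$ followed by the $\ell_p$ iterative-refinement lower bound, and both proofs then feed in the AKPS $\ell_q$ bound $b-T \ge c_1(q-1)s$. The difference is in how the $s^p$ term is extracted. The paper does a single split on whether $G \le tN^q$ (with $t = (100/(q-1))^{q/(q-1)}$): when $G$ is small the linear term $pa^{p-1}(b-T)$ alone covers both summands, and when $G$ is large a dedicated Cauchy--Schwarz/H\"older estimate (their Claim~\ref{claim:qpGradSize}) bounds $|T| \le \tfrac{q-1}{100}G$ directly, so $|L+B| \gtrsim (q-1)G$ and the $(L+B)^p$ term finishes. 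Your four-way split (sign of $T$, then $|T|$ vs $s$, then $a$ vs $s$) reaches the same conclusion but is more circuitous; in particular your sub-case B2b recovers, via H\"older and $s \le \|\yy\|_q^q$, what the paper gets in one line from Claim~\ref{claim:qpGradSize}. One small bonus in your write-up: your elementary inequality $(a+b)^p - a^p - pa^{p-1}b \ge |b|^p$ is slightly sharper than the $\tfrac{1}{2^{p+1}}|b|^p$ the paper pulls from Lemma~\ref{lem:LpIR}, though either constant suffices here.
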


\begin{restatable}[Bregman Divergence of $\|\xx\|_q^{pq}$: Upper Bound]{lemma}{qpIRUB}
\label{lem:qpIRUB}
For $1 < q \le 2 \le p$ and $\xx, \yy \in \R^d$, 
\begin{align*}
V^{\|\cdot\|_q^{pq}}_{\xx}(\xx+\yy)
&= \norm{\xx + \yy}_q^{pq} - \norm{\xx}_q^{pq} - pq \norm{\xx}_q^{(p-1)q}\l\xx^{q-1}, \yy\r \\
&\le 200 \norm{\xx}_q^{(p-1)q}\sum_{j \in [d]} \gamma_q(p\yy_j; \xx_j) + 64^p\left(\sum_{j \in [d]} \gamma_q(p\yy_j; \xx_j)\right)^p
\end{align*}
\end{restatable}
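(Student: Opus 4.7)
\textbf{Proof proposal for \Cref{lem:qpIRUB}.}

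The plan is to decompose the divergence via the chain rule for Bregman divergences, reducing to a scalar $p$-power divergence plus a sum of scalar $q$-power divergences, and then use the per-coordinate estimates for $|\cdot|^q$ together with careful scalar inequalities (where the factor of $p$ inside $\gamma_q(p\yy_j;\xx_j)$ provides the needed slack).

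First I would set $\phi_1(\xx) = \|\xx\|_q^q$ and $\phi_2(t) = t^p$, so that $\phi(\xx) := \|\xx\|_q^{pq} = \phi_2(\phi_1(\xx))$. Writing $A = \phi_1(\xx)$, $B = \phi_1(\xx+\yy) - A$, $L = q\langle \xx^{q-1}, \yy\rangle = \langle \nabla \phi_1(\xx), \yy\rangle$, and $D = B - L = V^{\phi_1}_\xx(\xx+\yy) \ge 0$ (nonnegative since $\phi_1$ is convex), a direct computation (adding and subtracting $p A^{p-1} B$) gives the chain-rule decomposition
\begin{equation*}
V^{\phi}_\xx(\xx+\yy) = V^{t^p}_A(A+B) + p A^{p-1} D \,.
\end{equation*}
Thus the proof reduces to bounding each of the two pieces by the RHS of the lemma. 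Set $G := \sum_j \gamma_q(p\yy_j;\xx_j)$ for brevity.

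Next, for the scalar $p$-power piece, Taylor's theorem on $t \mapsto t^p$ on $[0,\infty)$ yields
$V^{t^p}_A(A+B) \le \tfrac{p(p-1)}{2}\max(A,A+B)^{p-2} B^2 \le 2^{p-2} p(p-1)\bigl(A^{p-2} B^2 + |B|^p\bigr)$, using $(A+|B|)^{p-2} \le 2^{p-2}(A^{p-2}+|B|^{p-2})$. Then I would bound $|B| \le |L| + D$, so $B^2 \le 2(L^2 + D^2)$ and $|B|^p \le 2^{p-1}(|L|^p + D^p)$. For the linear piece, the standard scalar upper bound for the Bregman divergence of $|\cdot|^q$ (for $1 < q \le 2$) gives $d_j := |\xx_j+\yy_j|^q - |\xx_j|^q - q\,\mathrm{sgn}(\xx_j)|\xx_j|^{q-1}\yy_j \le C_q\, \gamma_q(\yy_j;\xx_j) \le C_q\, \gamma_q(p\yy_j;\xx_j)$ for a constant $C_q = O(1)$, and hence $D \le C_q G$ and $p A^{p-1} D$ is already of the form $O(A^{p-1}) G$ as desired.

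The main obstacle, and the step requiring the most care, is controlling $|L|^2$ and $|L|^p$ in terms of $G$. I would split the indices into $S = \{j : |p\yy_j| \le |\xx_j|\}$ and its complement. On $S$, Cauchy--Schwarz with weights $|\xx_j|^{q-2}$ gives $\bigl(\sum_{j \in S}|\xx_j|^{q-1}|\yy_j|\bigr)^2 \le \bigl(\sum_{j}|\xx_j|^q\bigr)\bigl(\sum_{j \in S}|\xx_j|^{q-2}\yy_j^2\bigr) \le A \cdot \tfrac{2}{q p^2}\sum_{j \in S}\gamma_q(p\yy_j;\xx_j)$, using that on $S$ we have $\gamma_q(p\yy_j;\xx_j) = \tfrac{q}{2}|\xx_j|^{q-2}(p\yy_j)^2$. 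On the complement, $\gamma_q(p\yy_j;\xx_j) \ge p^q|\yy_j|^q - |\xx_j|^q$, and Young's inequality $|\xx_j|^{q-1}|\yy_j| \le \tfrac{q-1}{q}|\xx_j|^q + \tfrac{1}{q}|\yy_j|^q$ (combined with the $p^q$ slack) lets me bound $\sum_{j \notin S}|\xx_j|^{q-1}|\yy_j|$ by $O(A/p + G/p^q)$. Squaring and taking $p$-th powers yields $L^2 \le O(q\, A G /p^2) + O(A^2/p^2 + G^2/p^{2q})$ and a parallel bound for $|L|^p$, so that $A^{p-2}L^2 \le O(A^{p-1} G/p^2) + (\text{small})$ and $|L|^p \le O(G^p) + O(A^{p-1} G/p^{\ldots})$. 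Assembling these with $p A^{p-1} D \le p C_q A^{p-1} G$ and carefully tracking constants (using that $\max(A,A+B)^{p-2} B^2$ picks up factors that combine with $2^{p-2} p(p-1)$) finally yields the stated bound with coefficients $200$ on the first term and $64^p$ on the $p$-th power term. The bookkeeping of the absolute constants is tedious but standard once Steps 1--3 are in place.
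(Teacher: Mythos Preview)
Your chain-rule decomposition $V^{\phi}_{\xx}(\xx+\yy) = V^{t^p}_A(A+B) + pA^{p-1}D$ is correct and matches the paper's starting point. However, there is a genuine gap in how you bound $V^{t^p}_A(A+B)$. You write
\[
V^{t^p}_A(A+B) \le \tfrac{p(p-1)}{2}\max(A,A+B)^{p-2}B^2 \le 2^{p-2}p(p-1)\bigl(A^{p-2}B^2 + |B|^p\bigr),
\]
using $(A+|B|)^{p-2}\le 2^{p-2}(A^{p-2}+|B|^{p-2})$. This $2^{p-2}$ factor is fatal: even with the sharpest bounds $|L|,D = O(\sqrt{AG}/p + G/p)$, one gets $A^{p-2}B^2 = O(A^{p-1}G/p^2)$, so the contribution to the first term is $\Theta(2^{p})\,A^{p-1}G$, not $200\,A^{p-1}G$. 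Nothing later can absorb an exponential-in-$p$ coefficient into a fixed constant on the $A^{p-1}G$ term. Your final claim that ``assembling these \ldots yields the stated bound with coefficients $200$'' therefore does not go through.

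The paper fixes exactly this point by invoking the sharper scalar estimate (\Cref{lem:LpIR}), namely $V^{|\cdot|^p}_f(f+x)\le 2p^2|f|^{p-2}x^2 + p^p|x|^p$: the polynomial $2p^2$ coefficient, rather than $2^{p-2}p(p-1)$, cancels exactly against the $1/p^2$ coming from $|L+\tfrac{4}{p}G|^2$. The paper also first replaces the inner divergence by $\tfrac{4}{p}G$ (not $C_qG$ as you do --- you dropped the factor $p^{-q}$ from $\gamma_q(\yy_j;\xx_j)\le p^{-q}\gamma_q(p\yy_j;\xx_j)$, which separately makes $pA^{p-1}D$ grow with $p$). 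Finally, rather than bounding $L^2$ and $|L|^p$ directly, the paper bounds $|L|$ via \Cref{claim:qpGradSize} and does a clean two-case split on whether $G\ge N^q$, which avoids the proliferation of cross terms you would otherwise face.
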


We defer the proofs of \Cref{lem:qpIRUB} and \Cref{lem:qpIRLB} to \Cref{sec:proofQPIR}.
Next, we define the \emph{residual objective} $\cR(\bDelta; \XX)$ that approximates the objective difference $\cE_{q, p}^{\coefnorm}(\XX+\bDelta) - \cE_{q, p}^{\coefnorm}(\XX)$, along with the \emph{residual problem}. 

\begin{problem}[Residual Problem]
\label{def:resP}
Consider the setting of \Cref{prob:comp_ellqp_regr} when $1 < q \le 2 \le p.$
Given any feasible solution $\XX \in \xset$, the \emph{residual problem} w.r.t.\ the current solution $\XX$ asks to solve
\begin{align*}
    \min_{\bDelta: \XX + \bDelta \in \xset} \cR(\bDelta; \XX)
\end{align*}
where
\begin{align*}
\cR(\bDelta; \XX) \defeq \sum_{j \in [k]} \psi_j(\XX_j + \bDelta_j) - \psi_j(\XX_j) + \sum_{i \in [m], j \in [k]} c_{ij}(\bDelta_{ij}) + 64^p \coefnorm \sum_{i \in [m]} \left(\sum_{i \in [m]} \gamma_q(p\bDelta_{ij}; \XX_{ij})\right)^p
\end{align*}
and $c_{ij}(x) \defeq \gg_{ij}x + \hh_{ij}\gamma_q(px; \XX_{ij})$, $\gg_{ij} \defeq \coefnorm pq \norm{\XX_i}_q^{(p-1)q}\XX_{ij}^{q-1}$, and $\hh_{ij} \defeq 200 \coefnorm \norm{\XX_i}_q^{(p-1)q}.$
\end{problem}
Note that $\bDelta = 0$ is always a feasible residual solution and, therefore, the optimal residual value $\min \cR(\bDelta; \XX)$ is always non-positive.
Also, by convexity, if $\XX + \bDelta$ is feasible, so is $\XX + t \bDelta$ for any $t \in [0, 1].$
The following lemma states how well $\cR(\bDelta; \XX)$ approximates $\cE_{q, p}^{\coefnorm}(\XX+\bDelta) - \cE_{q, p}^{\coefnorm}(\XX).$

\begin{lemma}
\label{lem:resApprox}
For any feasible solution $\XX$ and feasible update $\bDelta$ (i.e., $\XX + \bDelta \in \xset$), we have
\begin{align*}
\cE_{q, p}^{\coefnorm}(\XX + \bDelta) - \cE_{q, p}^{\coefnorm}(\XX) &\le \cR(\bDelta; \ff)\text{, and} \\
\cE_{q, p}^{\coefnorm}(\XX + \bDelta) - \cE_{q, p}^{\coefnorm}(\XX) &\ge \apxfactor \cR\left(\frac{\bDelta}{\apxfactor}; \XX\right)\text{, where } \apxfactor = \left(\frac{6400 p^q}{q-1}\right)^{1/(q-1)} \\
\end{align*}
\end{lemma}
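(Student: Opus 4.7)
The plan is to prove both bounds by applying \Cref{lem:qpIRUB} and \Cref{lem:qpIRLB} row-by-row to the decomposition $\|\XX+\bDelta\|_{q,p}^{pq} = \sum_{i\in[m]} \|\XX_i + \bDelta_i\|_q^{pq}$, handling the composite term $\psi$ separately via its block-separability and convexity, and then translating the resulting quantities into the coefficients $\gg_{ij}, \hh_{ij}$ that define $\cR(\cdot;\XX)$.

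For the upper bound, $\psi(\XX+\bDelta)-\psi(\XX)=\sum_j(\psi_j(\XX_j+\bDelta_j)-\psi_j(\XX_j))$ identically. Applying \Cref{lem:qpIRUB} to each row $\XX_i$ and summing over $i$, then multiplying by $\coefnorm$, I will obtain
\[\coefnorm\bigl(\|\XX+\bDelta\|_{q,p}^{pq}-\|\XX\|_{q,p}^{pq}\bigr) \le \sum_{i,j}\bigl(\gg_{ij}\bDelta_{ij} + \hh_{ij}\,\gamma_q(p\bDelta_{ij};\XX_{ij})\bigr) + 64^p\coefnorm\sum_i\Bigl(\sum_j \gamma_q(p\bDelta_{ij};\XX_{ij})\Bigr)^p,\]
since the first-order term $pq\coefnorm\|\XX_i\|_q^{(p-1)q}\l\XX_i^{q-1},\bDelta_i\r$ is exactly $\sum_j\gg_{ij}\bDelta_{ij}$ and the prefactor $200\coefnorm\|\XX_i\|_q^{(p-1)q}$ is exactly $\hh_{ij}$. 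The bracketed summand equals $c_{ij}(\bDelta_{ij})$ by definition, so adding the $\psi$ difference gives $\cE_{q,p}^{\coefnorm}(\XX+\bDelta)-\cE_{q,p}^{\coefnorm}(\XX)\le \cR(\bDelta;\XX)$.

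For the lower bound, I will first use convexity of each $\psi_j$ with $\psi_j(\XX_j)$ as the anchor to obtain $\apxfactor(\psi_j(\XX_j+\bDelta_j/\apxfactor)-\psi_j(\XX_j))\le \psi_j(\XX_j+\bDelta_j)-\psi_j(\XX_j)$, which disposes of the composite part. Expanding $\apxfactor\, c_{ij}(\bDelta_{ij}/\apxfactor) = \gg_{ij}\bDelta_{ij} + \apxfactor\,\hh_{ij}\,\gamma_q(p\bDelta_{ij}/\apxfactor;\XX_{ij})$ and applying \Cref{lem:qpIRLB} row by row, the $\gg_{ij}\bDelta_{ij}$ contributions cancel and the residual claim reduces entrywise to
\[\apxfactor \hh_{ij}\,\gamma_q(p\bDelta_{ij}/\apxfactor;\XX_{ij}) \le \tfrac{p(q-1)}{20}\coefnorm\|\XX_i\|_q^{(p-1)q}\,\gamma_q(\bDelta_{ij};\XX_{ij})\]
together with the analogous inequality for the $p$-th power term.

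The only nontrivial ingredient is a one-sided $q$-homogeneity of $\gamma_q$: for $1<q\le 2$ and $\lambda\in[0,1]$, $\gamma_q(\lambda x;f)\le \lambda^q\gamma_q(x;f)$. I will prove this by a three-case analysis on whether $|x|$ and $|\lambda x|$ lie in $[0,f]$ or $(f,\infty)$, using respectively $\lambda^2\le\lambda^q$, the bound $\gamma_q(x;f)\ge (q/2)|x|^q$ when $|x|\ge f$ combined with $(\lambda|x|/f)^{2-q}\le 1$, and the identity $(1-q/2)(1-\lambda^q)\ge 0$. This is the step I expect to be most technical, since the weaker convexity bound $\gamma_q(\lambda x;f)\le \lambda\gamma_q(x;f)$ would not yield the stated $\apxfactor$. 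Instantiating the scaling inequality with $\lambda=p/\apxfactor\le 1$, the two entrywise requirements reduce to $\apxfactor^{q-1}\ge 4000 p^{q-1}/(q-1)$ and $\apxfactor^{pq-1}\ge (6400 p^q/(q-1))^p$; the choice $\apxfactor=(6400 p^q/(q-1))^{1/(q-1)}$ satisfies the first because $6400p\ge 4000$ and the second because $(pq-1)/(q-1) = p + (p-1)/(q-1)\ge p$ and $\apxfactor^{q-1}\ge 1$. Everything beyond the $\gamma_q$ scaling lemma is routine substitution and constant-chasing.
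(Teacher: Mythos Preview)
Your proposal is correct and follows essentially the same route as the paper: the upper bound is a direct row-wise application of \Cref{lem:qpIRUB}, and the lower bound combines \Cref{lem:qpIRLB} with convexity of $\psi$ and the $\gamma_q$ scaling inequality $\gamma_q(\lambda x;f)\le\lambda^q\gamma_q(x;f)$ for $\lambda\in[0,1]$. The only difference is that the paper invokes this scaling as \Cref{fact:gammaScale} (cited from \cite{akps19}) rather than proving it by a three-case analysis, so you can shorten your argument by citing that fact instead.
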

To prove \Cref{lem:resApprox}, we need the following fact regarding the range of $\gamma_q(tx; f) / \gamma_q(x; f)$ as a function of $t \ge 1$:
\begin{fact}[Lemma 3.3, ~\cite{akps19}]
\label{fact:gammaScale}
For any $x, f \in \R$ and $t \ge 1$, we have
$t^{q}\gamma_q(x; f) \le \gamma_q(tx; f) \le t^2 \gamma_q(x; f)$
\end{fact}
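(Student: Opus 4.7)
The plan is to prove both inequalities by case analysis on the locations of $|x|$ and $|tx|$ relative to the threshold $f$. Without loss of generality I take $x \ge 0$ (since $\gamma_q(\cdot; f)$ depends only on $|x|$), and since $t \ge 1$ forces $tx \ge x$, there are only three cases to handle: (a) $tx \le f$, (b) $x > f$, and (c) $x \le f < tx$. In case (a), both evaluations fall in the quadratic branch, so $\gamma_q(tx; f) = t^2 \gamma_q(x; f)$ exactly; combined with $t^q \le t^2$ for $t \ge 1$ and $q \le 2$, both desired bounds drop out.

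Case (b) reduces to pure algebra on the outer-branch formula $\gamma_q(y; f) = y^q - (1 - q/2) f^q$. Direct subtraction gives $\gamma_q(tx; f) - t^q \gamma_q(x; f) = (t^q - 1)(1 - q/2) f^q \ge 0$, which settles the lower bound. For the upper bound, expanding $t^2 \gamma_q(x; f) - \gamma_q(tx; f) = (t^2 - t^q) x^q - (t^2 - 1)(1 - q/2) f^q$ and using $x^q > f^q$ reduces the claim to $t^2 - t^q \ge (t^2 - 1)(1 - q/2)$, equivalently $h(t) := (q/2)t^2 - t^q + (1 - q/2) \ge 0$; this I check via $h(1) = 0$ and $h'(t) = q(t - t^{q-1}) \ge 0$ for $t \ge 1$, $q \le 2$.

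The main obstacle is case (c), where the quadratic and outer branches collide and there is no clean algebraic symmetry. My plan is to reparametrize via $y := tx$ and exploit that $x \le f$ together with $t = y/x$ gives $t \ge y/f$, hence $t^{q-2} \le (y/f)^{q-2}$ since $q - 2 \le 0$. For the lower bound, this collapses $t^q \gamma_q(x; f) = (q/2) f^{q-2} y^2 t^{q-2}$ to at most $(q/2) y^q$, so it suffices to verify $y^q - (1 - q/2) f^q \ge (q/2) y^q$, i.e., $(1 - q/2)(y^q - f^q) \ge 0$, which follows from $y \ge f$. For the upper bound in case (c), using $t^{q-2} \ge 1$ is not available, so I instead bound $\gamma_q(tx; f) \le (q/2) f^{q-2} y^2$ for $y \ge f$ (an inequality I prove by noting equality at $y = f$ and that the derivative $q y^{q-1}$ of $y^q - (1-q/2)f^q$ is dominated by $q f^{q-2} y$ when $y \ge f$, since $q - 2 \le 0$), and then bound $(q/2) f^{q-2} y^2 = t^2 \cdot (q/2) f^{q-2} x^2 = t^2 \gamma_q(x; f)$, closing the loop. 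Collecting the three cases gives Fact \ref{fact:gammaScale}.
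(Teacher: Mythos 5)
Your proof is correct. The paper does not prove this statement at all --- it imports it verbatim as Lemma 3.3 of the cited work of Adil, Kyng, Peng, and Sachdeva --- and your three-case analysis (both points in the quadratic branch, both in the outer branch, and the crossing case handled via the substitution $y = tx$ and the monotonicity of $s \mapsto s^{q-2}$) is the standard argument for this scaling fact; all steps check out, including the derivative comparison $q y^{q-1} \le q f^{q-2} y$ for $y \ge f$ that closes the upper bound in the crossing case.
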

We are now ready to prove \Cref{lem:resApprox}.
\begin{proof}[Proof of \Cref{lem:resApprox}]

The upper bound follows from the definition of $\cR(\bDelta; \XX)$ and \Cref{lem:qpIRUB}.
For the lower bound, we observe that
\begin{align*}
\cE_{q, p}^{\coefnorm}(\XX + \bDelta) - \cE_{q, p}^{\coefnorm}(\XX) 
= \psi(\XX + \bDelta) - \psi(\XX) + \coefnorm (\norm{\XX + \bDelta}_{q, pq}^{pq} - \norm{\XX}_{q, pq}^{pq})
\end{align*}
\Cref{lem:qpIRLB} gives that
\begin{align*}
\norm{\XX + \bDelta}_{q, pq}^{pq} - \norm{\XX}_{q, pq}^{pq}
&= \sum_{i \in [m]} \left( \norm{\XX_i + \bDelta_i}_q^{pq} - \norm{\XX_i}_q^{pq}\right) \\
&\ge \sum_{i \in [m], j \in [k]} \norm{\XX_i}_q^{(p-1)q}\left(pq\XX_{ij}^{q-1}\bDelta_{ij} + \frac{p(q-1)}{4000}\gamma_q(\bDelta_{ij}; \XX_{ij})\right) \\
&+ \left(\frac{q-1}{100}\right)^{p}\sum_{i \in [m]} \left(\sum_{i \in [m]} \gamma_q(\bDelta_{ij}; \XX_{ij})\right)^p
\end{align*}

On the other hand, we have
\begin{align*}
\apxfactor \cR\left(\frac{1}{\apxfactor}\bDelta; \XX\right)
&= \apxfactor \left(\psi\left(\XX + \frac{1}{\apxfactor}\bDelta\right) - \psi(\XX)\right) + \apxfactor\l\gg, \frac{1}{\apxfactor}\bDelta\r \\
&+ \sum_{i \in [m], j \in [k]} \apxfactor \hh_{ij} \gamma_q\left(\frac{p}{\apxfactor}\bDelta_{ij}; \XX_{ij}\right) + 64^p \sum_{i \in [m]} \left(\sum_{j \in [k]} \gamma_q\left(\frac{p}{\apxfactor}\bDelta_{ij}; \XX_{ij}\right)\right)^p
\end{align*}

Because $\apxfactor \ge 1$ and $\psi_{j}$'s are convex, we have $\frac{1}{\apxfactor}(\psi(\XX + \bDelta) - \psi(\XX)) \ge \psi(\XX + \frac{1}{\apxfactor}\bDelta) - \psi(\XX).$
Because $\apxfactor \ge p$, we have, by \Cref{fact:gammaScale}, that
\begin{align*}
\apxfactor \gamma_q\left(\frac{p}{\apxfactor}\bDelta_{ij}; \XX_{ij}\right) \le \apxfactor \frac{p^q}{\apxfactor^q} \gamma_q\left(\bDelta_{ij}; \XX_{ij}\right) = \apxfactor^{1-q}p^q\gamma_q\left(\bDelta_{ij}; \XX_{ij}\right) \le \frac{q-1}{6400} \gamma_q(\bDelta_{ij}; \XX_{ij})
\end{align*}
The lemma follows from direct calculations.
\end{proof}

The following lemma relates the multiplicative-additive error of a solution $\algoutput{\bDelta}$ to the residual problem (\Cref{def:resP}) with respect to a current $\XX \in \xset$ to the progress in function error from $\XX$ to $\XX + \algoutput{\bDelta}$ in the context of \Cref{prob:comp_ellqp_regr}. 

\begin{lemma}
\label{lem:convRate}
Let $\XX^{\star}$ be the optimal solution to \eqref{eq:comp_ellqp_regr} and $\XX$ be the current feasible solution.
For some $\alpha > 1$ and $\delta > 0$, let $\algoutput{\bDelta}$ be a feasible solution (i.e., $\XX + \algoutput{\bDelta} \in \xset$) such that $\cR(\algoutput{\bDelta}; \XX) \le \delta + \frac{1}{\alpha}\cR(\bDelta^{\star}; \XX)$, where $\bDelta^{\star} = \argmin_{\bDelta: \XX + \bDelta \in \xset} \cR(\bDelta; \XX)$. Then, we have
\begin{align*}
    \cE_{q, p}^{\coefnorm}(\XX + \algoutput{\bDelta}) - \cE_{q, p}^{\coefnorm}(\XX^{\star}) \le \left(1 - \frac{1}{\alpha \apxfactor}\right)\left(\cE_{q, p}^{\coefnorm}(\XX) - \cE_{q, p}^{\coefnorm}(\XX^{\star})\right) + \delta
\end{align*}
\end{lemma}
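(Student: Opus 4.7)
The plan is to chain together the two bounds in \Cref{lem:resApprox} with the approximation guarantee on $\algoutput{\bDelta}$, using the scaled direction $(\XX^{\star} - \XX)/\apxfactor$ as a ``certificate'' feasible direction for bounding $\cR(\bDelta^{\star}; \XX)$ from above. The key point is that $\apxfactor \geq 1$, so by convexity of $\xset$ the point $\XX + (\XX^{\star} - \XX)/\apxfactor = (1 - 1/\apxfactor)\XX + (1/\apxfactor)\XX^{\star}$ lies in $\xset$, making $(\XX^{\star} - \XX)/\apxfactor$ a valid competitor for $\bDelta^{\star}$.

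First, I would apply the upper bound half of \Cref{lem:resApprox} to the approximate update $\algoutput{\bDelta}$ to get
\[
\cE_{q, p}^{\coefnorm}(\XX + \algoutput{\bDelta}) - \cE_{q, p}^{\coefnorm}(\XX) \;\le\; \cR(\algoutput{\bDelta}; \XX) \;\le\; \delta + \tfrac{1}{\alpha} \cR(\bDelta^{\star}; \XX),
\]
where the second inequality is the approximation hypothesis on $\algoutput{\bDelta}$. Next, since $\bDelta^{\star}$ minimizes $\cR(\cdot; \XX)$ over feasible directions and $(\XX^{\star}-\XX)/\apxfactor$ is feasible by the convexity argument above,
\[
\cR(\bDelta^{\star}; \XX) \;\le\; \cR\bigl((\XX^{\star}-\XX)/\apxfactor;\, \XX\bigr).
\]
Applying the lower bound half of \Cref{lem:resApprox} with $\bDelta = \XX^{\star} - \XX$ gives $\cE_{q,p}^{\coefnorm}(\XX^{\star}) - \cE_{q,p}^{\coefnorm}(\XX) \ge \apxfactor \cR((\XX^{\star}-\XX)/\apxfactor;\XX)$, i.e.,
\[
\cR\bigl((\XX^{\star}-\XX)/\apxfactor;\, \XX\bigr) \;\le\; \tfrac{1}{\apxfactor}\bigl(\cE_{q, p}^{\coefnorm}(\XX^{\star}) - \cE_{q, p}^{\coefnorm}(\XX)\bigr).
\]
Note that $\cE_{q, p}^{\coefnorm}(\XX^{\star}) - \cE_{q, p}^{\coefnorm}(\XX) \le 0$ since $\XX^{\star}$ is optimal, so dividing by $\alpha \ge 1$ only weakens the (negative) upper bound.

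Combining the three displays yields
\[
\cE_{q, p}^{\coefnorm}(\XX + \algoutput{\bDelta}) - \cE_{q, p}^{\coefnorm}(\XX) \;\le\; \delta + \tfrac{1}{\alpha \apxfactor}\bigl(\cE_{q, p}^{\coefnorm}(\XX^{\star}) - \cE_{q, p}^{\coefnorm}(\XX)\bigr).
\]
Subtracting $\cE_{q, p}^{\coefnorm}(\XX^{\star})$ from both sides and rearranging terms gives
\[
\cE_{q, p}^{\coefnorm}(\XX + \algoutput{\bDelta}) - \cE_{q, p}^{\coefnorm}(\XX^{\star}) \;\le\; \Bigl(1 - \tfrac{1}{\alpha \apxfactor}\Bigr)\bigl(\cE_{q, p}^{\coefnorm}(\XX) - \cE_{q, p}^{\coefnorm}(\XX^{\star})\bigr) + \delta,
\]
as required. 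No step here looks genuinely hard; the only subtlety to double-check is the feasibility of $(\XX^{\star}-\XX)/\apxfactor$ as an update from $\XX$, which is where convexity of $\xset$ together with $\apxfactor \ge 1$ is used. All the analytic heavy lifting has already been packaged into \Cref{lem:resApprox}.
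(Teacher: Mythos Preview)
Your proof is correct and follows essentially the same approach as the paper's: both chain the upper bound from \Cref{lem:resApprox}, the approximation hypothesis on $\algoutput{\bDelta}$, the optimality of $\bDelta^{\star}$ against the competitor $(\XX^{\star}-\XX)/\apxfactor$, and then the lower bound from \Cref{lem:resApprox}. Your explicit check that $(\XX^{\star}-\XX)/\apxfactor$ is feasible (via convexity of $\xset$ and $\apxfactor \ge 1$) is a nice addition; the parenthetical about the sign of $\cE_{q,p}^{\coefnorm}(\XX^{\star}) - \cE_{q,p}^{\coefnorm}(\XX)$ is unnecessary, since multiplying an inequality by the positive constant $1/\alpha$ preserves it regardless of sign.
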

\begin{proof}
\Cref{lem:resApprox} and the approximation ratio of $\algoutput{\bDelta}$ yield
\begin{align*}
\cE_{q, p}^{\coefnorm}(\XX + \algoutput{\bDelta}) - \cE_{q, p}^{\coefnorm}(\XX)
\le \cR(\algoutput{\bDelta}; \XX) 
\le \frac{1}{\alpha}\cR\left(\frac{\XX^{\star} - \XX}{\apxfactor}; \XX\right) + \delta
\le \frac{\cE_{q, p}^{\coefnorm}(\XX^*) - \cE_{q, p}^{\coefnorm}(\XX)}{\alpha\apxfactor} + \delta
\end{align*}
Therefore, we have
\begin{align*}
\cE_{q, p}^{\coefnorm}(\XX + \algoutput{\bDelta}) - \cE_{q, p}^{\coefnorm}(\XX^{\star})
&= \cE_{q, p}^{\coefnorm}(\XX) - \cE_{q, p}^{\coefnorm}(\XX^{\star}) + \cE_{q, p}^{\coefnorm}(\XX + \algoutput{\bDelta}) - \cE_{q, p}^{\coefnorm}(\XX) \\
&\le \cE_{q, p}^{\coefnorm}(\XX) - \cE_{q, p}^{\coefnorm}(\XX^{\star}) + \frac{\cE_{q, p}^{\coefnorm}(\XX^*) - \cE_{q, p}^{\coefnorm}(\XX)}{\alpha\apxfactor} + \delta \\
&= \left(1 - \frac{1}{\alpha \apxfactor}\right)\left(\cE_{q, p}^{\coefnorm}(\XX) - \cE_{q, p}^{\coefnorm}(\XX^{\star})\right) + \delta
\end{align*}
\end{proof}

As a corollary of \Cref{lem:convRate}, we obtain the following result regarding the number of iterations consisting of approximately solving residual problems (\Cref{def:resP}) that ensure a solution to \Cref{prob:comp_ellqp_regr} of suitable additive error. 
\begin{corollary}
\label{coro:convRate}
Suppose we have access to an algorithm that solves the residual problem (\Cref{def:resP}) up to $(\alpha, \delta)$ multiplicative-additive approximation in $\cT_{\alpha}(m, k)$ time. There is an algorithm that solves \Cref{prob:comp_ellqp_regr} to additive error $3\delta$ in time 
\begin{align*}
    \cT_{\alpha}(m, k) \cdot \alpha \cdot \left(\frac{p}{q - 1}\right)^{O\left(\frac{1}{q-1}\right)} \cdot \log\left(\frac{k|\cE_{q, p}^{\coefnorm}(\XX^{\star})|}{\delta}\right) \,.
\end{align*}

\end{corollary}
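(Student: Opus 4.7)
The plan is to apply \Cref{lem:convRate} iteratively. Starting from any feasible initial point $\XX^{(0)} \in \xset$ (obtained, e.g., by calling each oracle $\oracle_j$ on the zero cost function to get $\XX^{(0)}_j \in \argmin_{\xx \in \xset_j} \psi_j(\xx)$, which has polynomially bounded initial suboptimality under the hypotheses $\coefnorm = \exp(\O(1))$ and poly-bounded inputs), for $t = 0, 1, \ldots, T-1$ I invoke the assumed $(\alpha, \delta')$-approximate residual solver at the current iterate $\XX^{(t)}$ to produce a feasible update $\algoutput{\bDelta}^{(t)}$ with $\cR(\algoutput{\bDelta}^{(t)}; \XX^{(t)}) \le \delta' + \frac{1}{\alpha} \min_{\bDelta: \XX^{(t)} + \bDelta \in \xset} \cR(\bDelta; \XX^{(t)})$, and set $\XX^{(t+1)} \gets \XX^{(t)} + \algoutput{\bDelta}^{(t)}$. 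The algorithm returns $\XX^{(T)}$.

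Let $E^{(t)} \defeq \cE_{q,p}^{\coefnorm}(\XX^{(t)}) - \cE_{q,p}^{\coefnorm}(\XX^{\star})$ and $\rho \defeq 1/(\alpha \apxfactor)$. The one-step guarantee of \Cref{lem:convRate} gives the contractive recurrence $E^{(t+1)} \le (1 - \rho) E^{(t)} + \delta'$, and unrolling yields
\[
E^{(T)} \;\le\; (1 - \rho)^T E^{(0)} + \delta'/\rho.
\]
To achieve final error at most $3\delta$, I set $\delta' = \rho \delta = \delta/(\alpha \apxfactor)$ so that the additive accumulation $\delta'/\rho$ is bounded by $\delta$, and choose $T = \lceil \rho^{-1} \log(2 E^{(0)}/\delta) \rceil$ so that the geometric term $(1-\rho)^T E^{(0)} \le 2\delta$. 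Summing yields $E^{(T)} \le 3\delta$. Substituting $\apxfactor = (6400 p^q/(q-1))^{1/(q-1)} = (p/(q-1))^{O(1/(q-1))}$, this is $T = \alpha \cdot (p/(q-1))^{O(1/(q-1))} \cdot \log(k |\cE_{q,p}^{\coefnorm}(\XX^{\star})|/\delta)$ iterations, each costing a single call to the assumed residual solver running in $\cT_{\alpha}(m,k)$ time, so multiplying gives the claimed total runtime.

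The main subtlety is that the residual solver is invoked with target additive accuracy $\delta' = \delta/(\alpha\apxfactor)$ rather than $\delta$; handling this cleanly requires observing that the residual solver's cost depends only logarithmically on its own additive target, so substituting $\delta'$ in place of $\delta$ adds at most an $O(\log(\alpha \apxfactor)) = O(\log(\alpha) + \log(p) + (\log p)/(q-1))$ overhead, which is already absorbed by the $(p/(q-1))^{O(1/(q-1))}$ factor and by the $\log(k|\cE_{q,p}^{\coefnorm}(\XX^{\star})|/\delta)$ term. A secondary point is bounding $\log E^{(0)} = O(\log(k|\cE_{q,p}^{\coefnorm}(\XX^\star)|) + \log m)$ via the poly-regular setup; this $O(\log m)$ slack is likewise absorbed into the stated logarithm, completing the analysis.
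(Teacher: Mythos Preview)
Your approach matches the paper's: iterate \Cref{lem:convRate} and unroll the linear recurrence $E^{(t+1)}\le(1-\rho)E^{(t)}+\delta'$ with $\rho=1/(\alpha\apxfactor)$. Two differences are worth flagging.

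\textbf{Initialization.} The paper sets $\XX^{(0)}=\argmin_{\XX\in\xset}\psi(\XX)+\sum_{i,j}|\XX_{ij}|^{pq}$, a separable surrogate for $\cE_{q,p}^{\coefnorm}$, and uses the sandwich $\sum_{i,j}|\XX_{ij}|^{pq}\le\|\XX\|_{q,p}^{pq}\le k^{p}\sum_{i,j}|\XX_{ij}|^{pq}$ to obtain $E^{(0)}\le k^{p}|\cE_{q,p}^{\coefnorm}(\XX^{\star})|$. This is precisely what makes the stated $\log(k|\cE_{q,p}^{\coefnorm}(\XX^{\star})|/\delta)$ factor come out, with the extra $p\log k$ absorbed into the $(p/(q-1))^{O(1/(q-1))}$ prefactor. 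Your initialization minimizes only $\psi$ and then bounds $E^{(0)}$ by appealing to poly-boundedness of the ambient problem; those assumptions are not part of this corollary's hypothesis, so the paper's construction is the cleaner route here.

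\textbf{Additive accumulation.} You correctly compute the accumulated additive term as $\delta'/\rho$ and respond by shrinking the solver's target to $\delta'=\rho\delta$. The paper instead calls the solver with the given error $\delta$ and writes the accumulation as $\delta/(1-\tfrac{1}{\alpha\apxfactor})\le 2\delta$; but the geometric sum in fact gives $\delta/\rho=\delta\alpha\apxfactor$, so the paper's displayed bound appears to be in error. Your fix is the right idea, yet the corollary's hypothesis only supplies a black-box $(\alpha,\delta)$-solver running in $\cT_\alpha(m,k)$ time with no stated dependence on the additive parameter, so strictly speaking you cannot dial $\delta'$ down without changing the assumption. In the paper's actual application (\Cref{lem:approxResSolver}) the additive error is $k\delSCO/2$ with $\delSCO$ a free parameter of the SCOs, so your reparameterization is available there and everything downstream is unaffected; but as a self-contained proof of the corollary exactly as stated, neither argument is fully clean.
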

\begin{proof}
We need an initial solution so that we can start the iterative refinement process.
The solution is initialized to minimize the following separable objective derived from \eqref{eq:comp_ellqp_regr}:
\begin{align*}
    \XX^{(0)} \defeq \argmin_{\XX \in \xset}  \sum_{j \in [k]} \psi_j(\XX_j) + \sum_{i \in [m], j \in [k]} |\XX_{ij}|^{pq}
\end{align*}
This can be computed by querying each oracle $\oracle_i$ once with error parameter $\delta$ set to by a constant.
Note that, for any $\XX$, we have
\begin{align*}
\sum_{i \in [m], j \in [k]} |\XX_{ij}|^{pq} \le \sum_{i \in [m]} \left(\sum_{j \in [k]} |\XX_{ij}|^{q}\right)^p \le k^p \sum_{i \in [m], j \in [k]} | \XX_{ij}|^{pq}
\end{align*}
This implies that $\XX^{(0)}$ is a $k^p$-approximate solution to \eqref{eq:comp_ellqp_regr}, i.e.,
\begin{align*}
    \cE_{q, p}^{\coefnorm}(\XX^{(0)}) - \cE_{q, p}^{\coefnorm}(\XX^{\star}) \le k^{p} |\cE_{q, p}^{\coefnorm}(\XX^{\star})|
\end{align*}

Starting at $\XX^{(0)}$, we run the iterative refinement process for $T$ iterations, each of which consisting of a batch of cyclic updates.
In the $t$-th iteration, we use the given approximate residual problem solver to compute $\bDelta^{(t)}$ so that $\XX^{(t-1)} + \bDelta^{(t)} \in \xset$ and
\begin{align*}
    \cR\left(\bDelta^{(t)}; \XX^{(t-1)}\right) \le \frac{1}{\alpha} \min_{\bDelta: \XX^{(t-1)} + \bDelta \in \xset} \cR\left(\bDelta; \XX^{(t-1)}\right) + \delta
\end{align*}
Then, we set $\XX^{(t)} \gets \XX^{(t-1)} + \bDelta^{(t)}.$

\Cref{lem:convRate} gives that
\begin{align*}
    \cE_{q, p}^{\coefnorm}(\XX^{(T)}) - \cE_{q, p}^{\coefnorm}(\XX^{\star}) \le \left(1 - \frac{1}{\alpha \apxfactor}\right)^T\left(\cE_{q, p}^{\coefnorm}(\XX^{(0)}) - \cE_{q, p}^{\coefnorm}(\XX^{\star})\right) + \frac{\delta}{1 - \frac{1}{\alpha \apxfactor}} \le \left(1 - \frac{1}{\alpha \apxfactor}\right)^T k^p |\cE_{q, p}^{\coefnorm}(\XX^{\star})| + 2\delta
\end{align*}
The lemma follows by setting $T = O(\alpha \apxfactor \log \frac{k^p |\cE_{q, p}^{\coefnorm}(\XX^{\star})|}{\delta}).$
\end{proof}

The next section presents a sequential block minimization procedure that solves the residual problems (\Cref{def:resP}) to $(m^{o(1)}, k \delSCO / 2)$ multiplicative-additive approximation by querying each SCO $\oracle_j$ once.

\subsection{Residual Problem via Sequential Block Minimization}
\label{sec:resP}
This subsection presents an algorithm that solves the residual problem (\Cref{def:resP}) approximately by calling each SCO $\oracle_j$ once. 
This is formalized as the follow lemma:
\begin{lemma}
\label{lem:approxResSolver}
Consider the setting of \Cref{def:resP}.
Given a $(\Time_j, \delSCO)$ SCO $\oracle_j$ on $\psi_j$ over $\xset_j$ for every $j \in [k]$, there is an algorithm that, in $O(\sum_{j \in [k]} (\Time_j+m) )$ time, outputs $\algoutput{\bDelta}$ with $\XX + \algoutput{\bDelta} \in \xset$ such that
\begin{align*}
\cR\left(\algoutput{\bDelta}; \XX\right) \le \frac{k \delSCO}{2} + \frac{1}{200p}  \min_{\bDelta: \XX + \bDelta \in \xset}\cR(\bDelta; \XX) \,
\end{align*}
\end{lemma}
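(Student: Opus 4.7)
The plan is to design a sequential block minimization procedure that updates one commodity at a time, implement each block update with a single SCO call, and analyze convergence by comparing the algorithm's output against a suitably scaled-down copy of the optimum.

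\textbf{Algorithm.} I would initialize $\algoutput{\bDelta} \gets \veczero_{m \times k}$ and maintain running row sums $S_i^{(0)} \defeq 0$. For $j = 1, 2, \ldots, k$, invoke $\oracle_j$ to compute $\algoutput{\bDelta}_j$ that $\delSCO/2$-approximately solves
\[
\min_{\bDelta_j :\ \XX_j + \bDelta_j \in \xset_j}\ \cR\bigl([\algoutput{\bDelta}_1;\ \ldots;\ \algoutput{\bDelta}_{j-1};\ \bDelta_j;\ \veczero_m;\ \ldots;\ \veczero_m];\ \XX\bigr),
\]
then update $S_i^{(j)} \gets S_i^{(j-1)} + \gamma_q(p\, \algoutput{\bDelta}_{ij};\, \XX_{ij})$ for each $i$. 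Since $\xset = \otimes_j \xset_j$, per-block feasibility from each SCO forces $\XX + \algoutput{\bDelta} \in \xset$.

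\textbf{Reducing each step to one SCO call.} Viewed as a function of $\bDelta_j$ alone with $\algoutput{\bDelta}_{<j}$ frozen, the block-$j$ objective decomposes as $\psi_j(\XX_j + \bDelta_j) - \psi_j(\XX_j)$ plus the coordinatewise-separable sum $\sum_{i \in [m]} \tilde{c}_{ij}(\bDelta_{ij})$, where $\tilde{c}_{ij}(x) \defeq c_{ij}(x) + 64^p \coefnorm \bigl[(S_i^{(j-1)} + \gamma_q(px;\XX_{ij}))^p - (S_i^{(j-1)})^p\bigr]$. Each $\tilde{c}_{ij}$ is a convex univariate function built from an affine term, the $\gamma_q$ function, and a shifted $p$-th power composition of $\gamma_q$; I plan to verify in \Cref{sec:proving_computability} that these are $m$-decomposable per \Cref{def:decomposable} by combining the standard $(m, \polylog m)$-computable barrier for $y^p$ with the barrier for $\gamma_q$. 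The SCO call with shifted per-coordinate costs $c_i(\hat{x}_i) \defeq \tilde{c}_{ij}(\hat{x}_i - \XX_{ij})$ then returns $\hat{\xx}_j \in \xset_j$ achieving additive error $\delSCO/2$ on this subproblem in time $\Time_j + O(m)$, so the total runtime is $O\bigl(\sum_j (\Time_j + m)\bigr)$.

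\textbf{Convergence analysis.} Let $\bDelta^{\star}$ minimize $\cR(\cdot;\XX)$, set $\alpha = 200p$, and $\bDelta' \defeq \bDelta^{\star}/\alpha$. Convexity of $\cR(\cdot;\XX)$ combined with $\cR(\veczero;\XX) = 0$ gives $\XX + \bDelta' \in \xset$ and $\cR(\bDelta';\XX) \le \cR(\bDelta^{\star};\XX)/\alpha$, so it suffices to prove $\cR(\algoutput{\bDelta};\XX) \le \cR(\bDelta';\XX) + k\delSCO/2$. Writing $\Phi^{(j)} \defeq \cR([\algoutput{\bDelta}_{\le j};\ \veczero;\ \ldots];\XX)$ and using approximate optimality of $\algoutput{\bDelta}_j$ against the competitor $\bDelta'_j$, the per-step differences telescope to
\[
\Phi^{(k)} \le L(\bDelta') + \sum_{j \in [k]} \Delta M_j + \tfrac{k\delSCO}{2},
\qquad
\Delta M_j \defeq 64^p \coefnorm \sum_{i \in [m]} \bigl[(S_i^{(j-1)} + B_{ij})^p - (S_i^{(j-1)})^p\bigr],
\]
with $B_{ij} \defeq \gamma_q(p\bDelta'_{ij};\XX_{ij})$ and $L$ collecting the separable $\psi_j$-plus-$c_{ij}$ parts. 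Applying the convexity inequality $(a+b)^p - a^p \le p(a+b)^{p-1} b$ and bounding $S_i^{(j-1)} + B_{ij} \le S_i^{(k)} + T_i'$ (with $T_i' \defeq \sum_j B_{ij}$) yields $\sum_j \Delta M_j \le p \cdot 64^p \coefnorm \sum_i (S_i^{(k)} + T_i')^{p-1} T_i'$. A row-wise case split at threshold $T_i' \le S_i^{(k)}/(10p)$---using $(1 + 1/(10p))^{p-1} \le e^{1/10}$ in the small case and $S_i^{(k)} + T_i' \le 11p\,T_i'$ in the large case---gives $\sum_j \Delta M_j \le \tfrac{1}{5} M(\algoutput{\bDelta}) + (11p)^p M(\bDelta')$. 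Next, \Cref{fact:gammaScale} with $t = \alpha \ge 1$ yields $B_{ij} \le \alpha^{-q}\gamma_q(p\bDelta^{\star}_{ij};\XX_{ij})$, hence $M(\bDelta') \le \alpha^{-pq} M(\bDelta^{\star})$; for $q > 1$ and $\alpha = 200p$ this makes $(11p)^p \alpha^{-pq} \le \alpha^{-1}$. I would then substitute and rearrange $L(\algoutput{\bDelta}) + \tfrac{4}{5} M(\algoutput{\bDelta}) \le L(\bDelta') + \alpha^{-1} M(\bDelta^{\star}) + \tfrac{k\delSCO}{2}$, combine with $L(\bDelta') + M(\bDelta') \le \cR(\bDelta^{\star};\XX)/\alpha$, and conclude $\cR(\algoutput{\bDelta};\XX) \le \cR(\bDelta^{\star};\XX)/(200p) + \tfrac{k\delSCO}{2}$.

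\textbf{Main obstacle.} The central difficulty is the non-separability of the row-wise $p$-th power term $M$ across the $k$ commodities: the marginal cost $\Delta M_j$ at step $j$ depends on the entire accumulated history $S_i^{(j-1)}$, and a naive comparison with $\bDelta^{\star}$ picks up a $(a+b)^{p-1}$ amplification on the order of $2^p$ that would destroy the bound. The key insight is to compare against the scaled optimum $\bDelta^{\star}/(200p)$: the $q$-homogeneity-like contraction $\gamma_q(x/\alpha; f) \le \alpha^{-q} \gamma_q(x; f)$ from \Cref{fact:gammaScale} shrinks $M(\bDelta')$ by $\alpha^{-pq}$, which exactly absorbs the $(11p)^p$ factor from the row-wise case analysis whenever $\alpha = \Theta(p)$ and $q > 1$---the delicate part of the write-up will be choreographing this cancellation together with convexity of $\cR$ so that the $\tfrac{1}{5} M(\algoutput{\bDelta})$ residual cleanly absorbs into the $M(\algoutput{\bDelta})$ on the left-hand side of the rearranged inequality.
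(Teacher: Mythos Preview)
Your overall strategy---sequential block minimization, compare each block's choice against a scaled copy $\bDelta^\star/\alpha$ of the optimum, and exploit \Cref{fact:gammaScale} to shrink the nonseparable $p$-th power term---is exactly the paper's approach. The specific inequalities you use (the mean-value bound $(a+b)^p-a^p\le p(a+b)^{p-1}b$ and a threshold case split) differ cosmetically from the paper's use of $(1+x)^p\le 1+3px+3(px)^p$ plus Young's inequality, but both routes land on an inequality of the form
\[
L(\algoutput{\bDelta}) + c\,M(\algoutput{\bDelta}) \;\le\; \tfrac{1}{\Theta(p)}\,\cR(\bDelta^\star;\XX) + k\delSCO
\]
for some constant $c<1$, where $L$ collects the separable $\psi_j+c_{ij}$ part and $M$ the row-wise $p$-th powers.

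There is, however, a genuine gap in your final ``absorption'' step. You have $L(\algoutput{\bDelta})+\tfrac{4}{5}M(\algoutput{\bDelta})\le\text{RHS}$ but need $\cR(\algoutput{\bDelta})=L(\algoutput{\bDelta})+M(\algoutput{\bDelta})$. These are not interchangeable by rearrangement: $L(\algoutput{\bDelta})$ contains the linear gradient term $\gg_{ij}\bDelta_{ij}$ and can be arbitrarily negative, so there is no a~priori bound on $M(\algoutput{\bDelta})$ in terms of $\cR(\bDelta^\star)$. Concretely, nothing you have written rules out $L(\algoutput{\bDelta})=-5C/\alpha$, $M(\algoutput{\bDelta})=5C/\alpha$, which satisfies your inequality yet gives $\cR(\algoutput{\bDelta})=0$, violating the claimed bound when $\cR(\bDelta^\star)=-C$ is very negative. (A related minor issue: the SCO has accuracy $\delSCO$, not $\delSCO/2$, so your telescoping gives $k\delSCO$.)

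The fix---and this is what the paper's \Cref{algo:approx_ellp_cvx_regr} does---is to \emph{return $\tfrac{1}{2}\algoutput{\bDelta}$ rather than $\algoutput{\bDelta}$}. Because each $g_j$ is convex with $g_j(0)=0$ one has $L(\tfrac12\algoutput{\bDelta})\le\tfrac12 L(\algoutput{\bDelta})$, and since each $\phi_{ij}$ is convex nonnegative with $\phi_{ij}(0)=0$ one has $M(\tfrac12\algoutput{\bDelta})\le 2^{-p}M(\algoutput{\bDelta})\le\tfrac14 M(\algoutput{\bDelta})$. Hence
\[
\cR\bigl(\tfrac{1}{2}\algoutput{\bDelta};\XX\bigr)\;\le\;\tfrac{1}{2}L(\algoutput{\bDelta})+\tfrac{1}{4}M(\algoutput{\bDelta})\;\le\;\tfrac{1}{2}\Bigl[L(\algoutput{\bDelta})+\tfrac{1}{2}M(\algoutput{\bDelta})\Bigr]\;\le\;\tfrac{1}{2}\cdot\text{RHS},
\]
which simultaneously recovers the full $\cR$ on the left and halves the accumulated SCO error to $k\delSCO/2$, matching the statement exactly.
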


We design an algorithm that can, in fact, approximately optimize a more general class of objectives categorized as follows:

\begin{problem}[Centered $\ell_p$-Convex Regression]
\label{prob:ellp_cvx_regr}
Consider the setting of \Cref{def:resP}, where we are given functions $\phi_{ij}$, with $c_{ij}(0) = \phi_{ij}(0) = 0, \forall i \in [m], j \in [k]$ and $\phi_{ij}(x) \ge 0, \forall i \in [m], j \in [k]$. 
Given some $\XX \in \xset$, the \emph{centered $\ell_{p}$-convex regression problem} asks to solve
\begin{align}
\label{eq:ellp_cvx_regr}
\min_{\bDelta: \XX + \bDelta \in \xset} \cF(\bDelta) \defeq \psi(\XX+\bDelta) - \psi(\XX) + \sum_{i \in [m], j \in [k]} c_{ij}(\bDelta_{ij}) + \sum_{i \in [m]} \left(\sum_{j \in [k]} \phi_{ij}(\bDelta_{ij})\right)^p
\end{align}
\end{problem}

Without the non-separable part involving $\phi_{ij}(\cdot)$'s, one can solve each block independently and compute a $\delSCO$-optimal solution using the given SCOs $\{\oracle_j\}_{j \in [k]}$.
To deal with the non-separable part, we utilize a sequential block minimization procedure and update the blocks one at a time. 
\begin{lemma}[Approximate Centered $\ell_p$-Convex Regression Solver]
\label{lem:approx_ellp_cvx_regr}
Consider the setting of \Cref{prob:ellp_cvx_regr}. Suppose that for any $i \in [m], j \in [k]$, $c_{ij}(x), c_1 \phi_{ij}(x), c_1 (\phi_{ij}(x) + c_2)^p$ are $m$-decomposable for any $c_1 \ge 0, |c_1, c_2| \le \exp(\O(1))$.
Given a $(\Time_j, \delSCO)$ SCO $\oracle_j$ on $\psi_j$ over $\xset_j$ for every $j \in [k]$, 
\Cref{algo:approx_ellp_cvx_regr} outputs, in $O(\sum_{j \in [k]} (\Time_j + m) )$ time, a feasible solution $\algoutput{\bDelta}$ (i.e., $\XX + \algoutput{\bDelta} \in \xset$) such that
\begin{align*}
    \cF(\algoutput{\bDelta}) \le \frac{k \delSCO}{2} + \frac{1}{200p} \cdot \min_{\bDelta: \XX + \bDelta \in \xset} \cF(\bDelta)\,.
\end{align*}
\end{lemma}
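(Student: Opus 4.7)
\emph{Approach.} I would analyze the natural sequential block-minimization algorithm: initialize $\algoutput{\bDelta} \gets \veczero$, and for $j = 1, \ldots, k$, call $\oracle_j$ on the cost functions $\tilde c_{ij}(x) \defeq c_{ij}(x) + (s_i^{(j)} + \phi_{ij}(x))^p - (s_i^{(j)})^p$ with running ``background'' $s_i^{(j)} \defeq \sum_{j' < j}\phi_{ij'}(\algoutput{\bDelta}_{ij'})$, thereby obtaining $\algoutput{\bDelta}_j$ that approximately solves
\[
\min_{\bDelta_j:\,\XX_j+\bDelta_j\in\xset_j}\bigl(\psi_j(\XX_j+\bDelta_j)-\psi_j(\XX_j)\bigr) + \sum_{i\in[m]}\tilde c_{ij}(\bDelta_{ij}),
\]
then refresh the $s_i^{(j)}$'s. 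The first verification is that each $\tilde c_{ij}$ is $m$-decomposable so that $\oracle_j$ accepts it; this follows from the lemma's hypothesis applied to $c_{ij}$ and to $c_1(\phi_{ij}(\cdot)+c_2)^p$ with $c_1 = 1$, $c_2 = s_i^{(j)}$, after an inductive check that $s_i^{(j)} \in [0,\exp(\O(1))]$, which I expect to extract from compactness of $\xset$ and the one-sided progress of the sweep. Per-block work is one SCO call plus $O(m)$ for forming costs and updating $s_i$, giving the stated $O\bigl(\sum_j(\Time_j+m)\bigr)$ runtime.

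\emph{Benchmarking.} Let $\bDelta^\star \in \argmin \cF$ and set $\tilde\bDelta \defeq \bDelta^\star/(200p)$. Since each $\psi_j$, $c_{ij}$, $\phi_{ij}$ is convex (as $m$-decomposable) with $\phi_{ij} \ge 0$, and $s\mapsto s^p$ is convex and non-decreasing on $\R_{\ge 0}$, $\cF$ is convex; and $\cF(\veczero) = 0$ by $c_{ij}(0) = \phi_{ij}(0) = 0$. Convexity yields
\[
\cF(\tilde\bDelta) \le \tfrac{1}{200p}\cF(\bDelta^\star) + \bigl(1-\tfrac{1}{200p}\bigr)\cF(\veczero) = \tfrac{1}{200p}\cF(\bDelta^\star),
\]
so it suffices to prove $\cF(\algoutput{\bDelta}) \le \cF(\tilde\bDelta) + k\delSCO/2$. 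Writing $\algoutput{\bDelta}^{(j)}$ for the state after block $j$, and noting that $\tilde\bDelta_j$ is feasible for $\xset_j$ (because $\XX_j + \tilde\bDelta_j$ is the convex combination $\frac{1}{200p}(\XX_j+\bDelta^\star_j) + (1-\frac{1}{200p})\XX_j$ of two points in $\xset_j$), the $\delSCO$-guarantee of the SCO at step $j$ tested against $\tilde\bDelta_j$ gives
\[
\cF(\algoutput{\bDelta}^{(j)}) - \cF(\algoutput{\bDelta}^{(j-1)}) \le \cF\bigl(\algoutput{\bDelta}^{(j-1)}\text{ with block } j \text{ replaced by }\tilde\bDelta_j\bigr) - \cF(\algoutput{\bDelta}^{(j-1)}) + \delSCO.
\]
Telescoping yields $\cF(\algoutput{\bDelta}) \le \cF_{\mathrm{sep}}(\tilde\bDelta) + S + k\delSCO$, where the separable $(\psi_j$ and $c_{ij})$ pieces of the marginal increments telescope exactly to the separable part of $\cF(\tilde\bDelta)$, and the non-separable residual is $S \defeq \sum_{i,j}\bigl[(s_i^{(j)} + \phi_{ij}(\tilde\bDelta_{ij}))^p - (s_i^{(j)})^p\bigr]$.

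\emph{Main obstacle.} The crux is to bound $S$ by $N_{\tilde{\;}} \defeq \sum_i\bigl(\sum_j\phi_{ij}(\tilde\bDelta_{ij})\bigr)^p$ up to slack absorbable into $k\delSCO/2$ and a controllable fraction of $\cF(\algoutput{\bDelta})$. My plan is: apply the pointwise inequality $(a+b)^p - a^p \le pb(a+b)^{p-1}$ to each summand, bound $s_i^{(j)} + \phi_{ij}(\tilde\bDelta_{ij}) \le M_i \defeq \sum_{j'}\phi_{ij'}(\algoutput{\bDelta}_{ij'}) + \sum_{j'}\phi_{ij'}(\tilde\bDelta_{ij'})$, and then use Minkowski's inequality $(\sum_i M_i^p)^{1/p} \le N_{\mathrm{alg}}^{1/p} + N_{\tilde{\;}}^{1/p}$ (with $N_{\mathrm{alg}} \defeq \sum_i(\sum_j\phi_{ij}(\algoutput{\bDelta}_{ij}))^p$) combined with H\"older at exponents $p/(p-1)$ and $p$ to reduce $S$ to an expression in $N_{\mathrm{alg}}$ and $N_{\tilde{\;}}$. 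Convexity of $\phi_{ij}$ with $\phi_{ij}(0) = 0$ gives $\phi_{ij}(\tilde\bDelta_{ij}) \le \phi_{ij}(\bDelta^\star_{ij})/(200p)$, so $N_{\tilde{\;}} \le N^\star/(200p)^p$ where $N^\star$ is the non-separable part of $\cF(\bDelta^\star)$; this makes $N_{\tilde{\;}}^{1/p}$ very small relative to $N_{\mathrm{alg}}^{1/p}$ and yields a self-bounding inequality of the form $\cF(\algoutput{\bDelta}) \le \cF(\tilde\bDelta) + \eta\,\cF(\algoutput{\bDelta}) + k\delSCO$ with $\eta = o_p(1)$, which rearranges to the target. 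Tracking the constants so that the prefactor is exactly $\tfrac{1}{200p}$ rather than, say, $\tfrac{1}{p}$ is the step I anticipate being most delicate; the ``$200$'' should emerge from the power-mean slack in Minkowski together with the rearrangement.
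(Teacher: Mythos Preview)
Your sequential block-minimization algorithm and the telescoping against a scaled benchmark $\tilde\bDelta$ match the paper's approach closely, and your bound $S \le p\,N_{\tilde{\;}}^{1/p}(N_{\mathrm{alg}}^{1/p}+N_{\tilde{\;}}^{1/p})^{p-1}$ via H\"older and Minkowski is correct. The gap is in the final step: the ``self-bounding inequality of the form $\cF(\algoutput{\bDelta}) \le \cF(\tilde\bDelta) + \eta\,\cF(\algoutput{\bDelta}) + k\delSCO$'' does not follow from your $S$-bound, because the quantity that appears with a small coefficient is $N_{\mathrm{alg}}$, \emph{not} $\cF(\algoutput{\bDelta})$. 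These differ by the separable part $\cF_{\mathrm{sep}}(\algoutput{\bDelta})$, which can be arbitrarily negative (think of $c_{ij}$'s with steep negative linear parts), so $N_{\mathrm{alg}}$ can be much larger than $\cF(\algoutput{\bDelta})$. What your argument actually yields, after applying Young's inequality to split $S$, is
\[
\cF_{\mathrm{sep}}(\bDelta) + \tfrac{1}{2}N_{\mathrm{alg}} \;\le\; [\text{small}]\cdot\cF(\bDelta^\star) + k\delSCO,
\]
where $\bDelta$ denotes the raw sweep output; this bounds $\cF_{\mathrm{sep}}(\bDelta)+\tfrac12 N_{\mathrm{alg}}$, not $\cF(\bDelta)=\cF_{\mathrm{sep}}(\bDelta)+N_{\mathrm{alg}}$, and no rearrangement closes that gap.

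The missing ingredient---and the one line of \Cref{algo:approx_ellp_cvx_regr} your description omits---is that the algorithm returns $\algoutput{\bDelta}=\tfrac{1}{2}\bDelta$, not $\bDelta$. This halving is exactly what converts the above into the target: since each $g_j$ and $\phi_{ij}$ is convex and vanishes at $0$, one has $g_j(\tfrac12\bDelta_j)\le\tfrac12 g_j(\bDelta_j)$ and $\phi_{ij}(\tfrac12\bDelta_{ij})\le\tfrac12\phi_{ij}(\bDelta_{ij})$, whence
\[
\cF(\tfrac{1}{2}\bDelta)\;\le\;\tfrac{1}{2}\cF_{\mathrm{sep}}(\bDelta)+\tfrac{1}{2^p}N_{\mathrm{alg}}\;\le\;\tfrac{1}{2}\Bigl[\cF_{\mathrm{sep}}(\bDelta)+\tfrac{1}{2}N_{\mathrm{alg}}\Bigr]\;\le\;\tfrac{1}{200p}\cF(\bDelta^\star)+\tfrac{k\delSCO}{2}.
\]
Correspondingly, the paper benchmarks against $\eta\bDelta^\star$ with $\eta=\tfrac{1}{100p}$ (not $\tfrac{1}{200p}$); the extra factor of $2$ enters through the halving, which is where the ``$200$'' ultimately comes from.
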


The pseudocode of \Cref{algo:approx_ellp_cvx_regr} is provided below. 

\begin{algorithm}[H]
  \caption{Approximate Centered $\ell_p$-Convex Regression Solver \label{algo:approx_ellp_cvx_regr}}
  \SetKwProg{Globals}{global variables}{}{}
  \SetKwProg{Proc}{procedure}{}{}
  \Proc{$\textrm{BlockMinimization}(\xset, \XX, \{\oracle_j\}_{j \in [k]}, \{\psi_j(\cdot)\}_{j \in [k]}, \{c_{ij}(\cdot)\}_{i \in [m], j \in [k]}, \{\phi_{ij}(\cdot)\}_{i \in [m], j \in [k]}, \delSCO)$}{
    Initialize $\ww^{(0)} \in \R^m$ to be an all-zero vector. \\
    \For{block $j = 1, 2, \ldots, k$}{
    	Query  $\oracle_j$ to compute $\bDelta_j \in \R^m$ (the update for $\xset_j$) that minimizes
    \begin{align}\label{eq:perBlockSubP}
    \min_{\XX_j + \bDelta_j \in \xset_j} \psi_j(\XX_j + \bDelta_j) - \psi_j(\XX_j) + \sum_{i \in [m]} c_{ij}(\bDelta_{ij}) + (\ww^{(j-1)}_i + \phi_{ij}(\bDelta_{ij}))^p\,.
    \end{align}\\
    Update $\ww^{(j)}_i \gets \ww^{(j - 1)}_i + \phi_{ij}(\bDelta_{ij}), \forall i \in [m].$ \\
    }
    Collect updates $\bDelta = (\bDelta_1; \bDelta_2; \ldots; \bDelta_k) \in \R^{m \times k}.$ \label{line:collect_updates} \\
    \Return $\frac{1}{2} \bDelta$
  }
\end{algorithm}

We analyze the algorithm via the following potential, which is defined for the iterates $\bDelta_j$ generated by \Cref{algo:approx_ellp_cvx_regr}:
\begin{align}\label{eq:potential_pqnorm}
\Phi^{(j)} \defeq \sum_{\ell \le j} \underbrace{g_{\ell}(\bDelta_{\ell})}_{\defeq \psi_{\ell}(\XX_{\ell} + \bDelta_{\ell}) - \psi_{\ell}(\XX_{\ell}) + \sum_{i \in [m]} c_{i\ell}(\bDelta_{i\ell})}  + \norm{\ww^{(j)}}_p^p \forall j \in [k]
\end{align}
First, we show that $\Phi^{(k)}$ does not grow too quickly, as captured by the lemma below. We will then show how to use this upper bound on $\Phi^{(k)}$ to prove \Cref{lem:approx_ellp_cvx_regr}.

\begin{lemma}[Upper bounding $\Phi^{(k)}$]
\label{lem:boundPhik}
Consider the setting of \Cref{lem:approx_ellp_cvx_regr}. 
Let $\bDelta^{\star}$ be the optimal solution to \eqref{eq:ellp_cvx_regr} and potentials $\Phi^{(j)}$ defined as in \Cref{eq:potential_pqnorm} for the iterates $\bDelta_j$ generated by a call of \Cref{algo:approx_ellp_cvx_regr}.
We have 
\begin{align*}
\Phi^{(k)}
\le \frac{1}{100p} \cF(\bDelta^{\star}) + \frac{1}{2} \norm{\ww^{(k)}}_p^p + k \delSCO
\end{align*}
\end{lemma}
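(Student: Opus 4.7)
The plan is to telescope $\Phi^{(k)} = \sum_{j=1}^{k}(\Phi^{(j)}-\Phi^{(j-1)})$, apply the $\delSCO$-approximate optimality of each $\bDelta_j$ against the rescaled benchmark $\tilde{\bDelta}_j := \alpha \bDelta^\star_j$ (for a small constant $\alpha$ chosen at the end), and control the aggregated $\ell_p^p$-norm increment via a coordinatewise Bernoulli-type inequality together with Minkowski's inequality. Writing $Q_j(\bDelta) := g_j(\bDelta) + \sum_i (\ww^{(j-1)}_i + \phi_{ij}(\bDelta_i))^p$ for the per-block objective in \eqref{eq:perBlockSubP}, we have $\Phi^{(j)}-\Phi^{(j-1)} = Q_j(\bDelta_j) - Q_j(0)$, and feasibility of $\tilde{\bDelta}_j$ in $\xset_j$ follows from convexity of $\xset_j$ (which contains both $\XX_j$ and $\XX_j + \bDelta^\star_j$).

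By convexity of $Q_j$ and $Q_j(0) = \|\ww^{(j-1)}\|_p^p$,
\[
Q_j(\tilde{\bDelta}_j) - Q_j(0) \;\le\; \alpha\bigl[g_j(\bDelta^\star_j) + \|\ww^{(j-1)} + \pphi^\star_j\|_p^p - \|\ww^{(j-1)}\|_p^p\bigr],
\]
where $\pphi^\star_j \in \R^m$ has coordinates $(\pphi^\star_j)_i := \phi_{ij}(\bDelta^\star_{ij})$. Combined with the SCO guarantee $Q_j(\bDelta_j) \le Q_j(\tilde{\bDelta}_j) + \delSCO$, this gives a per-step upper bound on $\Phi^{(j)}-\Phi^{(j-1)}$. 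To bound the summed $\ell_p^p$-increment, apply the scalar inequality $(a+b)^p - a^p \le p\,b(a+b)^{p-1}$ coordinatewise, then replace $(\ww^{(j-1)}_i + (\pphi^\star_j)_i)^{p-1}$ by $(\ww^{(k)}_i + \aa^\star_i)^{p-1}$ using the monotonicities $\ww^{(j-1)}_i \le \ww^{(k)}_i$ (which holds because each $\phi_{i\ell} \ge 0$) and $(\pphi^\star_j)_i \le \aa^\star_i := \sum_\ell \phi_{i\ell}(\bDelta^\star_{i\ell})$. Summing over $j$ uses $\sum_j (\pphi^\star_j)_i = \aa^\star_i$, and $\aa^\star_i \le \ww^{(k)}_i + \aa^\star_i$ followed by Minkowski together with $(x+y)^p \le 2^{p-1}(x^p + y^p)$ yields
\[
\sum_{j}\bigl(\|\ww^{(j-1)}+\pphi^\star_j\|_p^p - \|\ww^{(j-1)}\|_p^p\bigr) \;\le\; p\,\|\ww^{(k)} + \aa^\star\|_p^p \;\le\; p\cdot 2^{p-1}\bigl(\|\ww^{(k)}\|_p^p + \|\aa^\star\|_p^p\bigr).
\]

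Writing $G^\star := \sum_j g_j(\bDelta^\star_j)$, $U := \|\ww^{(k)}\|_p$, and $A := \|\aa^\star\|_p$, combining the displays gives
\[
\Phi^{(k)} \;\le\; \alpha G^\star \,+\, \alpha\,p \cdot 2^{p-1}(U^p + A^p) \,+\, k\delSCO.
\]
The identity $\cF(\bDelta^\star) = G^\star + A^p$ together with the feasibility inequality $\cF(\bDelta^\star) \le \cF(0) = 0$ (which uses $g_j(0) = 0$ and $\phi_{ij}(0) = 0$) produces the optimality-driven relation $G^\star \le -A^p$. Setting $\alpha := (200\,p^2\,2^{p-1})^{-1}$ so that $\alpha\,p \cdot 2^{p-1} = \tfrac{1}{200p}$ and $\alpha \le \tfrac{1}{200p}$, and carefully combining the $G^\star$, $A^p$, and $U^p$ contributions (using $G^\star \le -A^p$ to collapse the $A^p$ coefficient into the $\cF(\bDelta^\star)$ term and $\tfrac{1}{200p} \le \tfrac{1}{2}$ to absorb the residual $U^p$ coefficient), yields the claimed bound $\Phi^{(k)} \le \tfrac{1}{100p}\,\cF(\bDelta^\star) + \tfrac{1}{2}\|\ww^{(k)}\|_p^p + k\delSCO$.

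The main technical subtlety is the exponential factor $2^{p-1}$ incurred by $(x+y)^p \le 2^{p-1}(x^p+y^p)$: it forces $\alpha$ to be exponentially small in $p$, which would be fatal outside the theorem's regime. Under the assumption $p = O(\sqrt{\log m})$ we have $\alpha^{-1} = m^{o(1)}$, keeping the argument quantitatively valid. The cyclic-order monotonicity $\ww^{(j-1)}_i \le \ww^{(k)}_i$ afforded by the sequential (rather than arbitrary) block minimization is essential for the argument, as is the optimality-driven inequality $G^\star \le -A^p$ used to collapse the $G^\star$ and $A^p$ contributions into a single $\tfrac{1}{100p}\,\cF(\bDelta^\star)$ term.
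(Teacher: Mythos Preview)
Your argument has a genuine gap in the final ``carefully combining'' step, and it is not a matter of bookkeeping: the bound you actually derive cannot be upgraded to the stated lemma.

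With $\alpha = (200\,p^2\,2^{p-1})^{-1}$ you obtain
\[
\Phi^{(k)} \;\le\; \alpha\,G^\star \;+\; \tfrac{1}{200p}\,A^p \;+\; \tfrac{1}{200p}\,U^p \;+\; k\delSCO,
\]
whereas the target reads $\Phi^{(k)} \le \tfrac{1}{100p}(G^\star+A^p) + \tfrac{1}{2}U^p + k\delSCO$. Since $\alpha \ll \tfrac{1}{100p}$ and $G^\star \le 0$, we have $\alpha G^\star \ge \tfrac{1}{100p}G^\star$, so your $G^\star$ term is \emph{larger} (less negative) than the target's. The relation $G^\star \le -A^p$ you invoke says only that $G^\star$ may be \emph{more} negative, which does not help upper-bound $\alpha G^\star$. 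Concretely, take $\cF(\bDelta^\star)=0$ with $A^p>0$ (so $G^\star=-A^p$): your bound becomes $\Phi^{(k)}\le (\tfrac{1}{200p}-\alpha)A^p + \tfrac{1}{200p}U^p + k\delSCO$, while the lemma requires $\Phi^{(k)}\le \tfrac{1}{2}U^p + k\delSCO$; the positive residual $(\tfrac{1}{200p}-\alpha)A^p$ cannot be absorbed. In fact no single constant $c$ makes $\alpha G^\star + \tfrac{1}{200p}A^p \le c(G^\star+A^p)$ hold for all admissible $(G^\star,A^p)$, so your bound does not even yield a weaker version of the lemma with $\tfrac{1}{100p}$ replaced by some $m^{o(1)}$ constant.

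The root cause is applying convexity of $Q_j$ \emph{before} expanding the $\ell_p^p$ increment. This gives only a uniform factor $\alpha$ on $Q_j(\bDelta^\star_j)-Q_j(0)$, after which the $\phi$-terms are evaluated at the \emph{unscaled} $\bDelta^\star$; the subsequent $(x+y)^p\le 2^{p-1}(x^p+y^p)$ then forces $\alpha$ exponentially small in $p$. The paper avoids this by keeping the scaling inside $\phi$: it compares to $\eta\bDelta^\star_j$ with $\eta=\tfrac{1}{100p}$ and expands $(\ww^{(j-1)}_i+\phi_{ij}(\eta\bDelta^\star_{ij}))^p$ directly via the two-term Bernoulli inequality $(1+x)^p\le 1+3px+3(px)^p$. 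The linear piece is handled by Young's inequality (splitting into a $\tfrac{1}{2}U^p$ part and an $A^p$ part), while the $p$-th-power piece carries the factor $\phi_{ij}(\eta\bDelta^\star_{ij})^p\le \eta^p\phi_{ij}(\bDelta^\star_{ij})^p$. This $\eta^p$ (not $\eta$) gain is precisely what lets $\eta=\tfrac{1}{100p}$ make the aggregated $A^p$ coefficient $(20p\eta)^p\le \eta$, so that both $G^\star$ and $A^p$ end up with the same coefficient $\eta$. Your convexity-first route irreversibly discards this $\eta^p$ saving.
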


\begin{proof}
Define $\eta = \frac{1}{100p}.$
For any $j \in [k]$, we have, be definition, that
\begin{align*}
\Phi^{(j)} 
&= \sum_{\ell \le j} g_{\ell}(\bDelta_{\ell}) + \sum_{i \in [m]} (\ww^{(j)}_i)^p \\
&= \sum_{\ell \le j} g_{\ell}(\bDelta_{\ell}) + \sum_{i \in [m]} \left(\ww^{(j - 1)}_i + \phi_{ij}(\bDelta_{ij})\right)^p \\
&\le \sum_{\ell < j} g_{\ell}(\bDelta_{\ell}) + g_j(\eta \bDelta_j^{\star}) + \sum_{i \in [m]} \left(\ww^{(j - 1)}_i + \phi_{ij}(\eta\bDelta_{ij}^{\star})\right)^p + \delSCO \\
&= \Phi^{(j-1)} - \|\ww^{(j-1)}\|_p^p + g_j(\eta \bDelta_j^{\star}) + \sum_{i \in [m]} \left(\ww^{(j - 1)}_i + \phi_{ij}(\eta\bDelta_{ij}^{\star})\right)^p + \delSCO
\end{align*}
where the inequality comes from that $\bDelta_j$ minimizes \eqref{eq:perBlockSubP}.

To bound the summation involving $\ww^{(j-1)}$, we use the fact that $(1+x)^p \le 1+3px+3(px)^p$ for $x \ge 0$ and $p \ge 2$ and have
\begin{align*}
\sum_{i \in [m]} \left(\ww^{(j - 1)}_i + \phi_{ij}(\eta\bDelta_{ij}^{\star})\right)^p
&\le \sum_{i \in [m]} \left[\left(\ww^{(j - 1)}_i\right)^p + 3p(\ww^{(j-1)}_e)^{p-1} \phi_{ij}(\eta\bDelta_{ij}^{\star}) + 3p^p\phi_{ij}(\eta\bDelta_{ij}^{\star})^p\right] \\
&\le \|\ww^{(j-1)}\|_p^p + \sum_{i \in [m]} \left[3p(\ww^{(k)}_e)^{p-1} \phi_{ij}(\eta\bDelta_{ij}^{\star}) + 3p^p\phi_{ij}(\eta\bDelta_{ij}^{\star})^p\right]\,.
\end{align*}
where we use the fact that $\ww^{(j)}$ is non-decreasing in $j$ entry-wise.
Combining with the previous inequality for $\Phi^{(j)}$, we have, by induction on index $j \in [k]$, that
\begin{align}
\label{eq:PhiKmid}
\Phi^{(k)}
&\le \sum_{j \in [k]} g_j(\eta \bDelta_j^{\star}) +  \sum_{i \in [m], j \in [k]} \left[3p(\ww^{(k)}_i)^{p-1} \phi_{ij}(\eta \bDelta^{\star}_{ij}) + 3p^p\phi_{ij}(\eta \bDelta^{\star}_{ij})^p\right] + k \delSCO\,.
\end{align}

Here, we use Young's inequality for products to further bound $\sum_{i \in [m], j \in [k]}(\ww^{(k)}_i)^{p-1}  \phi_{ij}(\eta \bDelta^{\star}_{ij})$ as follows:
\begin{align*}
\sum_{i \in [m], j \in [k]}(\ww^{(k)}_i)^{p-1}  \phi_{ij}(\eta \bDelta^{\star}_{ij})
&= \sum_{i \in [m]}(\ww^{(k)}_i)^{p-1} \left(\sum_{i \in [m]} \phi_{ij}(\eta \bDelta^{\star}_{ij})\right) \\
&\le \sum_{i \in [m]}\frac{p-1}{10p^2} (\ww^{(k)}_e)^{p} + 10^{p-1}p^{p-2}\left(\sum_{j \in [k]} \phi_{ij}(\eta \bDelta^{\star}_{ij})\right)^p \\
&\le \frac{1}{10p} \norm{\ww^{(k)}}_p^p + 10^{p-1}p^{p-2} \sum_{i \in [m]} \left(\sum_{j \in [k]} \phi_{ij}(\eta \bDelta^{\star}_{ij})\right)^p
\end{align*}
where we use the following fact derived from Young's inequality for products:
\begin{align*}
    A^{\frac{p-1}{p}}B^{\frac{1}{p}} = \left(\frac{A}{10p}\right)^{\frac{p-1}{p}}\left(B(10p)^{p-1}\right)^{\frac{1}{p}} \le \frac{p-1}{10p^2}A + 10^{p-1}p^{p-2}B, \forall A, B > 0.
\end{align*}

This allows us to further bound the part of \eqref{eq:PhiKmid} involving $\phi_{ij}(\eta \bDelta^{\star})$ as follows:
\begin{align*}
&\sum_{i \in [m], j \in [k]} \left[3p(\ww^{(k)}_i)^{p-1} \phi_{ij}(\eta \bDelta^{\star}_{ij}) + 3p^p\phi_{ij}(\eta \bDelta^{\star}_{ij})^p\right] \\
\le &\sum_{i \in [m], j \in [k]} 3p^p\phi_{ij}(\eta \bDelta^{\star}_{ij})^p + \frac{3}{10} \norm{\ww^{(k)}}_p^p + (10p)^p \sum_{i \in [m]} \left(\sum_{j \in [k]} \phi_{ij}(\eta \bDelta^{\star}_{ij})\right)^p \\
\le &(20p)^p \sum_{i \in [m]} \left(\sum_{j \in [k]} \phi_{ij}(\eta \bDelta^{\star}_{ij})\right)^p + \frac{1}{2} \norm{\ww^{(k)}}_p^p \\
\le &(20p)^p \eta^p\sum_{i \in [m]} \left(\sum_{j \in [k]} \phi_{ij}(\bDelta^{\star}_{ij})\right)^p + \frac{1}{2} \norm{\ww^{(k)}}_p^p \\
\end{align*}
where the 2nd inequality comes from that $\sum_{i \in [m], j \in [k]} \phi_{ij}(\eta \bDelta^{\star})^p \le \sum_{i \in [m]} (\sum_{j \in [k]} \phi_{ij}(\eta \bDelta^{\star}))^p$ and $3p^p + (10p)^p \le (20p)^p$, and the 3rd inequality comes from that $\phi_{ij}(\eta x) \le \eta \phi_{ij}(x)$ by convexity and positivity of $\phi_{ij}(\cdot)$, and by the fact that $\phi_{ij}(0) = 0$.

Plugging this back to \eqref{eq:PhiKmid} yields
\begin{align*}
\Phi^{(k)}
&\le \sum_{j \in [k]} g_j(\eta \bDelta_j^{\star}) +  \sum_{i \in [m], j \in [k]} \left[3p(\ww^{(k)}_i)^{p-1} \phi_{ij}(\eta \bDelta^{\star}_{ij}) + 3p^p\phi_{ij}(\eta \bDelta^{\star}_{ij})^p\right] + k\delSCO \\
&\le \eta \sum_{j \in [k]} g_j(\bDelta_j^{\star})  + (20p)^p \eta^p\sum_{i \in [m]} \left(\sum_{j \in [k]} \phi_{ij}(\bDelta^{\star}_{ij})\right)^p + \frac{1}{2} \norm{\ww^{(k)}}_p^p + k\delSCO \\
&\le \eta \left(\sum_{j \in [k]} g_j(\bDelta_j^{\star})  + \sum_{i \in [m]} \left(\sum_{j \in [k]} \phi_{ij}(\bDelta^{\star}_{ij})\right)^p\right) + \frac{1}{2} \norm{\ww^{(k)}}_p^p + k\delSCO \\
&= \frac{1}{100p} \cF(\bDelta^{\star}) + \frac{1}{2} \norm{\ww^{(k)}}_p^p + k\delSCO
\end{align*}
where the 3rd inequality comes from that $(20\eta p)^p \le \eta.$
This concludes the proof.
\end{proof}

Using \Cref{lem:boundPhik}, we show that the solution returned by \Cref{algo:approx_ellp_cvx_regr} is a suitable approximate solution for solving \Cref{prob:ellp_cvx_regr}, which is captured by the lemma below. 
\begin{lemma}
\label{lem:mwuApproxRatio}
Consider the setting of \Cref{lem:approx_ellp_cvx_regr}. 
Let $\bDelta$ be the vector generated by \Cref{line:collect_updates} during a call to \Cref{algo:approx_ellp_cvx_regr} and $\bDelta^*$ be the optimal solution to \eqref{eq:ellp_cvx_regr}. 
We have
\begin{align*}
\cF\left(\frac{1}{2}\bDelta\right)
\le \frac{1}{200p} \cF(\bDelta^{\star}) + \frac{k \delSCO}{2}
\end{align*}
\end{lemma}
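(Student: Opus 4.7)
The plan is to show that the potential $\Phi^{(k)}$ from \Cref{lem:boundPhik} exactly equals $\cF(\bDelta)$ for the $\bDelta$ collected in \Cref{line:collect_updates}, and then use convexity to transfer the bound from $\bDelta$ to $\frac{1}{2}\bDelta$, with the extra slack in the $\|\ww^{(k)}\|_p^p$ term (since $p\ge 2$) absorbing the $\sum_j g_j$ term.

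First I would note that by construction $\ww^{(k)}_i = \sum_{j\in[k]} \phi_{ij}(\bDelta_{ij})$, so
\[
\Phi^{(k)} = \sum_{j\in[k]} g_j(\bDelta_j) + \|\ww^{(k)}\|_p^p = \psi(\XX+\bDelta)-\psi(\XX) + \sum_{i,j}c_{ij}(\bDelta_{ij}) + \sum_{i\in[m]}\Bigl(\sum_{j\in[k]}\phi_{ij}(\bDelta_{ij})\Bigr)^{p} = \cF(\bDelta).
\]
Rearranging \Cref{lem:boundPhik} then yields
\[
\sum_{j\in[k]} g_j(\bDelta_j) + \tfrac{1}{2}\|\ww^{(k)}\|_p^p \;\le\; \tfrac{1}{100p}\cF(\bDelta^{\star}) + k\delSCO.
\]

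Next, I would use convexity to compare $\cF(\tfrac{1}{2}\bDelta)$ with $\cF(\bDelta)$. Because each $g_j$ is convex with $g_j(0)=0$, we have $g_j(\tfrac{1}{2}\bDelta_j)\le \tfrac{1}{2}g_j(\bDelta_j)$, and because each $\phi_{ij}$ is convex (a consequence of the $m$-decomposability assumption) with $\phi_{ij}(0)=0$ and $\phi_{ij}\ge 0$, we have $\phi_{ij}(\tfrac{1}{2}\bDelta_{ij})\le \tfrac{1}{2}\phi_{ij}(\bDelta_{ij})$. Combining these, together with $(\tfrac{1}{2}s)^{p} = 2^{-p}s^{p}$ for $s\ge 0$, gives
\[
\cF\!\left(\tfrac{1}{2}\bDelta\right) \;\le\; \tfrac{1}{2}\sum_{j\in[k]} g_j(\bDelta_j) + \tfrac{1}{2^{p}}\,\|\ww^{(k)}\|_p^{p}.
\]

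The key observation that closes the proof is that $p\ge 2$ implies $\tfrac{1}{2^{p}}\le \tfrac{1}{4} = \tfrac{1}{2}\cdot\tfrac{1}{2}$, so the last display is bounded by $\tfrac{1}{2}\bigl(\sum_j g_j(\bDelta_j) + \tfrac{1}{2}\|\ww^{(k)}\|_p^{p}\bigr)$, and plugging in the rearranged \Cref{lem:boundPhik} gives $\cF(\tfrac{1}{2}\bDelta)\le \tfrac{1}{200p}\cF(\bDelta^{\star}) + \tfrac{k\delSCO}{2}$, as claimed. I do not expect any significant obstacle here: the heavy lifting is already in \Cref{lem:boundPhik}, which introduces the $\tfrac{1}{2}\|\ww^{(k)}\|_p^{p}$ slack precisely so that halving the step can pay off against the contribution of the $\ell_p$-block, and the remaining manipulation is a one-line convexity/scaling argument.
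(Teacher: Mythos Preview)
Your proposal is correct and follows essentially the same route as the paper's proof: identify $\Phi^{(k)}=\cF(\bDelta)$, rearrange \Cref{lem:boundPhik} to isolate $\sum_j g_j(\bDelta_j)+\tfrac12\|\ww^{(k)}\|_p^p$, then use convexity of $g_j$ and $\phi_{ij}$ (with vanishing at $0$) together with $2^{-p}\le\tfrac14$ to pass to $\tfrac12\bDelta$. The only cosmetic remark is that convexity of $\phi_{ij}$ is taken as part of the setup rather than derived from $m$-decomposability, but this does not affect the argument.
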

\begin{proof}
First, observe that
\begin{align*}
\norm{\ww^{(k)}}_p^p
= \sum_{i \in [m]} \left(\ww^{(k)}_i\right)^p 
= \sum_{i \in [m]} \left(\sum_{j \in [k]} \phi_{ij}(\bDelta_{ij})\right)^p
\end{align*}
Via rearrangement, \Cref{lem:boundPhik} implies that
\begin{align}
\label{eq:rearrange}
\sum_{j \in [k]} g_{j}\left(\bDelta_{j}\right) + \frac{1}{2}\sum_{i \in [m]} \left(\sum_{j \in [k]} \phi_{ij}(\bDelta_{ij})\right)^p \le \frac{1}{100p} \cF(\bDelta^{\star}) + k\delSCO
\end{align}
By convexity, we know that $g_{j}(\frac{\xx}{2}) \le \frac{1}{2}g_{j}(\xx)$ for any $\xx$ and $j$ and we have
\begin{align*}
\cF\left(\frac{1}{2}\bDelta\right)
&= \sum_{j \in [k]} g_{j}\left(\frac{1}{2}\bDelta_{j}\right) + \sum_{i \in [m]} \left(\sum_{j \in [k]} \phi_{ij}\left(\frac{1}{2} \bDelta_{ij}\right)\right)^p \\
&\le \frac{1}{2}\sum_{j \in [k]} g_{j}\left(\bDelta_{j}\right) + \frac{1}{2^p}\sum_{i \in [m]} \left(\sum_{j \in [k]} \phi_{ij}\left(\bDelta_{ij}\right)\right)^p \\
&\le \frac{1}{2}\sum_{j \in [k]} g_{j}\left(\bDelta_{j}\right) + \frac{1}{4}\sum_{i \in [m]} \left(\sum_{j \in [k]} \phi_{ij}\left(\bDelta_{ij}\right)\right)^p \\
&\le \frac{1}{200p} \cF(\bDelta^{\star}) + \frac{k\delSCO}{2}
\end{align*}
where the 1st inequality comes from that $\phi_{ij}(\frac{1}{2} x) \le \frac{1}{2} \phi_{ij}(x)$ by convexity, the 2nd inequality comes from that $2^{-p} \le 1/4$, and the last inequality comes from \eqref{eq:rearrange}.
\end{proof}

\Cref{lem:mwuApproxRatio} establishes that \Cref{algo:approx_ellp_cvx_regr} outputs a solution with $(\almostTime(p), k \delSCO / 2)$ multiplicative-additive approximation error.  
\begin{proof}[Proof of \Cref{lem:approx_ellp_cvx_regr}]
\Cref{lem:mwuApproxRatio} proves the approximation ratio of $\frac{1}{2}\bDelta$ and the lemma follows.
\end{proof}

\subsection{Putting it All Together}

Using \Cref{lem:approx_ellp_cvx_regr}, we are ready to prove \Cref{lem:approxResSolver}, along with the main theorem in this section, \Cref{thm:comp_ellqp_regr}. 
\begin{proof}[Proof of \Cref{lem:approxResSolver}]
First, we formulate the residual problem (\Cref{def:resP}) as a centered $\ell_p$-convex regression problem (\Cref{prob:ellp_cvx_regr}) by defining
\begin{align*}
c_{ij}(x) \defeq \gg_{ij}x + \hh_{ij}\gamma_q(px; \XX_{ij})
\text{ and }
\phi_{ij}(x) \defeq 64\gamma_q(px; \XX_{ij})\,.
\end{align*}
for all $i \in [k]$ and $e \in [m].$
First, note that $c_{ij}(0) = \phi_{ij}(0) = 0$ and that $\phi_{ij}(x) \ge 0, \forall x \in \R$, as $\gamma_q$ is non-negative. Second, 
by \Cref{lem:gamma_big_p}, we have that $c_1 \gamma_q(x; \XX_{ij})$, as well as $c_1 (\gamma_q(x; \XX_{ij}) + c_2)^p$ are $m$-decomposable, which also implies that $c_1 \gamma_q(px; \XX_{ij})$ and $c_1 (\gamma_q(px; \XX_{ij}) + c_2)^p$ are $m$-decomposable for any $c_1 \ge 0, |c_1, c_2| \le \exp(\O(1))$. 
Additionally, note that $\gg_{ij}, \hh_{ij} = O(\exp(\otilde(1)))$, as $\coefnorm = O(\coefnorm)$ and $\|\XX\|_{\infty}^{pq} = O(\exp(\otilde(1))), \forall \XX \in \xset$ from our choice of $q, p$. 
Hence, we have that for every $i, j$ and $c_1, c_2 \le O(\exp(\otilde(1)))$, $c_1 c_{ij}(x), c_1 \phi_{ij}(x), c_1 (\phi_{ij}(x) + c_2)^p$ are all $m$-decomposable. 
Thus, 
\Cref{lem:approx_ellp_cvx_regr} gives that the solution $\algoutput{\bDelta}$ outputted by running \Cref{algo:approx_ellp_cvx_regr} satisfies 
\begin{align*}
\cR\left(\algoutput{\bDelta}; \XX\right) \le \frac{1}{200p} \cR(\bDelta^{\star}; \XX) + \frac{k \delSCO}{2}\,.
\end{align*}
Moreover, by \Cref{lem:approx_ellp_cvx_regr}, running \Cref{algo:approx_ellp_cvx_regr} takes time $O(\sum_{j \in [k]} (\Time_j + m) )$, which completes the proof. 
\end{proof}

We now use \Cref{lem:approxResSolver} and \Cref{coro:convRate} to prove \Cref{thm:comp_ellqp_regr}.

\begin{proof}[Proof of \Cref{thm:comp_ellqp_regr}]
Combining \Cref{lem:approxResSolver} with \Cref{coro:convRate} yields an algorithm that solves \Cref{prob:comp_ellqp_regr} to additive error $3 k \delSCO$ in time 
\begin{align*}
\sum_{j \in [k]} \left(\Time_j + m\right) \cdot \left(\frac{p}{q-1}\right)^{O\left(\frac{1}{q-1}\right)} \cdot \log\left(\frac{k|\cE(\XX^{\star})|}{\delSCO}\right) = m^{o(1)} \sum_{j \in [k]} \left(\Time_j + m\right) \cdot \log\left(\frac{k|\cE(\XX^{\star})|}{\delSCO}\right)
\,.
\end{align*}
\end{proof}

\subsection{Bregman Divergence Approximation of \texorpdfstring{$\|\xx\|_q^{pq}$}{Lqp}}
\label{sec:proofQPIR}

In this section, we prove our two main technical lemmas, \Cref{lem:qpIRUB} and \Cref{lem:qpIRLB}.
Our proof relies on the following two lemmas regarding approximating the Bregman divergence of $|x|^q$ and $|x|^p$ respectively.
\begin{lemma}[Iterative Refinement for $1 < q  \leq 2$ \cite{akps19, AKPS22}]
\label{lem:LqIR}
For all $q \in (1, 2]$,
\[
\frac{q-1}{q 2^q} \cdot \gamma_q(x, |f|) \le V^{|\cdot|^q}_f(f+x) \le 2^q \cdot \gamma_q (x, |f|)
\text{ for all }
f, x \in \R\,.
\]
\end{lemma}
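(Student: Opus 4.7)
The plan is to follow the standard argument from the $\ell_p$-regression iterative refinement literature (as in \cite{akps19, AKPS22}), reducing the two-sided bound to a case analysis on the magnitude of $x$ relative to $f$. First I would exploit the symmetries of the statement: both $V^{|\cdot|^q}_f(f+x)$ and $\gamma_q(x,|f|)$ are invariant under the joint sign flip $(f,x) \mapsto (-f,-x)$, and both scale as $\lambda^q$ under $(f,x) \mapsto (\lambda f, \lambda x)$ for $\lambda > 0$. This lets me reduce to $f = 1$ (the case $f = 0$ is immediate since then $\gamma_q(x,0) = |x|^q = V^{|\cdot|^q}_0(x)$). It then suffices to bound
\[
\phi(x) \defeq |1+x|^q - 1 - q x
\]
against $\gamma_q(x,1)$, which equals $\tfrac{q}{2} x^2$ for $|x| \le 1$ and $|x|^q - (1 - \tfrac{q}{2})$ for $|x| > 1$.

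Next I would split into the two natural regimes and handle each separately. For the \emph{quadratic regime} $|x| \le 1$, apply Taylor's theorem with Lagrange remainder to $t \mapsto t^q$ at $t = 1$: there exists $\xi$ between $1$ and $1+x$ with $\phi(x) = \tfrac{q(q-1)}{2} \xi^{q-2} x^2$. Since $\xi \in [1-|x|, 1+|x|] \subseteq [0,2]$ and $q - 2 \le 0$, the factor $\xi^{q-2}$ lies in $[2^{q-2}, \text{large}]$; combined with $(1+x)^q \le 2^q$ and elementary manipulations this yields both $\phi(x) \ge \tfrac{q-1}{q 2^q} \cdot \tfrac{q}{2} x^2$ and $\phi(x) \le 2^q \cdot \tfrac{q}{2} x^2$, matching the definition of $\gamma_q(x,1)$. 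For the \emph{power regime} $|x| > 1$, I would directly estimate $\phi(x)$ using $|1+x|^q \in [(|x|-1)^q, (|x|+1)^q]$ and the monotonicity/concavity of $t \mapsto t^q$ for $q \in (1,2]$, showing that $\phi(x) = |x|^q \cdot (1 + o(1))$ up to bounded multiplicative factors, and that these factors fit into the window $[\tfrac{q-1}{q 2^q}, 2^q]$ relative to $\gamma_q(x,1) = |x|^q - (1 - \tfrac{q}{2})$.

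The main obstacle I anticipate is the transition region near $|x| = 1$, together with the sign-change issue when $x < -1$ (so that $1+x < 0$ and $|1+x|^q$ is not smooth at $-1$). To manage this, I plan to verify continuity of both sides at $|x| = 1$ and, on the regime $x < -1$, use the fact that $\phi(x) = |1+x|^q - 1 - qx \ge |x|^q + (\text{linear corrections})$, which dominates the desired lower bound on $\gamma_q$. Careful bookkeeping of the constants $q, 2^q, q-1$ is what makes the two-sided ratio come out cleanly, and this is essentially the calculation performed in \cite[Lemma~4.5]{AKPS22}; I would cite that reference for the tight constants rather than rederiving them by hand.
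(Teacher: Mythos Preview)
The paper does not give its own proof of this lemma: it is stated with a citation to \cite{akps19, AKPS22} and used as a black box in the proofs of \Cref{lem:qpIRLB} and \Cref{lem:qpIRUB}. Your outline is the standard argument from that literature (symmetry/scaling reduction to $f=1$, Taylor remainder for $|x|\le 1$, direct comparison for $|x|>1$), and you yourself defer the final constant-tracking to \cite{AKPS22}, so there is nothing to compare.
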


\begin{lemma}[Iterative Refinement for $p \geq 2$ \cite{akps19, AKPS22}]
\label{lem:LpIR}
For all $p \ge 2$,
\[
\frac{p}{8}|f|^{p - 2} x^2 + \frac{1}{2^{p+1}} |x|^p\le V^{|\cdot|^p}_f(f+x) \le 2p^2 |f|^{p - 2} x^2 + p^p |x|^p
\text{ for all }
f, x \in \R\,.
\]
\end{lemma}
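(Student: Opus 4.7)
The plan is to reduce the statement to a one-parameter inequality via symmetry and homogeneity, and then apply Taylor's theorem with integral remainder together with a case analysis on $t := x/f$. Observe that $V^{|\cdot|^p}_f(f+x) = |f+x|^p - |f|^p - p\,\mathrm{sgn}(f)|f|^{p-1}x$ is invariant under the joint sign flip $(f,x)\mapsto(-f,-x)$ and is $p$-homogeneous under $(f,x)\mapsto(\lambda f,\lambda x)$, and both claimed bounds share these symmetries. The case $f = 0$ reduces to the trivial $\tfrac{1}{2^{p+1}}|x|^p \le |x|^p \le p^p|x|^p$; otherwise I normalize to $f = 1$ and set $\phi(t) := |1+t|^p - 1 - pt$, so it suffices to show $\tfrac{p}{8}t^2 + \tfrac{|t|^p}{2^{p+1}} \le \phi(t) \le 2p^2 t^2 + p^p|t|^p$ for all $t \in \R$. (For $p = 2$ this is immediate since $\phi(t) \equiv t^2$; I focus on $p > 2$ below.)

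For the upper bound I use the identity $\phi(t) = p(p-1)\int_0^t (t-s)|1+s|^{p-2}\,ds$ (valid since $|\cdot|^p \in C^2$ for $p \ge 2$) and split into four regimes. When $|t| \le 1/(p-1)$, on the integration interval $|1+s|^{p-2} \le (1+1/(p-1))^{p-2} \le e$, yielding $\phi(t) \le \tfrac{p(p-1)e}{2}t^2 \le 2p^2 t^2$. When $t \ge 1/(p-1)$, the inequality $1 + t \le pt$ gives $|1+t|^p \le p^p t^p$, hence $\phi(t) \le p^p t^p$. When $t \in [-1,-1/(p-1)]$, $|1+t|^p \le 1$ yields $\phi(t) \le p|t|$, and $p|t| \ge 1$ converts this to $p|t| \le 2p^2 t^2$. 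When $t \le -1$, $|1+t| = |t|-1 \le |t|$ together with $|t| \ge 1$ gives $\phi(t) \le |t|^p + p|t| \le p^p|t|^p + 2p^2 t^2$.

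For the lower bound, the regime $t \ge 0$ is handled cleanly by extracting both summands from the same integral. Bounding $(1+s)^{p-2} \ge 1$ yields $\phi(t) \ge p(p-1)t^2/2 \ge pt^2/4$, while bounding $(1+s)^{p-2} \ge s^{p-2}$ yields $\phi(t) \ge p(p-1)\int_0^t (t-s)s^{p-2}\,ds = t^p$; averaging produces $\phi(t) \ge pt^2/8 + t^p/2 \ge pt^2/8 + |t|^p/2^{p+1}$.

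The main obstacle is the lower bound for $t < 0$, where $|1+s|^{p-2}$ can degenerate near $s = -1$ and the ``$(1+s)^{p-2} \ge 1$'' trick fails; I split into three sub-ranges. For $t \in [-1/(p-1),0)$ the uniform estimate $(1+s)^{p-2} \ge (1-1/(p-1))^{p-2} \ge 1/e$ on $s \in [t,0]$ yields $\phi(t) \ge p(p-1)t^2/(2e) \ge pt^2/8$, while the $|t|^p/2^{p+1}$ contribution is absorbed because $|t|^{p-2} \le (p-1)^{-(p-2)}$ is exponentially small there. For $t \in [-1,-1/(p-1)]$ I use $\phi(t) \ge -1 - pt = p|t| - 1$; the function $g(u) := pu - 1 - pu^2/8$ satisfies $g'(u) = p(1-u/4) \ge 0$ on $u \in [1/(p-1),1]$, so $g$ attains its minimum at $u = 1/(p-1)$ with value $(7p-8)/(8(p-1)^2) > 0$ for $p > 2$, giving $\phi(t) \ge pt^2/8$, and the remaining $|t|^p/2^{p+1} \le 2^{-(p+1)}$ is dominated by that positive margin. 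Finally, for $t \le -1$, the substitution $w := |t|-1 \ge 0$ converts the target to $w^p + pw + p - 1 \ge p(1+w)^2/8 + (1+w)^p/2^{p+1}$; upper-bounding the right side via $(1+w)^2 \le 2(1+w^2)$ and $(1+w)^p \le 2^p(1+w^p)$ reduces it to $w^p/2 + pw(1-w/4) + (3p/4 - 3/2) \ge 0$, which is immediate for $w \le 4$ (each summand is non-negative when $p > 2$) and for $w > 4$ follows from the elementary estimate $w^{p-2} \ge 4^{p-2} \ge p/2$.
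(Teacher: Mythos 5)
The paper does not prove this lemma; it is imported from \cite{akps19, AKPS22}, so your self-contained elementary argument (Taylor with integral remainder plus a case analysis on $t=x/f$) is a legitimate independent route, and most of it checks out: the symmetry/homogeneity reduction, all four regimes of the upper bound, the averaging trick for $t\ge 0$, the regime $t\in[-1,-1/(p-1)]$, and the substitution $w=|t|-1$ for $t\le -1$ are all correct as written.

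There is, however, one concrete step that fails: the regime $t\in[-1/(p-1),0)$. Your chain requires
\[
\frac{p(p-1)}{2e}\,t^2 \;\ge\; \frac{p}{8}t^2+\frac{|t|^{p}}{2^{p+1}},
\qquad\text{i.e.}\qquad
p\Bigl(\frac{p-1}{2e}-\frac18\Bigr)\;\ge\;\frac{|t|^{p-2}}{2^{p+1}},
\]
and the worst case is $|t|=1/(p-1)$, where $|t|^{p-2}=(p-1)^{-(p-2)}$. Your justification that this quantity is ``exponentially small'' is false near $p=2$: as $p\to2^{+}$ one has $(p-1)^{-(p-2)}\to 1$, so the right-hand side tends to $1/2^{3}=1/8=0.125$, while the left-hand side tends to $2\bigl(\tfrac{1}{2e}-\tfrac18\bigr)=\tfrac1e-\tfrac14\approx 0.1179$. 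Since $0.1179<0.125$, by continuity the inequality fails for all $p$ in a neighborhood of $2$ (numerically, roughly $p\in(2,2.014)$; e.g.\ at $p=2.01$, $t=-1/1.01$, your bound gives $\phi(t)\ge 0.3661$ while the target is $0.3680$). The lemma itself is still true there with a large margin ($\phi(t)\approx 0.990$ at that point); the culprit is only the uniform constant $1/e$ in the estimate $(1-\tfrac{1}{p-1})^{p-2}\ge 1/e$, which is very lossy precisely where it matters: the true value of $(1-\tfrac{1}{p-1})^{p-2}$ tends to $1$ as $p\to 2^{+}$ and only approaches $1/e$ for large $p$. The fix is minor --- keep the exact constant $c_p:=(1-\tfrac{1}{p-1})^{p-2}$ and verify $\tfrac{p(p-1)c_p}{2}-\tfrac{p}{8}\ge \tfrac{(p-1)^{-(p-2)}}{2^{p+1}}$ for all $p>2$ (it holds, since $c_p\to1$ as $p\to2^+$), or treat $p\in(2,2.1]$ by a separate short argument --- but as written this step does not go through.
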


First, we prove the lower bound on the Bregman divergence of $\|\xx\|_q^{pq}$, which we restate below.
\qpIRLB*
To prove the lemma, we need the following claim that generalizes the Cauchy-Schwarz inequality in terms of $\gamma_q(\cdot).$
\begin{claim}
\label{claim:qpGradSize}
For $1 < q \le 2$ and any vectors $\xx, \yy \in \R^d$, we have
\begin{align*}
    |q\l\xx^{q-1}, \yy\r| \le \sqrt{2q} \norm{\xx}_q^{q/2} \sqrt{\sum_{j \in [d]} \gamma_q(\yy_j; \xx_j)} + q \norm{\xx}_q^{q-1} \left(\sum_{j \in [d]} \gamma_q(\yy_j; \xx_j)\right)^{\frac{1}{q}}
\end{align*}
\end{claim}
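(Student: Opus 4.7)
The plan is to prove the claim by triangle inequality $|q\l\xx^{q-1},\yy\r|\le q\sum_{j\in[d]}|\xx_j|^{q-1}|\yy_j|$ and then splitting the index set into $S_1=\{j:|\yy_j|\le|\xx_j|\}$ and $S_2=\{j:|\yy_j|>|\xx_j|\}$. The two regimes correspond to the two ``pieces'' of $\gamma_q$ in \Cref{def:gamma}, and each naturally produces one of the two terms on the right-hand side.

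For $j\in S_1$, the definition gives $\gamma_q(\yy_j;\xx_j)=\tfrac{q}{2}|\xx_j|^{q-2}\yy_j^2$, so I would factor
\[
q|\xx_j|^{q-1}|\yy_j|=\sqrt{|\xx_j|^q}\cdot\sqrt{2q\cdot\tfrac{q}{2}|\xx_j|^{q-2}\yy_j^2}=\sqrt{2q\,|\xx_j|^q\,\gamma_q(\yy_j;\xx_j)}
\]
and apply Cauchy--Schwarz to the sum:
\[
\sum_{j\in S_1}q|\xx_j|^{q-1}|\yy_j|\le\sqrt{2q}\Bigl(\sum_{j\in S_1}|\xx_j|^q\Bigr)^{1/2}\Bigl(\sum_{j\in S_1}\gamma_q(\yy_j;\xx_j)\Bigr)^{1/2}\le\sqrt{2q}\,\|\xx\|_q^{q/2}\sqrt{\textstyle\sum_j\gamma_q(\yy_j;\xx_j)}.
\]
This reproduces the first term of the claim.

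For $j\in S_2$, I would use that since $|\yy_j|\ge|\xx_j|$, the second branch of $\gamma_q$ satisfies
\[
\gamma_q(\yy_j;\xx_j)=|\yy_j|^q-(1-\tfrac{q}{2})|\xx_j|^q\ge(1-(1-\tfrac{q}{2}))|\yy_j|^q=\tfrac{q}{2}|\yy_j|^q,
\]
so $|\yy_j|\le(2\gamma_q(\yy_j;\xx_j)/q)^{1/q}$. Then Hölder with exponents $\tfrac{q}{q-1}$ and $q$ gives
\[
\sum_{j\in S_2}q|\xx_j|^{q-1}|\yy_j|\le q\|\xx\|_q^{q-1}\Bigl(\sum_{j\in S_2}|\yy_j|^q\Bigr)^{1/q}\le q\bigl(\tfrac{2}{q}\bigr)^{1/q}\|\xx\|_q^{q-1}\Bigl(\sum_j\gamma_q(\yy_j;\xx_j)\Bigr)^{1/q},
\]
producing the second term. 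Summing the $S_1$ and $S_2$ bounds then yields the claim.

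The main obstacle is matching the constant $q$ in the second term of the claim exactly: the straightforward Hölder argument above introduces an extra factor $(2/q)^{1/q}\in[1,2]$ for $q\in(1,2]$. I would handle this either by observing that the factor is harmlessly absorbed into the numerical constants of the downstream application (\Cref{lem:qpIRLB}), or by a slightly sharper case split that tightens the bound: for instance, defining $S_2$ instead as $\{j:|\yy_j|\ge c|\xx_j|\}$ for a threshold $c>1$ tuned so that $\gamma_q(\yy_j;\xx_j)\ge|\yy_j|^q/\alpha$ with $\alpha\to1$, which pushes the multiplicative constant from the Hölder step down to exactly $q$ at the cost of a slightly worse constant in the first ($S_1$) term. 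Beyond this constant-tracking issue, all remaining steps are routine applications of Cauchy--Schwarz, Hölder, and the definition of $\gamma_q$.
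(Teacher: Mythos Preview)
Your proposal is correct and follows essentially the same route as the paper's proof: split into $\{|\yy_j|\le|\xx_j|\}$ and $\{|\yy_j|>|\xx_j|\}$, use Cauchy--Schwarz on the first piece via the identity $q|\xx_j|^{q-1}|\yy_j|=\sqrt{2q\,|\xx_j|^q\,\gamma_q(\yy_j;\xx_j)}$, and use H\"older together with $|\yy_j|^q\lesssim\gamma_q(\yy_j;\xx_j)$ on the second. Your worry about the extra factor $(2/q)^{1/q}$ is well-founded---the paper's own proof has the same issue (it invokes $\gamma_q(a;b)\ge|a|^q/2$ but then writes $\yy_j\le\gamma_q(\yy_j;\xx_j)^{1/q}$ without the $2^{1/q}$), and your first proposed resolution, absorbing the constant into the slack in \Cref{lem:qpIRLB}, is exactly what the downstream argument permits.
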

\begin{proof}[Proof of \Cref{claim:qpGradSize}]
Without loss of generality, we assume that both $\xx$ and $\yy$ are nonnegative.
First, we observe that, for any $a, b \in \R$, $\gamma_q(a; b) \le a^q$ and $\gamma_q(a;b) \ge a^q/2$ when $|a| > |b|.$

Using the observation and Hölder's inequality, we have that
\begin{align*}
q\l\xx^{q-1}, \yy\r
&= q\sum_{j \in [d]} \xx_j^{q-1} \yy_j \\
&\le q \sum_{\yy_j \le \xx_j} \xx_j^{q-1} \yy_j + q \sum_{\yy_j > \xx_j} \xx_j^{q-1} \yy_j \\
&\le \sqrt{2q} \sum_{\yy_j \le \xx_j} \xx^{\frac{q}{2}}\sqrt{\gamma_q(\yy_j; \xx_j)} + q \sum_{\yy_j > \xx_j} \xx_j^{q-1} \gamma_q(\yy_j; \xx_j)^{\frac{1}{q}} \\
&\le \sqrt{2q} \norm{\xx}_q^{q/2} \sqrt{\sum_{j \in [d]} \gamma_q(\yy_j; \xx_j)} + q \norm{\xx}_q^{q-1} \left(\sum_{j \in [d]} \gamma_q(\yy_j; \xx_j)\right)^{\frac{1}{q}}
\end{align*}
where the 2nd inequality uses the definition of $\gamma_q(y; x)$ (\Cref{def:gamma}).
\end{proof}

Now, we prove \Cref{lem:qpIRLB}.
\begin{proof}[Proof of \Cref{lem:qpIRLB}]
We first define the following quantities:
\begin{align*}
N &\defeq \|\xx\|_q \\
L &\defeq q \l\xx^{q-1}, \yy\r\\
B &\defeq \|\xx+\yy\|_q^q - \|\xx\|_q^q - L\\
G &\defeq \sum_{j \in [d]} \gamma_q(\yy_j; \xx_j) \\
t &\defeq \left(\frac{100}{q-1}\right)^{\frac{q}{q - 1}}
\end{align*}
\Cref{lem:LqIR} implies that $B \ge \frac{q-1}{8}G.$
\Cref{lem:LpIR} yields that
\begin{align*}
\norm{\xx + \yy}_q^{pq}
&= \left(N^q + L + B\right)^p \\
&\ge N^{pq} + p N^{(p-1)q}(L + B) + \frac{p}{8} N^{(p-2)q}(L + B)^2 + \frac{1}{2^{p+1}}(L + B)^p
\end{align*}

If $G \le t N^q$, we have
\begin{align*}
\norm{\xx + \yy}_q^{pq} - N^{pq} - p N^{(p-1)q}L
&\ge p N^{(p-1)q}B \\
&\ge \frac{p(q-1)}{8}N^{(p-1)q}G \\
&\ge \frac{p(q-1)}{16}N^{(p-1)q}G + \frac{p(q-1)}{16t^{p-1}} G^p
\end{align*}
where the last inequality uses the fact that $N^{(p-1)q}G \ge t^{1-p} G^p.$

For the other case where $G > tN^q$, we know that, by \Cref{claim:qpGradSize}, 
\begin{align*}
|L| \le \sqrt{2q} N^{\frac{q}{2}} G^{\frac{1}{2}} + q N^{q-1}G^{\frac{1}{q}} \le \frac{q - 1}{100}G
\end{align*}
Combining with the fact that $\frac{q-1}{8} G \le B \le G$, we know that $|L + B| \ge \frac{q-1}{10}G.$
This implies that
\begin{align*}
\norm{\xx + \yy}_q^{pq} - N^{pq} - p N^{(p-1)q}L
&\ge p N^{(p-1)q}B + \frac{1}{2^{p+1}}(L + B)^p \\
&\ge \frac{p(q-1)}{8}N^{(p-1)q}G + \left(\frac{q-1}{100}\right)^{p} G^p
\end{align*}
The lemma concludes by combining both cases of whether $G > t N^q.$
\end{proof}

Finally, we prove \Cref{lem:qpIRUB}, the upper bound on the Bregman divergence of $\|\xx\|_q^{pq}$ which we restate below.

\qpIRUB*

\begin{proof}[Proof of \Cref{lem:qpIRUB}]
We first define the following quantities:
\begin{align*}
N &\defeq \|\xx\|_q \\
L &\defeq q \l\xx^{q-1}, \yy\r\\
B &\defeq \|\xx+\yy\|_q^q - \|\xx\|_q^q - L\\
G &\defeq \sum_{j \in [d]} \gamma_q(p\yy_j; \xx_j) 
\end{align*}
\Cref{lem:LqIR} implies that $B \le \frac{4}{p} G$ and, therefore, $\|\xx+\yy\|_q^q \le N^q + L + \frac{4}{p}G.$
\Cref{lem:LpIR} yields that
\begin{align*}
\norm{\xx + \yy}_q^{pq}
&\le \left(N^q + L + \frac{4}{p}G\right)^p \\
&\le N^{pq} + p N^{(p-1)q}\left(L + \frac{4}{p}G\right) + 2p^2N^{(p-2)q}\left(L + \frac{4}{p}G\right)^2 + p^p\left(L + \frac{4}{p}G\right)^p
\end{align*}

\Cref{claim:qpGradSize} implies that
\begin{align*}
    |L| = \frac{q}{p}\l\xx^{q-1}, p\yy\r \le \frac{\sqrt{2q}}{p} N^{\frac{q}{2}} G^{\frac{1}{2}} + \frac{q}{p} N^{q-1}G^{\frac{1}{q}}
\end{align*}
If $G \ge N^q$, we have $|L| \le \frac{4}{p}G$ and $|L + \frac{4}{p}G| \le \frac{8}{p}G.$
This implies that
\begin{align*}
V^{\|\cdot\|_q^{pq}}_{\xx}(\xx+\yy)
&= \norm{\xx + \yy}_q^{pq} - N^{pq} - p N^{(p-1)q}L \\
&\le 4 N^{(p-1)q}G + 128N^{(p-2)q}G^2 + 8^p G^p \\
&\le 200 N^{(p-1)q}G + 64^p G^p
\end{align*}
where the last inequality follows by $128 N^{(p-2)q}G^2 \le 16 N^{(p-1)q}G + 16^p G^p$ via Young's inequality for products.

Otherwise, we have $G < N^q$ and $|L| \le \frac{4}{p} N^{\frac{q}{2}}G^{\frac{1}{2}}.$
This implies that
\begin{align*}
    \left|L + \frac{4}{p}G\right| \le \frac{4}{p}N^{\frac{q}{2}}G^{\frac{1}{2}} + \frac{4}{p}G \le \frac{8}{p}N^{\frac{q}{2}}G^{\frac{1}{2}}
\end{align*}
Therefore, we have
\begin{align*}
V^{\|\cdot\|_q^{pq}}_{\xx}(\xx+\yy)
&= \norm{\xx + \yy}_q^{pq} - N^{pq} - p N^{(p-1)q}L \\
&\le 4 N^{(p-1)q}G + 2p^2N^{(p-2)q}\left(L+\frac{4}{p}G\right)^2 + p^p\left(L + \frac{4}{p}G\right)^p \\
&\le 4 N^{(p-1)q}G + 128 N^{(p-1)q}G + 8^p N^{\frac{pq}{2}}G^{\frac{p}{2}} \\
&\le 200 N^{(p-1)q}G + 64^p G^p
\end{align*}
where the last inequality follows from the fact that $8^p N^{\frac{pq}{2}}G^{\frac{p}{2}} \le 8^2 N^{(p-1)q}G + 8^{2p} G^p$ by Young's inequality for products.

The lemma follows from combining both cases of whether $G \ge N^q.$
\end{proof}

\section{Composite \texorpdfstring{$\ell_{1, \infty}$}{L1inf}-Regression and Directed Multi-Commodity Flows}
\label{sec:L1infMin}

In this section, we present the proofs of our main theorems stated in \Cref{sec:intro}.
In \Cref{subsec:1.4solve} we present the algorithm (\Cref{alg:composite_ell1inf_regression})
for approximately solving the composite $\ell_{1, \infty}$ regression problem (\Cref{prob:comp_ell1inf_regr}), along with its error and runtime guarantee (\Cref{thm:approxL1InfMin}).
In \Cref{sec:kcommflow} we present the applications of \Cref{thm:approxL1InfMin}, namely our improved multi-commodity flow results (\Cref{coro:approxConcurrentFlow,coro:approxMaxKCommFlow}). 

\subsection{Algorithm for Composite \texorpdfstring{$\ell_{1, \infty}$}{L1inf}-Regression}
\label{subsec:1.4solve}

First, we formally state the theorem regarding our algorithm for approximately solving the composite $\ell_{1, \infty}$-regression problem (\Cref{prob:comp_ell1inf_regr}). This theorem was previously stated as \Cref{informal:approxL1InfMin}. 

\begin{restatable}{theorem}{ApproxLOneInfMinThm}
\label{thm:approxL1InfMin}
Consider the setting of \Cref{prob:comp_ell1inf_regr} where, for some given $0 < R, \psibound \le \poly(m)$, each $\xset_j \subseteq [0, R]^m$ and $|\psi(\XX)| \in [0, \psibound], \forall \XX \in \xset$. 
Let $\XX^{\star} = \argmin_{\XX \in \xset} \cE_{1, \infty}(\XX)$.
Given $\epsilon, \delta > 1/\poly(m)$ and a $(\Time_j, \delSCO)$-SCO $\oracle_j$ on $\psi_j$ over $\xset_j$ for every $i \in [k]$, with $\delSCO > \frac{1}{\poly(m, k)}$ small enough, 
\Cref{alg:composite_ell1inf_regression} outputs, in $\almostTime(\sum_{j \in [k]} (\Time_j + m) \epsilon^{-1})$ time, $\algoutput{\XX} \in \xset$ such that
\begin{align*}
    \psi(\algoutput{\XX}) + \norm{\algoutput{\XX}}_{1, \infty} \le \psi(\XX^{\star}) + (1 + \eps)\norm{\XX^{\star}}_{1, \infty} + \delta.
\end{align*}
\end{restatable}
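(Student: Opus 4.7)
The plan is to cast \Cref{prob:comp_ell1inf_regr} as a constrained composite box-simplex game and invoke \Cref{thm:box_simplex_solver}, but with the outer minimization restricted to a small subset $\sset\subseteq\xset$ chosen to keep the gradient-operator radius small while still containing the optimum. Writing $\|\XX\|_{1,\infty}=\max_{\yy\in\Delta^m}\yy^\top\AA_{1,\infty}\vvec(\XX)$ with the $0/1$ matrix $\AA_{1,\infty}\in\R^{m\times mk}$ satisfying this identity (each row picking out one row of $\XX$), the problem becomes $\min_{\XX\in\xset}\max_{\yy\in\Delta^m}\yy^\top\AA_{1,\infty}\vvec(\XX)+\psi(\XX)$. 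With $p=2\lceil\sqrt{\log m}\rceil+1$, $q=1+1/p$, and an estimate $\linfbound^{\star}\ge\|\XX^\star\|_{1,\infty}$ (handled by a geometric guess-and-double loop contributing only $O(\log m)$ overhead), I would define, after rescaling so WLOG $R=1$,
\[
\sset \;:=\; \{\XX\in\xset\cap[0,1]^{m\times k}\;:\;\|\XX\|_{q,p}\le \linfbound^{\star}\cdot m^{1/(pq)}\}.
\]
The bound $|x|^q\le|x|$ on $[0,1]$ yields $\|\XX^\star\|_{q,p}^{pq}\le m\,\|\XX^\star\|_{1,\infty}^p$ and hence $\XX^\star\in\sset$, while a power-mean bound (cf.\ \Cref{obs:qp1inf}) gives $\linfbound:=\max_{\XX\in\sset}\||\AA_{1,\infty}|\XX\|_\infty=\max_{\XX\in\sset}\|\XX\|_{1,\infty}\le\linfbound^{\star}\cdot m^{o(1)}$.

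Next, I would invoke \Cref{thm:box_simplex_solver} on the game $(\sset,\Delta^m,\AA_{1,\infty})$ with this $\linfbound$, $\usetBound=1$, target accuracy $\eps'=\eps\cdot m^{-o(1)}$, and $\delBR=\delta/\poly(m)$; since $B=\polylog(m)$, this costs $\almostTime(\eps^{-1})$ ABRO calls. Its output $(\algoutput\XX,\algoutput\yy)\in\sset\times\Delta^m$ satisfies
\[
\langle\yy,\AA_{1,\infty}\vvec(\algoutput\XX)\rangle+\psi(\algoutput\XX)\;\le\;\langle\algoutput\yy,\AA_{1,\infty}\vvec(\XX)\rangle+\psi(\XX)+\eps'\linfbound+6\delBR,\qquad\forall\XX\in\sset,\yy\in\Delta^m.
\]
Taking the $\max$ over $\yy$ on the left and specializing $\XX=\XX^\star\in\sset$ on the right (with $\langle\algoutput\yy,\AA_{1,\infty}\vvec(\XX^\star)\rangle\le\|\XX^\star\|_{1,\infty}$) gives $\|\algoutput\XX\|_{1,\infty}+\psi(\algoutput\XX)\le\psi(\XX^\star)+(1+\eps)\|\XX^\star\|_{1,\infty}+\delta$, which is the required guarantee.

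The principal obstacle is implementing the $\delBR$-ABRO that \Cref{thm:box_simplex_solver} demands, since $\sset$ is \emph{not} decomposable over commodities even though $\xset$ is. An ABRO query reduces to approximately minimizing $\langle\hh,\XX\rangle+r^{\alpha,\xi}(\XX,\yy)+\BRconstant\psi(\XX)$ over $\sset$ for fixed $\yy\in\Delta^m$ and $\hh\in\R^{mk}$. I would dualize the single inequality constraint $\|\XX\|_{q,p}^{pq}\le(\linfbound^{\star}m^{1/(pq)})^{pq}$ via \Cref{lem:unconstrained_red}, binary-searching a multiplier $\coefnorm\ge 0$ and solving the unconstrained penalized problem
\[
\min_{\XX\in\xset\cap[0,1]^{m\times k}}\;\langle\hh,\XX\rangle+r^{\alpha,\xi}(\XX,\yy)+\BRconstant\psi(\XX)+\coefnorm\|\XX\|_{q,p}^{pq}.
\]
For fixed $\yy$ both $\langle\hh,\XX\rangle$ and $r^{\alpha,\xi}(\cdot,\yy)=\sum_{i,j}(\yy_i+\xi)|\AA_{ij}|(\XX_{ij}+\xi)\log(\XX_{ij}+\xi)$ are entrywise in $\XX$, each an $m$-decomposable linear-plus-$(x+\xi)\log(x+\xi)$ term (with the $[0,1]$ box enforced by a log barrier). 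These entrywise pieces can therefore be folded into the $m$-decomposable costs that each $(\Time_j,\delSCO)$-SCO $\oracle_j$ on $(\xset_j,\psi_j)$ already accepts, placing the penalized problem precisely in the hypothesis of \Cref{thm:comp_ellqp_regr}, which solves it to additive accuracy $1/\poly(m,k)$ in time $\almostTime(\sum_{j\in[k]}(\Time_j+m))$ using $\almostTime(k)$ SCO queries. Multiplying by the $\almostTime(\eps^{-1})$ outer ABRO calls and absorbing the $O(\log m)$ guess-and-double loop over $\linfbound^{\star}$ yields the claimed runtime. The most delicate bookkeeping will be (i) verifying that the SCO inputs remain $m$-decomposable with $\otilde(1)$-sized parameters despite the $\log(\XX_{ij}+\xi)$ factors and the $\coefnorm\|\cdot\|_{q,p}^{pq}$ penalty; (ii) cascading the tolerances $\delSCO,\delBR,\eps'$ through \Cref{lem:unconstrained_red} and through both ABRO conditions so that every additive error fits inside $\delta$; and (iii) checking the second ABRO condition, namely $\|\nabla_{\vset}r(\uu^\star,\yy)-\nabla_{\vset}r(\algoutput\uu,\yy)\|_\infty\le\delBR/2$, from the high-accuracy $\ell_{q,p}$ solve via smoothness of $r^{\alpha,\xi}$ in $\uu$.
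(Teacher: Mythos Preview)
Your proposal is correct and follows essentially the same approach as the paper: the geometric guess over $\linfbound^\star$, the restriction to $\sset$ via an $\ell_{q,p}$-ball, invoking \Cref{thm:box_simplex_solver} with the doubly-entropic regularizer, and implementing the ABRO by dualizing the $\ell_{q,p}$-constraint via \Cref{lem:unconstrained_red} and solving the penalized problem with \Cref{thm:comp_ellqp_regr} are exactly the paper's \Cref{thm:generalized_approxL1InfMin} and \Cref{thm:build_BR_oracle}. Your delicate points (i)--(iii) are handled in the paper by \Cref{lem:approxL1InfMinSubPBR} and \Cref{lem:BRFromAdditiveError}, which bound the Hessian of $\linfreg(\cdot,\bar\yy)$ above and below to convert high-accuracy additive error into both ABRO conditions.
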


The function $\mathsf{Composite\ell_{1, \infty}Regression}$ from \Cref{alg:composite_ell1inf_regression}, which we use to prove \Cref{thm:approxL1InfMin}, relies 
on the subroutine $\mathsf{Scaled\ell_{1, \infty}Regression}$.
More specifically, the algorithm, tries out a sequence of values $\Bar{\linfbound}$ in $[\delta, \boxbound]$ with the property that any 2 consecutive values are a factor of $2$ away. The goal is to find a value $\linfbound^{\star}$, for which the call to $\mathsf{Scaled\ell_{1, \infty}Regression}(\xset, \linfbound^{\star}, \{\psi_{ij}\}_{i \in [m], j \in [k]}, \{\oracle_j\}_{j \in [k]}, \epsilon/3, \delta/2)$ is guaranteed to output a good enough $\algoutput{\XX} \in \xset$. 
The pseudocode of \Cref{alg:composite_ell1inf_regression} is presented below. 

\begin{algorithm}
\caption{Composite $\ell_{1, \infty}$ Regression}\label{alg:composite_ell1inf_regression}
\KwData{Decomposable set $\xset = \prod_j \xset_j$, range $R$, functions $\{\psi_{ij}\}_{i \in [m], j \in [k]}$, $(\Time_j, \delSCO)$-SCO convex minimization oracles $\oracle_1, \ldots \oracle_k$ on $\xset_1, \ldots \xset_k$, and error parameters $\epsilon, \delta$}
\KwResult{$\algoutput{\XX}$ with guarantees in \Cref{thm:approxL1InfMin}}
\SetKwProg{Fn}{Function}{:}{}
\SetKwFunction{compregr}{Composite$\ell_{1, \infty}$Regression}
\SetKwFunction{scaledregr}{Scaled$\ell_{1, \infty}$Regression}
\SetKwFunction{Xgradoracle}{ApxXGradStep}
\SetKwFunction{gradoracle}{ApxGradStep}
\SetKwFunction{boxsimplex}{BoxSimplexOptimizer}
\Fn{\compregr{$\xset, R, \{\psi_{ij}\}_{i \in [m], j \in [k]}, \{\oracle_j\}_{j \in [k]}, \epsilon, \delta$}}{
    \For{$\ell \in \lceil \log_2 \frac{6kR}{\delta} \rceil$}{
        $\Bar{\linfbound} \gets 2^{-\ell + 1} k R, \enspace$ \label{line:choose_scale} \tcp*{$\Bar{\linfbound}$ is our guess for the value of $\|\XX^{\star}\|_{1, \infty}$} 
        $\XX^{(\ell)} \gets $ \scaledregr{$\xset, \Bar{\linfbound}, \{\psi_{ij}\}_{i \in [m], j \in [k]}, \{\oracle_j\}_{j \in [k]}, \epsilon / 3, \delta / 2$} \label{line:call_scaled_regr}
    }
    \Return $\argmin_{\{\XX^{(\ell)} \mid \ell \in \lceil \log_2 \frac{R}{R'} \rceil\}} \psi(\XX^{(\ell)}) + \|\XX^{(\ell)}\|_{1, \infty}$ \label{line:return_best_XX}
}
\Fn{\scaledregr{$\xset, \linfbound, \{\psi_{ij}\}_{i \in [m], j \in [k]}, \{\oracle_j\}_{j \in [k]}, \epsilon, \delta$}}{
    $\sset \defeq \{\XX \in \xset \cap [0, 1]^{m \times k}: \|\XX\|_{q, p} \le \linfbound \cdot m^{\frac{1}{pq}}\}$ \label{line:choose_constraintset} \tcp*{$p = 2 \lceil \sqrt{\log m} \rceil + 1, q = 1 + \frac{1}{p}$}
    $\oracle_{\bestrsp}^{\sset, \psi, \linfreg} \gets $ $\delBR$-ABRO given by \Cref{thm:build_BR_oracle} \label{line:build_BR_oracle} \;
    \tcp*[h]{$\linfreg$ is picked as in \Cref{def:specialized_reg}; $\delBR = \frac{\delta}{10}$ suffices}\; 
    $\algoutput{\XX} \gets$ \boxsimplex{$\psi, \sset, \AA_{1, \infty}, \linfbound \cdot m^{\frac{2}{\sqrt{\log m}}}, \boxbound, \oracle_{\bestrsp}^{\sset, \psi, \linfreg}, \epsilon \cdot m^{- \frac{2}{\sqrt{\log m}}}$} \label{line:call_boxsimplex} \tcp*{$\|\AA_{1, \infty} \vvec(\XX)\|_{\infty}$ computes $\|\XX\|_{1, \infty}$}
    \Return $\algoutput{\XX}$
}
\end{algorithm}

The subroutine $\mathsf{Scaled\ell_{1, \infty}Regression}$, when given parameter $\linfbound$, which can be thought of as a guess on $\norm{\XX^{\star}}_{1, \infty}$ (up to a $\almostTime(1)$ factor), calls \Cref{alg:box_simplex} over the space $\sset$, which is defined in terms of $\rho$, as in \Cref{line:choose_constraintset}.
The underlying
optimization method in \Cref{alg:box_simplex} is an extragradient method (as explained in \Cref{subsec:extragrad_overview}), 
with the following regularizer: 

\begin{definition}\label{def:specialized_reg} 
    Consider the setting of \Cref{prob:comp_ell1inf_regr}. 
    Let $\AA_{1, \infty} \in \R^{m \times mk}$ be the matrix with $(\AA_{1, \infty})_{i,\ell} = 1$ if and only if $\ell \mod k = i$. For a value, $\linfbound > 0$, define $\xi = \min(1, \frac{\linfbound}{k})$ and $\alpha = 4 \linfbound m^{\frac{2}{\sqrt{\log m}}} \log (\max(\frac{1}{\xi}, \boxbound + \xi))$, 
    and 
    $\linfreg: \xset \times \Delta^m \to \R$ by 
    \[\linfreg(\XX, \yy) = \sum_{i \in [m], j \in [km]} (\yy_i + \xi) |(\AA_{1, \infty})_{ij}|  (\vvec(\XX)_{ij} + \xi) \log (\vvec(\XX)_{ij} + \xi) + \alpha \sum_{i \in [m]} \yy_i \log \yy_i \,.\]
\end{definition}

Alternatively, we can write 
\[\linfreg(\XX, \yy) = \sum_{i \in [m], j \in [k]} (\yy_i + \xi) (\XX_{ji} + \xi) \log (\XX_{ji} + \xi) + \alpha \sum_{i \in [m]} \yy_i \log \yy_i \,.\]
This regularizer is the same as the regularizer defined in \Cref{line:choose_r_alpha} of \Cref{alg:box_simplex} corresponding to the parameters given to the call in \Cref{line:call_boxsimplex} of \Cref{alg:composite_ell1inf_regression}. Specifically, in the context of \Cref{alg:box_simplex}, and its corresponding performance guarantee \Cref{thm:box_simplex_solver}, we work with $\uset = \sset$, $\AA = \AA_{1, \infty}$, the upper bound $\linfbound \cdot m^{\frac{2}{\sqrt{\log m}}}$ on $\max_{\XX \in \sset} \||\AA_{1, \infty}| \vvec(\XX)\|_{\infty} = \max_{\XX \in \sset} \|\XX\|_{1, \infty}$ and the upper bound $\boxbound_{\xset} = \boxbound$ on $\mathrm{diam}(\xset)$. 

This subroutine $\mathsf{Scaled\ell_{1, \infty}Regression}$, if given a parameter $\linfbound$ with the property that there exists some $\anyX \in \xset$ with $\frac{3}{2} \|\anyX\|_{1, \infty} \le \linfbound$, will return a solution $\algoutput{\XX} \in \sset$ with an additive error $\epsilon \linfbound + \delta$ for the problem $\min_{\XX \in \sset} \psi(\XX) + \|\XX\|_{1, \infty}$. 
This is summarized in the theorem below. 

\begin{theorem}
\label{thm:generalized_approxL1InfMin}
Consider the setting of \Cref{prob:comp_ell1inf_regr} where, for some given $0 < R, \psibound \le \poly(m)$, each $\xset_j \subseteq [0, R]^m$ and $|\psi(\XX)| \in [0, \psibound], \forall \XX \in \xset$. 
Let $\anyX$ be a feasible solution and $\linfbound > 0$ with $\linfbound \ge \frac{3}{2} \|\anyX\|_{1, \infty}$. 
Given $\epsilon, \delta > 1/\poly(m)$ and a $(\Time_j, \delSCO)$-SCO $\oracle_j$ on $\psi_j$ over $\xset_j$ for every $i \in [k]$, with $\delSCO > \frac{1}{\poly(m, k)}$ small enough. 
Then, the call $\mathsf{Scaled\ell_{1,\infty}Regression}(\xset, \linfbound, \{\psi_{ij}\}_{i \in [m], j \in [k]}, \{\oracle_j\}_{j \in [k]}, \epsilon)$ in \Cref{alg:composite_ell1inf_regression}  
outputs, in $\almostTime(\sum_{j \in [k]} (\Time_j + m) \epsilon^{-1})$ time, $\algoutput{\XX} \in \xset$ such that
\begin{align*}
\psi(\algoutput{\XX}) + \norm{\algoutput{\XX}}_{1, \infty} \le \psi(\anyX) + \norm{\anyX}_{1, \infty} + \eps \linfbound + \delta.  
\end{align*}
\end{theorem}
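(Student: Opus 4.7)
The plan is to cast the problem as a composite box-simplex game and instantiate \Cref{thm:box_simplex_solver} on the restricted domain $\sset$. Using the identity $\|\XX\|_{1,\infty}=\max_{\yy\in\Delta^m}\yy^\top\AA_{1,\infty}\vvec(\XX)$, the target $\min_{\XX\in\sset}\psi(\XX)+\|\XX\|_{1,\infty}$ becomes a composite box-simplex game $(\sset,\Delta^m,\AA_{1,\infty})$. The conclusion will follow from three ingredients: (i) $\anyX\in\sset$, so restricting to $\sset$ does not hurt the comparator; (ii) by \Cref{obs:qp1inf}, the width $\max_{\XX\in\sset}\|\XX\|_{1,\infty}$ of the box-simplex game is at most $\linfbound\cdot m^{2/\sqrt{\log m}}$; and (iii) tuning the accuracy and ABRO error passed to \Cref{thm:box_simplex_solver} so the final error matches $\epsilon\linfbound+\delta$.

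For (i), the hypothesis $\linfbound\ge \tfrac{3}{2}\|\anyX\|_{1,\infty}$ already yields $\anyX_{ij}\le\|\anyX\|_{1,\infty}\le\tfrac{2}{3}\linfbound\le 1$ at the scale at which this subroutine is invoked by the outer loop of \Cref{alg:composite_ell1inf_regression}, placing $\anyX\in[0,1]^{m\times k}$. For the $\ell_{q,p}$ constraint I will use the row-wise estimate $\sum_j\anyX_{ij}^q\le (\max_j\anyX_{ij})^{q-1}\sum_j\anyX_{ij}\le\|\anyX\|_{1,\infty}^{q}$, which gives
\[
\|\anyX\|_{q,p}^{pq}=\sum_{i}\Bigl(\sum_{j}\anyX_{ij}^q\Bigr)^p\le m\,\|\anyX\|_{1,\infty}^{pq}\le m\linfbound^{pq},
\]
so $\|\anyX\|_{q,p}\le\linfbound\cdot m^{1/(pq)}$, fulfilling the definition of $\sset$ in \Cref{line:choose_constraintset}. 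For (ii), with the chosen $p=2\lceil\sqrt{\log m}\rceil+1$ and $q=1+1/p$, \Cref{obs:qp1inf} converts the $\ell_{q,p}$ constraint into $\|\XX\|_{1,\infty}\le\linfbound\cdot m^{2/\sqrt{\log m}}$ for every $\XX\in\sset$. Invoking \Cref{thm:box_simplex_solver} with width $\linfbound\cdot m^{2/\sqrt{\log m}}$, accuracy $\epsilon\cdot m^{-2/\sqrt{\log m}}$, and ABRO error $\delBR=\delta/10$ (produced by \Cref{thm:build_BR_oracle}) returns $(\algoutput{\XX},\algoutput{\vv})$ with
\[
\vv^\top\AA_{1,\infty}\vvec(\algoutput{\XX})+\psi(\algoutput{\XX})\le\algoutput{\vv}^\top\AA_{1,\infty}\vvec(\uu)+\psi(\uu)+\epsilon\linfbound+6\delBR
\]
for every $(\uu,\vv)\in\sset\times\Delta^m$. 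Taking $\vv$ as the row-sum maximizer for $\algoutput{\XX}$ on the left, $\uu=\anyX$ on the right, and using both $\algoutput{\vv}^\top\AA_{1,\infty}\vvec(\anyX)\le\|\anyX\|_{1,\infty}$ and $6\delBR\le\delta$ produces exactly the claimed inequality.

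For the runtime, \Cref{thm:box_simplex_solver} performs $\almostTime(B\,\epsilon^{-1}\cdot m^{2/\sqrt{\log m}})=\almostTime(\epsilon^{-1})$ ABRO calls, and each ABRO call of \Cref{thm:build_BR_oracle} reduces, through the binary search of \Cref{lem:unconstrained_red}, to $O(1)$ composite $\ell_{q,p}$-regression instances on $\xset$, each solvable in $\almostTime(\sum_j(\Time_j+m))$ time via \Cref{thm:comp_ellqp_regr}; multiplying gives the stated bound. The main technical obstacle I expect is ensuring that the best-response subproblem on $\sset$---which carries a non-separable $\ell_{q,p}$ constraint coupling the $k$ blocks---can be solved within the per-call budget; it is precisely the $\O_{q,p}(k)$ (rather than $k^2$) query complexity of \Cref{thm:comp_ellqp_regr} that makes this affordable, and verifying that the $\delBR=\delta/10$ and $\delSCO=\delta/\poly(m,k)$ tolerances compose correctly through \Cref{thm:build_BR_oracle} is where the bookkeeping is most delicate.
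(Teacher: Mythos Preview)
Your proposal is correct and follows essentially the same approach as the paper: both cast the problem as a constrained composite box-simplex game over $\sset$, invoke \Cref{thm:box_simplex_solver} with width bound $\linfbound\cdot m^{2/\sqrt{\log m}}$ (via \Cref{obs:qp1inf}) and the ABRO from \Cref{thm:build_BR_oracle} with $\delBR=\delta/10$, then specialize the resulting saddle-point guarantee to the comparator $\anyX$. Your explicit verification that $\anyX\in\sset$ is a detail the paper leaves largely implicit (the $\ell_{q,p}$ part is argued inside the proof of \Cref{thm:build_BR_oracle}, and the $[0,1]^{m\times k}$ part is not addressed there either).
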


We now show how to prove \Cref{thm:approxL1InfMin} given the above correctness and efficiency guarantee on the subroutine $\mathsf{Scaled\ell_{1, \infty}Regression}$. 

\begin{proof}[Proof of \Cref{thm:approxL1InfMin}]
    We show the call $\mathsf{Composite\ell_{1, \infty}Regression}(\xset, R, \{\psi_{ij}\}_{i \in [m], j \in [k]}, \{\oracle_j\}_{j \in [k]}, \epsilon, \delta)$ yields the desired output. First, note that by \Cref{thm:generalized_approxL1InfMin}, every call of $\mathsf{Scaled\ell_{1, \infty}Regression}$ in \Cref{line:call_scaled_regr} calls each of the $(\Time_j, \delSCO)$-SCOs $\almostTime(\eps^{-1})$ times, Since $\frac{\boxbound}{\delta} \le \poly(m)$, as $\delta > 1/\poly(m)$, each of the $(\Time_j, \delSCO)$-SCOs is called $\almostTime(\eps^{-1})$ times during the execution of $\mathsf{Composite\ell_{1, \infty}Regression}$. Hence, it suffices to show correctness. 
    In particular, note that it suffices to show that for at least one of the values $\Bar{\linfbound}$ in \Cref{line:choose_scale}, the output $\algoutput{\XX}$ of $\mathsf{Scaled\ell_{1, \infty}Regression}(\xset, \Bar{\linfbound}, \{\psi_{ij}\}_{i \in [m], j \in [k]}, \{\oracle_j\}_{j \in [k]}, \epsilon / 3, \delta/2)$ satisfies 
    \[\psi(\algoutput{\XX}) + \norm{\algoutput{\XX}}_{1, \infty} \le \psi(\XX^{\star}) + (1 + \epsilon) \norm{\XX^{\star}}_{1, \infty} + \delta\,.\]
    
    We consider two cases. The first case is $\|\XX^{\star}\|_{1, \infty} \le \delta$. For this case, by \Cref{thm:generalized_approxL1InfMin}, the call 
    $\mathsf{Scaled\ell_{1, \infty}Regression}(\xset, 2^{- \lceil \log_2 \frac{6kR}{\delta} \rceil + 1} k \boxbound, \{\psi_{ij}\}_{i \in [m], j \in [k]}, \{\oracle_j\}_{j \in [k]}, \epsilon / 3)$ outputs $\algoutput{\XX}$ with 
    \[\psi(\algoutput{\XX}) + \norm{\algoutput{\XX}}_{1, \infty} \le \psi(\XX^{\star}) + \norm{\XX^{\star}}_{1, \infty} + \epsilon \cdot 2^{- \lceil \log_2 \frac{6kR}{\delta} \rceil + 1} k \boxbound / 3 + \delta / 2.\]
    Since $2^{- \lceil \log_2 \frac{6kR}{\delta} \rceil + 1} k \boxbound \le \delta$, we have $\psi(\algoutput{\XX}) + \norm{\algoutput{\XX}}_{1, \infty} \le \psi(\XX^{\star}) + \norm{\XX^{\star}}_{1, \infty} + \delta$, which implies that $\Bar{\linfbound} = 2^{- \lceil \log_2 \frac{6kR}{\delta} \rceil + 1} k \boxbound$ yields a suitable $\algoutput{\XX}$. 
    
    The second case is $\|\XX^{\star}\|_{1, \infty} \ge \delta$. 
    Let $\linfbound^{\star}$ be a value picked in \Cref{line:choose_scale} with the property that $\frac{3}{2} \|\XX^{\star}\|_{1, \infty} \le \linfbound^{\star} < 3 \|\XX^{\star}\|_{1, \infty}$. It is clear that such a value exists since the values of $\Bar{\linfbound}$ in \Cref{line:choose_scale} are picked to be the form $2^{-\ell + 1} kR$ for $\ell \in \lceil \log_2 \frac{3kR}{R'} \rceil$, thus consecutive values are apart by a factor of at most $2$, and the largest $\|\XX^{\star}\|_{1, \infty}$ can be is $kR$. 
    By \Cref{thm:generalized_approxL1InfMin}, the call $\mathsf{Scaled\ell_{1, \infty}Regression}(\xset, \linfbound^{\star}, \{\psi_{ij}\}_{i \in [m], j \in [k]}, \{\oracle_j\}_{j \in [k]}, \epsilon / 3)$ outputs $\algoutput{\XX}$ so that $\psi(\algoutput{\XX}) + \|\algoutput{\XX}\|_{1, \infty} \le \psi(\XX^{\star}) + \|\XX^{\star}\|_{1, \infty} + \epsilon \linfbound^{\star} / 3$. Since $\linfbound^{\star} / 3 < \|\XX^{\star}\|_{1, \infty}$, we have the desired bound of
    \[\psi(\algoutput{\XX}) + \norm{\algoutput{\XX}}_{1, \infty} \le \psi(\XX^{\star}) + (1 + \epsilon) \norm{\XX^{\star}}_{1, \infty} + \delta\,.\]
\end{proof}

The rest of the subsection is dedicated to proving \Cref{thm:generalized_approxL1InfMin}. 
First, we start with a lemma which informally says that the $\|\cdot\|_{1, \infty}$ norm is well-approximated by $\|\cdot\|_{q, p}$ norm, in the sense that the approximation factor is $m^{o(1)}$ when $p = O(\sqrt{\log m}), q = 1 + \frac{1}{p}$. 

\begin{lemma}\label{obs:sset}
\label{obs:qp1inf}
For any $\XX \in \xset$, and $1 < q \le p$, we have 
\[m^{-\frac{1}{pq}} \cdot \norm{\XX}_{q,p} \le \norm{\XX}_{1, \infty} \le m^{1 - \frac{1}{q}} \cdot \norm{\XX}_{q,p}\,.\]
\end{lemma}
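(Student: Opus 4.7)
The plan is to reduce both inequalities to elementary norm comparisons, applied separately along the ``inner'' ($\ell_q$ over $k$ columns versus $\ell_1$ over $k$ columns) and ``outer'' ($\ell_p$ over $m$ rows versus $\ell_\infty$ over $m$ rows) directions of the mixed norm. Write $r_i(\XX) \defeq \sum_{j \in [k]} |\XX_{ij}|^q$, so that $\|\XX\|_{q,p}^{pq} = \sum_{i \in [m]} r_i(\XX)^p$ and $\|\XX\|_{1,\infty} = \max_{i \in [m]} \sum_{j \in [k]} |\XX_{ij}|$.

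For the left inequality, I would first observe the pointwise row bound that follows from $\ell_q \le \ell_1$ when $q \ge 1$: on each row, $(\sum_j |\XX_{ij}|^q)^{1/q} \le \sum_j |\XX_{ij}| \le \|\XX\|_{1,\infty}$, hence $r_i(\XX) \le \|\XX\|_{1,\infty}^q$. Raising to the $p$-th power and summing over the $m$ rows gives $\|\XX\|_{q,p}^{pq} = \sum_{i \in [m]} r_i(\XX)^p \le m \cdot \|\XX\|_{1,\infty}^{pq}$, which, upon taking $(pq)$-th roots, is exactly $m^{-1/(pq)}\|\XX\|_{q,p} \le \|\XX\|_{1,\infty}$.

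For the right inequality, I would apply Hölder's (or power-mean) inequality inside a single row to trade the inner $\ell_1$ for $\ell_q$: for each $i$, $\sum_{j \in [k]} |\XX_{ij}| \le k^{1-1/q}\bigl(\sum_{j \in [k]} |\XX_{ij}|^q\bigr)^{1/q} = k^{1-1/q} r_i(\XX)^{1/q}$. Then, to pass from the inner maximum over rows to the outer $\ell_p$-norm, use the standard $\ell_\infty \le \ell_p$ bound on the nonnegative vector $(r_i(\XX))_{i \in [m]}$, namely $\max_{i\in[m]} r_i(\XX) \le \bigl(\sum_{i\in[m]} r_i(\XX)^p\bigr)^{1/p} = \|\XX\|_{q,p}^q$. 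Raising to the $1/q$ power and combining with the Hölder step yields $\|\XX\|_{1,\infty} \le k^{1-1/q}\|\XX\|_{q,p}$, which (using the implicit convention $k \le m$ relevant to the MCF applications, so that $k^{1-1/q} \le m^{1-1/q}$) gives the stated bound.

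There is no real obstacle here: both directions are immediate from the elementary monotonicity of $\ell_p$ norms in finite dimension (Hölder / power-mean) combined with the fact that the $\ell_{q,p}$ norm decomposes cleanly into an inner $\ell_q$ then an outer $\ell_p$. The only point requiring a word of care is the factor $m^{1-1/q}$ in the second inequality: this is tight up to the comparison between $k$ and $m$, and since throughout the intended applications (composite $\ell_{1,\infty}$-regression and MCF) we have $k \le \poly(m)$ with $p = \Theta(\sqrt{\log m})$ and $q = 1 + 1/p$, the factor $k^{1-1/q} = k^{1/(p+1)}$ is $m^{o(1)}$, which is what is ultimately used downstream in defining the restricted set $\sset$ and bounding $\max_{\XX \in \sset}\|\XX\|_{1,\infty}$.
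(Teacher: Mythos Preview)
Your proposal is correct and follows essentially the same approach as the paper: both directions decompose into an inner $\ell_q$ vs.\ $\ell_1$ comparison (via H\"older/monotonicity of $\ell_p$ norms) and an outer $\ell_p$ vs.\ $\ell_\infty$ comparison, applied in the obvious order. The paper's proof is nearly identical, differing only in that it routes the right inequality through the intermediate norm $\|\XX\|_{1,pq}$ (i.e., it swaps the order of the two comparisons), and it too arrives at the factor $k^{1-1/q}$ rather than $m^{1-1/q}$; your remark that the stated $m^{1-1/q}$ relies on an implicit $k \le m$ (or, more generally, that only the $m^{o(1)}$ consequence matters downstream) is exactly on point and matches the paper's own argument.
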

\begin{proof}
First, for any $\XX \in \xset$, we have
\begin{align*}
\norm{\XX}_{1, \infty} \le \norm{\XX}_{1, pq} \le m^{1-\frac{1}{q}}\norm{\XX}_{q, p} = m^{1-\frac{1}{q}} \norm{\XX}_{q, p}\,,
\end{align*}
where for the first inequality we used $\max_{i \in [m]} \sum_{j \in [k]} \XX_{ij} \le (\sum_{i \in [m]} (\sum_{j \in [k]} \XX_{ij})^{pq})^{\frac{1}{pq}}$, and for the second inequality we used $(\sum_{j \in [k]} \XX_{ij})^q \le k^{q-1} \sum_{j \in [k]} \XX_{ij}^q$, which is given by Holder's inequality. 
Second, we have 
\[\norm{\XX}_{q, p} \le \left(\sum_{i \in [m]} (\sum_{j \in [k]} \XX_{ij})^{pq} \right)^{\frac{1}{pq}} \le \left(m \norm{\XX}_{1, \infty}^{pq}\right)^{\frac{1}{pq}} = m^{\frac{1}{pq}} \cdot \norm{\XX}_{1,\infty}\,,\]
where for the first inequality we used the positivity of $\XX_{ij}$ and that $q > 1$ to obtain $\sum_{j \in [k]} |\XX_{ij}|^{q} \le (\sum_{j \in [k]} \XX_{ij})^q$, and for the second inequality we used that, by definition $\sum_{j \in [k]} \XX_{ij} \le \norm{\XX}_{1, \infty}$. 
\end{proof}

\Cref{obs:sset} is useful for bounding the size of $\sset$ favorably in terms of $\linfbound$ (here $\sset$ is picked as a function of $\linfbound$ like in \Cref{line:choose_constraintset}). Specifically, it 
helps us bound the quantity $\max_{\XX \in \sset} \|\AA_{1, \infty} \vvec(\XX)\|_{\infty} = \max_{\XX \in \sset} \|\XX\|_{1, \infty}$ (here $\AA_{1, \infty}$ is as defined in \Cref{def:specialized_reg}) by $\linfbound \cdot m^{\frac{2}{\sqrt{\log m}}}$ (see \Cref{line:call_boxsimplex}). This favorable bound on $\max_{\XX \in \sset} \|\XX\|_{1, \infty}$ helps us obtain additive error $\epsilon \linfbound + \delta$ in $\almostTime(\eps^{-1})$ calls to our ABRO defined in \Cref{line:build_BR_oracle} by invoking \Cref{thm:box_simplex_solver}.
Next, we show how to implement the ABRO (formally defined in \Cref{defn:apx_best_resp}) in \Cref{line:build_BR_oracle}. This is captured by the theorem below. 

\begin{theorem}\label{thm:build_BR_oracle}
    Consider the setting of \Cref{prob:comp_ell1inf_regr} where, for some given $0 < R, \psibound \le \poly(m)$, each $\xset_j \subseteq [0, R]^m$ and $|\psi(\XX)| \in [0, \psibound], \forall \XX \in \xset$. 
    Let $\linfbound > 0$ with $\frac{3}{2} \|\anyX\|_{1, \infty} \le \linfbound$ for some $\anyX \in \xset$ and $\sset = \{\XX \in \xset: \|\XX\|_{q, p} \le \linfbound \cdot m^{\frac{1}{pq}}\}$, where $p = 2 \lceil \sqrt{\log m} \rceil + 1, q = 1 + \frac{1}{p}$. 
    Then, given $\delBR > 1/\poly(m)$ and a $(\Time_j, \frac{\delBR}{\poly(m, k)})$-SCO $\oracle_j$ on $\psi_j$ over $\xset_j$ for every $j \in [k]$,
    we can implement a $\delBR$-ABRO with respect to $(\sset, \psi, \linfreg)$ on $(\VV, \yy)$ with $\|\VV\|_{\infty} \le m^5$, in time $\almostTime(\sum_{j \in [k]} (\Time_j + m))$.
\end{theorem}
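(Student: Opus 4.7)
On input $(\VV, \yy, c)$ with $\|\VV\|_\infty \le m^5$ and $c \in [1/2, 1]$, the ABRO must approximately minimize
\[
F(\XX) \defeq \langle \VV, \XX \rangle + \linfreg(\XX, \yy) + c\,\psi(\XX)
\]
over $\sset = \{\XX \in \xset : \|\XX\|_{q, p} \le \linfbound m^{1/(pq)}\}$. The crucial observation is that, for fixed $\yy$, the regularizer $\linfreg(\cdot, \yy)$ is \emph{entry-wise separable}: each coordinate $\XX_{ji}$ contributes $(\yy_i + \xi)(\XX_{ji} + \xi)\log(\XX_{ji} + \xi)$, which, together with the linear term $\VV_{ji}\XX_{ji}$ and the decomposable $\psi_j$, is $m$-decomposable. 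Thus the only non-separable piece is the coupling constraint $\|\XX\|_{q, p} \le \linfbound m^{1/(pq)}$.

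I would then apply the binary-search reduction \Cref{lem:unconstrained_red} to convert this constrained problem into $\polylog(m)$ instances of the unconstrained composite $\ell_{q, p}$-regression problem
\[
\min_{\XX \in \xset}\ F(\XX) + \coefnorm \|\XX\|_{q, p}^{pq}
\]
for a sequence of Lagrange multipliers $\coefnorm$. Each instance fits \Cref{prob:comp_ellqp_regr} provided we can supply, for each block $j \in [k]$, an SCO on $(\xset_j, c\psi_j + \sum_i [\VV_{ji}\XX_{ji} + (\yy_i+\xi)(\XX_{ji}+\xi)\log(\XX_{ji}+\xi)])$. This is built by folding the per-entry decomposable costs into a single query to the given $\oracle_j$, so that the new oracle has complexity $O(\Time_j + m)$ and the same additive error $\delSCO = \delBR/\poly(m, k)$. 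Invoking \Cref{thm:comp_ellqp_regr} with our choice $p = 2\lceil\sqrt{\log m}\rceil+1$, $q = 1+1/p$ (which makes $(p/(q-1))^{O(1/(q-1))} = m^{o(1)}$) solves each instance to additive error $3k\delSCO = \delBR/\poly(m,k)$ in $\almostTime(\sum_{j}(\Time_j + m))$ time. After the $\polylog(m)$ rounds of binary search on $\coefnorm$, I obtain $\algoutput{\XX} \in \sset$ with $F(\algoutput{\XX}) \le F(\XX^\star) + \delBR/\poly(m,k)$, where $\XX^\star$ is the exact minimizer of $F$ over $\sset$.

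Condition~1 of \Cref{defn:apx_best_resp} then follows directly from the first-order approximate optimality of $\algoutput{\XX}$ and the convexity of the linear plus regularizer portion $F - c\psi$. The more delicate piece, and the main obstacle, is condition~2: the $\ell_\infty$ stability bound $\|\nabla_{\yy}\linfreg(\XX^\star, \yy) - \nabla_{\yy}\linfreg(\algoutput{\XX}, \yy)\|_\infty \le \delBR/2$. Since $\nabla_{\yy_i}\linfreg(\XX, \yy) = \sum_j (\XX_{ji}+\xi)\log(\XX_{ji}+\xi) + \alpha(\log \yy_i + 1)$ and $t \mapsto (t+\xi)\log(t+\xi)$ is $\polylog(m,k)$-Lipschitz on $[0, R]$, it suffices to control $\sum_j |\algoutput{\XX}_{ji} - \XX^\star_{ji}|$ by $\delBR/\polylog(m,k)$ for each $i$. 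I would extract this from the \emph{local strong convexity} of $\linfreg$ inherited from the logarithmic terms: the per-entry Hessian of $(\yy_i + \xi)(\XX_{ji} + \xi)\log(\XX_{ji} + \xi)$ is $(\yy_i+\xi)/(\XX_{ji}+\xi) \ge \xi/(R+\xi) \ge 1/\poly(m,k)$ by our choice of $\xi$ in \Cref{def:specialized_reg}. Global strong convexity with parameter $\mu \ge 1/\poly(m,k)$ then converts the function gap $\delBR/\poly(m,k)$ into an $\ell_2$ distance $\|\algoutput{\XX} - \XX^\star\|_2 \le \sqrt{2\delBR/(\mu\poly(m,k))}$, and a Cauchy--Schwarz step bounds each column sum by $\sqrt{k}$ times this quantity; choosing the polynomial in $\delSCO = \delBR/\poly(m,k)$ large enough (which we are allowed to do since the hypothesis gives us an SCO with that precision) closes the loop. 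The remainder of the proof is bookkeeping to verify polynomial scaling of all intermediate quantities under $\|\VV\|_\infty \le m^5$ and $|\psi|, R \le \poly(m)$.
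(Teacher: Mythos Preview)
Your outline matches the paper's approach closely: reduce the constrained minimization over $\sset$ to unconstrained composite $\ell_{q,p}$-regression via the binary search of \Cref{lem:unconstrained_red}, solve each instance with \Cref{thm:comp_ellqp_regr} using the SCOs (folding the separable entropy terms $(\yy_i+\xi)(\XX_{ji}+\xi)\log(\XX_{ji}+\xi)$ into the per-entry decomposable costs), and then use the local strong convexity of $\linfreg(\cdot,\yy)$ to convert the function-value gap into an $\ell_2$ distance bound $\|\algoutput{\XX}-\XX^\star\|_2$. The paper packages the last step as \Cref{lem:approxL1InfMinSubPBR} together with the general \Cref{lem:BRFromAdditiveError}.

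There is, however, a genuine gap in your treatment of Condition~1 of \Cref{defn:apx_best_resp}. You assert that it ``follows directly from the first-order approximate optimality of $\algoutput{\XX}$ and the convexity of $F-c\psi$'', but all you actually have is a \emph{function-value} bound $F(\algoutput{\XX})\le F(\XX^\star)+\delta$. Convexity of $g\defeq F-c\psi$ only yields the \emph{lower} bound $\langle\nabla g(\algoutput{\XX}),\algoutput{\XX}-\XX\rangle\ge g(\algoutput{\XX})-g(\XX)$, which goes the wrong way for the ABRO inequality. The paper's \Cref{lem:BRFromAdditiveError} handles Condition~1 by combining (i) the exact first-order condition at $\XX^\star$, (ii) the function-value bound plus convexity (which together give $\langle\VV+\nabla_\XX\linfreg(\XX^\star,\yy),\algoutput{\XX}-\XX\rangle+c(\psi(\algoutput{\XX})-\psi(\XX))\le\delta$), and then (iii) a gradient-Lipschitz (\emph{smoothness}) bound $\lambda_{\max}=O(k/\xi)$ on $\nabla_\XX\linfreg(\cdot,\yy)$ together with the strong-convexity distance bound to control $\langle\nabla_\XX\linfreg(\algoutput{\XX},\yy)-\nabla_\XX\linfreg(\XX^\star,\yy),\algoutput{\XX}-\XX\rangle$. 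So Condition~1 needs exactly the same strong-convexity machinery you developed for Condition~2, \emph{plus} an upper Hessian bound; it is not the easy half. Since you already have the lower Hessian bound $\lambda_{\min}\ge\xi/(R+\xi)$ and the analogous upper bound $\lambda_{\max}\le(1+\xi)/\xi$ is equally immediate, the fix is straightforward once you recognize the omission.
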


Before proving \Cref{thm:build_BR_oracle}, we present the proof of \Cref{thm:generalized_approxL1InfMin}, as we have gathered all the ingredients for it. 

\begin{proof}[Proof of \Cref{thm:generalized_approxL1InfMin}]
In the context of \Cref{thm:box_simplex_solver}, we set $\uset = \{\vvec(\XX), \XX \in \sset\}, \vset = \Delta^m$ and the matrix $\AA = \AA_{1, \infty} \in \R^{m \times mk}$ as defined in \Cref{def:specialized_reg}. 
Note that $\boxbound \ge \mathrm{diam}(\uset)$ and that by \Cref{obs:sset} $\linfbound \cdot m^{\frac{2}{\sqrt{\log m}}} \ge \max_{\XX \in \sset} \|\XX\|_{1, \infty} = \max_{\XX \in \sset} \|\AA_{1, \infty} \vvec(\XX)\|_{\infty}$, where $\sset$ is the set defined as in \Cref{line:choose_constraintset}. 
Next, note that from the definition of $\xi$ in \Cref{def:specialized_reg}, we have that $\frac{1}{\xi} \ge 1$ and $\xi \le \boxbound$. Thus, $B := \log(\max(\frac{1}{\xi}, \boxbound + \xi)) = O(\log m)$, as $\boxbound \le \poly(m)$. By \Cref{thm:box_simplex_solver}, to implement \Cref{alg:box_simplex}, it suffices to have access to a $\delBR$-ABRO $\oracle_{\bestrsp}^{\uset, \psi, r}$ that is called on inputs $(\VV, \yy)$ with $\|\VV\|_{\infty} \le m^2 k B \|\AA_{1, \infty}\|_{\infty}$. Since $\|\AA_{1, \infty}\|_{\infty} = k$ and $m^2 k^2 \log m \le m^5$, \Cref{thm:build_BR_oracle} gives us that we can indeed implement the calls to such an oracle $\oracle_{\bestrsp}^{\uset, \psi, r}$ (defined in \Cref{line:build_BR_oracle}) in time $\almostTime(\sum_{j \in [k]} (\Time_j +m)$. 
Hence, by \Cref{thm:box_simplex_solver}, the call in \Cref{line:call_boxsimplex} outputs $\algoutput{\XX}, \algoutput{\yy}$ with 
\[\langle \yy, \AA_{1, \infty} \vvec(\algoutput{\XX}) \rangle + \psi(\algoutput{\XX}) \le \langle \algoutput{\yy}, \AA_{1, \infty} \vvec(\XX) \rangle + \psi(\XX) + \epsilon \linfbound + 6 \delBR, \enspace \forall \XX \in \sset, \yy \in \vset \,.\]
Since $\max_{\yy \in \Delta^m} \langle \yy, \AA_{1, \infty} \vvec(\algoutput{\XX}) \rangle = \|\algoutput{\XX}\|_{1, \infty}$ and $\langle \algoutput{\yy}, \AA_{1, \infty} \vvec(\XX) \rangle \le \|\XX\|_{1, \infty}, \enspace \forall \XX \in \sset$, we obtain, by setting $\delBR = \frac{\delta}{10}$, 
\[\|\algoutput{\XX}\|_{1, \infty} + \psi(\algoutput{\XX}) \le \min_{\XX \in \sset} \|\XX\|_{1, \infty} + \psi(\XX) + \epsilon \linfbound + \delta\,,\]
as needed. 
Moreover, by \Cref{thm:box_simplex_solver}, \Cref{line:call_boxsimplex} takes $\almostTime(m k \epsilon^{-1})$ time and $\almostTime(\epsilon^{-1})$ calls to $\oracle_{\bestrsp}^{\uset, \psi, r}$. 
By \Cref{thm:build_BR_oracle}, 
each call to $\oracle_{\bestrsp}^{\uset, \psi, r}$ takes $\almostTime(\sum_{j \in [k]} (\Time_j +m) \epsilon^{-1})$ time. Hence, 
the call to our subroutine $\mathsf{Scaled\ell_{1,\infty}Regression}(\xset, \linfbound, \{\psi_{ij}\}_{i \in [m], j \in [k]}, \{\oracle_j\}_{j \in [k]}, \epsilon, \delta)$ takes $\almostTime(\sum_{j \in [k]} (\Time_j +m) \epsilon^{-1})$ time and outputs $\algoutput{\XX}$ with the desired guarantees. 
\end{proof}

The proof of \Cref{thm:build_BR_oracle} relies on showing that we can reduce approximately solving the constrained problem 
\begin{align}\label{eq:approxL1InfMinSubP}
        \min_{\XX \in \sset} \langle \VV, \XX \rangle + \linfreg(\XX, \Bar{\yy}) + \psi(\XX) .
    \end{align}
to approximately solving an ``unconstrained'' version, namely 
\begin{align}\label{prob:unconstrained_step}
    \min_{\XX \in \xset} \langle \VV, \XX \rangle + \linfreg(\XX, \Bar{\yy}) + \psi(\XX) + \scaledparam \|\XX\|_{q, p}^{pq},
\end{align}
for a proper choice of parameter $\scaledparam > 0$. To show this reduction, and that searching for a proper value of parameter $\scaledparam$ can be done efficiently, we use the following result. 

\begin{lemma}\label{lem:unconstrained_red}
    Let $\xset \subseteq \R^n$ and $R, M, \delta > 0$. 
    Let $\Gamma, \zeta:\xset \to \R$ be convex functions so that $\zeta(\XX) \in [0, R], \enspace \forall \XX \in \xset$. Assume that $\Gamma$ is $\alpha$-strongly convex with respect to a norm $\|\cdot\|$ and $\zeta$ is $L$-Lipschitz with respect to $\|\cdot\|$. 
    Let $\sset = \{\XX \in \xset \mid \zeta(\XX) \le 1\}$. Suppose we have access to a $\delta$-approximate parametric oracle $\oracle:[0, M] \to \xset$, which takes as input parameter $C \in [0, M]$ and outputs $\XX(C)$ so that $\Gamma(\XX(C)) + C \zeta(\XX(C)) \le \min_{\XX \in \xset} \Gamma(\XX) + C \zeta(\XX) + \delta$. Assuming that 
    $\XX(M) \in \sset$, there exists an algorithm that, in $O(\log \frac{MR}{\delta})$ queries to $\oracle$, outputs $\algoutput{\XX} \in \sset$ so that \[\Gamma(\algoutput{\XX}) \le \min_{\XX \in \sset} \Gamma(\XX) + M L \sqrt{\frac{6 \delta}{\alpha}} + \delta \,.\] 
\end{lemma}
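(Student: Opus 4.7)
The plan is a binary search on the Lagrange multiplier $C \in [0, M]$ to find a value $\tilde{C}$ such that the approximate minimizer $\XX(\tilde{C}) = \oracle(\tilde{C})$ lies in $\sset$ and additionally saturates the constraint $\zeta \le 1$ almost tightly; Lagrangian weak duality will then convert near-saturation into a near-optimality guarantee for $\Gamma$.

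First I would handle the trivial case: if $\zeta(\XX(0)) \le 1$, then $\XX(0) \in \sset$ and the oracle's guarantee at $C = 0$ yields $\Gamma(\XX(0)) \le \min_{\XX \in \xset} \Gamma(\XX) + \delta \le \min_{\XX \in \sset} \Gamma(\XX) + \delta$, and we return $\XX(0)$. Otherwise, the hypothesis $\XX(M) \in \sset$ lets us initialize an interval $[C_\ell, C_r] = [0, M]$ with the invariants $\zeta(\XX(C_\ell)) > 1$ and $\zeta(\XX(C_r)) \le 1$. At each step, we query $\oracle$ at the midpoint $C = (C_\ell + C_r)/2$ and update the endpoint of the interval according to which side of $1$ the value $\zeta(\XX(C))$ falls on. We stop once $C_r - C_\ell \le \delta/R$, which happens after $\lceil \log_2(MR/\delta) \rceil = O(\log(MR/\delta))$ iterations.

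Correctness at termination relies on $\alpha$-strong convexity invoked in two places. Let $\XX^*(C) \defeq \argmin_{\XX \in \xset} \Gamma(\XX) + C \zeta(\XX)$, which is well-defined since $\Gamma + C\zeta$ is $\alpha$-strongly convex. The oracle's $\delta$-optimality combined with strong convexity of $\Gamma + C\zeta$ yields $\|\XX(C) - \XX^*(C)\| \le \sqrt{2\delta/\alpha}$ for every $C$. Adding the two first-order optimality inequalities for $\XX^*(C_\ell)$ and $\XX^*(C_r)$ gives the standard sensitivity bound
\[
\alpha \|\XX^*(C_\ell) - \XX^*(C_r)\|^2 \le (C_r - C_\ell)\bigl(\zeta(\XX^*(C_\ell)) - \zeta(\XX^*(C_r))\bigr) \le (C_r - C_\ell)\,R \le \delta\,,
\]
so $\|\XX^*(C_\ell) - \XX^*(C_r)\| \le \sqrt{\delta/\alpha}$. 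Triangle inequality together with the $L$-Lipschitzness of $\zeta$ then gives $\zeta(\XX(C_\ell)) - \zeta(\XX(C_r)) \le L\sqrt{6\delta/\alpha}$, and since $\zeta(\XX(C_\ell)) > 1$ we conclude $1 - \zeta(\XX(C_r)) \le L\sqrt{6\delta/\alpha}$.

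We return $\algoutput{\XX} \defeq \XX(C_r) \in \sset$. Applying the oracle's guarantee at $C_r$ to the constrained optimum $\XX^* \defeq \argmin_{\XX \in \sset} \Gamma(\XX)$, which satisfies $\zeta(\XX^*) \le 1$, yields
\[
\Gamma(\XX(C_r)) + C_r\,\zeta(\XX(C_r)) \le \Gamma(\XX^*) + C_r\,\zeta(\XX^*) + \delta \le \Gamma(\XX^*) + C_r + \delta\,,
\]
so $\Gamma(\XX(C_r)) - \Gamma(\XX^*) \le C_r (1 - \zeta(\XX(C_r))) + \delta \le M L\sqrt{6\delta/\alpha} + \delta$. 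The main obstacle is balancing the two error terms entering the strong-convexity argument: the oracle-approximation term $\sqrt{2\delta/\alpha}$ and the sensitivity term $\sqrt{R(C_r - C_\ell)/\alpha}$. The choice of stopping threshold $C_r - C_\ell \le \delta/R$ is precisely what equalizes these contributions (both of order $\sqrt{\delta/\alpha}$), yielding the claimed dependence $\sqrt{\delta/\alpha}$ in the final bound.
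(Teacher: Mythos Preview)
Your approach is essentially the paper's: binary search on the penalty $C$, halt when $C_r - C_\ell \le \delta/R$, use strong convexity to bound $|\zeta(\XX(C_\ell)) - \zeta(\XX(C_r))|$, and conclude via a Lagrangian comparison. Two minor differences are worth noting. First, to bound $\|\XX(C_\ell)-\XX(C_r)\|$ the paper plugs the midpoint $\bar{\XX}=\tfrac12(\XX(C_\ell)+\XX(C_r))$ into both oracle inequalities and uses convexity of $\zeta$ plus strong convexity of $\Gamma$ directly on the approximate minimizers; your detour through the exact minimizers $\XX^*(C)$ is fine in spirit, but (a) the lemma does not assume $\xset$ is closed, so $\XX^*(C)$ need not exist (easily patched by taking near-minimizers), and (b) the triangle inequality on $\sqrt{2\delta/\alpha}+\sqrt{\delta/\alpha}+\sqrt{2\delta/\alpha}$ gives $(2\sqrt2+1)\sqrt{\delta/\alpha}$, not the $\sqrt{6\delta/\alpha}$ you claim---the paper's midpoint argument is what delivers the exact constant $\sqrt6$. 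Second, for the final step the paper chains two oracle comparisons (first $\Gamma(\XX(C_\ell))\le\Gamma(\XX^*)+\delta$ via $\zeta(\XX(C_\ell))>1\ge\zeta(\XX^*)$, then compare $\XX(C_r)$ to $\XX(C_\ell)$); your single-step Lagrangian bound $\Gamma(\XX(C_r))-\Gamma(\XX^*)\le C_r(1-\zeta(\XX(C_r)))+\delta$ is more direct and equally valid.
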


In the context of applying \Cref{lem:unconstrained_red} to prove \Cref{thm:build_BR_oracle}, the parametric oracle used is implemented by the algorithm for solving \Cref{prob:comp_ellqp_regr} to high accuracy. 
In particular, since all quantities regarding range, Lipschitzness and strong convexity (with respect to norm $\ell_{q, p}$) are quasi-polynomially-bounded, \Cref{lem:unconstrained_red} implies that reducing \eqref{eq:approxL1InfMinSubP} to \eqref{prob:unconstrained_step} will take $\Tilde{O}(1)$ calls to our high-accuracy composite $\ell_{q, p}$-regression algorithm. We defer the proof of and pseudocode corresponding to \Cref{lem:unconstrained_red} to the appendix. 

In addition to the lemma above, to prove \Cref{thm:build_BR_oracle}, we also need the following result, which states that solving \eqref{prob:unconstrained_step} up to small enough additive error suffices for satisfying the properties needed to obtain a $\delBR$-ABRO (see \Cref{defn:apx_best_resp}) with respect to $(\sset, \psi, \linfreg)$, where $\sset$ is defined as in \Cref{line:choose_constraintset}, and $\linfreg$ is defined in \Cref{def:specialized_reg}. 

\begin{lemma}
\label{lem:approxL1InfMinSubPBR}
Consider the setting of \Cref{def:specialized_reg} and 
let $\delta > 0$ be an error parameter. Let $\algo$ be an algorithm that, for some $\VV \in \R^{m \times k}$ with $\|\VV\|_{\infty} \le m^5$ and $\Bar{\yy} \in \Delta^m$, outputs $\algoutput{\XX}$ so that 
\[\langle \VV, \algoutput{\XX} \rangle + \linfreg(\algoutput{\XX}, \Bar{\yy}) + \psi(\algoutput{\XX}) \le \min_{\XX \in \sset} \langle \VV, \XX \rangle + \linfreg(\XX, \Bar{\yy}) + \psi(\XX) + \delta \,.\]
Then, $\algo$ implements the call to a
$\O(\sqrt{\delta} m^2 k^2 \xi^{-2} \boxbound)$-ABRO on $(\VV, \Bar{\yy})$ with respect to $(\sset, \psi, \linfreg)$. 
\end{lemma}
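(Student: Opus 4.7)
\medskip

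\noindent\textbf{Proof plan for \Cref{lem:approxL1InfMinSubPBR}.} Define $F(\XX) := \langle \VV, \XX \rangle + \linfreg(\XX, \Bar{\yy}) + \psi(\XX)$ over $\sset$, and let $\XX^{\star}$ be its minimizer. The hypothesis gives $F(\algoutput{\XX}) \le F(\XX^{\star}) + \delta$. The plan is to verify the two properties of a $\delta_1$-ABRO (\Cref{defn:apx_best_resp}) with $\delta_1 = \O(\sqrt{\delta}\, m^2 k^2 \xi^{-2} \boxbound)$, relying on quantitative smoothness and strong-convexity estimates for $\linfreg(\cdot, \Bar{\yy})$ on $\sset \subseteq [0,1]^{m \times k}$. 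Specifically, since $\partial^2_{\XX_{ji}} \linfreg(\XX, \Bar{\yy}) = (\Bar{\yy}_i + \xi)/(\XX_{ji} + \xi)$ and $\XX_{ji}, \Bar{\yy}_i \in [0,1]$, one has $\tfrac{\xi}{2} \II \preceq \nabla^2_\XX \linfreg(\XX, \Bar{\yy}) \preceq \tfrac{2}{\xi} \II$ throughout $\sset$, and the one-dimensional map $x \mapsto (x+\xi)\log(x+\xi)$ is $O(1 + |\log \xi|)$-Lipschitz on $[0,1]$.

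\smallskip

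\noindent\textit{Property 1 (first-order optimality up to error).} For any $\XX \in \sset$ and $t \in (0,1]$, set $\XX_t = (1-t)\algoutput{\XX} + t\XX \in \sset$ by convexity. Since $\XX^{\star}$ minimizes $F$ on $\sset$, $F(\XX_t) \ge F(\XX^{\star}) \ge F(\algoutput{\XX}) - \delta$. I will expand $F(\XX_t) - F(\algoutput{\XX})$ by combining (i) the second-order Taylor bound for $\linfreg$ along the segment, controlled by the upper-Hessian bound $2/\xi$ and the diameter $\|\XX - \algoutput{\XX}\|_2^2 \le mk$, giving a quadratic term $\tfrac{t^2}{2}\cdot M$ with $M = O(mk/\xi)$; and (ii) convexity $\psi(\XX_t) - \psi(\algoutput{\XX}) \le t(\psi(\XX) - \psi(\algoutput{\XX}))$. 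Rearranging the resulting inequality, dividing by $t$, and optimizing over $t$ (taking $t \approx \sqrt{2\delta/M}$, valid since $\delta$ is small) yields
\[
\langle \VV + \nabla_\XX \linfreg(\algoutput{\XX}, \Bar{\yy}), \algoutput{\XX} - \XX \rangle \le \psi(\XX) - \psi(\algoutput{\XX}) + \sqrt{2\delta M} = \O\bigl(\sqrt{\delta\, mk / \xi}\bigr),
\]
which is comfortably bounded by $\delta_1$.

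\smallskip

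\noindent\textit{Property 2 (closeness of $\nabla_{\yy} \linfreg$).} From the lower-Hessian bound, $F$ is $\mu$-strongly convex in $\XX$ with $\mu \ge \xi/2$. Combined with the constrained minimizer optimality condition $\langle g, \XX - \XX^{\star}\rangle \ge 0$ for some subgradient $g \in \partial F(\XX^{\star})$, strong convexity gives $\tfrac{\mu}{2}\|\algoutput{\XX} - \XX^{\star}\|_2^2 \le F(\algoutput{\XX}) - F(\XX^{\star}) \le \delta$, hence $\|\algoutput{\XX} - \XX^{\star}\|_2 \le 2\sqrt{\delta/\xi}$. Since $\partial_{\yy_i}\linfreg(\XX, \Bar{\yy}) = \sum_{j \in [k]}(\XX_{ji}+\xi)\log(\XX_{ji}+\xi) + \alpha(\log \Bar{\yy}_i + 1)$ and the first term is $O(1 + |\log \xi|)$-Lipschitz in each $\XX_{ji}$, Cauchy--Schwarz over $j \in [k]$ gives
\[
\|\nabla_{\vset}\linfreg(\XX^{\star}, \Bar{\yy}) - \nabla_{\vset}\linfreg(\algoutput{\XX}, \Bar{\yy})\|_\infty \le (1 + |\log \xi|)\sqrt{k}\,\|\XX^{\star} - \algoutput{\XX}\|_2 = \O\bigl(\sqrt{k\delta/\xi}\bigr),
\]
again well within $\delta_1/2$.

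\smallskip

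\noindent\textbf{Expected difficulty.} The argument is essentially two applications of a smoothness/strong-convexity dichotomy with careful constant tracking — there is no deep new idea. The only mild subtlety is justifying the subgradient-based constrained optimality step for $F$ despite $\psi$ potentially being non-smooth; this is handled by invoking strong convexity through a subgradient at $\XX^{\star}$ that is normal to $\sset$, which exists by standard convex-analysis arguments. The stated $\delta_1 = \O(\sqrt{\delta}\,m^2 k^2 \xi^{-2} \boxbound)$ is very loose compared to what the two steps deliver, so matching the bound requires no tightening.
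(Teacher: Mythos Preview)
Your proof is correct and follows the same high-level strategy as the paper: establish explicit upper and lower bounds on $\nabla^2_{\XX}\linfreg(\XX,\Bar{\yy})$ and a Lipschitz bound on $\nabla_{\vset}\linfreg(\cdot,\Bar{\yy})$, then convert the $\delta$-suboptimality of $\algoutput{\XX}$ into the two ABRO conditions. For Property~2 your argument is essentially identical to the paper's (strong convexity $\Rightarrow$ $\|\algoutput{\XX}-\XX^\star\|_2$ bound $\Rightarrow$ Lipschitz control on $\nabla_{\vset}\linfreg$).

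For Property~1 there is a genuine, if minor, difference. The paper first bounds $\|\algoutput{\XX}-\XX^\star\|_2$ via strong convexity, invokes first-order optimality at $\XX^\star$ to get $\langle \VV+\nabla_\XX\linfreg(\XX^\star,\Bar{\yy}),\XX^\star-\XX\rangle+\psi(\XX^\star)-\psi(\XX)\le 0$, and then replaces $\nabla_\XX\linfreg(\XX^\star,\Bar{\yy})$ by $\nabla_\XX\linfreg(\algoutput{\XX},\Bar{\yy})$ using the Hessian upper bound $\lambda_{\max}$ (this is packaged as \Cref{lem:BRFromAdditiveError}). Your argument instead works directly at $\algoutput{\XX}$: you interpolate $\XX_t=(1-t)\algoutput{\XX}+t\XX$, use only the \emph{smoothness} (upper Hessian bound) in a second-order Taylor estimate, use $F(\XX_t)\ge F(\algoutput{\XX})-\delta$, and optimize over $t$. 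This avoids both strong convexity and the first-order optimality/subgradient step for Property~1, making it slightly more elementary; the paper's route has the advantage of being packaged into a reusable abstract lemma. Both produce bounds well inside the (very loose) target $\O(\sqrt{\delta}\,m^2k^2\xi^{-2}\boxbound)$.
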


By \Cref{lem:approxL1InfMinSubPBR}, any algorithm that solves \eqref{eq:approxL1InfMinSubP} for any $\|\VV\|_{\infty} \le m^5$ and $\Bar{\yy} \in \Delta^m$ to additive error $\delta$ implements a $\O(\sqrt{\delta} m^2 k^2 \xi^{-2} \boxbound)$-ABRO for $\|\VV\|_{\infty} \le m^5$ and $\Bar{\yy} \in \Delta^m$.
The algorithm in \Cref{thm:comp_ellqp_regr} solves \eqref{eq:approxL1InfMinSubP} to high accuracy with $\almostTime(1)$ queries to each one of the $(\Time_j, \delSCO)$-SCOs $\oracle_1, \ldots, \oracle_k$. This is due to the fact that the the expression $\langle \VV, \XX \rangle + \linfreg(\XX, \Bar{\yy})$ is separable over variables $\XX_{ij}$, and each function of $\XX_{ij}$ is a $m$-decomposable function. 
We are now ready to prove \Cref{thm:build_BR_oracle}. 

\begin{proof}[Proof of \Cref{thm:build_BR_oracle}]
    Let $1 > \altdelta > 1/\poly(m)$ be a parameter which we fix later. 
    Define $\Gamma, \zeta:\xset \to \R$ by $\Gamma(\XX) := \langle \VV, \XX \rangle + \linfreg(\XX, \Bar{\yy}) + \psi(\XX)$ and $\zeta(\XX) := \frac{\|\XX\|_{q, p}^{pq}}{m \linfbound^{pq}}$. 
    Define $\Gamma'(\XX) = \Gamma(\XX) + \frac{\altdelta}{2 R^{pq}} \|\XX\|_{q, p}^{pq}$ and $\gradbound = m^5$. We build a $\altdelta$-approximate parametric oracle $\oracle$ (as defined in \Cref{lem:unconstrained_red}) for objective function $\Gamma'(\cdot)$ and constraint function $\zeta$. 

    For this, note that setting $\beta_1 = m (\psibound + 1 + \linfbound + km \gradbound \boxbound) \le \poly(m)$ yields that $\Gamma'(\XX) \le \frac{\beta_1}{100 m^{\frac{1}{pq}}}, \enspace \forall \XX \in \xset$, as $\max_{\XX \in \xset} |\Gamma'(\XX)| \le \max_{\XX \in \xset} |\Gamma(\XX)| + \frac{\altdelta}{2}$ and $|\Gamma(\XX)|$ is upper bounded by $\max_{\XX \in \xset} |\VV, \XX \rangle| + \max_{\XX \in \xset} |\linfreg(\XX, \Bar{\yy})| + \max_{\XX \in \xset} |\psi(\XX)| \le km \gradbound \boxbound + \almostTime(\linfbound) + \psibound$, where for the last inequality we used \Cref{lem:regularizer}. 
    Additionally, note that the expression $\linfreg(\XX, \Bar{\yy})$ as a function of $\XX$, for a fixed $\Bar{\yy}$, is separable over $\XX_{ij}$'s, since it can be written as $\sum_{i \in [m], j \in [k]} (\Bar{\yy}_i + \xi) (\XX_{ij} + \xi) \log (\XX_{ij} + \xi)$. Hence, we define functions $c_{ij}:\R \to \R$ for every $i \in [m], j \in [k]$ by $c_{ij}(x) := (\Bar{\yy}_i + \xi) (\XX_{ij} + \xi) \log (\XX_{ij} + \xi)$. By \Cref{lem:xlogx_comp}, each $c_{ij}$ is a computable single-variable convex function, as $\xi \ge \frac{1}{\poly(m)}$. 
    Additionally, since $\|\VV\|_{\infty} \le \gradbound \le m^5$, each function $\VV_{ij} \XX_{ij}$ is a computable function. Hence, we are in the conditions of \Cref{thm:comp_ellqp_regr}. 
    Let $\oracle$ be the algorithm given by \Cref{thm:comp_ellqp_regr} with respect to convex functions $\psi'_{j}(\XX_{:j}) := \psi_{j}(\XX_{:j}) + \sum_{i \in [m]} c_{ij}(\XX_{ij})$ and factor $C$ in front of the $\|\XX\|_{q, p}^{pq}$ term, assuming access to $(\Time_j, \delSCO)$ oracles $\{\oracle_j\}_{j \in [k]}$. 
    Set $M = 10 \beta_1$ and, for any $C \in [0, M]$, by \Cref{thm:comp_ellqp_regr}, $\oracle$ computes, in $\almostTime(\sum_{j \in [k]} (\Time_j + m))$ time, $\XX(C)$ with \[\Gamma'(\XX(C)) + C \zeta(\XX(C)) \le \min_{\XX \in \xset} \Gamma'(\XX) + C \zeta(\XX) + 3 k \delSCO\,.\] 
    Hence $\oracle$ is a $3 k \delSCO$-approximate parametric oracle as defined in \Cref{lem:unconstrained_red}, to which each call takes $\almostTime(\sum_{j \in [k]} (\Time_j + m))$ time. 

    Now that we have built a $3 k \delSCO$-approximate parametric oracle $\oracle$ for objective function $\Gamma'(\cdot)$ and constraint function $\zeta$ for parameter space $[0, M]$, we build an algorithm $\algo$ for solving $\min_{\XX \in \sset} \Gamma'(\XX)$
    up to additive error $\altdelta$. 
    First, note that $\Gamma'(\cdot)$ is $\beta_2$-strongly convex with respect to $\ell_{q, p}$ for $\beta_2 = \frac{\altdelta}{2 R^{pq}}$. Moreover, $\zeta(\cdot)$ is $\beta_3$-Lipschitz with respect to $\ell_{q, p}$ for $\beta_3 = \boxbound^{pq}$. 
    Additionally, the output of $\oracle(M)$ will be in $\sset$. 
    To see why this is the case, note that since there exists $\anyX$ with $\|\anyX\|_{1, \infty} \le \frac{2}{3} \linfbound$, by \Cref{obs:qp1inf} we have that $\|\anyX\|_{q, p} \le m^{\frac{1}{pq}} \frac{2}{3} \linfbound$, which implies $\zeta(\anyX) \le \frac{2}{3}$. Hence, for any $\XX$ with $\XX \notin \sset$, we have $\zeta(\XX) > 1$, and thus $\Gamma'(\XX) + M \zeta(\XX) > \Gamma'(\anyX) + M \zeta(\anyX) + 3 k \delSCO$, since $|\Gamma'(\XX) - \Gamma'(\anyX)| \le M / 5$ and $3 k \delSCO < M / 5$ (based on how picking $\delSCO$ to be small enough). 
    Thus, 
    we are in the conditions of \Cref{lem:unconstrained_red}, so
    we obtain an algorithm $\algo$ that solves \Cref{eq:approxL1InfMinSubP} up to additive error  $\altdelta' = \beta_1 \beta_3 \sqrt{\frac{3 k \delSCO}{\beta_2}}$. Further, each call to $\algo$ makes $O(\log \beta_1 \frac{\boxbound}{3 k \delSCO}) = \almostTime(1)$ queries to $\oracle$, where we used that $\beta_1, \beta_3, \boxbound, \frac{1}{\beta_2}, \frac{1}{\delSCO} \le \poly(m, k)$. 

    Hence, $\algo$ outputs $\algoutput{\XX}$ with the properties that $\zeta(\algoutput{\XX}) \le 1$, which is equivalent to $\algoutput{\XX} \in \sset$, and 
    \[\Gamma'(\algoutput{\XX}) \le \min_{\XX \in \sset} \Gamma'(\XX) + \altdelta'.\]
    Note that from the way we defined $\Gamma'$, we have $|\Gamma'(\XX) - \Gamma(\XX)| \le \frac{\altdelta}{2}, \forall \XX \in \xset$. Thus, for any $\VV, \Bar{\yy}$, $\algo$ outputs $\algoutput{\XX} \in \sset$ so that 
    \[\langle \VV, \algoutput{\XX} \rangle + \linfreg(\algoutput{\XX}, \Bar{\yy}) + \psi(\algoutput{\XX}) \le \min_{\XX \in \sset} \langle \VV, \XX \rangle + \linfreg(\XX, \Bar{\yy}) + \psi(\XX) + \altdelta' + \altdelta\,.\]
    Thus, by \Cref{lem:approxL1InfMinSubPBR}, $\algo$ implements a $O(\sqrt{(\altdelta' + \altdelta)} m^2 k^2 \xi^{-2} \boxbound)$-ABRO on $(\VV, \Bar{\yy})$ with respect to $(\sset, \psi, r)$. 
    Since $\xi^{-2}, \boxbound \le m^{C_1}$ for some constant $C_1$ and $\delBR > m^{-C_2}$ for some constant $C_2$, it suffices to pick $\altdelta = m^{-50(C_1 + C_2)}$ and to have $\altdelta' = O(m^{-50(C_1 + C_2)})$. This is equivalent to 
    $\beta_1 \beta_3 \sqrt{\frac{3 k \delSCO}{\beta_2}} = O(m^{-50(C_1 + C_2)})$, which is satisfied for $\delSCO = (mk)^{-C_3}$ for a small enough constant $C_3$, as 
    $\beta_1, \beta_3, \frac{1}{\beta_2}, \le \poly(m, k)$. 

    Finally, note that to build our algorithm $\algo$ we call our $\delBR$-ABRO $\oracle$ $\almostTime(1)$ times.
    By \Cref{thm:comp_ellqp_regr}, each such call requires $\almostTime(1)$ calls to each of the oracles $\oracle_1, \ldots \oracle_k$,  because of our choice of $p = 2 \lceil \sqrt{\log m} \rceil + 1$ and $q = 1 + \frac{1}{p}$. 
\end{proof}

To prove \Cref{lem:approxL1InfMinSubPBR}, we use a technical lemma that relates the additive error of solving \eqref{eq:approxL1InfMinSubP} to the ABRO requirements (\Cref{defn:apx_best_resp}).
The proof is deferred to \Cref{sec:BRFromAdditiveError}. 

\begin{lemma}
\label{lem:BRFromAdditiveError}
Let $\uset \subseteq \R^n$, $\vset = \Delta^m$, convex $\psi: \uset \to \R$, a convex regularizer $r: \uset \times \vset \to \R$, $\gg \in \R^n$, $\vv \in \vset$, 
$\uu^{\star} \in \argmin_{\uu \in \uset} \l\gg, \uu\r+r(\uu, \vv)+\psi(\uu)$ and $\algoutput{\uu}$ such that
\begin{align*}
    \l\gg, \algoutput{\uu}\r+r(\algoutput{\uu}, \vv)+\psi(\algoutput{\uu}) \le \l\gg, \uu^{\star}\r+r(\uu^{\star}, \vv)+\psi(\uu^{\star}) + \delta\,.
\end{align*}
If for some $\lambda_{\max} \ge \lambda_{\min} > 0$ and $\bbeta \ge 0$ we have 
$\lambda_{\min} \II_n \preceq \g^2_{\uu, \uu} r(\uu, \vv) \preceq \lambda_{\max} \II_n$ and $\norm{\g^2_{\uu, \vv}r(\uu, \vv)}_{\infty} \le \bbeta$ for any $\uu \in \uset$, we have
\begin{enumerate}
    \item $\l\gg + \g_{\uu} r(\algoutput{\uu}, \vv), \algoutput{\uu} - \uu\r + \psi(\algoutput{\uu}) - \psi(\uu) \le \delta + \lambda_{\max} \sqrt{\frac{2 n \delta}{\lambda_{\min}}} \usetSize $ for any $\uu \in \uset$, and
    \item $\|\g_{\vv}r(\uu^{\star}, \vv) - \g_{\vv}r(\algoutput{\uu}, \vv)\|_{\infty} \le \frac{\bbeta \sqrt{2 \delta}}{\sqrt{\lambda_{\min}}}$
\end{enumerate}
\end{lemma}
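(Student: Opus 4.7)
The unifying tool for both parts is the observation that the convex objective $F(\uu) := \l\gg, \uu\r + r(\uu, \vv) + \psi(\uu)$ is $\lambda_{\min}$-strongly convex in $\uu$ with respect to $\|\cdot\|_2$ (since $\g^2_{\uu,\uu} r \succeq \lambda_{\min} \II_n$ and $\psi$ is convex). Because $\uu^\star$ minimizes $F$ over $\uset$, strong convexity gives $F(\algoutput{\uu}) - F(\uu^\star) \ge \tfrac{\lambda_{\min}}{2} \|\algoutput{\uu} - \uu^\star\|_2^2$; combined with the hypothesis $F(\algoutput{\uu}) - F(\uu^\star) \le \delta$, I obtain
\[
\|\algoutput{\uu} - \uu^\star\|_2 \le \sqrt{2\delta/\lambda_{\min}}.
\]
This is the sole ``quantitative'' input to both claims.

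\textbf{Part 2.} I apply the fundamental theorem of calculus to $t \mapsto \g_{\vv} r(\uu^\star + t(\algoutput{\uu} - \uu^\star), \vv)$, obtaining
\[
\g_{\vv} r(\algoutput{\uu}, \vv) - \g_{\vv} r(\uu^\star, \vv) = \int_{0}^{1} \bigl(\g^2_{\uu,\vv} r(\uu^\star + t(\algoutput{\uu}-\uu^\star), \vv)\bigr)^{\!\top} (\algoutput{\uu}-\uu^\star) \, dt.
\]
The $\ell_\infty$-operator-norm bound on $\g^2_{\uu,\vv} r$, together with $\|\algoutput{\uu}-\uu^\star\|_\infty \le \|\algoutput{\uu}-\uu^\star\|_2$, then yields the claimed bound $\bbeta \sqrt{2\delta/\lambda_{\min}}$.

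\textbf{Part 1.} I split the target quantity using $\uu^\star$ as a pivot:
\[
\l\gg + \g_{\uu} r(\algoutput{\uu}, \vv), \algoutput{\uu}-\uu\r + \psi(\algoutput{\uu}) - \psi(\uu) = \underbrace{\l\gg + \g_{\uu} r(\uu^\star, \vv), \algoutput{\uu}-\uu\r + \psi(\algoutput{\uu}) - \psi(\uu)}_{(\mathrm{I})} + \underbrace{\l\g_{\uu} r(\algoutput{\uu}, \vv) - \g_{\uu} r(\uu^\star, \vv), \algoutput{\uu}-\uu\r}_{(\mathrm{II})}.
\]
For $(\mathrm{I})$, I combine two ingredients. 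First, the first-order optimality condition at $\uu^\star$ gives a subgradient $\ppsi^\star \in \partial \psi(\uu^\star)$ with $\l\gg + \g_{\uu} r(\uu^\star, \vv) + \ppsi^\star, \uu - \uu^\star\r \ge 0$ for all $\uu \in \uset$; convexity of $\psi$ converts this into $\l\gg + \g_{\uu} r(\uu^\star, \vv), \uu^\star - \uu\r \le \psi(\uu) - \psi(\uu^\star)$. Second, convexity of $r(\cdot,\vv)$ gives $r(\algoutput{\uu},\vv) - r(\uu^\star,\vv) \ge \l\g_{\uu} r(\uu^\star,\vv), \algoutput{\uu} - \uu^\star\r$, which, together with the hypothesis $F(\algoutput{\uu}) - F(\uu^\star) \le \delta$, yields $\l\gg + \g_{\uu} r(\uu^\star, \vv), \algoutput{\uu} - \uu^\star\r + \psi(\algoutput{\uu}) - \psi(\uu^\star) \le \delta$. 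Adding these two inequalities telescopes $\uu^\star$ out and gives $(\mathrm{I}) \le \delta$. For $(\mathrm{II})$, Cauchy-Schwarz and the smoothness bound $\|\g_\uu r(\algoutput{\uu},\vv) - \g_\uu r(\uu^\star,\vv)\|_2 \le \lambda_{\max}\|\algoutput{\uu} - \uu^\star\|_2$ (from $\g^2_{\uu,\uu} r \preceq \lambda_{\max} \II_n$), followed by $\|\algoutput{\uu}-\uu\|_2 \le \sqrt{n}\,\usetSize$, produces the $\lambda_{\max}\sqrt{2n\delta/\lambda_{\min}}\,\usetSize$ term.

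\textbf{Main obstacle.} The only non-routine step is handling the composite (non-smooth) piece $\psi$ carefully: the first-order optimality condition at $\uu^\star$ is a subgradient statement and must be combined with the Bregman-style consequence of the additive error assumption so that the unknown subgradient $\ppsi^\star$ cancels cleanly. Once this cancellation is executed, the remaining estimates are standard strong-convexity/smoothness manipulations.
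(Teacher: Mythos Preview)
Your proof is correct and follows essentially the same approach as the paper: derive $\|\algoutput{\uu}-\uu^\star\|_2 \le \sqrt{2\delta/\lambda_{\min}}$ from strong convexity and first-order optimality, use it with the $\bbeta$-bound on $\g^2_{\uu,\vv} r$ for Part~2, and for Part~1 split off the gradient difference $\g_\uu r(\algoutput{\uu},\vv)-\g_\uu r(\uu^\star,\vv)$ and bound it via $\lambda_{\max}$-smoothness and $\|\algoutput{\uu}-\uu\|_2 \le \sqrt{n}\,\usetSize$. Your write-up spells out more explicitly why $(\mathrm{I}) \le \delta$ (via the subgradient optimality condition combined with the convexity lower bound on $r$), which the paper collapses into a single unexplained inequality, but the argument is the same.
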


We are now ready to prove \Cref{lem:approxL1InfMinSubPBR} using \Cref{lem:BRFromAdditiveError}.

\begin{proof}[Proof of \Cref{lem:approxL1InfMinSubPBR}]
Let $\mathcal{A}$ be our $\delta$-approximate algorithm for solving \eqref{eq:approxL1InfMinSubP}.
Fix $\VV \in \R^{m \times k}$ and some $\yy \in \Delta^m$ and let 
$\hat{\XX}$ be the output of $\mathcal{A}$ on $\VV, \yy$. Since $\hat{\XX}$ is a 
a solution to \eqref{eq:approxL1InfMinSubP} of additive error $\delta$, we have 
\begin{align*}
    \psi(\hat{\XX}) + \l\VV, \hat{\XX}\r + \linfreg(\hat{\XX}, \yy) \le \delta + \psi(\XX^{\textrm{BR}}) + \l\VV, \XX^{\textrm{BR}}\r + \linfreg(\XX^{\textrm{BR}}, \yy)
\end{align*}
where $\XX^{\textrm{BR}} \in \sset$ minimizes the right hand side. We show that $\hat{\XX}$ satisfies both criteria in \Cref{defn:apx_best_resp} via \Cref{lem:BRFromAdditiveError}. 
To do so, we 
analyze the spectrum of the Hessian of $\linfreg(\XX, \yy)$ for $\XX \in \xset.$

Fix $\XX$ be any element in $\xset$ and note that 
\begin{align*}
\g^2_{\XX, \XX} \linfreg(\XX, \yy) = \ddiag(|\AA_{1, \infty}|^\top (\yy + \xi)) \ddiag(\XX + \xi)^{-1}
\end{align*}
By definition of $\AA_{1, \infty}$ (in \Cref{def:specialized_reg}) and $\xi$, we have that $\lambda_{\min} \II_{mk} \preceq \g^2_{\XX, \XX} \linfreg(\XX, \yy) \preceq \lambda_{\max} \II_{mk}$ for $\lambda_{\min} = \frac{k \xi}{1 + \xi}$ and $\lambda_{\max} = \frac{k (1 + \xi)}{\xi}$. 
To bound $\norm{\g^2_{\XX, \yy} \linfreg(\XX, \yy)}_{\infty}$, note that
\begin{align*}
\g^2_{\XX, \yy} \linfreg(\XX, \yy) = \ddiag(1 + \log(\XX + \xi)) |\AA_{1, \infty}|^\top
\end{align*}
Next, note that $|\log(\XX + \xi)| \le O(\log km), \enspace \forall \XX \in \sset$. This is due to the fact that $\XX \in [0, \boxbound]^{m \times k}, \enspace \forall \XX \in \sset$ and $\frac{\delta}{\|\AA_{1, \infty}\|_{\infty}} \le \frac{1/\poly(m)}{k} \le \xi \le 1$. 
This implies \[\norm{\g^2_{\XX, \yy} \linfreg(\XX, \yy)}_{\infty} = O(\log km)\]
because $\||\AA_{1, \infty}|^\top\|_{\infty}$. 

Thus, by \Cref{lem:BRFromAdditiveError}, $\hat{\XX}$ satisfies both criteria in \Cref{defn:apx_best_resp} with the error parameter 
\begin{align*}
    \delta' = \delta +  \max\left\{\frac{2 k (1 + \xi)}{\xi} \sqrt{\frac{2 km \delta (1+\xi)}{k\xi}} \boxbound, O(\log km)\sqrt{\frac{2\delta (1+\xi)}{k\xi}}\right\} = \delta + \O(\sqrt{\delta} m^2 k^2 \xi^{-2} \boxbound)
\end{align*}
Hence, $\mathcal{A}$ implements a $\delta'$-ABRO call on $(\VV, \Bar{\yy})$ with respect to $(\sset, \psi, \linfreg)$. 
\end{proof}

\subsection{Application: Approximate Directed Multi-Commodity Flows}
\label{sec:kcommflow}

In this section, we present our results regarding approximately solving the \mcf{} problems introduced in \Cref{sec:intro}. We start by proving our result regarding the directed composite \mcf{} problem (\Cref{thm:mainKCommFlow}). 
We then show how to use this result to obtain the improved $\almostTime(mk \eps^{-1})$ runtimes for concurrent \mcf{} (\Cref{coro:approxConcurrentFlow}) and maximum weighted concurrent \mcf{} (\Cref{coro:approxMaxKCommFlow}). Before presenting \Cref{thm:mainKCommFlow}, we recall the formulation of the composite \mcf{} problem and the statement of \Cref{thm:mainKCommFlow}. 

\cvxKCommFlow*

\mainKCommFlowThm*

\begin{proof}[Proof of \Cref{thm:mainKCommFlow}]
We prove the theorem using our $\ell_{1, \infty}$-regression algorithm~\Cref{thm:approxL1InfMin} and the almost linear time convex flow solver~\Cref{coro:convexFlow}. 

First, we reduce the problem to an instance of the composite $\ell_{1, \infty}$-regression problem (\Cref{prob:comp_ell1inf_regr}).
Define $\boxbound = \max(\|\uu\|_{\infty}, \eta \RdUpper)$.
For any commodity $i \in [k]$, we define a convex set $\xset_i \subseteq [0, \boxbound]^{m+1}$ as follows:
\begin{align*}
    \xset_i \defeq \left\{\xx = \begin{pmatrix}
        \uu^{-1} \ff_i \\
        \eta \bbeta_i
    \end{pmatrix} \in [0, \boxbound]^{m+1} \middle\vert \imbal(\ff_i) = \bbeta_i \dd_i\right\}\text{, where }\eta \defeq \frac{\min_i \sum_{u\in V} |\dd_{iu}|}{2 m (\max_{e \in E} \uu_e)} > 0
\end{align*}
We also define $\psi_i: \xset_i \to \R$ by
\begin{align*}
\psi_i\left(\begin{pmatrix}
        \uu^{-1} \ff_i \\
        \eta \bbeta_i
    \end{pmatrix}\right) \defeq v_i(\bbeta_i) + \sum_{e \in E} c_{ei}(\ff_{ie})
\end{align*}
Let $\xset \defeq \bigoplus_{i \in [k]} \xset_i.$
There is a natural one-to-one correspondence between $\xset$ and the set of feasible solutions $(\FF, \bbeta)$ of the composite \mcf{} instance:
\begin{align*}
(\FF, \bbeta) \mapsto \begin{pmatrix}
    \uu^{-1} \ff_1 & \ldots & \uu^{-1} \ff_k \\
    \eta \bbeta_1 & \ldots & \eta \bbeta_k
\end{pmatrix} \in \xset
\end{align*}
For any $\XX \in \xset$, we have $\psi(\XX) = \cc(\FF, \bbeta)$ where $(\FF, \bbeta)$ is the feasible solution corresponding to $\XX$ by definition.

To establish the reduction, we show that $\|\XX\|_{1, \infty} = \congest(\FF).$
Let $(\FF, \bbeta)$ be the feasible solution corresponding to $\XX.$
We have
\begin{align*}
\|\XX\|_{1, \infty} = \max\left\{\max_e \sum_{i \in [k]} \frac{|\ff_{ie}|}{\uu_e}, \eta \sum_{i \in [k]} \bbeta_i\right\}
\end{align*}
The first term, $\max_{e \in E} \sum_{i \in [k]} \frac{|\ff_{ie}|}{\uu_e}$, is exactly $\congest(\FF).$
The second term, $\eta \sum_{i \in [k]} \bbeta_i$, is at most $\congest(\FF)$ because the total amount of the flows, $\sum_{i, e} \ff_{ie}$, is at least half the total amount of net flow of each vertices, $\sum_{i \in [k], u \in V} \bbeta_i |\dd_{iu}|.$
We have
\begin{align*}
\sum_{i \in [k]} \bbeta_i \left(\min_{i \in [k]} \sum_{u \in V} |\dd_{iu}|\right) \le \sum_{i \in [k], u \in V} \bbeta_i |\dd_{iu}| \le 2\sum_{i \in [k], e \in E}\ff_{ie} \le 2m \cdot (\max_e \uu_e) \cdot \congest(\FF)
\end{align*}
The claim and the reduction follow from our definition of $\eta.$
That is, we have, for any $\XX \in \xset$ and its corresponding feasible solution $(\FF, \bbeta)$, that
$\psi(\XX) + \|\XX\|_{1, \infty} = c(\FF, \bbeta) + \congest(\FF).$

Next, we show, 
for any commodity $i \in [k]$, how to implement a $(\almostTime(m), \delSCO)$-SCO $\oracle_i$ on $\psi_i$ over $\xset_i$ (\Cref{def:sepCvxOracle}), with $\delSCO = \frac{\delta}{\poly(m, k)}$, using the almost linear time convex flow solver~\Cref{coro:convexFlow}.
To see this, $\oracle_i$ needs to solve, given a collection $\{g_j\}_{j \in [m+1]}$ functions that are the sum of $O(1)$ of computable convex functions, the following problem up to additive error $\delta$:
\begin{align*}
    \min_{\xx \in \xset_i} \psi_i(\xx) + \sum_{j \in [m+1]} g_j(\xx_j)\,.
\end{align*}
It is equivalent to solving the following problem up to additive error $\delSCO$:
\begin{align*}
    \min_{\ff_i \in [0, \boxbound_u]^E, \bbeta_i \in [\boxbound_{\ell}, \boxbound_{\mu}]: \imbal(\ff_i) = \bbeta_i \dd_i} v_i(\bbeta_i) + g_{m+1}(\eta \bbeta_i) + \sum_{e \in E} c_{ei}(\ff_{ie}) + g_e(\uu_e^{-1} \ff_{ie})
\end{align*}
where we use the one-to-one correspondence between $\xset_i$ and the set of feasible solutions on the $i$-th commodity $(\ff_i, \bbeta_i) \mapsto [\uu^{-1} \ff_i; \eta \bbeta_i]$ (and $g_e$'s are sums of $O(1)$ of computable convex functions and).
Therefore, each invocation to any $\oracle_i, i \in [k]$ takes $\almostTime(m)$ time.

Now, we solve the composite $\ell_{1, \infty}$-regression problem instance via \Cref{thm:approxL1InfMin}. 
Having access to $(\almostTime(m), \delSCO)$-SCOs $\oracle_j$, and noting that $\boxbound \le \poly(m)$, due to $\RdUpper, \|\uu\|_{\infty} \le \poly(m)$,
by \Cref{thm:approxL1InfMin}, it takes $\almostTime(k \eps^{-1})$ calls to oracles to find $(\algoutput{\FF}, \algoutput{\bbeta})$ such that
\begin{align*}
c(\algoutput{\FF}, \algoutput{\bbeta}) + \congest(\algoutput{\FF}) \le c(\FF^{\star}, \bbeta^{\star}) + (1+\eps)\congest(\FF^{\star}) + \delta
\end{align*}
where $(\FF^{\star}, \bbeta^{\star})$ minimizes the objective $c(\FF, \bbeta) + \congest(\FF).$
The total runtime is $\almostTime(mk\eps^{-1})$ and this concludes the proof.
\end{proof}

Using \Cref{thm:mainKCommFlow}, we obtain the $\almostTime(mk \eps^{-1})$-time $(1+\eps)$-approximate algorithms for concurrent and maximum weighted directed multi-commodity problems (\Cref{coro:approxConcurrentFlow} and \Cref{coro:approxMaxKCommFlow} respectively). 
In particular, up to an $\O(1)$ overhead from binary search, one can reduce all the classical \mcf{} problems to directed composite MCF. 
We now present the proof of \Cref{coro:approxConcurrentFlow}. 

\begin{corollary}[Approximate Concurrent \mcf{}]
\label{coro:approxConcurrentFlow} 
Given a poly-regular \mcf{} instance $\instance$ and
$\eps > 1 / \poly(m)$, there is an algorithm that, in $\almostTime(mk \eps^{-1})$ time, outputs $(\algoutput{\FF}, \algoutput{\beta}) \in S_{\instance}$ so that 
\[
\algoutput{\beta} \ge (1 - \epsilon) \max_{(\FF, \beta \vecone^k) \in S_{\instance}} \beta \,.
\]
\end{corollary}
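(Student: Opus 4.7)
My plan is to reduce concurrent \mcf{} on $\instance$ to a single call to composite \mcf{} (\Cref{thm:mainKCommFlow}) on a capacity-scaled instance. The key observation is that if $C^* \defeq \min\{\congest(\FF) : \imbal(\FF) = \DD,\ \FF \ge 0\}$ is the minimum congestion required to route $\DD$ in $G$, then $\beta^* \defeq \max\{\beta : (\FF, \beta\vecone^k) \in S_{\instance}\} = 1/C^*$, so it suffices to approximate $C^*$ multiplicatively to within $1 \pm \epsilon$.

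\paragraph{Reduction.}
Poly-regularity of $\instance$ yields a crude a priori bound $C^* \le M = \poly(m)$ (route each $\dd_i$ along arbitrary simple paths; each edge then carries at most $n\|\dd_i\|_\infty \le \poly(m)$ total flow, summed over $k \le \poly(m)$ commodities). I would pick an integer $K \ge M$ with $K = \poly(m)$ and form the scaled instance $\instance' \defeq (V, E, K\uu, k, \DD)$, which remains poly-regular. Since $\min\{\congest_{K\uu}(\FF) : \imbal(\FF) = \DD\} = C^*/K \le 1$, the feasible region $S_{\instance'}(1, 1)$ is nonempty. I would then invoke \Cref{thm:mainKCommFlow} on $\instance'$ with $c_{ei} \equiv 0$, $v_i \equiv 0$ (both trivially $m$-decomposable and $0$-Lipschitz), $\RdLower = \RdUpper = 1$, accuracy $\epsilon/3$, and additive tolerance $\delta = \epsilon/\poly(m)$ fixed below. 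The output $(\hat{\FF}, \vecone^k) \in S_{\instance'}(1, 1)$ routes $\DD$ and satisfies $\congest_{K\uu}(\hat{\FF}) \le (1 + \tfrac{\epsilon}{3})(C^*/K) + \delta$, i.e.\ $\congest_{\uu}(\hat{\FF}) \le (1 + \tfrac{\epsilon}{3}) C^* + K\delta$.

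\paragraph{Rescaling and error bound.}
Set $\algoutput{\beta} \defeq 1/\congest_\uu(\hat{\FF})$ and $\algoutput{\FF} \defeq \algoutput{\beta}\,\hat{\FF}$; then $\imbal(\algoutput{\FF}) = \algoutput{\beta}\,\DD$, $\congest_\uu(\algoutput{\FF}) = 1$, and $(\algoutput{\FF}, \algoutput{\beta}\,\vecone^k) \in S_\instance$. Using $\beta^* = 1/C^* \le \poly(m)$,
\[
\frac{\algoutput{\beta}}{\beta^*} \;=\; \frac{C^*}{(1 + \tfrac{\epsilon}{3}) C^* + K\delta} \;=\; \frac{1}{(1 + \tfrac{\epsilon}{3}) + K\delta\,\beta^*}.
\]
Choosing $\delta$ small enough that $K\delta\,\beta^* \le \epsilon/3$---possible with $\delta = \epsilon/\poly(m)$ since $K, \beta^* \le \poly(m)$, while still $\delta = \Omega(1/\poly(m))$ as required by \Cref{thm:mainKCommFlow}---gives $\algoutput{\beta}/\beta^* \ge 1/(1 + 2\epsilon/3) \ge 1 - \epsilon$ for $\epsilon < 1$. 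The total runtime is one invocation of \Cref{thm:mainKCommFlow}, namely $\almostTime(mk\epsilon^{-1})$. The only real delicacy is the simultaneous choice of $K$ and $\delta$: $K$ must be large enough to make $S_{\instance'}(1, 1)$ nonempty, while the propagated additive error $K\delta$ on the $\uu$-congestion must remain small compared to the multiplicative target $\epsilon\,C^*$; the poly-regularity bounds on both $C^*$ and $\beta^*$ leave ample room for both.
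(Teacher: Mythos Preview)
Your proof is correct and follows essentially the same route as the paper: invoke \Cref{thm:mainKCommFlow} with zero costs and $\RdLower=\RdUpper=1$ to approximate the minimum congestion $C^*$, then rescale to get a feasible pair in $S_{\instance}$. Two minor remarks: (i) the capacity scaling by $K$ is unnecessary, since $S_{\instance}(1,1)$ carries no congestion constraint (it only fixes $\bbeta=\vecone^k$ and requires routing $\DD$), so the paper applies \Cref{thm:mainKCommFlow} directly to $\instance$; (ii) you assert $\beta^*=1/C^*\le \poly(m)$ without argument---this is where the paper uses that $\DD$ has nonzero integer entries, forcing at least one unit of flow onto some edge and hence $C^*\ge 1/\poly(m)$.
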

\begin{proof}
We first show that
to solve the approximate \mcf{} problem, it suffices to find
a flow $\FF$ that routes $\DD$ with minimum congestion $\congest(\FF)$ (by scaling down by the quantity $\max_{(\FF, \beta \vecone^k) \in S_{\instance}} \beta$). In other words, it suffices to find $(\algoutput{\FF}, \algoutput{\beta} \vecone^k) \in S_{\instance}$ so that 
\begin{equation}\label{eq:conc_mcflow_reform}
    \congest(\algoutput{\FF}) \le (1+\epsilon) \min_{(\FF, \beta \vecone^k) \in S_{\instance}} \congest(\FF)\,.
\end{equation}
To see why this is the case, note that $(\FF', \beta' \vecone^k) = \frac{1}{\congest(\algoutput{\FF})} (\algoutput{\FF}, \algoutput{\beta} \vecone^k)$ satisfies $(\FF', \beta' \vecone^k) \in S_{\instance}$. Moreover, it is easy to see that $\max(\FF, \beta \vecone^k) \in \frac{1}{\congest(\FF)} S_{\instance} = \max_{(\FF, \beta \vecone^k) \in S_{\instance}} \beta$. Hence, $\beta' \ge \frac{1}{1+\epsilon} \max_{(\FF, \beta \vecone^k) \in S_{\instance}} \beta \ge (1-\epsilon) \max_{(\FF, \beta \vecone^k) \in S_{\instance}} \beta$. 

Now, we solve \Cref{eq:conc_mcflow_reform}. First, we may assume $\epsilon < 1/4$.
Denote by $\FF^{\star}$ the optimal flow routing demand $\DD$ and let $C_1$ be a constant so that $\|\DD\|_\infty, \|\uu\|_\infty \le m^{C_1}$, which clearly exists since $\DD$ is polynomially bounded. 
Then, we have that $\congest(\FF^{\star}) > m^{-C_1 - 1}$, since there must be at least one unit of flow on some edge $e$. We also have $\congest(\FF^{\star}) \le \|\DD\|_\infty \le m^{C_1 + 1}$, since on each edge there is at most $m \|\DD\|_{\infty}$ flow and $\uu$ has integer entries. 

Finding minimum congested flow follows directly from \Cref{thm:mainKCommFlow} by considering the case where $c_{ei}(x) = 0, v_i(x) = 0$ for any edge $e$, commodity $i$, where we impose the constraint $\im(\FF_j) = \bbeta_j \dd_j$ by making range bounds $\RdLower, \RdUpper$ equal to $1$, and setting error parameters $\epsilon' = \epsilon/2$ and $\delta = \epsilon m^{-(C_1 + 2)}$. Calling the algorithm from \Cref{thm:mainKCommFlow} yields a flow $\algoutput{\FF}$ with $\congest(\algoutput{\FF}) \le (1 + \epsilon/2) \congest(\FF^{\star}) + \delta$, and since $\delta < \epsilon \frac{\congest(\FF^{\star})}{m}$, we have $\congest(\algoutput{\FF}) \le (1 + \epsilon) \congest(\FF^{\star})$, as needed. 
\end{proof}

Finally, we prove that there is an algorithm that solves the
\emph{maximum weighted \mcf{}} problem up to $(1 + \eps)$-approximation in $\almostTime(mk \eps^{-1})$ time (\Cref{coro:approxMaxKCommFlow}).
\begin{corollary}[Approximate Maximum Weighted \mcf{}]
\label{coro:approxMaxKCommFlow}
Given a poly-regular \mcf{} instance $\instance$, weights $\ww \in \Z^k_{\ge 0}$  with $\|\ww\|_{\infty} \le \poly(m)$ and
$\eps > 1 / \poly(m)$, there is an algorithm that, for objective $\mwmcObj(\FF, \bbeta) := \langle \ww, \bbeta \rangle$, in time $\almostTime(mk \eps^{-1})$, outputs $(\algoutput{\FF}, \algoutput{\bbeta}) \in S_{\instance}$ with $\congest(\algoutput{\FF}) \le 1$ so that 
\[
\mwmcObj(\algoutput{\FF}, \algoutput{\bbeta}) \le \max_{(\FF, \bbeta) \in S_{\instance}, \congest(\FF) \le 1} \mwmcObj(\FF, \bbeta)\,.
\]
\end{corollary}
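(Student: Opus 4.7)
The plan is to reduce maximum weighted MCF to the composite MCF problem (\Cref{prob:convex_mcflow}) solved by \Cref{thm:mainKCommFlow}, via an appropriate choice of cost functions and range bounds, combined with a logarithmic search over a scaling parameter. Reading the stated inequality as the intended $(1-\eps)$-approximation guarantee $\mwmcObj(\algoutput{\FF}, \algoutput{\bbeta}) \ge (1-\eps)\max_{(\FF,\bbeta) \in S_\instance,\,\congest(\FF) \le 1} \mwmcObj(\FF,\bbeta)$, let $\OPT$ denote this maximum. Observe that $\OPT \le V^{\max} \defeq k\|\ww\|_\infty \cdot m \|\uu\|_\infty \le \poly(m,k)$, since in any feasible solution $\bbeta_i \le m\|\uu\|_\infty / \|\dd_i\|_\infty \le m \|\uu\|_\infty$.

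Given a guess $\tilde V \in [1/\poly(m), V^{\max}]$ for $\OPT$, I would instantiate \Cref{thm:mainKCommFlow} with $c_{ei}(\cdot) \equiv 0$ for all edges and commodities, $v_i(x) \defeq -\tfrac{\ww_i x}{\eps \tilde V}$ (a linear, and hence convex, function that is $m$-decomposable and $\tfrac{\|\ww\|_\infty}{\eps \tilde V} = \poly(m)$-Lipschitz for $\tilde V, \eps \ge 1/\poly(m)$, via the standard constructions addressed in \Cref{sec:proving_computability}), range bounds $\RdLower = 0$ and $\RdUpper = m\|\uu\|_\infty$, and target errors $\eps_0 = \delta_0 = \eps/4$. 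This returns $(\algoutput{\FF}, \algoutput{\bbeta}) \in S_\instance(0, \RdUpper)$, which automatically satisfies $\congest(\algoutput{\FF}) \le 1$ by definition of $S_\instance$. Let $(\FF^\star, \bbeta^\star)$ minimize $\dcmcObj = -\tfrac{\ww^\top \bbeta}{\eps \tilde V} + \congest(\FF)$ over $S_\instance(0, \RdUpper)$, and let $(\FF^{\mathrm{opt}}, \bbeta^{\mathrm{opt}})$ attain $\OPT$. Since $(\FF^{\mathrm{opt}}, \bbeta^{\mathrm{opt}}) \in S_\instance(0, \RdUpper)$ has $\congest \le 1$,
\[
\dcmcObj(\FF^\star, \bbeta^\star) \le -\tfrac{\OPT}{\eps \tilde V} + 1.
\]
Combining with the guarantee of \Cref{thm:mainKCommFlow} and the fact that $\congest(\algoutput{\FF}) \ge 0$,
\[
-\tfrac{\ww^\top \algoutput{\bbeta}}{\eps \tilde V} \;\le\; \dcmcObj(\algoutput{\FF}, \algoutput{\bbeta}) \;\le\; -\tfrac{\OPT}{\eps \tilde V} + 1 + \eps_0 \cdot \congest(\FF^\star) + \delta_0 \;\le\; -\tfrac{\OPT}{\eps \tilde V} + 1 + \tfrac{\eps}{2},
\]
so $\ww^\top \algoutput{\bbeta} \ge \OPT - \eps \tilde V (1 + \eps/2)$. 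In particular, whenever $\tilde V \le 2 \OPT$, this yields $\ww^\top \algoutput{\bbeta} \ge (1 - O(\eps))\OPT$.

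To ensure some trial has $\OPT/2 \le \tilde V \le \OPT$ (and thus $\tilde V \le 2\OPT$ with $\eps \tilde V$ not too large relative to $\OPT$), I would iterate $\tilde V$ over a geometric sequence of $O(\log(V^{\max}\cdot\poly(m))) = O(\log m)$ powers of two covering $[1/\poly(m), V^{\max}]$; the degenerate case $\OPT \le 1/\poly(m)$ is handled by the all-zero flow, which is trivially a $(1-\eps)$-approximation up to the implicit additive error in such a regime. Across trials, return the $(\algoutput{\FF}, \algoutput{\bbeta})$ maximizing $\ww^\top \algoutput{\bbeta}$, and absorb the $O(1)$ loss in the approximation factor by replacing $\eps$ with $\eps/C$ for a suitable constant $C$. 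Each call to \Cref{thm:mainKCommFlow} costs $\almostTime(mk\eps^{-1})$, and there are $\almostTime(1)$ trials, yielding a total runtime of $\almostTime(mk\eps^{-1})$.

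The main (and only mild) obstacle is the compatibility of the linear $v_i$ with the $m$-decomposable framework and the $\poly(m)$-Lipschitz hypothesis of \Cref{thm:mainKCommFlow}; Lipschitzness is immediate from $|v_i'| = \ww_i/(\eps \tilde V) \le \poly(m)$, and decomposability follows by writing $v_i(x) = \min_{a+b=x,\,a \in [0,0]} 0 + \bigl(-\tfrac{\ww_i b}{\eps \tilde V}\bigr)$ on the bounded domain $b \in [0, \RdUpper]$, with the affine term computable via standard barriers handled in \Cref{sec:proving_computability}. All other ingredients (feasibility, polynomial boundedness of parameters, and the logarithmic sweep) are elementary, so the bulk of the work is already encapsulated in \Cref{thm:mainKCommFlow}.
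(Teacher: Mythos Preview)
There is a genuine gap stemming from a misreading of the feasible set in \Cref{thm:mainKCommFlow}. The set $S_{\instance}(\RdLower,\RdUpper)$, as defined in the paragraph preceding \Cref{prob:convex_mcflow}, consists of pairs $(\FF,\bbeta)$ with $\imbal(\FF)=\DD\ddiag(\bbeta)$ and $\bbeta_i\in[\RdLower,\RdUpper]$; it does \emph{not} carry the capacity constraint $\sum_i \FF_{ei}\le \uu_e$ that appears in $S_{\instance}$. Hence your claim that the output of \Cref{thm:mainKCommFlow} ``automatically satisfies $\congest(\algoutput{\FF})\le 1$'' is unjustified, and more importantly your bound $\eps_0\cdot\congest(\FF^\star)\le \eps/4$ is unfounded: the minimizer $(\FF^\star,\bbeta^\star)$ of $\dcmcObj$ over $S_{\instance}(0,\RdUpper)$ will typically have $\congest(\FF^\star)\gg 1$. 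Indeed, for the ``good'' guess $\tilde V\le \OPT$ the objective $-\tfrac{\ww^\top\bbeta}{\eps\tilde V}+\congest(\FF)$ is strictly decreasing along the ray $t\mapsto (t\FF^{\mathrm{opt}},t\bbeta^{\mathrm{opt}})$ (since $\tfrac{\OPT}{\eps\tilde V}\ge \tfrac{1}{\eps}>1$), so the minimizer sits at the boundary $\bbeta_i=\RdUpper$ with $\congest(\FF^\star)=\Theta(\RdUpper)=\poly(m)$. Your chain then only yields $\ww^\top\algoutput{\bbeta}\ge \OPT-\eps\tilde V\cdot\poly(m)$, which is vacuous, and pushing $\eps_0$ down to absorb this would cost a $\poly(m)$ factor in runtime.

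The paper's proof avoids this by never bounding $\congest(\FF^\star)$ directly. Instead it (i) rescales the returned flow by $t=1/\congest(\FF_\lambda)$ to force congestion exactly $1$, (ii) proves that for the right $\lambda$ one has $\congest(\FF_\lambda)\ge 1-\eps/8$, so $t\le 2$, and (iii) compares the composite objective at $(\FF_\lambda,\bbeta_\lambda)$ not to $(\FF^{\mathrm{opt}},\bbeta^{\mathrm{opt}})$ but to the scaled point $(\congest(\FF_\lambda)\cdot\FF^{\mathrm{opt}},\,\congest(\FF_\lambda)\cdot\bbeta^{\mathrm{opt}})$, which makes the two congestion terms cancel exactly. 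Your reduction is otherwise on the right track (linear $v_i$, geometric sweep over $\tilde V$), but you need this rescaling-and-cancellation step to close the argument.
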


\begin{proof}
First, we may assume $\epsilon < 1/4$ and that there exists $j \in k$ so that $\ww_j > 0$. 
Denote by $(\FF^{\star}, \bbeta^{\star})$ the optimal solution and $\mathrm{OPT} = \langle \ww, \bbeta^{\star} \rangle$. 
We may assume without loss of generality that $\FF^{\star}$ has no directed cycles of positive weight for any commodity. Specifically, for every $j \in [k]$, there does 
not exist $\lambda > 0$ and $C$ a directed cycle so that $\FF_{je} \ge \lambda, \forall e \in C$. This is due to the fact we can eliminate such a cycle and have $\FF$ route the same demand.

Let $C_1$ be a constant so that $\|\uu\|_\infty \|\DD\|_\infty \le m^{C_1}$, which exists since our \mcf{} instance is poly-regular. 
We show that $m^{-C_1 - 1} \le \|\bbeta^{\star}\|_{\infty} \le m^{C_1+1}$. To see why this is the case, pick $\Bar{j} \in [k]$ arbitrarily and let $\Bar{v} \in V$ such that $\DD_{\Bar{v}\Bar{j}} \ne 0$. Since $\DD$ has integer entries, we have $|\DD_{\Bar{v} \Bar{j}}| \ge 1$. Thus, there exists $e \in [m]$ adjacent to $v$ so that $\FF^{\star}_{e\Bar{j}} \ge \frac{\bbeta^{\star}}{m}$. 
Note that $1 \ge \congest(\FF^{\star}) \ge \frac{\FF^{\star}_{e\Bar{j}}}{\uu_e} \ge \bbeta_j^{\star} / m$. We thus obtain $\bbeta_{\Bar{j}}^{\star} \le m \|\uu\|_{\infty} \le m^{C_1+1}$. Using this argument for every $\Bar{j}$ implies $\|\bbeta^{\star}\|_{\infty} \le m^{C_1+1}$. 
Next, we show that $m^{-C_1 - 1} \le \|\bbeta^{\star}\|_{\infty}$. 
First, note that since $w \ge 0$, we must have that $\congest(\FF^{\star}) = 1$. 
Let $\Bar{e} \in E$ be an edge that is saturated, meaning $\sum_{j \in [k]} \FF^{\star}_{\Bar{e}j} = \uu_{\Bar{e}}$. We show that $\uu_{\Bar{e}} \ge \min_{v \in V, j \in [k]} \bbeta^{\star}_j \DD_{vj} / m$. For this, we perform the following procedure. While there exists $v \in V, j \in [k]$ with , where $\im(\FF^{\star}) = \DD \ddiag(\bbeta^{\star})$, pick an edge $e$ adjacent to $v$ so that $\ff_{je}$ is the largest, and delete $\ff_{je}$ along a path. This process terminates in a finite number of steps and arrives at a new flow $\FF^{\star}_{\mathrm{clean}}$ with $\im(\FF^{\star}_{\mathrm{clean}}) = \veczero^{V \times k}$. Since we assumed $\FF^{\star}$ has no directed cycles of positive weight, we have that $\FF^{\star}_{\mathrm{clean}}$ is the null flow ($\FF^{\star}_{\mathrm{clean}} = \veczero^{m \times k}$). Moreover, each operation removes at least $\frac{\min_{v \in V, j \in [k]} \bbeta^{\star}_j \DD_{vj}}{m}$ amount of flow on each edge, which proves $\uu_{\Bar{e}} \ge \min_{v \in V, j \in [k]} \bbeta^{\star}_j \DD_{vj} / m$. 
Consequently, we obtain that there exists $j \in [k]$ so that $\bbeta^{\star}_j \ge m^{-C_1 - 1}$. 

Hence, we can set $\boxbound = m^{C_1+1}$ to be our range parameter, in the sense that we work over the space $\FF \in [0, \boxbound]^{E \times k}, \bbeta \in [0, \boxbound]^{k}$. 
Additionally, let $C_2$ be a constant so that $\|\ww\|_{\infty} \le m^{C_2}$. Since $\|\bbeta^{\star}\|_\infty \ge m^{-C_1-1}$, 
we can bound $m^{-(C_1 + C_2 + 1)} \le \OPT \le m^{C_1 + C_2 + 1}$. 

The algorithm for proving \Cref{coro:approxMaxKCommFlow} is as follows. For every $\lambda \in S$, where $S = \{2^{-\ell + 1} \cdot m^{C_1 + C_2 + 1} \mid \ell \in [\lceil \log_{3/2} m^{2 (C_1 + C_2 + 1)} \rceil]\}$ we call the algorithm given by \Cref{thm:mainKCommFlow} on graph instance $(G, \uu, \DD)$, range bounds $\RdUpper = \boxbound$ and $\RdLower = 0$, error parameters $\epsilon / 10, \delta = \epsilon / 100$, and cost function $c(\FF, \bbeta) = - \lambda \sum_{i \in [k]} w_i \bbeta_i$. For every such value of $\lambda$, denote $(\FF_{\lambda}, \bbeta_{\lambda})$ as the output received. 
For each $\lambda \in S$,
we let $(\FF_{\lambda}', \bbeta_{\lambda}') = t (\FF_{\lambda}, \bbeta_{\lambda})$ be the flow  rescaled by the proper amount $t > 0$ to ensure $\congest(\FF_{\lambda}') = 1$, unless $\congest(\FF_{\lambda}') = 0$, in which case we do not rescale. We then return $\argmax_{\FF_{\lambda}', \lambda \in S} \langle \ww, \bbeta_{\lambda}' \rangle$. 

We now prove that this algorithm achieves the stated guarantees. First, note that the number of values of $\lambda$ we are considering is $\otilde(1)$, and each call to the algorithm in \Cref{thm:mainKCommFlow} takes $\almostTime(mk \eps^{-1})$-time. Hence, the runtime of our algorithm is $\almostTime(mk \eps^{-1})$-time, so it suffices to show correctness. For this, it suffices to show that for some $\lambda \in S$, $\langle \ww, \bbeta_{\lambda}' \rangle \ge \frac{\OPT}{1+\epsilon}$. Let $\lambda^{\star}$ be an element in $S \cap [\frac{3}{2 \OPT}, \frac{2}{\OPT}]$. By the way we defined $S$, we know this intersection is non-empty, so $\lambda^{\star}$ is well-defined. 

We show that $\langle w, \bbeta_{\lambda^{\star}}' \rangle \ge \frac{\OPT}{1+\epsilon}$. 
For this, let $t^{\star} > 0$ be the value for which $(\FF_{\lambda^{\star}}', \bbeta_{\lambda^{\star}}') = t^{\star} (\FF_{\lambda^{\star}}, \bbeta_{\lambda^{\star}})$. First, we show $t^{\star} \le 2$, which is equivalent to showing $\congest(\FF_{\lambda^{\star}}) \ge \frac{1}{2}$. For this, first note that by definition of $\FF^{\star}$, we have that for any $t > 0$, 
\[\max_{\im(\FF) = \DD \ddiag(\bbeta) \mid \congest(\FF) = t} \langle \ww, \bbeta \rangle = t \langle \ww, \bbeta^{\star} \rangle.\]
Next, 
note that \Cref{thm:mainKCommFlow} implies 
\[-\lambda^{\star} \langle \ww, \bbeta_{\lambda^{\star}} \rangle + \congest(\FF_{\lambda^{\star}}) \le -\lambda^{\star} \langle w, \bbeta^{\star} \rangle + \congest(\FF^{\star}) + \epsilon / 8.\]
Hence, \[-\congest(\FF_{\lambda^{\star}}) \lambda^{\star} \langle w, \bbeta^{\star} \rangle + \congest(\FF_{\lambda^{\star}}) \le -\lambda^{\star} \langle w, \bbeta^{\star} \rangle + 1 + \epsilon / 8,\]
where we used $\epsilon/10 + \epsilon/100 \le \epsilon / 8$. 
Since $\lambda \ge \frac{3}{2 \OPT} = \frac{3}{2 \langle w, \bbeta^{\star} \rangle}$, 
this implies $1-\congest(\FF_{\lambda^{\star}}) \le \epsilon / 8$, so $\congest(\FF_{\lambda^{\star}}) \ge 1 - \epsilon/8 \ge \frac{1}{2}$. 

Next, by \Cref{thm:mainKCommFlow}, we also have that 
\[-\lambda^{\star} \langle \ww, \bbeta_{\lambda^{\star}} \rangle + \congest(\FF_{\lambda^{\star}}) \le -\lambda^{\star} \cdot \congest(\FF_{\lambda^{\star}}) \langle w, \bbeta^{\star} \rangle + \congest(\FF_{\lambda^{\star}}) + \epsilon / 8,\]
by comparing with feasible solution $(\congest(\FF_{\lambda^{\star}}) \cdot \FF^{\star}, \congest(\FF_{\lambda^{\star}}) \cdot \bbeta^{\star})$. By cancellation of term $\congest(\FF_{\lambda^{\star}})$ on both sides and 
multiplying both sides by $\frac{t^{\star}}{\lambda^{\star}}$, we obtain \[\langle \ww, \bbeta_{\lambda^{\star}} \rangle \ge \langle \ww, \bbeta^{\star} \rangle - \frac{\epsilon}{3} \cdot \frac{t^{\star}}{\lambda^{\star}}.\]
As $t^{\star} \le 2$, and $1/\lambda^{\star} \ge 2/\OPT$, we get \[\langle \ww, \bbeta_{\lambda^{\star}} \rangle \ge \OPT (1 - \epsilon / 2) \ge \frac{\OPT}{1+\epsilon}.\]
\end{proof}

\section*{Acknowledgements}

We thank Yujia Jin for helpful comments and discussions. 
Part of the work was done while Li Chen was at the Carnegie Mellon University and was supported by NSF Grant CCF-2330255.
Andrei Graur was supported in part by the Nakagawa departmental fellowship award from the Management Science and Engineering Department at Stanford University, NSF CAREER Grant CCF1844855, and NSF Grant CCF-1955039. Aaron Sidford was supported in part by a Microsoft Research Faculty Fellowship, NSF CAREER Grant CCF1844855, NSF Grant CCF-1955039, and a PayPal research award, and a Sloan Research Fellowship. Part of this work was conducted while authors were visiting the Simons Institute for the Theory of Computing.

\bibliographystyle{alpha}
\bibliography{bib.bib}

\appendix

\section{Implementing Gradient and Extragradient Steps}
\label{apx:implementing_steps}

As mentioned in \Cref{subsec:implement_oracles}, in this section, we prove \Cref{lem:xgrad_steps_implementability}.
In particular, we provide algorithms (\Cref{alg:grad_step} and \Cref{alg:exgrad_step}) that implement our gradient and extragradient oracles (\Cref{defn:gen_apx_grad}, \Cref{defn:gen_apx_exgrad})
in the specialized setting introduced in \Cref{subsec:implement_oracles}, where we work under \Cref{assm:4.2space} and \Cref{def:opt_oracles}.
The algorithms which implement these oracles (\Cref{alg:grad_step} and \Cref{alg:exgrad_step}) are adaptations of algorithms 2 and 3, respectively, in \cite{JT23}. The main difference is replacing the exact minimization on space $\uset \subseteq \R^n$ (used in lines 2-3 of algorithms 2 and 3 in \cite{JT23}) with an oracle that outputs an approximate minimizer to a composite minimization problem (\Cref{defn:apx_best_resp}). The analysis is very similar to the other paper. 

For the rest of the section, for simplicity, we work with $\psi$ defined only on $\uset$. 
We start with the pseudocode and proof of correctness for the gradient oracle. 

\begin{algorithm}[htp!]
\caption{Composite gradient step oracle}\label{alg:grad_step}
\KwData{$\zz = (\uu, \vv) \in \zset$, $\hh = (\hh^{\uset}, \hh^{\vset}) \in \zset^*$, 
$\AA \in \R^{m \times n}, \alpha, \beta \ge 0$, convex function $\psi$, access to gradient operator $\gg:\zset \to \zset^*$, access to a $\delta$-approximate composite best response oracle $\oracle^{\psi, r^{\alpha}}_{\mathrm{BR}}: \uset \times \uset^{\star} \to \uset$ with respect to $\psi$ and $r^{\alpha}$}
\SetKwProg{Fn}{Function}{:}{}
\SetKwFunction{gradoracle}{ApxGradStep}
\Fn{\gradoracle{$\zz, \hh, \gg, \alpha, \beta, \psi$}}{
    $\uu_1 = \oracle^{\psi, r^{\alpha}}_{\mathrm{BR}}(\zz^{\vset}, \hh^{\uset} - \nabla_{\uset} r^{\alpha}(\zz))$\;
    $\vv_1 = \argmin_{\yy \in \vset} \langle \hh^{\vset} - \nabla_{\vset} r^{\alpha}((\uu_1, \zz^{\vset})) - \nabla_{\vset} r^{\alpha}(\zz), \yy \rangle + V^{\beta r_2}_{\zz^{\vset}}(\yy)$\;
    $\uu_2 = \oracle^{\psi, r^{\alpha}}_{\mathrm{BR}}(\vv_1, \hh^{\uset} - \nabla_{\uset} r^{\alpha}(\zz))$\;
    \Return $\zz' = (\uu_2, \vv_1)$
}
    
\end{algorithm}

\Cref{alg:grad_step} differs from Algorithm 2 in \cite{JT23} in the lines which define $\uu_1, \uu_2$, as these best response steps on $\uset$ are now approximate instead of exact. Having defined block compatible regularizers in \Cref{subsec:implement_oracles}, we also define several quantities that pertain to the Bregman divergence with respect to $r^{(\alpha)}$. 
First, we let \begin{equation}\label{not:param_breg}
    V_{\zz}^{(\alpha)}(\zz') := V_{\zz}^{(r^{(\alpha)})}(\zz').
\end{equation} 
In spirit of \eqref{not:param_breg}, we also define \[\Delta^{(\alpha)}_{\zz}(\zz',\ww) = V_{\zz}^{(\alpha)}(\ww) - V_{\zz'}^{(\alpha)}(\ww) - V_{\zz}^{(\alpha)}(\zz').\]
We now provide the lemma about the correctness of \Cref{alg:grad_step}.

\begin{lemma}[Composite gradient step]\label{lem:apx_grad_step}
    Let $\alpha, \beta, \delta \ge 0$ be parameters with $\beta \ge \alpha$ so that $r^{\alpha}$ is jointly convex over $\uset \times \vset$. Let $\psi$ be a convex function on $\uset$. 
    Assuming access to a $\delta$-approximate composite best response oracle $\oracle_{\mathrm{BR}}^{\psi, r^{\alpha}}$ with respect to $\psi$ and $r^{\alpha}$, \Cref{alg:grad_step} is a $3 \delta$-approximate gradient step oracle with respect to convex function $\psi$ and regularizers $(r^{\alpha}, \beta r_2)$. 
\end{lemma}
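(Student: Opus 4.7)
The plan is to combine the first-order conditions from the three subproblems in \Cref{alg:grad_step}, exploiting the block compatibility of $\{r^{\alpha}\}$ from \Cref{def:param_reg}. Let $\vv_0 := \zz^{\vset}$. Block compatibility gives $\nabla_{\vset} r^{\alpha}(\uu, \vv) = \nabla_{\vset} r_1(\uu) + \alpha \nabla r_2(\vv)$ with $\nabla_{\vset} r_1(\uu)$ independent of $\vv$, because $r_1(\uu, \cdot)$ is linear in $\vv$. This decoupling is exactly what lets the two-pass alternation $\uu_1 \to \vv_1 \to \uu_2$ emulate the ideal coupled composite gradient step $\argmin_{\zz' \in \zset} \langle \hh, \zz' - \zz \rangle + \psi(\zz') + V^{r^{\alpha}}_{\zz}(\zz')$ despite never exactly optimizing over $\uset$ and $\vset$ jointly.

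First, I would invoke the ABRO's primary guarantee at $\uu_2$ with $\BRconstant = 1$: for every $\uu \in \uset$, $\langle \hh^{\uset} - \nabla_{\uset} r^{\alpha}(\zz) + \nabla_{\uset} r^{\alpha}(\uu_2, \vv_1), \uu_2 - \uu \rangle + \psi(\uu_2) - \psi(\uu) \le \delta$. Applying the three-point Bregman identity on the $\uset$-block yields $\langle \hh^{\uset}, \uu_2 - \uu_{\ww} \rangle + \psi(\uu_2) - \psi(\uu_{\ww}) \le \Delta^{(\alpha)}_{\zz}(\zz', \ww)|_{\uset} + \delta$, where $|_{\uset}$ denotes the $\uset$-component of the expansion $\Delta^{(\alpha)}_{\zz}(\zz', \ww) = \langle \nabla r^{\alpha}(\zz') - \nabla r^{\alpha}(\zz), \ww - \zz' \rangle$. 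This accounts for the first $\delta$ slack.

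Second, I would use the exact optimality of $\vv_1$ over $\vset$ together with the three-point Bregman identity on the $\beta r_2$ divergence, which produces $V^{\beta r_2}_{\vv_0}(\vv_{\ww})$ minus the nonnegative terms $V^{\beta r_2}_{\vv_0}(\vv_1)$ and $V^{\beta r_2}_{\vv_1}(\vv_{\ww})$ (dropped). Block compatibility reduces the remaining expression to $\Delta^{(\alpha)}_{\zz}(\zz', \ww)|_{\vset}$ plus a residual of the form $\langle \nabla_{\vset} r_1(\uu_1) - \nabla_{\vset} r_1(\uu_2), \vv_1 - \vv_{\ww} \rangle$, which arises because the $\vv_1$-subproblem used the predictor $\uu_1$ instead of the output $\uu_2$. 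To bound this residual, I would introduce the exact best responses $\uu_1^{\star}$ (for $\vv_0$) and $\uu_2^{\star}$ (for $\vv_1$) as telescoping intermediates: in the exact case, the first-order conditions for the $\uu_1^{\star}$- and $\uu_2^{\star}$-subproblems, paired at each other's minimizer and summed, cancel the residual. The ABRO's second property, $\|\nabla_{\vset} r_1(\uu_i^{\star}) - \nabla_{\vset} r_1(\uu_i)\|_{\infty} \le \delta/2$ for $i \in \{1, 2\}$, combined with $\|\vv_1 - \vv_{\ww}\|_1 \le 2$ (since $\vv_1, \vv_{\ww} \in \Delta^m$), then turns this cancellation into an inequality with combined slack at most $2 \delta$ via H\"older. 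Together with the $\delta$ from the primary ABRO guarantee at $\uu_2$, this gives the total $3 \delta$ claimed by the lemma.

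The main obstacle will be reconciling the two distinct ``$\uu$'' values appearing in the argument---the predictor $\uu_1$ inside the $\vv_1$-subproblem and the output $\uu_2$ recorded in $\zz'$---since they minimize two different problems (one with $\vv_0$ fixed, the other with $\vv_1$ fixed) and are not close in any direct sense. Block compatibility rescues this by reducing the potentially multiplicative coupling $r_1(\uu, \vv)$ to a $\vv$-linear one, so that the discrepancy is fully captured by the vector $\nabla_{\vset} r_1(\uu_1) - \nabla_{\vset} r_1(\uu_2) \in \R^m$; the $\ell_1$-boundedness of the simplex $\vset = \Delta^m$ then allows each coordinate-wise ABRO slack of $\delta/2$ to contribute only $O(\delta)$ to the regret bound.
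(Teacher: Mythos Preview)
Your overall decomposition into $\uset$-side and $\vset$-side contributions, and your use of the ABRO's second property together with H\"older on the simplex to pass from $\uu_i$ to the exact best responses $\uu_i^{\star}$ at cost $\delta$ each, both match the paper. The gap is in how you handle the \emph{exact-case} residual
\[
\langle \nabla_{\vset} r_1(\uu_2^{\star}) - \nabla_{\vset} r_1(\uu_1^{\star}),\, \vv_1 - \vv_{\ww} \rangle.
\]
You claim this is ``cancelled'' by summing the first-order conditions of the two $\uu$-subproblems evaluated at each other's minimizer. But those conditions involve $\nabla_{\uset} r^{\alpha}$ and the difference $\uu_1^{\star} - \uu_2^{\star}$; they say nothing about $\nabla_{\vset} r_1$ paired against the arbitrary comparator $\vv_{\ww}$. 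The exact-case residual is \emph{not} zero in general, and it depends on $\vv_{\ww}$, so no amount of $\uset$-side optimality information can make it vanish.

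The related error is that you drop the Bregman terms $V^{\beta r_2}_{\vv_0}(\vv_1)$ and $V^{\beta r_2}_{\vv_1}(\vv_{\ww})$ coming out of the $\vv_1$-optimality three-point identity. These are precisely what is needed to absorb the exact-case residual. The paper keeps them and observes (via \Cref{lem:joint_convex4&5}, \Cref{cor:relative_sm_str}) that $\nabla_{\vset} r^{\alpha}(\xx_{\mathrm{BR}}(\vv), \vv)$ is, up to the constant $\hh^{\vset}$, a subgradient of the marginal function $\Phi(\vv) = \min_{\xx \in \uset}\langle \hh,(\xx,\vv)\rangle + r^{\alpha}(\xx,\vv) + \psi(\xx)$. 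Joint convexity of $r^{\alpha}$ makes $\Phi$ relatively smooth with respect to $\alpha r_2$, so the three-point identity for $\Phi$ gives
\[
\langle \nabla_{\vset} r^{\alpha}(\uu_2^{\star},\vv_1) - \nabla_{\vset} r^{\alpha}(\uu_1^{\star},\vv_0),\, \vv_1 - \vv_{\ww}\rangle \le V^{\alpha r_2}_{\vv_1}(\vv_{\ww}) + V^{\alpha r_2}_{\vv_0}(\vv_1) \le V^{\beta r_2}_{\vv_1}(\vv_{\ww}) + V^{\beta r_2}_{\vv_0}(\vv_1),
\]
and these are exactly the two terms you discarded. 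Note this is also where the hypothesis $\beta \ge \alpha$ and the joint convexity of $r^{\alpha}$ are actually used; your outline never invokes either.
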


Before presenting the proof, we introduce a helper lemma that helps us reason about quotient functions. In particular, a function that shows up in the analysis of our correctness lemmas, and that is useful for the intuition regarding the optimization method that our algorithms are similar to, is $\Phi_{\hh}:\vset \to \vset$ defined by \[\Phi_{\hh}(\yy) := \min_{\xx \in \uset} \langle \hh, (\xx, \yy) \rangle + r^{\alpha}(\xx, \yy) + \psi(\xx),\]
for $\hh \in \zset^*$ some vector. In the context of \Cref{alg:grad_step} and \Cref{alg:exgrad_step}, vector $\hh \in \uset^{\star}$ is depends on the vector $\hh \in \zset^*$, and pivot $\zz \in \zset$ (also pivot $\vv\in \vset$ in the case of \Cref{alg:exgrad_step}) received as input.

\begin{lemma}[Lemmas 4 and 5 in \cite{JT23} restated]\label{lem:joint_convex4&5}
    Let $\uset \subseteq \R^n$ and $\vset \subseteq \R^m$ be convex compact subsets. Suppose $G:\uset \times \vset \to \R$ is jointly convex over its argument $(\xx, \yy) \in \uset \times \vset$. For $\yy \in \vset$, define $\xx_{\mathrm{br}}(\yy) = \argmin_{\xx \in \uset} G(\xx, \yy)$ and $\rho(\yy) = G(\xx_{\mathrm{br}}(\yy), \yy)$. Then, for all $\yy \in \vset$, $\partial_{\yy} G(\xx_{\mathrm{br}}(\yy), \yy) \subset \partial \rho(\yy)$.
    Furthermore, suppose that for any $\xx \in \uset$, $G(\xx, \cdot)$ always is convex $\Lambda:\vset \to \R$ plus a linear function (which may depend on $\xx$). Then, $\Lambda - \rho$ is convex, and $\rho-\Lambda'$ is convex for any $\Lambda':\yset \to \R$ such that $G(\xx,\yy) - \Lambda'(\yy)$ is jointly convex in $(\xx,\yy)$. 
\end{lemma}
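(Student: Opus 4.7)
The plan is to prove the two assertions directly from the definition of the subdifferential for a convex function, leveraging joint convexity of $G$ together with first-order optimality of $\xx_{\mathrm{br}}(\yy)$.

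For the first claim, fix $\yy \in \vset$, set $\xx^\star = \xx_{\mathrm{br}}(\yy)$, and pick an arbitrary $\ss \in \partial_{\yy} G(\xx^\star, \yy)$. The target is to show $\rho(\yy') \ge \rho(\yy) + \l \ss, \yy' - \yy\r$ for every $\yy' \in \vset$. The key step is to show that for every $\xx' \in \uset$ and every $\yy' \in \vset$,
\begin{equation}\label{eq:plan-danskin}
G(\xx', \yy') \ge G(\xx^\star, \yy) + \l \ss, \yy' - \yy\r.
\end{equation}
By joint convexity of $G$, there exists a joint subgradient of the form $(\gg, \ss) \in \partial G(\xx^\star, \yy)$ whose $\yy$-component is the chosen $\ss$ (this is a standard Moreau–Rockafellar-type extension which holds under the mild regularity present in the setting of \cite{JT23}; in the smooth case $\gg = \nabla_{\xx} G(\xx^\star, \yy)$ and there is nothing to prove). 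First-order optimality of $\xx^\star$ over the convex set $\uset$ yields $\l \gg, \xx' - \xx^\star\r \ge 0$ for all $\xx' \in \uset$. Combining the joint-subgradient inequality with this nonnegativity gives \eqref{eq:plan-danskin}. Taking the infimum over $\xx' \in \uset$ on the left-hand side of \eqref{eq:plan-danskin} yields $\rho(\yy') \ge \rho(\yy) + \l \ss, \yy' - \yy\r$, i.e., $\ss \in \partial \rho(\yy)$.

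For the second claim, write $G(\xx, \yy) = \Lambda(\yy) + L_{\xx}(\yy)$ where $L_{\xx}(\cdot)$ is affine for each fixed $\xx$. Then
\begin{equation*}
\rho(\yy) \;=\; \Lambda(\yy) + \inf_{\xx \in \uset} L_{\xx}(\yy),
\end{equation*}
so $\rho - \Lambda = \inf_{\xx \in \uset} L_{\xx}$ is an infimum of affine functions of $\yy$ and hence concave; equivalently, $\Lambda - \rho$ is convex. For the last assertion, set $\tilde G(\xx, \yy) = G(\xx, \yy) - \Lambda'(\yy)$, which is jointly convex by hypothesis. Applying the standard partial-minimization result that the infimum of a jointly convex function in one of its variables is convex in the remaining variable (this follows from the fact that the projection of a convex epigraph onto a subset of coordinates is convex), we get that $\tilde\rho(\yy) \defeq \inf_{\xx \in \uset} \tilde G(\xx, \yy) = \rho(\yy) - \Lambda'(\yy)$ is convex in $\yy$.

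The only mildly delicate point is the extension of a partial subgradient $\ss \in \partial_{\yy} G(\xx^\star, \yy)$ to a joint subgradient $(\gg, \ss) \in \partial G(\xx^\star, \yy)$ used in the proof of part one; this is immediate when $G$ is differentiable in $\xx$ at $(\xx^\star, \yy)$ (as in the applications in \Cref{alg:grad_step,alg:exgrad_step}, where the regularizers $r^{\alpha}$ are smooth on the interior of their domain). I expect this to be the main technical wrinkle; under smoothness it collapses to the clean computation above, and otherwise it follows from standard results in \cite{rockafellar1970convex}.
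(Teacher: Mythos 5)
The paper itself gives no proof of this lemma --- it is imported from \cite{JT23} --- so I am evaluating your argument on its own merits. Your treatment of the second claim is correct and complete: $\rho - \Lambda = \inf_{\xx \in \uset} L_{\xx}$ is an infimum of affine functions of $\yy$ and hence concave, and $\rho - \Lambda'$ is the partial minimization of the jointly convex $G - \Lambda'$ (the infimum is attained by compactness of $\uset$, so no extended-value issues arise).

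The first claim is where the gap is, and it is more serious than the ``mild wrinkle'' you describe. Your argument takes an \emph{arbitrary} $\ss \in \partial_{\yy} G(\xx^\star,\yy)$, extends it to a joint subgradient $(\gg,\ss)$, and then asserts that first-order optimality of $\xx^\star$ gives $\l\gg,\xx'-\xx^\star\r \ge 0$ for all $\xx'\in\uset$. But optimality of $\xx^\star$ over $\uset$ only guarantees that \emph{some} element $\gg_0$ of $\partial_{\xx}G(\xx^\star,\yy)$ satisfies this variational inequality --- not the particular $\gg$ that happens to pair with your chosen $\ss$. The two quantifiers cannot be swapped. Concretely, for $G(x,y)=|x-y|$ on $[-1,1]^2$ one has $\xx_{\mathrm{br}}(y)=y$ and $\rho\equiv 0$, so $\partial\rho(y)=\{0\}$ for interior $y$, yet $\partial_{y} G(\xx_{\mathrm{br}}(y),y)=[-1,1]$; the asserted inclusion (and therefore any proof of it) fails for general nonsmooth jointly convex $G$. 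The statement is only valid, and is only used in \Cref{cor:relative_sm_str}, when the $\yy$-component of the subdifferential is a singleton --- e.g.\ when $G$ is a differentiable function plus a convex function of $\xx$ alone, as for $G = \l\hh,\cdot\r + r^{\alpha} + \psi(\xx)$ in \Cref{alg:grad_step,alg:exgrad_step}. In that case the fix is to run your computation in the opposite order: take the joint subgradient $(\gg_0,\ss_0)$ supplied by the optimality certificate, observe that $\ss_0 = \nabla_{\yy}G(\xx^\star,\yy)$ is the unique element of $\partial_{\yy}G(\xx^\star,\yy)$, and conclude $\ss_0\in\partial\rho(\yy)$ exactly as in your displayed inequality. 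So your Danskin-style computation is the right mechanism, but the write-up should either assume this differentiability explicitly or start from the optimality certificate rather than from an arbitrary partial subgradient.
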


A direct corollary of this lemma is \Cref{cor:relative_sm_str}, which proves that for any $\hh \in \zset^*$, the function $\Phi_{\hh}$ is $\alpha$-relatively smooth with respect to $r_2$. This corollary is used in proving both \Cref{lem:apx_grad_step} and \Cref{lem:apx_exgrad_step}.

\begin{corollary}\label{cor:relative_sm_str}
    Let $\{r^{\alpha}\}$ be a family of block compatible regularizers, with blocks $r_1, r_2$ and $\alpha_0 > 0$ the smallest coefficient so that $r^{\alpha'}$ is jointly convex over $\uset \times \vset$ for every $\alpha' \ge \alpha_0$. Let $\alpha \ge \alpha_0$
    For $\hh \in \zset^*$,
    define $\Phi_{\hh}:\vset \to \vset$ by \[\Phi_{\hh}(\yy) := \min_{\xx \in \uset} \langle \hh, (\xx, \yy) \rangle + r^{\alpha}(\xx, \yy) + \psi(\xx),\]
    and \[\xx_{BR}(\yy) := \argmin_{\xx \in \uset} \langle \hh, (\xx, \yy) \rangle + r^{\alpha}(\xx, \yy) + \psi(\xx) .\]
    Then, $\Phi$ is $\alpha$-relative smooth and $\alpha - \alpha_0$ strongly convex with respect to $r_2$. 
    Moreover, for any $\aa, \bb, \cc \in \vset$, we have 
    \[\langle \nabla_{\vset} r^{\alpha}(\xx_{\mathrm{BR}}(\aa), \aa) - \nabla_{\vset} r^{\alpha}(\xx_{\mathrm{BR}}(\bb), \bb), \aa - \cc \rangle \le V_{\aa}^{\alpha r_2}(\cc) + V_{\bb}^{\alpha r_2}(\aa) . \]
\end{corollary}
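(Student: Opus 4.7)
The plan is to invoke Lemma \ref{lem:joint_convex4&5} twice with appropriate choices of $\Lambda$ and $\Lambda'$, and then derive the three-point inequality from a short chain of standard convexity arguments.

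First, I will set $G(\xx, \yy) := \langle \hh, (\xx, \yy) \rangle + r^{\alpha}(\xx, \yy) + \psi(\xx)$, which is jointly convex on $\uset \times \vset$ because $\alpha \ge \alpha_0$, $\psi$ is convex in $\xx$, and affine terms preserve joint convexity. By block compatibility (\Cref{def:param_reg}), $r_1(\xx, \cdot)$ is linear for each $\xx$, so as a function of $\yy$, $G(\xx, \cdot)$ equals $\alpha r_2(\yy)$ plus an affine function of $\yy$ (whose coefficients depend on $\xx$) plus terms independent of $\yy$. Taking $\Lambda(\yy) := \alpha r_2(\yy)$ in the ``furthermore'' part of Lemma \ref{lem:joint_convex4&5}, I obtain that $\Lambda - \Phi_{\hh} = \alpha r_2 - \Phi_{\hh}$ is convex, which is precisely $\alpha$-relative smoothness of $\Phi_{\hh}$ with respect to $r_2$.

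For relative strong convexity, I take $\Lambda'(\yy) := (\alpha - \alpha_0) r_2(\yy)$. Then $G(\xx, \yy) - \Lambda'(\yy) = \langle \hh, (\xx, \yy) \rangle + r^{\alpha_0}(\xx, \yy) + \psi(\xx)$, which is jointly convex by the defining property of $\alpha_0$. Applying Lemma \ref{lem:joint_convex4&5} once more yields that $\Phi_{\hh} - (\alpha - \alpha_0) r_2$ is convex, i.e., $\Phi_{\hh}$ is $(\alpha - \alpha_0)$-relatively strongly convex with respect to $r_2$.

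For the three-point inequality, the first assertion of Lemma \ref{lem:joint_convex4&5} gives $\hh^{\vset} + \nabla_{\vset} r^{\alpha}(\xx_{\mathrm{BR}}(\yy), \yy) \in \partial \Phi_{\hh}(\yy)$ for every $\yy \in \vset$, so writing $\gg_{\aa}$ and $\gg_{\bb}$ for these subgradients at $\aa$ and $\bb$, the $\hh^{\vset}$ contributions cancel in $\gg_{\aa} - \gg_{\bb}$ and the left-hand side of the target bound equals $\langle \gg_{\aa} - \gg_{\bb}, \aa - \cc \rangle$. Decomposing this as $\langle \gg_{\aa}, \aa - \cc \rangle + \langle \gg_{\bb}, \cc - \bb \rangle + \langle \gg_{\bb}, \bb - \aa \rangle$, I bound each piece: from $\alpha$-relative smoothness at $\aa$, $\langle \gg_{\aa}, \aa - \cc \rangle \le \Phi_{\hh}(\aa) - \Phi_{\hh}(\cc) + V_{\aa}^{\alpha r_2}(\cc)$; from convexity of $\Phi_{\hh}$ at $\bb$, $\langle \gg_{\bb}, \cc - \bb \rangle \le \Phi_{\hh}(\cc) - \Phi_{\hh}(\bb)$; and from relative smoothness at $\bb$ applied to $\aa$, $\langle \gg_{\bb}, \bb - \aa \rangle \le \Phi_{\hh}(\bb) - \Phi_{\hh}(\aa) + V_{\bb}^{\alpha r_2}(\aa)$. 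Summing telescopes all the $\Phi_{\hh}$ terms and leaves exactly $V_{\aa}^{\alpha r_2}(\cc) + V_{\bb}^{\alpha r_2}(\aa)$.

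The main obstacle is essentially bookkeeping rather than a substantive difficulty: one must verify that $\nabla_{\vset} r^{\alpha}$ is well-defined (which the statement implicitly assumes via its use of $\nabla_{\vset}$), and be careful that the three-point derivation works with subgradients in case $\Phi_{\hh}$ is not everywhere differentiable. Both are standard given that $r^{\alpha}$ is smooth in $\yy$ and $\uset$ is compact so $\xx_{\mathrm{BR}}(\yy)$ is well-defined and the envelope function $\Phi_{\hh}$ inherits convexity from $G$.
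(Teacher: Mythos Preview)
Your proposal is correct and follows essentially the same approach as the paper: both apply Lemma~\ref{lem:joint_convex4&5} with $\Lambda=\alpha r_2$ and $\Lambda'=(\alpha-\alpha_0)r_2$ for smoothness and strong convexity, and both reduce the three-point inequality to the gradient of $\Phi_{\hh}$ via the subgradient inclusion. The only cosmetic difference is that the paper writes $\langle \nabla\Phi_{\hh}(\aa)-\nabla\Phi_{\hh}(\bb),\aa-\cc\rangle = V_{\aa}^{\Phi_{\hh}}(\cc)+V_{\bb}^{\Phi_{\hh}}(\aa)-V_{\bb}^{\Phi_{\hh}}(\cc)$ directly from the three-point Bregman identity~\eqref{def_3_breg_opt}, drops the nonnegative term $V_{\bb}^{\Phi_{\hh}}(\cc)$, and then bounds $V^{\Phi_{\hh}}\le V^{\alpha r_2}$ via relative smoothness, whereas you unfold the same computation by hand into three inner-product pieces.
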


\begin{proof}
    To show relative smoothness, it suffices to prove that for any $\yy_1, \yy_2 \in \vset$, \[\Phi_h(\yy_2) \le \Phi_h(\yy_1) + \langle \nabla \Phi_h(\yy_1), \yy_2 - \yy_1 \rangle + V_{\yy_1}^{\alpha r_2}(\yy_2).\]
    Note that \Cref{lem:joint_convex4&5} implies that the function $\PhiHat(\yy) := \alpha r_2(\yy) - \Phi_h(\yy)$ is convex. Therefore,
    \[\PhiHat(\yy_2) \ge \PhiHat(\yy_1) + \langle \nabla \PhiHat(\yy_1), \yy_2 - \yy_1 \rangle,\]
    which implies that \[\alpha r_2(\yy_2) - \Phi_h(\yy_2) \ge \alpha r_2(\yy_1) - \Phi(\yy_1) + \langle \nabla \alpha r_2(\yy_1) - \nabla \Phi(\yy_1), \yy_2 - \yy_1 \rangle.\]
    Using that $\alpha r_2(\yy_2) - \alpha r_2(\yy_1) - \langle \alpha \nabla r_2(\yy_1), \yy_2 - \yy_1 \rangle = V_{\yy_1}^{\alpha r_2}(\yy_2)$, we obtain the desired inequality. 

    To show relative strong convexity, it suffices to prove that for any $\yy_1, \yy_2 \in \vset$, \[\Phi_h(\yy_2) \ge \Phi_h(\yy_1) + \langle \nabla \Phi_h(\yy_1), \yy_2 - \yy_1 \rangle + V_{\yy_1}^{(\alpha - \alpha_0) r_2}(\yy_2).\]
    Define $\PhiBar_h(\yy) := \Phi_h(\yy) - (\alpha - \alpha_0) r_2(\yy)$, which, by \Cref{lem:joint_convex4&5}, is convex, as $\langle \hh, (\xx, \yy) \rangle + r^{\alpha_0}(\xx, \yy) + \psi(\xx)$ is jointly convex on $\uset \times \vset$. Thus, we have that \[\PhiBar_h(\yy_2) \ge \PhiBar_h(\yy_1) + \langle \nabla \PhiBar_h(\yy_1), \yy_2 - \yy_1 \rangle,\]
    which implies that \[\Phi_h(\yy_2) - (\alpha - \alpha_0) r_2(\yy_2) \ge \Phi_h(\yy_1) - (\alpha - \alpha_0) r_2(\yy_1) + \langle \nabla \Phi(\yy_1) - (\alpha - \alpha_0) \nabla r_2(\yy_1), \yy_2 - \yy_1 \rangle.\]
    Using that $(\alpha - \alpha_0) (r_2(\yy_2) - r_2(\yy_1)) - \langle (\alpha - \alpha_0) \nabla r_2(\yy_1), \yy_2 - \yy_1 \rangle = V_{\yy_1}^{(\alpha - \alpha_0) r_2}(\yy_2)$, we obtain the desired inequality. 

    For the ``Moreover'' part of the corollary, note that 
    \[\langle \nabla_{\vset} r^{\alpha}(\xx_{\mathrm{BR}}(\aa), \aa) - \nabla_{\vset} r^{\alpha}(\xx_{\mathrm{BR}}(\bb), \bb), \aa - \cc \rangle = \langle \nabla \Phi_h(\aa) - \nabla \Phi_h(\bb), \yy - \cc \rangle,\]
    where we have used that $\nabla_{\vset} r^{\alpha}(\xx_{\mathrm{BR}}(\yy), \yy) = \nabla \Phi_h(y) - h, \enspace \forall \yy \in \zset$. 
    Now, by identity \eqref{def_3_breg_opt}, we have that \[\langle \nabla \Phi_h(\aa) - \nabla \Phi_h(\bb), \yy - \cc \rangle \le V_{\aa}^{\Phi_h}(\cc) + V_{\bb}^{\Phi_h}(\aa).\]
    By the first part of this corollary, $\Phi_h$ is $\alpha$-smooth relative to $r_2$, 
    which implies $V_{\aa}^{\Phi_h}(\cc) + V_{\bb}^{\Phi_h}(\aa) \le V_{\aa}^{\alpha r_2}(\cc) + V_{\bb}^{\alpha r_2}(\aa)$. 
\end{proof}

The relative strong convexity property of \Cref{cor:relative_sm_str} is not used in the proof of correctness of our oracles (unlike the relative smoothness), yet it perhaps offers some insight into why these methods work. In some sense, since we work with a family of regularizers and parameters $\alpha$ with the property that $\alpha = C \alpha_0$ for some constant $C$, we have that the subproblems corresponding to the gradient and extragradient steps correspond to approximately minimizing a function that is $\Theta(1)$ strongly convex and smooth relative to $r_2$. Standard gradient methods show that one gradient descent step for minimizing such functions make good enough progress. Inspecting the pseudocode of our implementations, we can notice that this is in fact close to what \Cref{alg:grad_step} and \Cref{alg:exgrad_step} are doing. 
We are now ready to prove our first lemma about correctness (\Cref{lem:apx_grad_step}). 

\begin{proof}[Proof of \Cref{lem:apx_grad_step}]
    Using notation in \Cref{sec:extragrad_framework}, it suffices to show \[\mathrm{regret}_{\hh, \psi}((\uu_2, \vv_1); \ww) \le \Delta^{(\alpha)}_{\zz}((\uu_2, \vv_1), \ww) + V_{\zz^{\vset}}^{\beta r_2}(\ww^{\vset}) + 3 \delta, \enspace \forall \ww \in \zset.\]
    Using identity \Cref{def_3_breg_opt}, this is equivalent to showing \[\mathrm{regret}_{\hh + \nabla r^{(\alpha)}(\uu_2, \vv_1) - \nabla r^{(\alpha)}(\zz), \psi}((\uu_2, \vv_1); \ww) \le V_{\zz^{\vset}}^{\beta r_2}(\ww^{\vset}) + 3 \delta, \enspace \forall \ww \in \zset.\]
    Hence, we fix $\ww = (\ww^{\uset}, \ww^{\vset}) \in \zset$, and show the inequality above. 
    By definition of $\uu_2$ and property 2 of \Cref{defn:apx_best_resp}, we have that \[\langle \hh^{\uset} + \nabla_{\uset} r^{\alpha}(\uu_2, \vv_1) - \nabla_{\uset} r^{\alpha}(\zz), \uu_2 - \xx \rangle \le \psi(\xx) - \psi(\uu_2) + \delta, \enspace \forall \xx \in \uset. \] 
    which, for $\xx \leftarrow \ww^{\uset}$ yields \begin{equation}\label{ineq:x-side_ready_grad}
        \langle \hh^{\uset} + \nabla_{\uset} r^{\alpha}(\uu_2, \vv_1) - \nabla_{\uset} r^{\alpha}(\zz), \uu_2 - \ww^{\uset} \rangle \le  \psi(\ww^{\uset}) - \psi(\uu_2) + \delta. 
    \end{equation} 
    Hence, it suffices to show \begin{equation}\label{ineq:y-side_ready_grad}
        \langle \hh^{\vset} + \nabla_{\vset} r^{\alpha}(\uu_2, \vv_1) - \nabla_{\vset} r^{\alpha}(\zz), \vv_1 - \ww^{\vset} \rangle \le V_{\zz^{\vset}}^{\beta r_2}(\ww^{\vset}) + 2 \delta.
    \end{equation}
    The rest of the proof is allotted to showing \eqref{ineq:y-side_ready_grad}.

    First, note that the optimality condition with respect to $\vv_1$ implies that \begin{equation}\label{ineq:opt_dirty_grad}
        \langle \hh^{\vset} + \nabla_{\vset} r^{\alpha}(\uu_1, \zz^{\vset}) - \nabla_{\vset} r^{\alpha}(\zz), \vv_1 - \yy \rangle \le V_{\zz^{\vset}}^{\beta r_2}(\yy) - V_{\vv_1}^{\beta r_2}(\yy) - V_{\zz^{\vset}}^{\beta r_2}(\vv_1), \enspace \forall \yy \in \vset. 
    \end{equation} 
    Hence, setting $\yy \leftarrow \ww^{\vset}$ implies that it suffices to show \begin{equation}\label{ineq:clean_grad_dif_grad}
        \langle \nabla_{\vset} r^{\alpha}(\uu_2, \vv_1) - \nabla_{\vset} r^{\alpha}(\uu_1, \zz^{\vset}), \vv_1 - \ww^{\vset} \rangle \le V_{\vv_1}^{\beta r_2}(\ww^{\vset}) + V_{\zz^{\vset}}^{\beta r_2}(\vv_1) + 2 \delta.
    \end{equation}

    Now, define $\Phi:\vset \to \vset$ by \[\Phi(\yy) = \min_{\xx \in \uset} \langle \hh, (\xx, \yy) \rangle + \psi(\xx) + r^{\alpha}(\xx, \yy) ,\] 
    and, also \[\xx_{\mathrm{BR}}(\yy) = \argmin_{\xx \in \uset} \langle \hh, (\xx, \yy) \rangle + \psi(\xx) + r^{\alpha}(\xx, \yy) .\] 
    Since $\uu_1 = \oracle_{\mathrm{BR}}(\psi, r^{\alpha}, \zz^{\vset}, \hh^{\uset} - \nabla_{\uset} r^{\alpha}(\zz))$, by property 2 of \Cref{defn:apx_best_resp}, we have that \[\langle \nabla_{\vset} r^{\alpha}(\xx_{\mathrm{BR}}(\zz^{\vset}), \zz^{\vset}) - \nabla_{\vset} r^{\alpha}(\uu_1, \zz^{\vset}), \vv_1 - \ww^{\vset} \rangle \le \|\nabla_{\vset} r^{\alpha}(\xx_{\mathrm{BR}}(\zz^{\vset}), \zz^{\vset}) - \nabla_{\vset} r^{\alpha}(\uu_1, \zz^{\vset})\|_* \|\vv_1 - \ww^{\vset}\| \le  \delta.\]
    Similarly, using $\uu_2 = \oracle_{\mathrm{BR}}(\psi, r^{\alpha}, \vv_1, \hh^{\uset} - \nabla_{\uset} r^{\alpha}(\zz))$, we have that \[\langle \nabla_{\vset} r^{\alpha}(\uu_2, \vv_1) - \nabla_{\vset} r^{\alpha}(\xx_{\mathrm{BR}}(\vv_1), \vv_1), \vv_1 - \ww^{\vset} \rangle \le \delta.\]
    Consequently, to show \eqref{ineq:clean_grad_dif_grad}, it suffices to show \[\langle \nabla_{\vset} r^{\alpha}(\xx_{\mathrm{BR}}(\vv_1), \vv_1) - \nabla_{\vset} r^{\alpha}(\xx_{\mathrm{BR}}(\zz^{\vset}), \zz^{\vset}), \vv_1 - \ww^{\vset} \rangle \le V_{\vv_1}^{\beta r_2}(\ww^{\vset}) + V_{\zz^{\vset}}^{\beta r_2}(\vv_1).\]
    However, this follows immediately by the ``Moreover'' part of \Cref{cor:relative_sm_str}, as $\alpha$ is picked so that $r^{\alpha}$ is jointly convex in $\uset \times \vset$. 
\end{proof}

Finally, we provide the pseudocode and the correctness guarantee for the extragradient step oracle. 

\begin{algorithm}[H]
\caption{Composite extragradient step oracle}\label{alg:exgrad_step}
\KwData{$\zz \in \zset, \auxit \in \zset$, $\hh = (\hh^{\uset}, \hh^{\vset}) \in \zset^*$, 
$\AA \in \R^{m \times n}, \alpha, \beta \ge 0$, convex function $\psi$, access to a $\delta$-approximate composite best response oracle $\oracle^{\psi / 2, r^{\alpha}}_{\mathrm{BR}}: \uset \times \uset^{\star} \to \uset$ with respect to $\psi$ and $r^{\alpha + \beta}$}
\KwResult{$\zz', \auxit'$ with guarantees stated in \Cref{lem:apx_exgrad_step}}
\SetKwProg{Fn}{Function}{:}{}
\SetKwFunction{Xgradoracle}{ApxXGradStep}
\Fn{\Xgradoracle{$\zz, \auxit, \hh, \gg, \alpha, \beta, \psi$}}{
    $\uu_1 = \oracle^{\psi/2, r^{\alpha + \beta}}_{\mathrm{BR}}(\auxit^{\vset}, \hh^{\uset} - \nabla_{\uset} r^{\alpha}(\zz))$\;
    $\vv_1 = \argmin_{\yy \in \vset} \langle \hh^{\vset} + \nabla_{\vset} r^{\alpha}(\uu_1, \auxit^{\vset}) - \nabla_{\vset} r^{\alpha}(\zz), \yy \rangle + V^{\beta r_2}_{\auxit^{\vset}}(\yy)$\;
    $\uu_2 = \oracle^{\psi/2, r^{\alpha + \beta}}_{\mathrm{BR}}(\vv_1, \hh^{\uset} - \nabla_{\uset} r^{\alpha}(\zz))$\;
    $\vv_2 = \argmin_{\yy \in \vset} \langle \hh^{\vset} + \nabla_{\vset} r^{\alpha}(\uu_2, \vv_1) - \nabla_{\vset} r^{\alpha}(\zz), \yy \rangle + V_{\auxit^{\vset}}^{\beta r_2}(\yy)$\;
    \Return $\zz' = (\uu_2, \vv_1), \auxit' = (\auxit^{\uset}, \vv_2)$\;
}
    
\end{algorithm}

\Cref{alg:exgrad_step} differs from Algorithm 3 in \cite{JT23} in the lines which define $\uu_1, \uu_2$, as these best response steps on $\uset$ are now approximate instead of exact. We now provide the lemma about the correctness of \Cref{alg:exgrad_step}. 

\begin{lemma}[Composite extragradient step]\label{lem:apx_exgrad_step}
    Let $\alpha, \beta, \delta \ge 0$ be parameters with $\beta \ge \alpha$ so that $r^{\alpha}$ is jointly convex over $\uset \times \vset$. Let $\psi$ be a convex function on $\uset$. 
    Assuming access to a $\delta$-approximate composite best response oracle $\oracle^{\psi/2, r^{\alpha}}_{\mathrm{BR}}$ with respect to $\psi / 2$ and $r^{\alpha}$, \Cref{alg:grad_step} is a $3 \delta$-approximate extragradient step oracle with respect to convex function $\psi / 2$ and regularizers $(r^{\alpha + \beta}, \beta r_2)$. 
\end{lemma}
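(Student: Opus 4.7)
The plan is to follow the structure of the proof of Lemma A.1, adapted to handle the auxiliary iterate $\auxit$ and the telescoping Bregman terms $V_{\auxit^{\vset}}^{\beta r_2}(\ww^{\vset}) - V_{\vv_2}^{\beta r_2}(\ww^{\vset})$ that distinguish a composite extragradient step from a composite gradient step. First, I fix $\ww = (\ww^{\uset}, \ww^{\vset}) \in \zset$ and unpack the CES condition from \Cref{defn:gen_apx_exgrad} using identity \eqref{def_3_breg_opt}. Since $\beta r_2$ depends only on the $\vset$-coordinate (so $V_{\auxit}^{\beta r_2}(\ww) = V_{\auxit^{\vset}}^{\beta r_2}(\ww^{\vset})$ and likewise for $\auxit'$), and since $\nabla_{\uset} r^{\alpha+\beta} = \nabla_{\uset} r^{\alpha}$ while $\nabla_{\vset} r^{\alpha+\beta} = \nabla_{\vset} r^{\alpha} + \beta \nabla r_2$, the target reduces to verifying the $\uset$-side and $\vset$-side inequalities separately, with a total error budget of $3\delta$.

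For the $\uset$-side, property 1 of $\oracle^{\psi/2, r^{\alpha}}_{\mathrm{BR}}$ applied at $\uu_2 = \oracle^{\psi/2, r^{\alpha}}_{\mathrm{BR}}(\vv_1,\, \hh^{\uset} - \nabla_{\uset} r^{\alpha}(\zz))$ with test point $\ww^{\uset}$ gives exactly
\[
\langle \hh^{\uset} + \nabla_{\uset} r^{\alpha}(\uu_2, \vv_1) - \nabla_{\uset} r^{\alpha}(\zz),\, \uu_2 - \ww^{\uset} \rangle \le \tfrac{1}{2}(\psi(\ww^{\uset}) - \psi(\uu_2)) + \delta,
\]
exhausting $\delta$ of the error budget and producing the needed $\uset$-side bound verbatim.

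The $\vset$-side is the main obstacle and requires an extra pivoting argument compared to Lemma A.1. After the $\nabla_{\vset} r^{\alpha+\beta}$ split, I must show that
\[
\langle \hh^{\vset} + \nabla_{\vset} r^{\alpha}(\uu_2, \vv_1) - \nabla_{\vset} r^{\alpha}(\zz) + \beta(\nabla r_2(\vv_1) - \nabla r_2(\zz^{\vset})),\, \vv_1 - \ww^{\vset} \rangle \le V_{\auxit^{\vset}}^{\beta r_2}(\ww^{\vset}) - V_{\vv_2}^{\beta r_2}(\ww^{\vset}) + 2\delta.
\]
I introduce $\vv_2$ as a pivot via $\vv_1 - \ww^{\vset} = (\vv_1 - \vv_2) + (\vv_2 - \ww^{\vset})$. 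The $(\vv_2 - \ww^{\vset})$ piece is handled by the first-order optimality of $\vv_2$ in Line~6 of \Cref{alg:exgrad_step}, which yields the required telescoping term together with a slack $-V_{\auxit^{\vset}}^{\beta r_2}(\vv_2)$. For the $(\vv_1 - \vv_2)$ piece I invoke the optimality of $\vv_1$ in Line~3 (with test point $\vv_2$) to trade the $\nabla_{\vset} r^{\alpha}(\uu_2, \vv_1)$ terms for $\nabla_{\vset} r^{\alpha}(\uu_1, \auxit^{\vset})$ terms, producing a further slack that combines with the $\beta r_2$-Bregman identity $V_{\auxit^{\vset}}^{\beta r_2}(\vv_1) + \langle \beta(\nabla r_2(\vv_1) - \nabla r_2(\auxit^{\vset})), \vv_1 - \vv_2\rangle = V_{\auxit^{\vset}}^{\beta r_2}(\vv_2) - V_{\vv_1}^{\beta r_2}(\vv_2)$ to cancel the surviving divergences.

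What remains is the residual cross-term
\[
\langle \nabla_{\vset} r^{\alpha}(\uu_2, \vv_1) - \nabla_{\vset} r^{\alpha}(\uu_1, \auxit^{\vset}),\, \vv_1 - \vv_2\rangle,
\]
handled exactly as in Lemma A.1: I insert the exact best responses $\xx_{\mathrm{BR}}(\vv_1)$ and $\xx_{\mathrm{BR}}(\auxit^{\vset})$, use property 2 of the ABRO (an $\ell_\infty$-style gradient closeness bound that produces two $\delta$-contributions using the bounded diameter of $\vset \subseteq \Delta^m$), and then apply the ``Moreover'' clause of \Cref{cor:relative_sm_str} to the composite function $\Phi_{\hh^{\uset} - \nabla_{\uset} r^{\alpha}(\zz)}$ to control $\langle \nabla_{\vset} r^{\alpha}(\xx_{\mathrm{BR}}(\vv_1), \vv_1) - \nabla_{\vset} r^{\alpha}(\xx_{\mathrm{BR}}(\auxit^{\vset}), \auxit^{\vset}),\, \vv_1 - \vv_2\rangle$ by $V_{\vv_1}^{\alpha r_2}(\vv_2) + V_{\auxit^{\vset}}^{\alpha r_2}(\vv_1)$. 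Since $\beta \ge \alpha$, these last two Bregman terms are dominated by the slacks already generated by the two optimality conditions on $\vv_1$ and $\vv_2$, leaving a $\vset$-side residual of at most $2\delta$. Combining with the $\uset$-side inequality yields the full $3\delta$-CES guarantee.
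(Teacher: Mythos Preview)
Your approach mirrors the paper's almost exactly: split into a $\uset$-side handled by ABRO property~1 at $\uu_2$, and a $\vset$-side handled by summing the first-order optimality conditions of $\vv_1$ (tested at $\vv_2$) and of $\vv_2$ (tested at $\ww^{\vset}$), leaving the cross-term $\langle \nabla_{\vset} r^{\alpha}(\uu_2,\vv_1) - \nabla_{\vset} r^{\alpha}(\uu_1,\auxit^{\vset}),\, \vv_1-\vv_2\rangle$ that is then controlled by inserting $\xx_{\mathrm{BR}}(\cdot)$, ABRO property~2, and the ``Moreover'' clause of \Cref{cor:relative_sm_str}. This is precisely the paper's route (its displayed equations \eqref{ineq:opt_v1_exgrad}--\eqref{ineq:grad_dif_exgrad}).

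The one substantive divergence is your explicit $\beta(\nabla r_2(\vv_1)-\nabla r_2(\zz^{\vset}))$ term, which you try to absorb via a Bregman identity. Two problems there. First, the identity you state is false: by the three-point formula \eqref{def_3_breg_opt},
\[
V_{\auxit^{\vset}}^{\beta r_2}(\vv_1)+\bigl\langle \beta(\nabla r_2(\vv_1)-\nabla r_2(\auxit^{\vset})),\,\vv_1-\vv_2\bigr\rangle
= 2V_{\auxit^{\vset}}^{\beta r_2}(\vv_1)+V_{\vv_1}^{\beta r_2}(\vv_2)-V_{\auxit^{\vset}}^{\beta r_2}(\vv_2),
\]
not $V_{\auxit^{\vset}}^{\beta r_2}(\vv_2)-V_{\vv_1}^{\beta r_2}(\vv_2)$. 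Second, and more seriously, the term you actually need to absorb is anchored at $\zz^{\vset}$, not at $\auxit^{\vset}$; since both Bregman subproblems in \Cref{alg:exgrad_step} pivot around $\auxit^{\vset}$, nothing in your argument cancels a contribution depending on $\zz^{\vset}$, and in particular the piece $\langle \beta(\nabla r_2(\vv_1)-\nabla r_2(\zz^{\vset})),\,\vv_2-\ww^{\vset}\rangle$ survives untouched. The paper, for its part, writes the target with $\Delta^{(\alpha)}$ rather than the $\Delta^{(\alpha+\beta)}$ that the lemma statement demands and proceeds with $\nabla_{\vset} r^{\alpha}$ throughout the $\vset$-analysis, so it never confronts this term explicitly either; your instinct that something extra lurks here is correct, but the cancellation you sketch does not close the gap.
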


\begin{proof}[Proof of \Cref{lem:apx_exgrad_step}]
    Using notation in \Cref{sec:extragrad_framework}, it suffices to show 
    \[\mathrm{regret}_{\hh, \psi/2}((\uu_2, \vv_1); \ww) \le \Delta^{(\alpha)}_{\zz}((\uu_2, \vv_1), \ww) + V_{(\auxit^{\uset}, \auxit^{\vset})}^{\beta r_2}(\ww) - V_{(\auxit^{\uset}, \vv_2)}^{\beta r_2}(\ww) + 3 \delta, \enspace \forall \ww \in \zset\,.\]
    Denote $\vv = \auxit^{\vset}$. Note that since $r_2(\ww)$ only depends on $\ww^{\vset}$, we have that for any $\uu', \uu'' \in \uset$, $\vv', \vv'' \in \vset$ and $\lambda \in \R$, 
    \begin{equation}\label{eq:dropping_u_part}
        V_{(\vv', \uu')}^{\lambda r_2}(\vv'', \uu'') = V_{\vv'}^{\lambda r_2}(\vv'') \,.
    \end{equation}
    Using \ref{eq:dropping_u_part}, we obtain that it suffices to show
    \[\mathrm{regret}_{\hh, \psi/2}((\uu_2, \vv_1); \ww) \le \Delta^{(\alpha)}_{\zz}((\uu_2, \vv_1), \ww) + V_{\vv}^{\beta r_2}(\ww^{\vset}) - V_{\vv_2}^{\beta r_2}(\ww^{\vset}) + 3 \delta, \enspace \forall \ww \in \zset\,.\]
    Using identity \ref{def_3_breg_opt}, along with the fact that $\nabla_{\uset} r^{\alpha}(\zz) = \nabla_{\uset} r^{\alpha+\beta}(\zz)$, this is equivalent to showing \[\mathrm{regret}_{\hh + \nabla r^{(\alpha)}(\uu_2, \vv_1) - \nabla r^{(\alpha)}(\zz), \psi/2}((\uu_2, \vv_1); \ww) \le V_{\vv}^{\beta r_2}(\ww^{\vset}) - V_{\vv_2}^{\beta r_2}(\ww^{\vset}) + 3 \delta, \enspace \forall \ww \in \zset.\]
    Hence, we fix $\ww = (\ww^{\uset}, \ww^{\vset}) \in \zset$, and show the inequality above. 
    By definition of $\uu_2$ and property 2 of \Cref{defn:apx_best_resp}, we have that \[\langle \hh^{\uset} + \nabla_{\uset} r^{\alpha}(\uu_2, \vv_1) - \nabla_{\uset} r^{\alpha}(\zz), \uu_2 - \xx \rangle \le \psi(\xx) - \psi(\uu_2) + \delta, \enspace \forall \xx \in \uset. \] 
    which, for $\xx \leftarrow \ww^{\uset}$ yields \begin{equation}\label{ineq:x-side_ready_exgrad}
        \langle \hh^{\uset} + \nabla_{\uset} r^{\alpha}(\uu_2, \vv_1) - \nabla_{\uset} r^{\alpha}(\zz), \uu_2 - \ww^{\uset} \rangle \le  \psi(\ww^{\uset}) - \psi(\uu_2) + \delta. 
    \end{equation} 
    Hence, it suffices to show \begin{equation}\label{ineq:y-side_ready_exgrad}
        \langle \hh^{\vset} + \nabla_{\vset} r^{\alpha}(\uu_2, \vv_1) - \nabla_{\vset} r^{\alpha}(\zz), \vv_1 - \ww^{\vset} \rangle \le V_{\vv}^{\beta r_2}(\ww^{\vset}) - V_{\vv_2}^{\beta r_2}(\ww^{\vset}) + 2 \delta.
    \end{equation}
    The rest of the proof is allotted to showing \eqref{ineq:y-side_ready_exgrad}. 

    First, note that the optimality condition on $\vv_1$ implies \begin{equation}\label{ineq:opt_v1_exgrad}
        \langle \hh^{\vset} + \nabla_{\vset} r^{\alpha}(\uu_1, \vv) - \nabla_{\vset} r^{\alpha}(\zz), \vv_1 - \yy \rangle \le \Delta^{\beta r_2}_{\vv}(\vv_1, \yy), \enspace \forall \yy \in \vset.
    \end{equation}
    Similarly, the optimality condition on $\vv_2$ implies
    \begin{equation}\label{ineq:opt_v2_exgrad}
        \langle \hh^{\vset} + \nabla_{\vset} r^{\alpha}(\uu_2, \vv_1) - \nabla_{\vset} r^{\alpha}(\zz), \vv_2 - \yy \rangle \le \Delta^{\beta r_2}_{\vv}(\vv_2, \yy), \enspace \forall \yy \in \vset.
    \end{equation}
    Plugging in $\yy \leftarrow \vv_2$ in \eqref{ineq:opt_v1_exgrad} and $\yy \leftarrow \ww^{\vset}$ in \eqref{ineq:opt_v2_exgrad}, we obtain, after cancellation of $\langle \hh^{\vset} + \nabla_{\vset} r^{\alpha}(\uu_1, \vv) - \nabla_{\vset} r^{\alpha}(\zz), \vv_2 \rangle$, 
    \begin{equation}\label{ineq:v1v2mess}
        \langle \hh^{\vset} + \nabla_{\vset} r^{\alpha}(\uu_2, \vv_1) - \nabla_{\vset} r^{\alpha}(\zz), \vv_1 - \ww^{\vset} \rangle + \langle \nabla_{\vset} r^{\alpha}(\uu_1, \vv) - \nabla_{\vset} r^{\alpha}(\uu_2, \vv_1), \vv_1 - \vv_2 \rangle \le \Delta^{\beta r_2}_{\vv}(\vv_1, \vv_2) + \Delta^{\beta r_2}_{\vv}(\vv_2, \ww^{\vset})
    \end{equation}
    Hence, to show \eqref{ineq:y-side_ready_exgrad}, it suffices to show \[\langle \nabla_{\vset} r^{\alpha}(\uu_1, \vv) - \nabla_{\vset} r^{\alpha}(\uu_2, \vv_1), \vv_1 - \vv_2 \rangle \ge - V^{\beta r_2}_{\vv}(\vv_1) - V^{\beta r_2}_{\vv_1}(\vv_2) - 2 \delta,\]
    which is equivalent to showing \begin{equation}\label{ineq:grad_dif_exgrad}
        \langle \nabla_{\vset} r^{\alpha}(\uu_2, \vv_1) - \nabla_{\vset} r^{\alpha}(\uu_1, \vv), \vv_1 - \vv_2 \rangle \le V^{\beta r_2}_{\vv}(\vv_1) + V^{\beta r_2}_{\vv_1}(\vv_2) + 2 \delta.
    \end{equation} 

    Now, as in the proof of \Cref{lem:apx_grad_step}, define $\Phi:\vset \to \vset$ by \[\Phi(\yy) = \min_{\xx \in \uset} \langle \hh, (\xx, \yy) \rangle + r^{\alpha}(\xx, \yy) + \psi(\xx) ,\] 
    and, also \[\xx_{\mathrm{BR}}(\yy) = \argmin_{\xx \in \uset} \langle \hh, (\xx, \yy) \rangle + r^{\alpha}(\xx, \yy) + \psi(\xx) .\] 
    To prove \eqref{ineq:grad_dif_exgrad}, we show that $\langle \nabla_{\vset} r^{\alpha}(\xx_{\mathrm{BR}}(\vv_1), \vv_1) - \nabla_{\vset} r^{\alpha}(\xx_{\mathrm{BR}}(\vv), \vv), \vv_1 - \vv_2 \rangle \le V^{\beta r_2}_{\vv}(\vv_1) + V^{\beta r_2}_{\vv_1}(\vv_2)$, which suffices, since both $\langle \nabla_{\vset} r^{\alpha}(\uu_2, \vv_1) - \nabla_{\vset} r^{\alpha}(\xx_{\mathrm{BR}}(\vv_1), \vv_1), \vv_1 - \vv_2 \rangle$ and $\langle \nabla_{\vset} r^{\alpha}(\xx_{\mathrm{BR}}(\vv), \vv) - \nabla_{\vset} r^{\alpha}(\uu_1, \vv), \vv_1 - \vv_2 \rangle$ can be bounded by $\delta$, by property $2$ of \Cref{defn:apx_best_resp}. This follows immediately by the ``Moreover'' part of \Cref{cor:relative_sm_str}, as $\alpha$ is picked so that $r^{\alpha}$ is jointly convex in $\uset \times \vset$. 
\end{proof}

We are now ready to prove \Cref{lem:xgrad_steps_implementability}. 

\begin{proof}[Proof of \Cref{lem:xgrad_steps_implementability}]
    The implementability of $\oracle_{\mathrm{grad}}^{\psi, r^\alpha, \beta r_2}$ and $\oracle_{\mathrm{xgrad}}^{\psi, r^{\alpha+\beta}, \beta r_2}$ follows from \Cref{lem:apx_grad_step} and \Cref{lem:apx_exgrad_step} respectively. Specifically, the implementations are \Cref{alg:grad_step} and \Cref{alg:exgrad_step} respectively. It is easy to see that both of these call the oracles mentioned in \Cref{def:opt_oracles} at most $2$ times. 
\end{proof}

\section{From Constrained to Unconstrained Optimization}
\label{subsec:unconstrained_red}

In this section, we prove our lemmas that pertain to the binary search procedures used in \Cref{sec:L1infMin}, such as \Cref{lem:unconstrained_red}, and \Cref{lem:BRFromAdditiveError}. 

\begin{proof}[Proof of \Cref{lem:unconstrained_red}]
    First, we show that for any $C_1 < C_2 \in [0, M]$, we have that \[|\zeta(\XX(C_1)) - \zeta(\XX(C_2))| \le L \sqrt{\frac{4 \delta + 2 (C_2 - C_1) R}{\alpha}}. \]
    To show this, note that by definition of $\XX(C_1), \XX(C_2)$, we have
    \begin{equation}\label{ineq:gen_C1_cond}
        \Gamma(\XX(C_1)) + C_1 \zeta(\XX(C_1)) \le \Gamma(\XX) + C_1 \zeta(\XX) + \delta, \enspace \forall \XX \in \xset ,
    \end{equation}
    and 
    \begin{equation}\label{ineq:gen_C2_cond}
        \Gamma(\XX(C_2)) + C_2 \zeta(\XX(C_2)) \le \Gamma(\XX) + C_2 \zeta(\XX) + \delta, \enspace \forall \XX \in \xset.
    \end{equation}
    In particular, \ref{ineq:gen_C2_cond} implies, since $\zeta(\XX(C_2)), \zeta(\XX) \in [\RdLower, \RdUpper]$, that \[\Gamma(\XX(C_2)) + C_1 \zeta(\XX(C_2)) \le \Gamma(\XX) + C_1 \zeta(\XX) + 2 (C_2 - C_1) R + \delta, \enspace \forall \XX \in \xset.\]
    Combining this with \ref{ineq:gen_C1_cond} for
    $\Bar{\XX} = \frac{\XX(C_1) + \XX(C_2)}{2}$, we obtain \[\Gamma(\XX(C_1)) + \Gamma(\XX(C_2)) + C_1 [\zeta(\XX(C_1)) + \zeta(\XX(C_2))] \le 2 \Gamma(\Bar{\XX}) + 2 C_1 \zeta(\Bar{\XX}) + 2 (C_2 - C_1) R + 2 \delta.\]
    By convexity of $\zeta(\cdot)$, $C_1 \zeta(\XX(C_1)) + C_1 \zeta(\XX(C_2)) \ge 2 C_1 \zeta(\Bar{\XX})$. Hence, this implies 
    \[\Gamma(\XX(C_1)) + \Gamma(\XX(C_2)) \le 2 \Gamma(\Bar{\XX}) + 2 (C_2 - C_1) R + 2 \delta.\]
    However, strong convexity implies \[\Gamma(\XX(C_1)) + \Gamma(\XX(C_2)) \ge 2 \Gamma(\Bar{\XX}) + \frac{\alpha}{2} \|\XX(C_1) - \XX(C_2)\|^2.\]
    Thus, \[\|\XX(C_1) - \XX(C_2)\|^2 \le \frac{4 \delta + 2 (C_2 - C_1) R}{\alpha}.\]
    Since $\zeta$ is $L$-Lipschitz with respect to $\|\cdot\|$, we have \[|\zeta(\XX(C_1)) - \zeta(\XX(C_2))| \le L \sqrt{\frac{4 \delta + 2 (C_2 - C_1) R}{\alpha}},\]
    as claimed. 

    Now, 
    let $\XX^* = \argmin_{\XX \in \sset} \Gamma(\XX)$ and $\mathrm{OPT} = \Gamma(\XX^*)$. 
    First, note that if $\XX(0) \in \sset$, then clearly $\Gamma(\XX(0)) \le \Gamma(\XX^*) + \delta$, which implies that returning $\XX(0)$ would be a valid output in this case. Thus, we may assume $\XX(0) \notin \sset$. 
    Now, suppose we found $C_1 < C_2$ so that $C_2 - C_1 < \frac{\delta}{R}$ and $\XX(C_1) \notin \sset$ and $\XX(C_2) \in \sset$. Then, returning $\XX(C_2)$ would be a valid output (i.e. $\Gamma(\XX(C_2)) \le \Gamma(\XX^*) + 2 \delta$. To see why this is true, note that by definition of $\XX(C_1)$, we have 
    \begin{equation}\label{ineq:C1toOPT_cond}
        \Gamma(\XX(C_1)) + C_1 \zeta(\XX(C_1)) \le \Gamma(\XX^*) + C_1 \zeta(\XX^*) + \delta.
    \end{equation}
    Since $\XX(C_1) \notin \sset, \XX^* \in \sset$, we have that $\zeta(\XX(C_1)) > \zeta(\XX^*)$. Hence, we obtain \[\Gamma(\XX(C_1)) \le \Gamma(\XX^*) + \delta.\]
    Next, by definition of $\XX(C_2)$, we have 
    \begin{equation}\label{ineq:C2toC1_cond}
        \Gamma(\XX(C_2)) + C_2 \zeta(\XX(C_2)) \le \Gamma(\XX(C_1)) + C_2 \zeta(\XX(C_1)) + \delta.
    \end{equation}
    Combining this with the previous inequality, we obtain \[\Gamma(\XX(C_2)) \le \Gamma(\XX^*) + C_2 [\zeta(\XX(C_1)) - \zeta(\XX(C_2))] + \delta.\]
    Using the bound on $|\zeta(\XX(C_1)) - \zeta(\XX(C_2))|$, and that $C_2 \le M$, we obtain 
    \[\Gamma(\XX(C_2)) \le \Gamma(\XX^*) + M \cdot L \cdot \sqrt{\frac{4 \delta + 2 (C_2 - C_1) R}{\alpha}} + \delta \le M L \sqrt{\frac{6 \delta}{\alpha}} + \delta.\]

    Hence, to find a $M L \sqrt{\frac{6 \delta}{\alpha}} + \delta$ approximate minimizer of $\Gamma$ in $\sset$, it suffices to find $C_1 < C_2$ with $C_2 - C_1 < \frac{\delta}{R}$ and $\XX(C_1) \notin \sset$ and $\XX(C_2) \in \sset$. To do so, we perform a binary search procedure, where at each step we keep an interval $[a, b]$ (starting with $a = 0, b = M$) with the property that $\XX(a) \notin \sset, \XX(b) \in \sset$. We query $\oracle(c)$ for $c = \frac{a+b}{2}$, and update $a = c$ if $\XX(c) \notin \sset$ and $b = c$ otherwise. This shrinks the size of the range $|a-b|$ by a factor of $2$ at each step. We stop when $|a-b| \le \frac{\delta}{R}$ and output $\XX(b)$. This procedure takes $\log (M \frac{R}{\delta})$ queries and by the argument above correctly outputs $\XX' \in \sset$ so that $\Gamma(\XX') \le \min_{\XX \in \sset} \Gamma(\XX) + M L \sqrt{\frac{6 \delta}{\alpha}} + \delta$. 
\end{proof}

\subsection{Approximate Best Response Oracle from High Accuracy Solvers}
\label{sec:BRFromAdditiveError}
In this section, we prove \Cref{lem:BRFromAdditiveError}.
\begin{proof}[Proof of \Cref{lem:BRFromAdditiveError}]
By the 1st order optimality of $\uu^{\star}$, we have that
\begin{align*}
    \l\gg + \g_{\uu} r(\uu^{\star}), \uu^{\star} - \uu\r + \psi(\uu^{\star}) - \psi(\uu) \le 0, \forall \uu \in \uset
\end{align*}
Since $\Lambda_{\min} \II_n \preceq \g^2_{\uu, \uu} r(\uu, \vv)$, $r(\cdot, \vv)$ is $\Lambda_{\min}$-strongly convex (with respect to $\ell_2$ norm) for any fixed value of $\vv$ and thus 
\begin{align*}
    r(\algoutput{\uu}, \vv) \ge r(\uu^{\star}, \vv) + \l\g_{\uu} r(\uu^{\star}, \vv), \algoutput{\uu} - \uu^{\star}\r + \frac{\Lambda_{\min}}{2} \norm{\algoutput{\uu} - \uu^{\star}}_2^2
\end{align*}
Therefore, we have
\begin{align*}
\delta 
&\ge \l\gg, \algoutput{\uu} - \uu^{\star}\r + r(\algoutput{\uu}, \vv) - r(\uu^{\star}, \vv) + \psi(\algoutput{\uu}) - \psi(\uu^{\star}) \\
&\ge \l\gg + \g_{\uu} r(\uu^{\star}, \vv), \algoutput{\uu} - \uu^{\star}\r + \frac{\Lambda_{\min}}{2} \norm{\algoutput{\uu} - \uu^{\star}}_2^2 + \psi(\algoutput{\uu}) - \psi(\uu^{\star}) \\
&\ge \frac{\Lambda_{\min}}{2} \norm{\algoutput{\uu} - \uu^{\star}}_2^2
\end{align*}
where the last inequality uses the 1st order optimality of $\uu^{\star}.$
This implies \[\|\algoutput{\uu} - \uu^{\star}\|_{2}^2 \le \frac{2 \delta}{\Lambda_{\min}}\,.\]
To prove item 2, note that since $\|\g_{\uu, \vv} r(\cdot, \vv)\|_{\infty} \le \beta$, we obtain 
that $\g_{\vv} r(\cdot, \vv)$ is $\beta$-Lipschitz with respect to $\ell_{\infty}$ and thus 
\begin{align*}
\|\g_{\vv}r(\uu^{\star}, \vv) - \g_{\vv}r(\algoutput{\uu}, \vv)\|_{\infty} 
&\le \beta \|\algoutput{\uu} - \uu^{\star}\|_{\infty} \\
&\le \frac{\beta \sqrt{2 \delta}}{\sqrt{\Lambda_{\min}}}\,,
\end{align*}
where for the last inequality we bounded $\|\algoutput{\uu} - \uu^{\star}\|_{\infty}$ by $\|\algoutput{\uu} - \uu^{\star}\|_{2}.$

Next, to prove item 1, for any $\uu$, we have
\begin{align*}
\l\gg + \g_{\uu} r(\algoutput{\uu}, \vv), \algoutput{\uu} - \uu\r + \psi(\algoutput{\uu}) - \psi(\uu)
&\le \delta + \l\g_{\uu} r(\algoutput{\uu}, \vv) - \g_{\uu} r(\uu^{\star}, \vv), \algoutput{\uu} - \uu\r \\
&\le \delta + \Lambda_{\max} \|\algoutput{\uu} - \uu^{\star}\|_2 \|\algoutput{\uu} - \uu\|_2 \\
&\le \delta + \Lambda_{\max} \sqrt{\frac{2 n \delta}{\Lambda_{\min}}} \usetSize
\end{align*}
where for the second inequality we used the fact that $\g^2_{\uu, \uu} r(\cdot, \vv) \preceq \Lambda_{\max} \II_n$ and for the third inequality that $\|\algoutput{\uu} - \uu\|_2 \le \sqrt{n} \|\algoutput{\uu} - \uu\|_{\infty} \le 2  \sqrt{n} \usetSize$. 
\end{proof}

\section{Computable and Decomposable Functions}

In this section, we present the omitted proofs for \Cref{coro:convexFlow}, as well as the proofs of certain functions that we work with being computable or decomposable. 

\subsection{Almost-Linear Time Convex Flow Solver}
\label{subsec:single_comm_unconstrained}

\begin{theorem}
\label{thm:convexFlow_gen}
Let $G$ be a graph, $\eta = \exp(-\log^{O(1)} m)$ be a granularity parameter, $\uu \in \eta\Z^E_{>0}$ be the vector of edge capacities and $\dd \in \eta\Z^V$ be a demand vector.
Given a $C > 0$, a collection of computable functions on edges $\{c_e(\cdot)\}_e$, and their barriers $\{\barrier_e(\cdot)\}_e$ (\Cref{def:computable}), there is an algorithm that runs in $\almostTime(m)$ time and outputs a flow $\algoutput{\ff} \in \R^E$ that routes $\dd$ and, 
\begin{align*}
    c(\algoutput{\ff}) \le \min_{\imbal(\ff^{\star}) = \dd, 0 \le \ff^{\star} \le \uu} c(\ff^{\star}) + \exp(-\log^C m)
        \text{ where }
    c(\ff) \defeq \sum_{e\in E} c_e(\ff_e)\,.
\end{align*}
\end{theorem}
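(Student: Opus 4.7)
The plan is to derive \Cref{thm:convexFlow_gen} as a direct consequence of \Cref{thm:convexFlow} (Theorem 10.13 of \cite{CKL+22}), since the two statements describe essentially the same optimization problem on essentially the same input interface. The only nominal distinction is that \Cref{thm:convexFlow_gen} passes in the self-concordant barriers $\{\barrier_e\}_e$ as explicit inputs, whereas \Cref{thm:convexFlow} absorbs them into the computability hypothesis on $\{c_e\}_e$. By \Cref{def:computable} item~\ref{item:costHessianCompute}, any computable cost $c_e$ already comes equipped with $O(1)$-time oracle access to $\nabla \barrier_e$ and $\nabla^2 \barrier_e$, and conversely an explicit barrier $\barrier_e$ satisfying items~\ref{item:costQuasiPoly}--\ref{item:costHessianCompute} of \Cref{def:computable} certifies that $c_e$ is $(m, \log^{O(1)} m)$-computable in the sense of \Cref{thm:convexFlow}. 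These two input formats are therefore interchangeable with no change in the underlying IPM.

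Given this, the proof proceeds by direct invocation. Starting from the instance $(G, \uu, \dd, \{c_e\}_e, \{\barrier_e\}_e, C)$ and the granularity assumption $\eta = \exp(-\log^{O(1)} m)$, I plug it into the algorithm underlying \Cref{thm:convexFlow}, which takes exactly such a polytope (box constraints $0 \le \ff_e \le \uu_e$ together with the linear constraint $\imbal(\ff) = \dd$) and exactly such separable costs, and in $\almostTime(m)$ time returns a flow $\algoutput{\ff}$ routing $\dd$ whose total cost satisfies the additive guarantee $c(\algoutput{\ff}) \le \min c(\ff^{\star}) + \exp(-\log^C m)$. The granularity $\eta$ ensures that the IPM terminates within the stated accuracy without needing additional rounding arguments, and the quasi-polynomial function value, Hessian, and self-concordance parameter bounds from \Cref{def:computable} supply precisely the parameters the CKL+22 runtime analysis requires.

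The only bookkeeping step is to verify alignment of the two interfaces at the level of notation: the $K = \log^{O(1)} m$ bound in \Cref{thm:convexFlow} is implicit in the (unquantified) use of ``computable'' in \Cref{thm:convexFlow_gen} because items~\ref{item:costQuasiPoly} and~\ref{item:costHessian} of \Cref{def:computable} force all relevant constants to be quasi-polynomial; and the union of the box barrier $-\log(\ff_e) - \log(\uu_e - \ff_e)$ with the supplied $\barrier_e$ yields a self-concordant barrier for the full epigraph of $c_e$ intersected with the capacity constraint, which is the object the IPM manipulates. There is no substantive obstacle here — the hard work of achieving the almost-linear iteration complexity (the dynamic Laplacian/min-cost flow machinery of \cite{CKL+22,van2023deterministic}) is entirely encapsulated in \Cref{thm:convexFlow} and is not re-derived. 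The real technical content in this appendix instead lies in subsequently lifting \Cref{thm:convexFlow_gen} to \Cref{coro:convexFlow}, where one must handle (i) the optional scaling variable $\beta \in [\RdLower, \RdUpper]$ routed through an auxiliary edge, and (ii) $m$-decomposable (rather than merely computable) edge costs, which are expressed as infimal convolutions $c_{i}(x) = \min_{a + b = x, a \in [\lowerend^{(i)}, \upperend^{(i)}]} c_{i,1}(a) + c_{i,2}(b)$ and realized by parallel-edge gadgets whose individual edges carry computable costs to which \Cref{thm:convexFlow_gen} then applies directly.
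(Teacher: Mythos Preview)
Your approach is correct for the theorem as literally stated: when each $c_e$ is already a single $(m,\log^{O(1)}m)$-computable function, \Cref{thm:convexFlow_gen} is indeed just \Cref{thm:convexFlow} with the barriers surfaced explicitly, and a direct invocation is all that is required.

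The paper's proof, however, is written for a slightly broader input than the statement advertises. It assumes each $c_e$ decomposes as a \emph{sum} $c_e = \sum_{i\in[N_e]} c_e^{(i)}$ of ``IPM-friendly'' pieces, then replaces each edge by a path of $N_e$ edges carrying the individual summands, and finally calls \Cref{thm:convexFlow} on the $\tilde O(m)$-edge graph $G'$. This path-splitting is the sum-handling half of the full $m$-decomposable reduction (the infimal-convolution half being the parallel-edge gadget in \Cref{lem:min_split_Flow_gen}, which you correctly identify). When $N_e=1$ for every edge---i.e., in the setting the theorem actually states---the paper's construction degenerates to $G'=G$ and coincides with your direct invocation. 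So there is no gap in your argument; the discrepancy is that the paper's proof is written at a generality the statement does not quite capture, while you took the statement at face value.
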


Before proving \Cref{thm:convexFlow_gen}, we provide the following helper lemma that shows how to 

\begin{lemma}
\label{lem:min_split_Flow_gen}
Let $G$ be a graph, $\eta = \exp(-\log^{O(1)} m)$ be a granularity parameter, $\uu \in \eta\Z^E_{>0}$ be the vector of edge capacities and $\dd \in \eta\Z^V$ be a demand vector.
Given a $C > 0$, a collection of convex cost functions on edges $\{c_e(\cdot)\}_e$, written as $c_e(x) = \min_{a+b = x, a \in [\lowerend, \upperend]} c_{e, 1}(a) + c_{e, 2}(b)$, so that  
there is an algorithm that runs in $\almostTime(m)$ time and outputs a flow $\algoutput{\ff} \in \R^E$ that routes $\dd$ and, 
\begin{align*}
    c(\algoutput{\ff}) \le \min_{\imbal(\ff^{\star}) = \dd, 0 \le \ff^{\star} \le \uu} c(\ff^{\star}) + \exp(-\log^C m)
        \text{ where }
    c(\ff) \defeq \sum_{e\in E} c_e(\ff_e)\,.
\end{align*}
\end{lemma}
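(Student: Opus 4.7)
The plan is to reduce an instance satisfying the hypotheses of \Cref{lem:min_split_Flow_gen} to one that fits \Cref{thm:convexFlow_gen} via an edge-splitting gadget. For each edge $e = (u,v) \in E$, construct an auxiliary graph $G'$ by replacing $e$ with two parallel edges $e_1, e_2$ from $u$ to $v$, intended to carry flows $a_e$ (with cost $c_{e,1}$ and capacity $[\lowerend, \upperend]$) and $b_e$ (with cost $c_{e,2}$ and capacity $[0, \uu_e]$) respectively. Any feasible flow on $G'$ projects to a feasible flow $\algoutput{\ff}_e \defeq a_e + b_e$ on $G$ with $\imbal(\algoutput{\ff}) = \dd$; conversely, for any $\ff$ feasible on $G$ and any feasible decomposition $\ff_e = a_e + b_e$ achieving $c_e(\ff_e)$, the pair $(a_e, b_e)$ yields a feasible lift with identical cost. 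The min-split definition of $c_e$ therefore gives $\min_{G'} = \min_{G}$ for the corresponding objectives.

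Next, to convert the lower-bound constraint $a_e \in [\lowerend, \upperend]$ on $e_1$ into the $[0, \cdot]$ capacity form required by \Cref{thm:convexFlow_gen}, I would apply the standard substitution $a'_e \defeq a_e - \lowerend$. This yields capacity $[0, \upperend - \lowerend]$ on $e_1$, shifts the demand at $u$ by $+\lowerend$ and at $v$ by $-\lowerend$, and replaces the cost $c_{e,1}$ by the shifted function $\tilde{c}_{e,1}(a') \defeq c_{e,1}(a' + \lowerend)$. Since $\lowerend, \upperend \in \eta \Z$ implicitly (to fit the original granularity regime), the modified demand still lies in $\eta \Z^V$ and capacities in $\eta \Z^E_{>0}$. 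Then I would invoke \Cref{thm:convexFlow_gen} on $G'$, which has $2m$ edges, with the modified demand, capacity, and cost data and accuracy parameter $C$, producing in $\almostTime(m)$ time an approximate flow whose total cost is within $\exp(-\log^C m)$ of the optimum. Returning $\algoutput{\ff}_e \defeq a_e + b_e$ then satisfies the claimed guarantee by the correspondence of the first paragraph.

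The main technical obstacle is verifying that the transformed cost functions satisfy the hypotheses of \Cref{thm:convexFlow_gen}; in particular, that shifting a computable function preserves computability. If $c_{e,1}$ is $(m, \log^{O(1)} m)$-computable with self-concordant barrier $\barrier_{c_{e,1}}$ on $\{(x, t) : c_{e,1}(x) \le t\}$, then the shifted barrier $\tilde{\barrier}(a', t) \defeq \barrier_{c_{e,1}}(a' + \lowerend, t)$ is a self-concordant barrier for the epigraph of $\tilde{c}_{e,1}$ because self-concordance and the barrier parameter are invariant under affine reparametrization, and the quasi-polynomial function-value and Hessian bounds transfer unchanged provided $|\lowerend| \le \poly(m)$, which follows from polynomial boundedness of the input. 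A minor secondary point is confirming that the composition preserves the $O(1)$-time gradient/Hessian access required in \Cref{def:computable}, which is immediate since the affine shift adds only constant-time arithmetic. With these verifications, \Cref{thm:convexFlow_gen} applies directly to the lifted instance and the lemma follows.
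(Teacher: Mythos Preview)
Your approach is conceptually close to the paper's --- both reduce to \Cref{thm:convexFlow} via an edge-splitting gadget and a shift to eliminate lower bounds --- but there is a real gap in your construction.

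The issue is the capacity $[0,\uu_e]$ you place on the second parallel edge carrying $b_e$. The min-split definition $c_e(x)=\min_{a+b=x,\,a\in[\lowerend,\upperend]} c_{e,1}(a)+c_{e,2}(b)$ imposes \emph{no} sign constraint on $b$; in particular, for any feasible $\ff^\star$ with $\ff^\star_e<\lowerend$ (perfectly possible since $\lowerend\ge 0$ and $\ff^\star_e$ can be $0$), every admissible split has $b=\ff^\star_e-a\le \ff^\star_e-\lowerend<0$. Your lifted instance therefore has no feasible $(a_e,b_e)$ attaining cost $c_e(\ff^\star_e)$, and you cannot conclude $\min_{G'}\le\min_G$. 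Concretely, take $c_{e,1}\equiv 0$ and $c_{e,2}(b)=b^2$: then the optimal split at small $\ff^\star_e$ has $b^\star<0$, and forcing $b_e\ge 0$ strictly increases the lifted optimum. (Your other unstated assumption, that $a_e+b_e\le\uu_e$, is harmless: the lemma's output need only route $\dd$, not be capacitated, and the inequality $c_e(a_e+b_e)\le c_{e,1}(a_e)+c_{e,2}(b_e)$ still holds.)

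The paper's gadget repairs exactly this: it introduces intermediate vertices $u^+,v^+$ with demand offsets $\pm(\upperend-\lowerend)$ and places cost $c_{e,2}(\cdot-\upperend)$ on the second internal edge, so that after the implicit shift $b=z-\upperend$, the variable $b$ ranges down to $-\upperend$. The separate entry/exit edges $(u,u^+),(v^+,v)$ with capacity $\uu_e$ additionally enforce $a+b\in[0,\uu_e]$. Your parallel-edge idea can be salvaged by shifting \emph{both} variables (set $b'=b+\upperend$ with cost $c_{e,2}(\cdot-\upperend)$, capacity $[0,\uu_e-\lowerend+\upperend]$, and a second per-edge demand correction), but as written the reduction does not preserve the optimum. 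Your discussion of why affine shifts preserve computability is correct and is exactly what the paper uses implicitly.
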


\begin{proof}[Proof of \Cref{lem:min_split_Flow_gen}]
    We construct a graph $G'$ by adding additional vertices and edges, such that every edge $e \in E(G')$ has a corresponding cost function $c'_e(x)$ that is IPM-friendly and computing a flow of additive error $\exp(- \log^C m)$ on $G'$ results in a flow of additive error $\exp(- \log^C m)$ on $G$. 
    For this, 
    we create a gadget, involving a constant number of vertices and edges, with which we replace each edge $e \in E(G)$. Specifically, 
    for each edge $e \in E(G), e = (u, v)$, we let $H(e)$ be the graph with vertex set $V(H(e)) = \{u, u^+, v^+, v\}$, and edge set $E(H(e)) = \{e_1 = (u, u^+), e_2 = (u^+, v^+), e_3 = (u^+, v^+), e_4 = (v^+, v)\}$. For these edges, we set the capacity vector $\uu'$ as $\uu'_{e_1} = \uu_e, \uu'_{e_2} = \upperend - \lowerend, \uu'_{e_3} = \uu_e - \lowerend, \uu'_{e_4} = \uu_e$ and cost functions $c'_{e_1}(x) = 0, c'_{e_2}(x) = c_{e, 1}(x + \lowerend), c'_{e_3}(x) = c_{e, 2}(x - \upperend), c_{e_4}(x) = 0$. 
    Finally, we set the demand vector for $H(e)$ as $\dd'_{u} = \dd_{u}, \dd'_{v} = \dd_{v}, \dd_{u^+} = \upperend - \lowerend, \dd_{v^+} = \lowerend - \upperend$. 
    We let $G'$ be the graph where we replace each edge $e$ with $H(e)$. Specifically, for each $e = (u, v)$, we eliminate edge $(u, v)$, add vertices $(u^+, v^+)$, along with edges $E(H(e))$. 
    For every $e = (u, v) \in E(G)$, we call pair $(u, v)$ in $G'$ an \emph{original pair} and call $H(u, v)$ the corresponding gadget, with added vertices $u^+(u, v), v^+(u, v)$ and the corresponding edges. When clear from context, we drop the dependence of the edges and vertices in $H(u, v)$ on pair $(u, v)$. 
    Note that the graph $G'$ has a valid set of demands, since we do not change the demands for any of the vertices in an original pair, we simply add new vertices in with their demands. We refer to $\dd', \uu'$ as the demand and capacity vectors and $\cc'(\ff')$ as the total cost of routing demand $\dd'$ on graph $G'$ (i.e., $\cc'(\ff') = \sum_{e' \in E(G')} \cc'_e(\ff'_e)$).

    Let $\algoutput{\ff'}$ be a flow on graph $G'$ that routes $\dd'$ so that 
    \[c'(\algoutput{\ff'}) \le \min_{\imbal(\ff^{\star}) = \dd', 0 \le \ff^{\star} \le \uu'} c'(\ff^{\star}) + \exp(-\log^{C} m)\,.\]
    We show that in $O(m)$ time, we can create a flow $\algoutput{\ff}$ on $G$ routing $\dd$ with 
    \[c(\algoutput{\ff}) \le \min_{\imbal(\ff^{\star}) = \dd, 0 \le \ff^{\star} \le \uu} c(\ff^{\star}) + \exp(-\log^C m)\,.\]
    For this, we create a flow $\algoutput{\ff}$ on $G$ by defining, for every original pair $(u, v) \in V(G') \times V(G')$, $\algoutput{\ff}_{(u, v)} = \algoutput{\ff'}_{(u, u^+(u, v))}$. Now, fix an $(u, v) = e$ in $G'$. Note that for any flow $\ff'$ routing $\dd'$ on $G'$, we have $\ff'_{u, u^+} = \ff'_{v^+ = v}$, since $u^+$ only receives flow from $u$ and $v^+$ only sends flow to $v$. Moreover, we show that for any $e = (u, v) \in G$, $\ff'_e \in [0, \uu_e]$, the minimum cost arising from sending $\ff'_e$ units of flow from $u$ to $v$ through gadget $H(e)$ (i.e., by sending $\ff'_e$ units of flow 
    is achieved by $\min_{a+b = x, a \in [\lowerend, \upperend]} c_{e, 1}(a) + c_{e, 2}(b) = c_e(x)$. For this, note that if the amount of flow sent on edge $e_2$ is $y \ge 0$, then the total cost incurred on edges of $H(e)$ is $c_{e, 1}(y + \lowerend) + c_{e, 2}(x - y + \upperend - \lowerend)$. Clearly, since $\uu'_{(e_2)} = \upperend - \lowerend$, we must have $y \in [0, \upperend - \lowerend]$. Hence, every flow sending $\ff'_e$ units of flow from $u$ to $v$ through gadget $H(e)$ has a cost of $c_1(a) + c_2(b)$, for some $a, b$ with $a+b = \ff'_e$ and $a \in [\lowerend, \upperend]$. 
    Conversely, for any $a, b$ with $a+b = \ff'_e$ and $a \in [\lowerend, \upperend]$, the cost $c_1(a) + c_2(b)$ can be achieved by sending $a - \lowerend$ units of flow from $u^+$ to $v^+$ via edge $e_2$ and the rest via edge $e_3$. 
    
    Thus, $\ff$ indeed routes $\dd$ on $G$, as 
    \[\sum_{(v, u) \in E(G)} \ff_{(v, u)} - \sum_{(u, v) \in E(G)} \ff_{(u, v)} = \sum_{(v, u) \in E(G)} \ff'_{(v^+(v,u), u)} - \sum_{(u, v) \in E(G)} \ff'_{(u, u^+(u, v))} = \dd_{u}\,.\]
    Additionally, we have that $\min_{\imbal(\ff^{\star}) = \dd', 0 \le \ff^{\star} \le \uu'} c'(\ff^{\star}) = \min_{\imbal(\ff^{\star}) = \dd, 0 \le \ff^{\star} \le \uu} c(\ff^{\star})$. 
    Finally, note that for every $(u, v) \in E(G)$, 
    \[c_{(u, v)}(\ff_{(u, v)}) \le \sum_{e' \in E(H(u, v))} c'_{e'}(\algoutput{\ff'}_{e'})\,.\]
    Thus, summing up these inequalities for every $(u, v) \in E(G)$ yields that 
    \[c(\algoutput{\ff}) \le \min_{\imbal(\ff^{\star}) = \dd', 0 \le \ff^{\star} \le \uu'} c'(\ff^{\star}) = \min_{\imbal(\ff^{\star}) = \dd, 0 \le \ff^{\star} \le \uu} c(\ff^{\star}) + \exp(- \log^C m)\,,\]
    as needed. 
    
    Consequently, since the number of edges in $G'$ is at most $4 |E(G)| = O(m)$, we apply \Cref{thm:convexFlow} to graph $G'$, and, in $\almostTime(m)$, obtain a flow $\algoutput{\ff'}$ routing $\dd'$ on $G'$ with 
    \[c'(\algoutput{\ff}) \le \min_{\imbal(\ff^{\star}) = \dd', 0 \le \ff^{\star} \le \uu'} c'(\ff^{\star}) + \exp(-\log^{C} m)\,.\] 
    We then translate, in $O(m)$, $\ff'$ into a flow $\ff$ on graph $G$ routing $\dd$ as described in the second paragraph, and thus obtain a flow of desired properties. 
\end{proof}

Now, we are ready to prove \Cref{thm:convexFlow_gen}. 

\begin{proof}[Proof of \Cref{thm:convexFlow_gen}]
    For each edge $e \in E$, let $c_e^{(1)}, \ldots c_e^{(\edgenumber_e)}$ be the IPM friendly cost functions so that $c_e = \sum_{i \in [\edgenumber_e]} c_e^{(i)}$. For each $e \in E$, we split edge $e$
    into a path of $\edgenumber_e$ edges, and associate, to each edge $i \in [\edgenumber_e]$, cost $c_e^{(i)}$. We let $G'$ be the resulting graph and define $\dd'$, the demand vector associated with $G'$ by $\dd'_v = \dd_v$ for every $v \in G$ and $\dd'_v$ for every newly created vertex $v$. 
    The resulting graph $G'$ has $\O(m)$ edges, each of which has a corresponding cost function that is computable. 
    Moreover, any flow $\ff'$ routing $\dd'$ for graph $G'$ has a corresponding flow $\ff$ routing $\dd$ for graph $G$ with the same cost. 
    Hence, by applying \Cref{thm:convexFlow}, we obtain the desired result. 
\end{proof}

We now present the following general lemma about approximately minimizing a single-variable convex function, which is Lipschitz over a closed interval, given an approximate function value oracle.

\begin{lemma}[Lemma 33~\cite{cohen2016geometric}]
\label{lem:1D_opt_apx_val}
Let $\Gamma: \R \to \R$ be an $L$-Lipschitz convex function defined on the $[R_1, R_2]$ interval and let $\oracle: \R \to \R$ be an oracle such that $|\oracle(y) - \Gamma(y)| \le \gamma$ for all $y.$
In $O(\log \frac{L(R_2 - R_1)}{\gamma})$ time and $O(\log \frac{L(R_2 - R_1)}{\gamma})$ calls to $\oracle$, there is an algorithm that outputs $x \in [R_1, R_2]$ such that
\begin{align*}
    \Gamma(x) \le \min_{y \in [R_1, R_2]} \Gamma(y) + 4\gamma
\end{align*}
\end{lemma}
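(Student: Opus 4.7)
The plan is to run a standard ternary search on $[R_1, R_2]$ using the approximate oracle $\oracle$. Initialize $[a_0, b_0] = [R_1, R_2]$, and for each $k = 0, \ldots, T-1$ with $T = \lceil \log_{3/2}(L(R_2-R_1)/\gamma) \rceil$, query $\oracle$ at the trisection points $x_1^{(k)} = a_k + (b_k - a_k)/3$ and $x_2^{(k)} = a_k + 2(b_k - a_k)/3$, and set the new interval to $[a_k, x_2^{(k)}]$ if $\oracle(x_1^{(k)}) \le \oracle(x_2^{(k)})$ and to $[x_1^{(k)}, b_k]$ otherwise. Let $Q$ denote the set of all queried points. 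The algorithm returns $\hat{x} = \argmin_{x \in Q} \oracle(x)$.

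The interval width decreases by a factor of $2/3$ per iteration, so after $T$ iterations it is at most $\gamma/L$, giving the claimed $O(\log(L(R_2-R_1)/\gamma))$ runtime and oracle query count. The key correctness claim is that $Q$ contains some point $x$ with $\Gamma(x) \le \Gamma(x_*) + 2\gamma$, where $x_* \in \argmin_{y \in [R_1, R_2]} \Gamma(y)$. Given this, the output rule yields $\oracle(\hat{x}) \le \oracle(x) \le \Gamma(x) + \gamma \le \Gamma(x_*) + 3\gamma$, and hence $\Gamma(\hat{x}) \le \oracle(\hat{x}) + \gamma \le \Gamma(x_*) + 4\gamma$, as required.

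To establish the key claim, split into cases by whether $x_* \in [a_T, b_T]$. If so, then the final-iteration queries lie within $\gamma/L$ of $x_*$, and Lipschitzness gives $\Gamma(\text{query}) \le \Gamma(x_*) + \gamma \le \Gamma(x_*) + 2\gamma$. Otherwise, consider the first iteration $k^*$ at which $x_*$ is excluded from the next interval; WLOG the algorithm went left, so $x_* \in (x_2^{(k^*)}, b_{k^*}]$. The oracle comparison $\oracle(x_1^{(k^*)}) \le \oracle(x_2^{(k^*)})$ combined with $|\oracle - \Gamma| \le \gamma$ yields $\Gamma(x_1^{(k^*)}) - \Gamma(x_2^{(k^*)}) \le 2\gamma$. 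Applying the nondecreasing-slope characterization of convexity to the three ordered points $x_1^{(k^*)} < x_2^{(k^*)} < x_*$, together with the ternary-split ratio $(x_* - x_2^{(k^*)})/(x_2^{(k^*)} - x_1^{(k^*)}) \le 1$ (both differences are bounded by $(b_{k^*} - a_{k^*})/3$), gives $\Gamma(x_2^{(k^*)}) - \Gamma(x_*) \le \Gamma(x_1^{(k^*)}) - \Gamma(x_2^{(k^*)}) \le 2\gamma$, so $x_2^{(k^*)} \in Q$ is the desired near-minimizer.

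The main obstacle is executing the convexity/slope computation cleanly at the transition step: one must carefully set up the nondecreasing-slope inequality on the triple $(x_1^{(k^*)}, x_2^{(k^*)}, x_*)$, verify that the gap ratio is at most one (which is precisely what the $1/3$--$2/3$ placement of the ternary points delivers), and handle the symmetric ``went right'' subcase by the same argument with the roles of $x_1, x_2$ swapped. Once this transition step is handled, everything else is bookkeeping: the interval shrinkage gives the runtime, and the best-queried-point rule converts the $2\gamma$ existence bound into the final $4\gamma$ guarantee via two invocations of the oracle error.
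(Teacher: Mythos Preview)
The paper does not provide its own proof of this lemma; it is cited directly as Lemma~33 of \cite{cohen2016geometric} and used as a black box. Your ternary-search argument is correct and is essentially the standard proof. One small point worth tightening in Case~1: both trisection points $x_1^{(T-1)}, x_2^{(T-1)}$ in fact lie in $[a_T, b_T]$ (one is an endpoint and the other is its midpoint), so the distance to $x_*$ is indeed at most $b_T - a_T \le \gamma/L$; this justifies the claim you stated without verification.
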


We are now ready to prove \Cref{coro:convexFlow}. 

\begin{proof}[Proof of \Cref{coro:convexFlow}]

Let $C > 0$ be the given exponent in the error parameter and $C_1 > 10 C$ be a constant so that $\RdLower, \RdUpper, \max_e \uu_e$ and $\max_u |\dd_u|$ are all bounded by $\exp(\log^{C_1} m)$ and $v(\cdot)$ is $\exp(\log^{C_1} m)$-smooth.
Define $C_2 = 10C_1$,
\begin{align*}
    c'(\ff) &\defeq c(\ff) + \exp(-\log^{10C_2} m) \sum_{e \in E} \ff_e^2\text{, and} \\
    \Gamma(\beta) &\defeq v(\beta) + \min_{\ff: \imbal(\ff) = \beta \dd, 0 \le \ff \le \uu} c'(\ff)
\end{align*}
Because $c'(\ff)$ is $\exp(-\log^{10C_2} m)$-strongly convex w.r.t. $\|\ff\|_2$, we know that $\Gamma(\beta)$ is $\exp(\log^{C_1} m + \log^{10C_2} m)$-smooth.
Because $|\RdUpper - \RdLower| \le 2 \exp(\log^{C_1} m)$, we know that $\Gamma(\beta)$ is also $\exp(\log^{C_3} m)$-Lipschitz on the interval $[\RdLower, \RdUpper]$ for $C_3 > 4C_1 + 10C_2.$

Next, we use \Cref{lem:1D_opt_apx_val} to approximately minimize $\Gamma(\beta)$ over $\beta \in [\RdLower, \RdUpper]$ to an additive error of $\gamma = \exp(-\log^{C_4} m)$ for $C_4 = 10C.$
In order to apply \Cref{lem:1D_opt_apx_val}, we need to implement an approximate evaluation oracle $\oracle$ for $\Gamma(\beta).$
Given the input $\beta$, $\oracle$ works by first rounding the input $\beta$ to $\Bar{\beta}$, the nearest integral multiple of $\eta = \exp(-\log^{C_5} m)$ for some large enough constant $C_5 > 0.$
Then, $\oracle$ calls the single commodity flow solver (\Cref{thm:convexFlow}) to find a highly accurate solution $\Bar{\ff}$ to the problem $\min_{\ff: \imbal(\ff) = \Bar{\beta} \dd} c'(\ff).$
Finally, $\oracle$ outputs $\Bar{\Gamma} = v(\Bar{\beta}) + c'(\Bar{\ff})$ as its estimation of $\Gamma(\beta).$
By the Lispchitzness of $\Gamma(\cdot)$ and the convex flow solver guarantee, $\Bar{\Gamma}$ and $\Gamma(\beta)$ is at most $\exp(-\log^{C_5} m + \log^{C_3} m)$ apart.
We set $C_5$ large enough so that $4 \exp(-\log^{C_5} m + \log^{C_3} m) \le \gamma = \exp(-\log^{C_4} m).$
Combining $\oracle$ with \Cref{lem:1D_opt_apx_val} finds a $\algoutput{\beta}$ so that
\begin{align*}
    \Gamma(\algoutput{\beta}) \le \min_{\beta \in [\RdLower, \RdUpper]} \Gamma(\beta) + \exp(-\log^{C_4} m)
\end{align*}

Let $\algoutput{\beta}$ be the output of \Cref{lem:1D_opt_apx_val} using our oracle $\oracle$ and $\algoutput{\ff}$ be the corresponding flow.
We know that $\algoutput{\beta}$ is an integral multiple of $\eta = \exp(-\log^{C_5} m).$

Next, we argue the approximation quality of $(\algoutput{\beta}, \algoutput{\ff}).$
Observe that, for any feasible flow $\ff$, i.e., $0 \le \ff \le \uu$, the following holds
\begin{align*}
    c(\ff) \le c'(\ff) = c(\ff) + \exp(-\log^{10C_2} m) \|\ff\|_2^2 \le c(\ff) + \exp(-\log^{10C_2} m + \log^{C_1} m + \log m)
\end{align*}
because $\max_e \uu_e \le \exp(\log^{C_1} m).$

Let $(\beta^{\star}, \ff^{\star})$ be the optimal solution.
By the approximate optimality of $(\algoutput{\beta}, \algoutput{\ff})$ and the definition of $c'(\cdot)$, we have
\begin{align*}
v(\algoutput{\beta}) + c(\algoutput{\ff})
&\le v(\algoutput{\beta}) + c'(\algoutput{\ff}) \\
&\le v(\beta^{\star}) + c'(\ff^{\star}) + \exp(-\log^{C_4} m) \\
&\le v(\beta^{\star}) + c(\ff^{\star}) + \exp(-\log^{10C_2} m + \log^{C_1} m + \log m) + \exp(-\log^{C_4} m)
\end{align*}
where the last inequality comes from that $0 \le \ff^{\star} \le \uu$ and we have $\|\ff^{\star}\|_2^2 \le m \exp(\log^{C_1} m).$

Because $C_2 = 10C_1 > 100C$ and $C_4 = 10C$, we have that
\begin{align*}
v(\algoutput{\beta}) + c(\algoutput{\ff}) \le v(\beta^{\star}) + c(\ff^{\star}) + \exp(-\log^{C} m)
\end{align*}

The runtime is $\almostTime(m)$ because \Cref{lem:1D_opt_apx_val} takes only $\O(1)$ time and makes $\O(1)$ calls to \Cref{thm:convexFlow}, each of which takes $\almostTime(m)$ time.
This concludes the proof.
\end{proof}

\subsection{Proving computability}
\label{sec:proving_computability}

First, we start with the following folklore lemma about self-concordant barriers on intersection of two convex sets. 

\begin{lemma}[Proposition 3.1.1 ii) from \cite{nemirovski04}]\label{lem:inters_self_concord}
    Let $\barrier_i$ be a $\nu_i$-self-concordant barrier on convex set $K_i$ for every $i \in [m]$ and $\alpha_i \ge 1$ real numbers. Then, the function $\sum_{i \in [m]} \alpha_i \barrier_i$ is a $(\sum_{i \in [m]} \alpha_i \nu_i)$-self concordant barrier for set $K = \cap_{i \in [m]} K_i$. 
\end{lemma}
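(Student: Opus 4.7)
The plan is to verify, in turn, the three defining properties of a $\nu$-self-concordant barrier listed in \Cref{def:computable} (applied to the sum $f \defeq \sum_{i \in [m]} \alpha_i \barrier_i$ on $K = \cap_i K_i$): the boundary blow-up, the third-order self-concordance inequality, and the first-order barrier inequality with parameter $\nu = \sum_i \alpha_i \nu_i$. The boundary condition is immediate because each $\alpha_i \barrier_i$ already tends to $+\infty$ as $u$ approaches the boundary of $K_i$ (and $\alpha_i \ge 1 > 0$), so any sequence approaching $\partial K$ must approach $\partial K_i$ for some $i$, forcing $f \to +\infty$.

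For the third-order condition, I would first record a single-term scaling inequality: for each $i$, the assumption $\alpha_i \ge 1$ gives
\begin{align*}
|\nabla^3(\alpha_i \barrier_i)(u)[v,v,v]|
&= \alpha_i |\nabla^3 \barrier_i(u)[v,v,v]|
\le 2\alpha_i (\nabla^2 \barrier_i(u)[v,v])^{3/2} \\
&\le 2 \alpha_i^{3/2}(\nabla^2 \barrier_i(u)[v,v])^{3/2}
= 2 (\nabla^2(\alpha_i \barrier_i)(u)[v,v])^{3/2}.
\end{align*}
Writing $x_i \defeq \nabla^2(\alpha_i \barrier_i)(u)[v,v] \ge 0$, the triangle inequality plus the elementary superadditivity fact $(a+b)\sqrt{a+b} \ge a\sqrt{a} + b\sqrt{b}$ (which iterates to $\sum_i x_i^{3/2} \le (\sum_i x_i)^{3/2}$) yields
\[
|\nabla^3 f(u)[v,v,v]| \le \sum_{i} 2 x_i^{3/2} \le 2\Big(\sum_i x_i\Big)^{3/2} = 2(\nabla^2 f(u)[v,v])^{3/2}.
\]
This is where the assumption $\alpha_i \ge 1$ is crucial; if any $\alpha_i < 1$ the scaling step would fail.

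The main obstacle, though still standard, is the barrier-parameter bound. The natural approach is a Cauchy--Schwarz with weights $\alpha_i \nu_i$:
\[
\langle \nabla f(u), v\rangle^2 = \Big(\sum_i \langle \nabla(\alpha_i \barrier_i)(u), v\rangle\Big)^2
\le \Big(\sum_i \alpha_i \nu_i\Big)\Big(\sum_i \frac{\langle \nabla(\alpha_i \barrier_i)(u), v\rangle^2}{\alpha_i \nu_i}\Big).
\]
For each $i$ the hypothesis $\langle \nabla \barrier_i, v\rangle^2 \le \nu_i \nabla^2 \barrier_i[v,v]$ implies $\langle \nabla(\alpha_i \barrier_i), v\rangle^2 / (\alpha_i \nu_i) = \alpha_i \langle \nabla \barrier_i, v\rangle^2 / \nu_i \le \alpha_i \nabla^2 \barrier_i[v,v] = \nabla^2(\alpha_i \barrier_i)[v,v]$. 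Summing and combining with the weighted Cauchy--Schwarz gives
\[
\langle \nabla f(u), v\rangle^2 \le \Big(\sum_i \alpha_i \nu_i\Big)\cdot \nabla^2 f(u)[v,v],
\]
which is the desired barrier inequality with $\nu = \sum_i \alpha_i \nu_i$. Together with the two earlier properties, this finishes the verification that $f$ is a $(\sum_i \alpha_i \nu_i)$-self-concordant barrier on $K$.
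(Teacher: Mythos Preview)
Your proof is correct and is the standard argument for this classical fact. Note that the paper does not give its own proof of this lemma; it is simply quoted from Nemirovski's lecture notes (Proposition 3.1.1 ii)), so there is no in-paper proof to compare against.
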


Next, we provide a lemma from \cite{nemirovski04} that helps with constructing barriers for epigraphs of convex functions. 

\begin{lemma}[Proposition 9.2.2 from \cite{nemirovski04}]\label{lem:gen_barrier_creation}
    Let $c(x)$ be a 3-times differentiable convex function such that 
    \[|c'''(x)| \le 3 \beta x^{-1} c''(x), \forall x>0 \,.\]
    Then the function $\barrier(x, t) := - \log(t - c(x)) - \max(1, \beta^2) \log(x)$ is a $(1+\max(1, \beta^2))$-self concordant barrier for the $2$-dimensional convex domain $\dset_c = \{(x, t) \mid c(x) \le t\}$. 
\end{lemma}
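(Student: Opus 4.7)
The plan is to decompose $\barrier = \barrier_1 + \barrier_2$ with $\barrier_1(x,t) := -\log(t-c(x))$ and $\barrier_2(x,t) := -\mu \log x$ for $\mu := \max(1,\beta^2)$, and to verify separately that $\barrier$ diverges on the boundary of $\dset_c$ (immediate from the two logarithms, using convexity of $c$ to get convexity of $\dset_c$) and that $\barrier$ satisfies the two defining inequalities of a $\nu$-self-concordant barrier with $\nu = 1 + \mu$.

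For both inequalities I would introduce reduced variables associated with a direction $h = (h_x, h_t)$: letting $\Delta := t - c(x) > 0$ and $\sigma := h_t - c'(x) h_x$, set
\[
a := \sigma/\Delta, \qquad b := \sqrt{c''(x)/\Delta}\,|h_x|, \qquad \gamma := |h_x|/x .
\]
A direct computation (treating $\barrier_1 = -\log g$ for $g = \Delta$) gives $\nabla^2 \barrier_1[h,h] = a^2 + b^2$ and $\nabla^3 \barrier_1[h,h,h] = -2 a^3 - 3 a b^2 + c'''(x)\,h_x^3/\Delta$, while $\nabla^2 \barrier_2[h,h] = \mu \gamma^2$ and $|\nabla^3 \barrier_2[h,h,h]| = 2\mu \gamma^3$. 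The hypothesis $|c'''| \le 3\beta\,x^{-1} c''$ is tailored to absorb the extra $c'''$ contribution: indeed $|c'''(x)\,h_x^3/\Delta| \le 3\beta\,b^2\gamma$. Summing gives
\[
\nabla^2 \barrier[h,h] = a^2+b^2+\mu\gamma^2, \qquad |\nabla^3 \barrier[h,h,h]| \le 2|a|^3 + 3|a|b^2 + 3\beta b^2 \gamma + 2\mu\gamma^3.
\]
Using $\beta \le \sqrt{\mu}$ and $\mu \ge 1$, and substituting $v := \sqrt{\mu}\,\gamma$, the Dikin-type self-concordance inequality reduces to showing
\[
2|a|^3 + 3|a|b^2 + 3 b^2 v + 2 v^3 \le 2(a^2 + b^2 + v^2)^{3/2}.
\]
I would verify this by noting that the right side minus the left, viewed as a function of $s := b^2$ with $a, v$ fixed, has derivative $3\sqrt{a^2+s+v^2} - 3(|a|+v)$, so it is decreasing then increasing with unique minimum at $s = 2|a|\,v$; at $s = 2|a|v$ one checks that both sides equal $2(|a|+v)^3$, establishing the inequality.

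For the complexity parameter, one computes $\langle \nabla \barrier(x,t), h\rangle = -a - \mu\,h_x/x$, so by Cauchy--Schwarz
\[
\langle \nabla \barrier, h\rangle^2 = (a + \mu\,h_x/x)^2 \le (1 + \mu)\bigl(a^2 + \mu (h_x/x)^2\bigr) \le (1+\mu)\,\nabla^2 \barrier[h,h],
\]
since $\mu(h_x/x)^2 = \mu\gamma^2 \le a^2 + b^2 + \mu\gamma^2$. This gives $\nu = 1 + \max(1,\beta^2)$.

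The main obstacle is the cubic inequality: the cross-term $3\beta b^2 \gamma$ produced by the hypothesis on $c'''$ is not controlled by the $\barrier_1$ Hessian alone and must be absorbed using the Hessian contribution $\mu\gamma^2$ coming from $\barrier_2$. The specific choice $\mu = \max(1,\beta^2)$ is dictated by needing $\beta \le \sqrt{\mu}$ and $\mu \ge 1$ simultaneously, so that both the cross term and the $\barrier_2$ third-derivative fit into the same $2(a^2+b^2+\mu\gamma^2)^{3/2}$ envelope, with the boundary case $s = 2|a|v$ witnessing that the bound cannot be improved without strengthening the hypothesis on $c$.
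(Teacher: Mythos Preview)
The paper does not prove this lemma: it is quoted verbatim as Proposition~9.2.2 from \cite{nemirovski04} and used as a black box, so there is no ``paper's own proof'' to compare against.

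That said, your proof is correct and is essentially the standard argument. The computations of $\nabla^2\barrier[h,h]=a^2+b^2+\mu\gamma^2$ and the bound on $|\nabla^3\barrier[h,h,h]|$ check out, the substitution $v=\sqrt{\mu}\,\gamma$ together with $\beta\le\sqrt{\mu}$ and $\mu\ge 1$ correctly reduces the self-concordance condition to the clean cubic inequality, and your minimization-in-$s=b^2$ argument for that inequality is tight (both sides equal $2(|a|+v)^3$ at the critical point $s=2|a|v$). The complexity-parameter bound via Cauchy--Schwarz on $(1,\sqrt{\mu})$ and $(-a,-\sqrt{\mu}\,h_x/x)$ is also correct. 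One small remark: you should state explicitly that $\dset_c$ is convex (which follows from convexity of $c$) and that $\barrier$ blows up at the full boundary, including where $x\to 0^+$; you mention this in passing but it is part of the definition being verified.
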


Now, we provide several lemmas about the computability of the single-variable convex functions we work with to prove \Cref{thm:approxL1InfMin}. 

\begin{lemma}\label{lem:xlogx_comp}
    Let $r: [0, R] \to \R$ defined by $r(x) = (x + \xi) \log (x + \xi)$, for some $\xi > 0$. Provided that $R, \frac{1}{\xi} \le \poly(m)$, $r$ is $(m, \log^{O(1)} m)$-computable. 
\end{lemma}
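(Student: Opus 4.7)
The plan is to construct an explicit self-concordant barrier on $\dset_r = \{(x,t) : 0 \le x \le R,\ r(x) \le t\}$ as a sum of four standard pieces, and then verify each of the four conditions in \Cref{def:computable} in turn.

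First, I would compute the derivatives $r'(x) = \log(x+\xi) + 1$, $r''(x) = (x+\xi)^{-1}$, $r'''(x) = -(x+\xi)^{-2}$, which immediately give $|r'''(x)| = (x+\xi)^{-1} r''(x)$. After the change of variables $y = x + \xi$, the function becomes $\tilde r(y) = y\log y$ with $|\tilde r'''(y)| = y^{-1} \tilde r''(y)$, so \Cref{lem:gen_barrier_creation} applies with $\beta = 1/3$. This yields that $(y,t) \mapsto -\log(t - y\log y) - \log y$ is a $2$-self-concordant barrier for its natural domain; undoing the translation shows that
\[
\barrier_{\mathrm{epi}}(x,t) \defeq -\log\bigl(t - (x+\xi)\log(x+\xi)\bigr) - \log(x+\xi)
\]
is a $2$-self-concordant barrier for $\{(x,t) : x > -\xi,\ r(x) \le t\}$. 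I would then add the standard $1$-self-concordant barriers $-\log(x)$ and $-\log(R-x)$ for $\{x > 0\}$ and $\{x < R\}$, and apply \Cref{lem:inters_self_concord} to conclude that
\[
\barrier_r(x,t) \defeq -\log\bigl(t - (x+\xi)\log(x+\xi)\bigr) - \log(x+\xi) - \log(x) - \log(R-x)
\]
is a $4$-self-concordant barrier on $\dset_r$, giving Item~\ref{item:costSC} with $\nu = 4$.

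For Item~\ref{item:costQuasiPoly}, I would simply note that on $[0,R]$, $|r(x)| \le (R+\xi)\bigl(|\log(R+\xi)| + \log(1/\xi)\bigr) = O(m \log m)$ since $R, 1/\xi \le \poly(m)$. Items~\ref{item:costHessian} and~\ref{item:costHessianCompute} are the substantive parts. For the Hessian bound, the key observation is that if $\barrier_r(x,t) \le \O(1) = \exp(\log^{O(1)} m)$, then each of the four negative-log terms is individually bounded, which forces $x$, $R-x$, $x+\xi$, and $u \defeq t - r(x)$ to all be at least $\exp(-\log^{O(1)} m)$. A direct computation shows that $\nabla^2 \barrier_r$ is the sum of four pieces whose entries are polynomials in $1/x$, $1/(R-x)$, $1/(x+\xi)$, $1/u$, and $|r'(x)| = |\log(x+\xi)+1| = O(\log m)$. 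Plugging in the lower bounds, every entry is at most $\exp(\log^{O(1)} m)$, giving the required spectral bound. Item~\ref{item:costHessianCompute} is immediate since each of the four summands and its first two derivatives involve only $O(1)$ evaluations of $\log$, products, and reciprocals.

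The main technical point to be careful about will be the Hessian bookkeeping for the epigraph term $-\log(t - r(x))$: its Hessian has the form $u^{-2}\nabla u (\nabla u)^\top - u^{-1} \nabla^2 u$ with $\nabla u = (-r'(x), 1)$ and $\nabla^2 u = -\mathrm{diag}(r''(x), 0)$, so the $(1,1)$ entry is $r'(x)^2/u^2 + r''(x)/u$. Controlling this term is what forces the simultaneous lower bounds on $u$ and $x+\xi$, but both are guaranteed by the assumption $\barrier_r(x,t) \le \O(1)$, so no obstacle arises beyond careful arithmetic.
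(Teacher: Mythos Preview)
Your proposal is correct and follows essentially the same approach as the paper: both invoke \Cref{lem:gen_barrier_creation} to build the epigraph barrier and then verify the Hessian bound by arguing that $\barrier \le \O(1)$ forces quasi-polynomial lower bounds on the quantities appearing in the denominators. The only cosmetic differences are that you first shift $y = x+\xi$ before applying \Cref{lem:gen_barrier_creation} (yielding a $-\log(x+\xi)$ term) whereas the paper applies the lemma directly to $r$ on $x>0$ (yielding a $-\log x$ term), and you additionally tack on $-\log x$ and $-\log(R-x)$ to enforce the box $[0,R]$, which the paper omits since those constraints are imposed externally by the flow solver's capacity bounds.
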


\begin{proof}[Proof of \Cref{lem:xlogx_comp}]
    First, note that for any $x \in [0, R]$, we have $|\log (x + \xi)| \le \max(|\log \xi|, |\log (\xi + R)|) = O(\log n) + O(|x|)$. 
    This proves property \ref{item:costQuasiPoly} of \Cref{def:computable}. 
    Next, note that $r'(x) = \frac{1}{x + \xi}$, $r''(x) = -\frac{1}{(x+\xi)^2}$, and 
    $r'''(x) = \frac{2(x + \xi)}{(x+\xi)^4} = \frac{2}{(x+\xi)^3}$. Further, since $\xi > 0$, we have $\frac{2}{(x+\xi)^3} \le \frac{2}{x} \cdot \frac{1}{(x+\xi)^2}$, which implies 
    \[|r'''(x)| \le 3 x^{-1} r''(x), \forall x>0\,.\]
    Thus, 
    by \Cref{lem:gen_barrier_creation}, with $\beta = 1$, $\barrier(x, t) := - \log(t - (x + \xi) \log (x + \xi)) - 4 \log(x)$ is a $2$-self-concordant barrier for $r$, which proves \ref{item:costSC}.  
    Note that $\barrier(x, t) = (\frac{1 + \log (x + \xi)}{t - (x + \xi) \log (x + \xi)}, - \frac{1}{t - (x + \xi) \log (x + \xi)} - \frac{1}{t})$ and that 
    \[\nabla^2 \barrier(x, t) = \begin{pmatrix}
        \frac{\frac{t - (x + \xi) \log (x + \xi)}{x + \xi} + (1 + \log (x + \xi))^2}{(t - (x + \xi) \log (x + \xi))^2} & \frac{-1}{(t - (x + \xi) \log (x + \xi))^2} \\
        \frac{1 + \log (x + \xi)}{(t - (x + \xi) \log (x + \xi))^2} & \frac{1}{(t - (x + \xi) \log (x + \xi))^2} + \frac{1}{t^2}
    \end{pmatrix}\,.\]
    Consequently, property \ref{item:costHessianCompute} of \Cref{def:computable} is satisfied. Lastly, note that 
    for any $|x|, |t| \le m^K$, $|\barrier(x,t)| \le \O(1)$ implies $t - (x + \xi) \log (x + \xi) \ge \frac{1}{\poly(m)}$ and $t \ge \frac{1}{\poly(m)}$. Whenever $t - (x + \xi) \log (x + \xi) \ge \frac{1}{\poly(m)}$, $t \ge \frac{1}{\poly(m)}$ and $|x|, |t| \le m^K$, 
    the quantities $\frac{1}{t - (x + \xi) \log (x + \xi)}, \frac{1}{t}, \frac{1}{x + \xi}, \log(x + \xi)$ are all bounded by $\poly(m) = \exp(\O(1))$. Thus, property \ref{item:costHessian} follows, as this implies that the entries of Hessian $\nabla^2 \barrier(x, t)$ are all bounded by $\exp(\O(1))$ for such value of $(x, t)$. 
\end{proof}

\begin{lemma}\label{lem:short_powers}
    Let $p, m \in \Z_{>0}$ and $w > 0$ with $p \ge 2, p = \otilde(1), w = \exp(\otilde(1))$\footnote{Throughout this subsection, $\otilde(1)$ hides $\log^c m$ factors for $c$ constant}.
    Then, $c: \R \to \R$ by $c(x) := w x^p$ is $(m, \log^{O(1)} m)$-computable. 
\end{lemma}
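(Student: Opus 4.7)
\textbf{Proof plan for \Cref{lem:short_powers}.} The plan is to follow the structure of the proof of \Cref{lem:xlogx_comp}, verifying each of the four properties of \Cref{def:computable} in turn. Since the barrier construction in \Cref{lem:gen_barrier_creation} (which we will apply) produces barriers on domains restricted to $x > 0$, I will take the effective domain of $c$ to be $\R_{\ge 0}$, setting $c(x) = +\infty$ for $x < 0$; this is the regime in which the function is actually used in our applications (inputs coming from flow values or from $\gamma_q(\cdot)$ which is non-negative).

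First I will establish property \ref{item:costQuasiPoly}: since $w = \exp(\otilde(1))$ and $p = \otilde(1)$, for any $x \ge 0$ we have $|c(x)| = w x^p \le \exp(\log^{O(1)} m)(1 + |x|^{\log^{O(1)} m})$, which gives the desired bound $|c(x)| = O(m^K + |x|^K)$ for $K = \log^{O(1)} m$.

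Next I will construct the barrier and verify properties \ref{item:costSC} and \ref{item:costHessianCompute}. Computing derivatives, $c'(x) = wpx^{p-1}$, $c''(x) = wp(p-1)x^{p-2}$, and $c'''(x) = wp(p-1)(p-2)x^{p-3}$, so for $x > 0$ we get
\[
|c'''(x)| = (p-2)\, x^{-1}\, c''(x) \le 3\beta\, x^{-1}\, c''(x)
\]
with $\beta = (p-2)/3 = \otilde(1)$. \Cref{lem:gen_barrier_creation} then yields that
\[
\barrier_c(x, t) \defeq -\log(t - wx^p) - \max(1, \beta^2)\log x
\]
is a $(1 + \max(1,\beta^2))$-self-concordant barrier for $\dset_c = \{(x,t) : x > 0,\ wx^p \le t\}$, with self-concordance parameter bounded by $\log^{O(1)} m$, establishing \ref{item:costSC}. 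Since $\nabla\barrier_c$ and $\nabla^2 \barrier_c$ are rational functions of $(x, t, w, p)$ involving only $O(1)$ arithmetic operations and one evaluation of $x^p$ and $x^{p-2}$, property \ref{item:costHessianCompute} is immediate.

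The main verification will be property \ref{item:costHessian}: showing that whenever $|x|, |t| \le m^K$ and $\barrier_c(x, t) \le \O(1)$, all entries of $\nabla^2 \barrier_c(x,t)$ are bounded by $\exp(\log^{O(1)} m)$. The key observation is that $\barrier_c(x,t) = \O(1)$ together with $|x|,|t| \le m^K$ forces both $t - wx^p \ge \exp(-\log^{O(1)} m)$ and $x \ge \exp(-\log^{O(1)} m)$ (from the two logarithmic terms respectively). Writing out the Hessian entries
\[
\partial^2_{tt}\barrier_c = \tfrac{1}{(t-wx^p)^2},\qquad \partial^2_{tx}\barrier_c = -\tfrac{wpx^{p-1}}{(t-wx^p)^2},
\]
\[
\partial^2_{xx}\barrier_c = \tfrac{wp(p-1)x^{p-2}}{t-wx^p} + \tfrac{(wpx^{p-1})^2}{(t-wx^p)^2} + \tfrac{\max(1,\beta^2)}{x^2},
\]
each factor of the form $(t-wx^p)^{-1}$, $x^{-1}$, $x^{p-1}$, $x^{p-2}$, $w$, or $p$ is bounded by $\exp(\log^{O(1)} m)$ under the stated hypotheses (using $|x| \le m^K$ and $p = \otilde(1)$ for the positive powers of $x$, and the two lower bounds above for the reciprocals), so the entrywise bound and hence the operator-norm bound $\nabla^2\barrier_c \preceq \exp(\log^{O(1)} m)\II$ follow.

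The only non-routine part is confirming the Hessian bound of step three, but as sketched it is straightforward once the two lower bounds on $t - wx^p$ and on $x$ are extracted from the assumption $\barrier_c(x,t) = \O(1)$; all other steps are direct calculations paralleling \Cref{lem:xlogx_comp}.
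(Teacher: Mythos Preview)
Your proof is correct but takes a genuinely different route from the paper's. The paper does not invoke \Cref{lem:gen_barrier_creation}; instead it cites a known barrier from Nemirovski (Example 9.2.1), namely
\[
\barrier(x,t) \;=\; -2\log t \;-\; \log\bigl(t^{2/p} - w^{2/p}x^2\bigr),
\]
which is $4$-self-concordant for the epigraph $\{t \ge w|x|^p\}$. The verification of properties \ref{item:costQuasiPoly}, \ref{item:costHessian}, \ref{item:costHessianCompute} then proceeds along essentially the same lines as yours: write out the Hessian explicitly and argue that $\barrier(x,t)=\O(1)$ together with $|x|,|t|\le m^K$ forces lower bounds on $t$ and on $t^{2/p}-w^{2/p}x^2$.

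The two barriers differ in two respects worth noting. First, the paper's barrier is symmetric in $x$ and hence valid on all of $\R$, whereas your $-\log x$ term forces $x>0$; you acknowledge this and restrict the effective domain to $\R_{\ge 0}$, which is adequate for the downstream uses but technically proves a slightly weaker statement than the one written with $c:\R\to\R$. Second, the paper's self-concordance parameter is the constant $4$, while yours is $1+\max(1,(p-2)^2/9)=O(p^2)$; since $p=\otilde(1)$ both are $\log^{O(1)} m$, so this does not affect the conclusion. Your approach has the virtue of being self-contained (no barrier cited from outside the paper), while the paper's buys a domain that matches the stated $c:\R\to\R$ and a parameter independent of $p$.
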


The proof is adapted from the proof of Theorem 10.14 in \cite{CKL+22}. 

\begin{proof}[Proof of \Cref{lem:short_powers}]
    First, note that since $w = \exp(\otilde(1))$, there exists $K = \otilde(1)$ so that $m^{K/2} \ge w$ and $K \ge 2p$. Hence, by AM-GM, we have that $c(x) \le m^K + |x|^K, \enspace \forall x \in \R$. 
    This proves property \ref{item:costQuasiPoly} of \Cref{def:computable}. 
    
    Second, we show that $\barrier(x, t) := -2 \log(t) - \log(t^{2/p} - w^{2/p} x^2)$ is a $4$-self concordant barrier for $c$ on $\mathcal{D}_c$. 
    Note that the first part of the lemma ($4$-self concordance of $\barrier(x, t)$) follows from Example 9.2.1 from \cite{nemirovski04}, which states that $-2 \log(t) - \log(t^{2/p} - x^2)$ is a self-concordant barrier for $t \ge x^p$, coupled with applying the linear map $x \to w^{1/p} x$. 
    This proves property \ref{item:costSC}. 
    Next, we show that both $\nabla \barrier, \nabla^2 \barrier$ can be computed and accessed in $\O(1)$ time. Note that $\nabla \barrier(x, t) = (\frac{2 w^{2/p} x}{t^{2/p} - w^{2/p} x^2}, - \frac{2}{t} - \frac{2 t^{2/p - 1} / p}{t^{2/p} - w^{2/p} x^2})$ and that 
    \[\nabla^2 \barrier(x, t) = \begin{pmatrix}
        \frac{2 w^{2/p}}{t^{2/p} - w^{2/p} x^2} + \frac{4 w^{4/p} x^2}{(t^{2/p} - x^2)^2} & - \frac{4 w^{2/p} x t^{2/p-1}}{p (t^{2/p} - w^{2/p} x^2)^2} \\
        - \frac{4 w^{2/p} x t^{2/p-1}}{p (t^{2/p} - w^{2/p} x^2)^2} & \frac{2}{t^2} + \frac{4 t^{4/p-2}}{p^2 (t^{2/p} - w^{2/p} x^2)^2} + \frac{(2p-4) t^{2/p-2}}{p^2 (t^{2/p} - w^{2/p} x^2)^2}
    \end{pmatrix}\,.\]
    Thus, $\nabla \barrier, \nabla^2 \barrier$ can indeed be computed and accessed in $O(1)$ time, which proves property \ref{item:costHessianCompute} of \Cref{def:computable}. 
    Lastly, note that whenever $|x|, |t| \le m^{K}$ and $-2 \log(t) - \log(t^{2/p} - w^{2/p} x^2) = \O(1)$, we have $t \ge \frac{1}{m^{\O(1)}}$ and $t^{2/p} - w^{2/p} x^2 \ge \frac{1}{m^{\O(1)}}$. Since the entries of $\nabla^2 \barrier(x, t)$ are sums of a constant number fractions where the numerator is a polynomial in $x, t$ and the denominator is a polynomial in $t, t^{2/p} - w^{2/p} x^2$ (with coefficients $\exp(\O(1))$ in both cases), for such values of $(x, t)$ with $|x|, |t| \le m^{K}, \barrier(x, t) = \O(1)$, we have 
    $\|\nabla^2 \barrier(x, t)\|_{\infty} \le \exp(\O(1))$, which proves property \ref{item:costHessian}. 
\end{proof}

We dedicate the rest of the section to proving the following lemma about the computability of odd (small) powers of the $\gamma_q(\cdot)$ function in \Cref{def:gamma}. 

\begin{lemma}\label{lem:gamma_big_p}
    Let $\gamma_q(\cdot; f)$ be the function defined in \Cref{def:gamma} for $1 \le q < 2$. Then, for any odd $p \ge 3, p = \O(1)$ and $w_1, w_2, f \le \exp(\O(1))$, $w_1 (\gamma_q(\cdot; f) + w_2)^p$ is $m$-decomposable. 
\end{lemma}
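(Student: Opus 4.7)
The plan is to give an explicit min-decomposition of $c(x) := w_1(\gamma_q(x;f) + w_2)^p$ directly, rather than expanding by the binomial theorem: since the lemma allows $p = \O(1)$ to be polylogarithmic in $m$, a binomial expansion would produce polylogarithmically many summands and so violate the $O(1)$-sum requirement in \Cref{def:decomposable}. Following the two-piece Huber-like structure of $\gamma_q(x;f)$, I would take $\lowerend = 0,\ \upperend = 2f$ and set
\begin{align*}
c_1(a) &= w_1 \bigl((q/2)\, f^{q-2} (a-f)^2 + w_2\bigr)^p,\\
c_2(b) &= w_1\bigl((|b+f| + f)^q - (1-q/2) f^q + w_2\bigr)^p - w_1\bigl((q/2) f^q + w_2\bigr)^p,
\end{align*}
and then establish the identity
\[
c(x) \;=\; \min_{a+b=x,\ a \in [0, 2f]} c_1(a) + c_2(b)\,.
\]

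To verify the identity I would do a case analysis on $x$: for $|x| \le f$ the optimum sits at the interior point $a = x+f,\ b = -f$ where $c_2$ vanishes and $c_1(x+f) = w_1((q/2)f^{q-2} x^2 + w_2)^p$ reproduces the inner quadratic piece of $\gamma_q$; for $x > f$ the $a$-constraint is active at $a = 2f$, so $b = x - 2f$ and the subtracted constant in $c_2$ cancels $c_1(2f)$, leaving the outer $|x|^q$ piece; $x < -f$ is symmetric with $a = 0$. Optimality at the kink $b = -f$ is confirmed by computing the left/right derivatives of $c_2$, which equal $\mp w_1 p ((q/2)f^q + w_2)^{p-1} q f^{q-1}$, and checking that $|c_1'(x+f)|$ lies inside this subdifferential exactly when $|x| \le f$. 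Convexity of $c_1$ is the composition of a convex-increasing $p$-th power with a convex nonnegative argument; convexity of $c_2$ follows from writing $c_2(b) = w_1 \phi(|b+f|) + \mathrm{const}$ with $\phi$ convex increasing on $[0,\infty)$, together with the left/right-derivative check at $b=-f$.

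The remaining and main work is to show $c_1$ and $c_2$ are $(m,\log^{O(1)} m)$-computable. For $c_1$, the epigraph is $\{(a,t) : (q/2) f^{q-2} (a-f)^2 \le (t/w_1)^{1/p} - w_2\}$, which is a convex set because $t^{1/p}$ is concave; a self-concordant barrier of the form $-2\log t - \log\bigl((t/w_1)^{1/p} - w_2 - (q/2) f^{q-2}(a-f)^2\bigr)$ works as in \Cref{lem:short_powers}, using shift invariance for the $(a-f)$ translation. For $c_2$, I would rewrite the epigraph as $\{(b,t) : |b+f| \le F(t)\}$ with $F(t) = \bigl((t/w_1 + C_0)^{1/p} + (1-q/2)f^q - w_2\bigr)^{1/q} - f$ concave in $t$ as a composition of concave $1/p$- and $1/q$-th roots, and then write this set as the intersection of the two convex regions $\{b \le F(t) - f\}$ and $\{-b \le F(t) - f\}$; by \Cref{lem:inters_self_concord} it suffices to build a self-concordant barrier for a single side, obtained by lifting through auxiliary variables $u = (t/w_1 + C_0)^{1/p}$ and $v = (u + (1-q/2)f^q - w_2)^{1/q}$ and composing the standard power-cone barriers of \Cref{lem:short_powers}.

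The hard step will be establishing self-concordance through this nested-power lift: applying \Cref{lem:gen_barrier_creation} along the chain of $q$-th and $p$-th powers requires verifying the third-derivative bound $|\phi'''| \le 3\beta y^{-1}\phi''$ via a routine composition calculation, producing $\beta = O(pq) = O(\polylog m)$, so the resulting self-concordance parameter remains $O(\log^{O(1)} m)$ as required. The degenerate case $f = 0$ can be handled separately, since then $c(x) = w_1(|x|^q + w_2)^p$ has no absolute-value kink inside $|b+f|$ and becomes directly $(m,\log^{O(1)} m)$-computable (hence trivially $m$-decomposable via the $\lowerend = \upperend = 0$ split) by chaining \Cref{lem:short_powers} through the $q$-th and $p$-th powers.
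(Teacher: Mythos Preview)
Your approach is essentially the same as the paper's: the paper also splits $w_1(\gamma_q(x;f)+w_2)^p$ as $\min_{|a|\le f} c_1(a)+c_2(b)$ with $c_1(a)=w_1((q/2)f^{q-2}a^2+w_2)^p$ and a power-type $c_2$ written as a max of two shifted $|x\pm f|^q$ terms (\Cref{lem:gamma_split}), and your $[0,2f]$ split with $c_2(b)=w_1((|b+f|+f)^q-(1-q/2)f^q+w_2)^p-\text{const}$ is exactly this after the translation $a\mapsto a-f$, with the cosmetic bonus that $[0,2f]$ literally satisfies the $\lowerend\ge 0$ requirement in \Cref{def:decomposable}. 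The one place your plan is looser is the barrier for $c_2$: the paper builds an explicit $4$-self-concordant barrier directly via Nemirovski's Theorem~9.1.1 on $\beta$-appropriate compositions (\Cref{lem:q_power}), whereas your sketch mixes an auxiliary-variable lift with \Cref{lem:gen_barrier_creation}---since lifted barriers do not project down for free and \Cref{lem:gen_barrier_creation} applies only on $x>0$, you would in practice end up invoking the same appropriate-function machinery the paper uses to make that step rigorous.
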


Before proving \Cref{lem:gamma_big_p}, we show that $w_1 (\gamma_q(x; f) + w_2)^p$ can be written in the form $\min_{a+b=x, a \in [\lowerend, \upperend]} c_1(a) + c_2(b)$ for suitable functions $c_1, c_2$, each of which we later show are $m, \otilde(\log^{O(1)} \allowbreak m)$-computable. 

\begin{lemma}\label{lem:gamma_split}
    Let $p \in \Z_{>0}$, $w_1, f \ge 0, w_2 \in \R$ and $c(x) := w_1 (\gamma_q(x) + w_2)^p$. Define $c_1(x) := w_1 (\frac{q}{2} f^{q-2} x^2 + w_2)^p, c_2(x) := w_1 \max((|x+f|^q - (1 - \frac{q}{2}) f^q + w_2)^p - (\frac{q}{2} f^q + w_2)^p, (|x-f|^q - (1 - \frac{q}{2}) f^q + w_2)^p - (\frac{q}{2} f^q + w_2)^p)$. Then, we have 
    \[c(x) = \min_{a + b = x, |a| \le f} c_1(a) + c_2(b)\,.\]
\end{lemma}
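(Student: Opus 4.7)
The strategy is to prove the stated equation by showing both inequalities separately. I would first rewrite the $\max$ inside the definition of $c_2$ in a more manageable form. Since $f\ge 0$ one has $\max(|b+f|,|b-f|)=|b|+f$, and, provided the inner quantities are non-negative (or, as in the application in \Cref{lem:gamma_big_p}, $p$ is odd so $y\mapsto(y+w_2)^p$ is monotone), the two entries of the $\max$ in $c_2$ collapse to
\[
c_2(b)\;=\;w_1\bigl((|b|+f)^q-(1-\tfrac{q}{2})f^q+w_2\bigr)^p-w_1\bigl(\tfrac{q}{2}f^q+w_2\bigr)^p.
\]
In particular $c_2(0)=0$, and $c_2$ is convex and even in $b$ since $|b|$ is convex, $y\mapsto(y+f)^q$ is convex increasing for $q\ge 1$ on $y\ge 0$, and the outer $p$-th power then preserves convexity on the relevant domain.

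For the ``$\le$'' direction I would exhibit, for each range of $x$, an explicit split $(a,b)$ with $a+b=x$, $|a|\le f$, for which $c_1(a)+c_2(b)$ exactly equals $c(x)$. Concretely, when $|x|\le f$ I take $(a,b)=(x,0)$: then $c_2(0)=0$ and $c_1(x)=w_1(\tfrac{q}{2}f^{q-2}x^2+w_2)^p$ matches the ``near'' branch of $c$. When $x>f$ I take $(a,b)=(f,x-f)$: then $c_1(f)=w_1(\tfrac{q}{2}f^q+w_2)^p$ cancels the subtracted constant in $c_2$, and $c_2(x-f)$ contributes precisely $w_1(x^q-(1-\tfrac{q}{2})f^q+w_2)^p$, matching the ``far'' branch of $c$. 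The case $x<-f$ follows by symmetry, since $c$, $c_1$ and $c_2$ are all even in their arguments.

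For the ``$\ge$'' direction, fix $x$ and set $\psi(a):=c_1(a)+c_2(x-a)$ on $a\in[-f,f]$. By the convexity noted above, $\psi$ is convex, so it suffices to verify that the split $a^\star$ chosen in the previous paragraph is a global minimizer of $\psi$ by exhibiting $0$ in its subdifferential (or, when $a^\star\in\{-f,f\}$, by showing the inward directional derivative is non-positive). In the boundary cases $x>f$ at $a^\star=f$, this is a direct comparison between $c_1'(f)=w_1pqf^{q-1}(\tfrac{q}{2}f^q+w_2)^{p-1}$ and $c_2'(x-f)=w_1pq(x-f+f)^{q-1}(x^q-(1-\tfrac{q}{2})f^q+w_2)^{p-1}$, where the monotonicities $x^{q-1}\ge f^{q-1}$ and $x^q-(1-\tfrac{q}{2})f^q\ge\tfrac{q}{2}f^q$ for $x\ge f$ make the inward derivative non-positive. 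In the interior case $|x|<f$ at $a^\star=x$, $c_2$ is non-smooth at $b=0$ but has the symmetric subgradient bracket $[-M,M]$ with $M=w_1pqf^{q-1}(\tfrac{q}{2}f^q+w_2)^{p-1}$; one needs $|c_1'(x)|\le M$, which follows from the two elementary bounds $\tfrac{q}{2}f^{q-2}x^2\le\tfrac{q}{2}f^q$ and $f^{q-2}|x|\le f^{q-1}$ for $|x|\le f$, combined with monotonicity of $y\mapsto(y+w_2)^{p-1}$. The main obstacle is not conceptual but rather the bookkeeping of signs: ensuring that all inner arguments to the $p$-th powers remain non-negative so that the collapsing of the $\max$ and the convexity/monotonicity of the outer $(\cdot)^p$ both go through; in the regime where the lemma is applied ($w_1,w_2,f=\exp(\tilde O(1))$ with $p$ odd, per \Cref{lem:gamma_big_p}), these conditions are automatic.
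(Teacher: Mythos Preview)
Your proposal is correct and follows essentially the same approach as the paper: exhibit the explicit splits $(x,0)$, $(f,x-f)$, $(-f,x+f)$ to get one inequality, then verify these are the true minimizers via a derivative comparison between $c_1'$ and $c_2'$. Your organization is slightly cleaner---you simplify $c_2$ upfront via $\max(|b+f|,|b-f|)=|b|+f$ and then check KKT/subdifferential conditions at the candidate $a^\star$, whereas the paper first proves the global bound $|c_2'(b)|>|c_1'(a)|$ for all $b\neq 0$, $|a|<f$ and deduces the minimizer structure from that---but the underlying inequalities are identical, and both arguments tacitly need the same sign assumptions on the inner $p$-th-power arguments that you flag explicitly.
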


\begin{proof}[Proof of \Cref{lem:gamma_split}]
    First, we show that $c(x) \ge \min_{a + b = x, |a| \le f} c_1(a) + c_2(b)$. To see why this is the case, first note that if $|x| \le f$, $c(x) = c_1(x) \ge \min_{a + b = x, |a| \le f} c_1(a) + c_2(b)$, as $c_2(0) = 0$. 
    Second, for $x > f$, $c(x) = c_1(f) + c_2(x-f)$, since $c_2(x-f) + c_1(f) = w_1 (|x|^q - (1 - \frac{q}{2}) f^q + w_2)^p - w_1 (\frac{q}{2} f^q + w_2)^p + w_1 (\frac{q}{2} f^q + w_2)^p = w_1 (\gamma_q(x) + w_2)^p$. Third, for $x < -f$, we have $c(x) = c_1(-f) + c_2(x+f)$, since $c_2(x+f) + c_1(-f) = w_1 (|x|^q - (1 - \frac{q}{2}) f^q + w_2)^p - w_1 (\frac{q}{2} f^q + w_2)^p + w_1 (\frac{q}{2} f^q + w_2)^p = w_1 (\gamma_q(x) + w_2)^p$. 

    Next, we show that for every $x \in \R$, $\min_{a + b = x, |a| \le f} c_1(a) + c_2(b) = c(x)$. For this, first note that $c_2$ is a differentiable function everywhere except at $x = 0$, with the property that 
    \begin{equation}\label{ineq:c1_c2_derivs}
        |c_2'(x)| > |c_1'(y)|, \forall x \ne 0, |y| < f\,.
    \end{equation}
    To see why this is true, note that \[c_2'(x) = w_1 p q (x+f)^{q-1} \left((x+f)^q - \left(1 - \frac{q}{2}\right) f^q + w_2\right)^{p-1}, \forall x > 0\,,\]
    and \[c_2'(x) = w_1 p q (-x+f)^{q-1} \left((-x+f)^q - \left(1 - \frac{q}{2}\right) f^q + w_2\right)^{p-1}, \forall x < 0\,.\]
    Hence, for any $x \ne 0$, $|c_2'(x)| > w_1 p q f^{q-1} (f^q - (1 - \frac{q}{2}) f^q + w_2)^{p-1}$. 
    Additionally, \[c_1'(y) = w_1 p q f^{q-2} y \left(\frac{q}{2} f^{q-2} y^2 + w_2\right)^{p-1}\,.\] 
    Hence, for $|y| < f$, $|c_1'(y)| < w_1 p q f^{q-1} (\frac{q}{2} f^{q} + w_2)^{p-1}$, which implies \ref{ineq:c1_c2_derivs}. 
    Now, by \ref{ineq:c1_c2_derivs}, for any $x \in \R$, if $(a^*(x), b^*(x)) = \argmin_{a + b = x, |a| \le f} c_1(a) + c_2(b)$, then either $b^* = 0$, or $a^* = \pm f$. Thus, for $|x| > f$, we must have either $a^*(x) = f$ or $a^*(x) = -f$. Since $c_2$ is strictly increasing on $R_{>0}$ and strictly decreasing on $R_{<0}$, this implies $a^*(x) = \mathrm{sign}(x) f$ and $b^*(x) = x - \mathrm{sign}(x) f$, so $\min_{a + b = x, |a| \le f} c_1(a) + c_2(b) = c(x)$. 
    Finally, for $|x| \le f$, we must have $(a^*(x), b^*(x)) = (x, 0)$. To see why this is the case, note that $b^*(x) \ne 0$ would imply that $a^*(x) = \pm f$. Since $c_2(-f) = c_2(f) = \max_{|y|\le f} c_2(y)$ and $c_2(x) \ge 0, \enspace \forall x \in R$, with equality if and only if $x = 0$, we would obtain a contradiction. 
\end{proof}

Next, we show how to construct self-concordant barriers for each of the functions $c_1(x) := w_1 (\frac{q}{2} f^{q-2} x^2 + w_2)^p, c_2(x) := w_1 \max((|x+f|^q - (1 - \frac{q}{2}) f^q + w_2)^p - (\frac{q}{2} f^q + w_2)^p, (|x-f|^q - (1 - \frac{q}{2}) f^q + w_2)^p - (\frac{q}{2} f^q + w_2)^p)$. 

\begin{lemma}\label{lem:q_power}
    Let $c(x) := w_1 (|x - \offset|^q + w_2)^p$ for $w_1 > 0$ and $w_2, \offset \in \R$. Then, the function $\barrier(x, t) := -\log\big(\big(\frac{t^{1/p}}{w_1^{1/p}} - w_2\big)^{1/q} - x - \offset\big) - \log\big(\big(\frac{t^{1/p}}{w_1^{1/p}} - w_2\big)^{1/q} + x + \offset\big) - 2 \log (t - w_1 w_2^p)$ 
    is a $4$-self-concordant barrier for $\dset_c$, the epigraph of $c(x)$. 
\end{lemma}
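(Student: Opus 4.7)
The plan is to verify the three defining properties of a $4$-self-concordant barrier for $\dset_c$: blow-up at $\partial \dset_c$, the third-derivative inequality $|D^3\barrier[h^{\otimes 3}]| \le 2\,(D^2\barrier[h^{\otimes 2}])^{3/2}$, and the logarithmic-homogeneity bound $\langle \nabla \barrier, h\rangle^2 \le 4\,D^2\barrier[h^{\otimes 2}]$. The translation $y := x+\offset$ is affine (hence self-concordance-preserving), reducing to $\offset = 0$. The product of the first two logarithmic factors collapses via $(\psi(t) - y)(\psi(t) + y) = \psi(t)^2 - y^2$, where $\psi(t) := ((t/w_1)^{1/p} - w_2)^{1/q}$, so the barrier can be written as
\[
\barrier(y,t) = -\log(\psi(t)^2 - y^2) - 2\log(t - w_1 w_2^p),
\]
and $\dset_c = \{(y,t): |y|\le \psi(t),\ t\ge w_1 w_2^p\}$ since $c(x) \le t \iff |y|^q + w_2 \le (t/w_1)^{1/p}$. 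The map $\psi$ is concave and non-decreasing on $[w_1 w_2^p, \infty)$ as the composition of such maps ($t\mapsto t^{1/p}$, the shift by $-w_2$, and $u\mapsto u^{1/q}$), which makes $\dset_c$ convex, and blow-up at the boundary is immediate.

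As a consistency check, when $w_2 = 0$ one has $\psi(t) = (t/w_1)^{1/(pq)}$ and the barrier becomes $-\log((t/w_1)^{2/(pq)} - y^2) - 2\log t$ (up to an additive constant absorbing $w_1$), which is exactly the Nemirovski--Nesterov barrier from Example~9.2.1 in \cite{nemirovski04} with exponent $pq$, known to be $4$-self-concordant. So the residual task is to show $4$-self-concordance survives the shift by $w_2 > 0$.

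For the general case, I would split $\barrier = \barrier_0 + \barrier_1$ with $\barrier_0(y,t) := -\log(\psi(t)^2 - y^2)$ and $\barrier_1(t) := -2\log(t - w_1 w_2^p)$. The term $\barrier_1$ is a standard $2$-self-concordant barrier in $t$, so by additivity of self-concordance it suffices to prove $\barrier_0$ is a $2$-self-concordant barrier for $\{|y|\le\psi(t)\}$. Along an arbitrary direction $h=(h_y,h_t)$ and writing $P := \psi^2 - y^2$, one computes $D^j \barrier_0[h^{\otimes j}]$ for $j = 1,2,3$ as rational functions of $y, \psi, \psi', \psi'', \psi''', h_y, h_t$ with denominators $P, P^2, P^3$. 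The self-concordance inequality $(D^3\barrier_0[h^{\otimes 3}])^2 \le 4(D^2\barrier_0[h^{\otimes 2}])^3$ reduces, after factoring $P^{-6}$, to a polynomial inequality in the scalars $h_y, y, \psi, \psi' h_t, \psi'' h_t^2, \psi''' h_t^3$; this can be verified by using the explicit power-root identities $\psi' = \frac{1}{pq\,w_1}\psi^{1-q}(\psi^q+w_2)^{1-p}$ (and the analogous expressions for $\psi''$ and $\psi'''$) that follow from differentiating $t = w_1(\psi^q + w_2)^p$. The logarithmic-homogeneity parameters add as $\nu = 2 + 2 = 4$: the $\barrier_0$ part contributes $2$ (as in Example~9.2.1) and $\barrier_1$ contributes $2$.

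The principal obstacle is the algebraic verification of the third-derivative inequality for $\barrier_0$ when $w_2 > 0$: the $w_2$ term prevents the clean cancellations that occur for $w_2 = 0$, and the cross-terms between $h_y$ and $h_t$ must be carefully controlled. A cleaner alternative route, which I would pursue if the direct computation becomes unwieldy, is to lift $\dset_c$ via an auxiliary slack variable $u$ and write it as the projection of $\{(y,t,u): |y|^q \le u - w_2,\ w_1 u^p \le t,\ u \ge w_2\}$ onto $(y,t)$, apply \Cref{lem:inters_self_concord} to combine barriers for each factor (both obtainable via \Cref{lem:gen_barrier_creation} applied to $|y|^q$ and $w_1 u^p$, which satisfy the $|c'''| \le 3\beta x^{-1} c''$ hypothesis), and then use a standard epigraphical projection argument to eliminate $u$ and recover the stated barrier with its $\nu = 4$ parameter.
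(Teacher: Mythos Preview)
Your decomposition $\barrier = \barrier_0 + \barrier_1$ with $\barrier_0 = -\log(\psi(t)^2 - y^2)$ and $\barrier_1 = -2\log(t - w_1 w_2^p)$ has a genuine gap: $\barrier_0$ alone is \emph{not} self-concordant, already in your baseline case $w_2 = 0$. Restricting to the line $y = 0$ gives $\barrier_0(0,t) = -\tfrac{2}{pq}\log(t/w_1)$, and the one-dimensional condition $|f'''| \le 2(f'')^{3/2}$ reduces to $pq \le 2$, which fails in the regime of the lemma ($p \ge 3$, $1 < q < 2$). So the $-2\log(t - w_1 w_2^p)$ term cannot be peeled off as an additive afterthought; it must be intertwined with the $\psi$-part for self-concordance to hold at all. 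Your consistency check against Example~9.2.1 of \cite{nemirovski04} in fact confirms this: that example asserts $4$-self-concordance of the \emph{full} barrier $-2\log t - \log(t^{2/p} - x^2)$, not $2$-self-concordance of the second term in isolation. Your fallback lifting route is also incomplete: minimizing a self-concordant barrier over an auxiliary slack variable does not in general produce a self-concordant function, so the appeal to a ``standard epigraphical projection argument'' would require substantial additional machinery.

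The paper uses a different grouping that sidesteps both issues. It writes the epigraph as $K_1 \cap K_2$ with $K_i = \{(x,t) : t > w_1 w_2^p,\ \psi(t) > (-1)^i(x+\offset)\}$ and pairs \emph{each} of the two $\psi$-logs with one copy of $-\log(t - w_1 w_2^p)$, setting $\barrier_i(x,t) = -\log(\psi(t) - (-1)^i(x+\offset)) - \log(t - w_1 w_2^p)$. It then invokes Nemirovski's compatibility framework (Theorem~9.1.1 of \cite{nemirovski04}): the affine-plus-concave map $A(x,t) = \psi(t) \pm (x+\offset)$ is shown to be $1$-appropriate for the half-line $G^- = \{t > w_1 w_2^p\}$ (concavity plus a third-derivative bound on $\psi$ that reduces to a one-line inequality in $t$), and the theorem directly outputs that $\barrier_i$ is a $2$-self-concordant barrier on $K_i$. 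Summing over $i = 1,2$ via \Cref{lem:inters_self_concord} then gives $\nu = 4$, with no direct verification of the full two-variable third-derivative inequality required.
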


\begin{proof}[Proof of \Cref{lem:q_power}]
    The epigraph of $c(x)$, $\{(x, t) : t \scgeq w_1 (|x - \offset|^q + w_2)^p\}$, can be rewritten as $\{(x, t) : t > w_1 w_2^p, \frac{t^{1/p}}{w_1^{1/p}} - w_2 > |x + \offset|^q\}$, or, equivalently, as $\{(x, t) : t \scgeq w_1 w_2^p, (\frac{t^{1/p}}{w_1^{1/p}} - w_2)^{1/q} \scgeq |x + \offset|\}$. This set is equal to the intersection of $K_1, K_2$, where $K_i = \{(x, t) : t \scgeq w_1 w_2^p, (\frac{t^{1/p}}{w_1^{1/p}} - w_2)^{1/q} \scgeq (-1)^i (x + \offset)\}$. 
    We will show that for each $K_i$, $\barrier_i(x, t) = -\log((\frac{t^{1/p}}{w_1^{1/p}} - w_2)^{1/q} - (-1)^i (x + \offset)) + \log (t - w_1 w_2^p)$ is a $2$-self-concordant barrier for $K_i$. Note that by \Cref{lem:inters_self_concord}, this suffices for the proof. 

    We build a barrier for $K_1$, as doing so for $K_2$ is analogous. In particular, we will apply Theorem 9.1.1 from \cite{nemirovski04}. To do so, define $\varphi(t) := (\frac{t^{1/p}}{w_1^{1/p}} - w_2)^{1/q}$. 
    To do so, we define sets $G^+ := \R_{>0}$, $G^{-} := \{(x, t): t \scgeq w_1 w_2^p\}$, along with functions $F^+(y) = -\log y$, which is $1$-self concordant on $G^+$, $F^{-}(y) := -\log (y - w_1 w_2^p)$, which is $1$-self concordant on $G^-$, and $A: G^- \to \R$ with $A(x, t) := (\frac{t^{1/p}}{w_1^{1/p}} - w_2)^{1/q} + x + \offset$. 
    To apply the theorem, it suffices to prove that the function $A(x, t)$ is $\beta$-appropriate (Definition 9.1.1 from \cite{nemirovski04}), for $G^+$, with $\beta = 1$. For this, since the recessive cone $\calR^+(G^+)$ (defined in Section 9.1 of \cite{nemirovski04}) of $G^+$ is $\calR^+(G^+) = \R_{>0} = G^+$, and since $A(x, t)$ is separable in $x, t$ (implying that $\frac{\partial^2}{\partial t \partial x} A(x, t) = \frac{\partial^2}{\partial x \partial t} A(x, t) = 0$), it suffices to show the following two conditions: 
    \begin{enumerate}
        \item $A(x, t)$ is concave.
        \item $D^3 A(x, t) [h, h, h] \le -3 \beta D^2 A(x, t) [h, h]$ for any $h \in \R^2$ with $(x, t) \pm h \in G^-$.
    \end{enumerate}
    First, the concavity of $A(x, t)$ follows from the fact that $\pm x$ is concave and $\varphi(t)$ is concave on $\{t > w_1 w_2^p\}$, as it is the composition of a non-decreasing concave function ($x \mapsto x^{1/q}$) with a concave function that is everywhere positive ($t \mapsto \frac{t^{1/p}}{w_1^{1/p}} - w_2$). To prove the condition $2$, note that it suffices to only consider terms of the form $\frac{\partial^{\ell}}{\partial t ... \partial t} A(x, t)$ and $\frac{\partial^{\ell}}{\partial x ... \partial x} A(x, t)$, as $\frac{\partial^2}{\partial t \partial x} A(x, t) = 0$. Since 
    the second and third derivatives of $x + \offset$ are $0$ with respect to either $x$ or $t$, it suffices to show that $|\varphi(t)'''| \le -3 (t - w_1 w_2^p)^{-1} \beta \varphi(t)'', \forall (x, t) \in K_1$. 
    Note that $\varphi(t)''' = \frac{t^{3/p - 3}}{w_1^{3/p} p^3} \frac{1}{q} (\frac{1}{q} - 1) (\frac{1}{q} - 2) (\frac{t^{1/p}}{w_1^{1/p}} - w_2)^{\frac{1}{q} - 3}$. 
    Hence, it suffices to show that for all $t > w_1 w_2^p$
    \[\frac{t^{3/p - 3}}{p^3 w_1^{3/p}} \frac{1}{q} \left(\frac{1}{q} - 1\right) \left(\frac{1}{q} - 2\right) \left(\frac{t^{1/p}}{w_1^{1/p}} - w_2\right)^{\frac{1}{q} - 3} \le - \frac{3 \beta t^{2/p - 2}}{p^2 w_1^{2/p}} \frac{1}{q} \left(\frac{1}{q} - 1\right) \left(t - w_1 w_2^p\right)^{-1} \left(\frac{t^{1/p}}{w_1^{1/p}} - w_2\right)^{\frac{1}{q} - 2}\,.\]
    For this, it suffices to show $ \le (t^{1/p} - w_1^{1/p} w_2) + (t - w_1 w_2^p) t^{1/p - 1}$ for all $t > w_1 w_2^p$, which is obviously true. 
    
    By Theorem 9.1.1, the function $\barrier_1(x, t) = - \log (t - w_1 w_2^p) - \log (\varphi(t) + x + \offset)$, which can be written as $\barrier_1(x, t) = - \log(t - w_1 w_2^p) - \log((\frac{t^{1/p}}{w_1^{1/p}} - w_2)^{1/q} + x + \offset)$, is $2$-self concordant on $\{(x, t) \in G^-: A(x, t) \in G^+\}$. Note that by definition, $A(x, t) > 0, \forall (x, t) \in G^-$. Thus, the set 
    $\{(x, t) \in G^-: A(x, t) \in G^+\} = G^-$, which implies that $\{(x, t) \in G^-: A(x, t) \in G^+\} = G^- = K_1$. 
    Analogously, we obtain that $\barrier_2(x, t) = - \log (t - w_1 w_2^p) - \log ((\frac{t^{1/p}}{w_1^{1/p}} - w_2)^{1/q} - x - \offset)$ is $2$-self concordant on $K_2$. This suffices for our proof. 
\end{proof}

Finally, we are ready to prove \Cref{lem:gamma_big_p}.

\begin{proof}[Proof of \Cref{lem:gamma_big_p}]
    By \Cref{lem:gamma_split}, we can write
    \[w_1 (\gamma_q(x; f) + w_2)^p = \min_{a + b = x, |a| \le f} c_1(a) + c_2(b)\,,\]
    where $c(x) := w_1 (\gamma_q(x) + w_2)^p$. Define $c_1(x) := w_1 (\frac{q}{2} f^{q-2} x^2 + w_2)^p, c_2(x) := w_1 \max((|x+f|^q - (1 - \frac{q}{2}) f^q + w_2)^p - (\frac{q}{2} f^q + w_2)^p, (|x-f|^q - (1 - \frac{q}{2}) f^q + w_2)^p - (\frac{q}{2} f^q + w_2)^p)$. 
    We already know that $c_1(x)$ is $(m, \otilde(\log^{O(1)} m)$-computable, by \Cref{lem:short_powers}. Hence, it suffices to show that $c_2(x)$ is $(m, \otilde(\log^{O(1)} m)$-computable as well. Property \ref{item:costQuasiPoly} of \Cref{def:computable} is trivially satisfied, as $w_1, w_2, f \le \exp(\O(1))$. 
    Next, by \Cref{lem:q_power}, 
    we have that $\barrier(x, t) := -\log\big(\big(\frac{t^{1/p}}{w_1^{1/p}} - w_2\big)^{1/q} - x - \eff\big) - \log\big(\big(\frac{t^{1/p}}{w_1^{1/p}} - w_2\big)^{1/q} + x + \eff\big) - 2 \log (t - w_1 w_2^p)$ 
    is a $4$-self-concordant barrier for $\dset_{c_2}$. This implies property \ref{item:costSC} of \Cref{def:computable}. 
    Moreover, we have \[\nabla_x \barrier(x, t) = \frac{w_1^{1/pq}}{(t^{1/p} - w_1^{1/p} w_2)^{1/q} - w_1^{1/pq} (x + \eff)} - \frac{w_1^{1/pq}}{(t^{1/p} - w_1^{1/p} w_2)^{1/q} + w_1^{1/pq} (x + \eff)}\,,\]
    \begin{align*}
        \nabla_t \barrier(x, t) = & - \frac{t^{1/p - 1} (t^{1/p} - w_1^{1/p} w_2)^{1/q - 1}}{pq \left((t^{1/p} - w_1^{1/p} w_2)^{1/q} - w_1^{1/pq} (x + \eff)\right)} - \frac{t^{1/p - 1} (t^{1/p} - w_1^{1/p} w_2)^{1/q - 1}}{pq \left((t^{1/p} - w_1^{1/p} w_2)^{1/q} + w_1^{1/pq} (x + \eff)\right)} \\ 
        & - \frac{2}{t - w_1 w_2^p}\,.
    \end{align*}
    To compute $\nabla^2 \barrier(x, t)$, we simply take derivatives of the terms above, and obtain that each entry of $\nabla^2 \barrier(x, t)$ is a sum of a constant number of fractions, each of which is represented by $O(1)$ terms and easily accessed. Hence, property \ref{item:costHessianCompute} of \Cref{def:computable} follows. Lastly, to prove property \ref{item:costHessian}, note that by the chain rule of differentiation,  
    each entry of $\nabla^2 \barrier(x, t)$ will be a sum of a constant number of 
    fractions where the numerator is a sum of constantly many powers of $x, t$ and the denominator is a sum of constantly many powers of $\frac{(t^{1/p} - w_1^{1/p} w_2)^{1/q}}{w_1^{1/pq}} + x + \eff, \frac{(t^{1/p} - w_1^{1/p} w_2)^{1/q}}{w_1^{1/pq}} - x - \eff, t - w_1 w_2^p$ (with coefficients bounded by $\O(1)$ in both cases). Hence, 
    whenever $|x|, |t| \le m^{K}$ and $\barrier(x, t) = \O(1)$, we have $\frac{(t^{1/p} - w_1^{1/p} w_2)^{1/q}}{w_1^{1/pq}} + x + \eff, \frac{(t^{1/p} - w_1^{1/p} w_2)^{1/q}}{w_1^{1/pq}} - x - \eff, t - w_1 w_2^p \ge \frac{1}{m^{\O(1)}}$. Thus, for such values of $(x, t)$ with $|x|, |t| \le m^{K}, \barrier(x, t) = \O(1)$, we have 
    $\|\nabla^2 \barrier(x, t)\|_{\infty} \le \exp(\O(1))$, which proves property \ref{item:costHessian}. 
\end{proof}

\end{document}